\documentclass{article}
\usepackage{amsmath}
\usepackage{amsthm}
\usepackage{amssymb}
\usepackage{amsfonts}
\setlength{\textwidth}{15cm}
\setlength{\textheight}{22cm}
\setlength{\oddsidemargin}{-0.5cm}
\setlength{\evensidemargin}{-0.5cm}
\begin{document}
\theoremstyle{plain}
\newtheorem{thm}{Theorem}[section]
\newtheorem{lem}[thm]{Lemma}
\newtheorem{prop}[thm]{Proposition}
\newtheorem{cor}[thm]{Corollary}
\theoremstyle{definition}
\newtheorem{assum}[thm]{Assumption}
\newtheorem{notation}[thm]{Notation}
\newtheorem{defn}[thm]{Definition}
\newtheorem{clm}[thm]{Claim}
\newtheorem{ex}[thm]{Example}
\theoremstyle{remark}
\newtheorem{rem}[thm]{Remark}
\newcommand{\unit}{\mathbb I}
\newcommand{\ali}[1]{{\mathfrak A}_{[ #1 ,\infty)}}
\newcommand{\alm}[1]{{\mathfrak A}_{(-\infty, #1 ]}}
\newcommand{\nn}[1]{\lV #1 \rV}
\newcommand{\br}{{\mathbb R}}
\newcommand{\dm}{{\rm dom}\mu}
\newcommand{\Ad}{\mathop{\mathrm{Ad}}\nolimits}
\newcommand{\Proj}{\mathop{\mathrm{Proj}}\nolimits}
\newcommand{\RRe}{\mathop{\mathrm{Re}}\nolimits}
\newcommand{\RIm}{\mathop{\mathrm{Im}}\nolimits}
\newcommand{\Wo}{\mathop{\mathrm{Wo}}\nolimits}
\newcommand{\Prim}{\mathop{\mathrm{Prim}_1}\nolimits}
\newcommand{\Primz}{\mathop{\mathrm{Prim}}\nolimits}
\newcommand{\Class}{\mathop{\mathrm{Class}}\nolimits}
\def\qed{{\unskip\nobreak\hfil\penalty50
\hskip2em\hbox{}\nobreak\hfil$\square$
\parfillskip=0pt \finalhyphendemerits=0\par}\medskip}
\def\proof{\trivlist \item[\hskip \labelsep{\bf Proof.\ }]}
\def\endproof{\null\hfill\qed\endtrivlist\noindent}
\def\proofof[#1]{\trivlist \item[\hskip \labelsep{\bf Proof of #1.\ }]}
\def\endproofof{\null\hfill\qed\endtrivlist\noindent}
\newcommand{\caA}{{\mathcal A}}
\newcommand{\caB}{{\mathcal B}}
\newcommand{\caC}{{\mathcal C}}
\newcommand{\caD}{{\mathcal D}}
\newcommand{\caE}{{\mathcal E}}
\newcommand{\caF}{{\mathcal F}}
\newcommand{\caG}{{\mathcal G}}
\newcommand{\caH}{{\mathcal H}}
\newcommand{\caI}{{\mathcal I}}
\newcommand{\caJ}{{\mathcal J}}
\newcommand{\caK}{{\mathcal K}}
\newcommand{\caL}{{\mathcal L}}
\newcommand{\caM}{{\mathcal M}}
\newcommand{\caN}{{\mathcal N}}
\newcommand{\caO}{{\mathcal O}}
\newcommand{\caP}{{\mathcal P}}
\newcommand{\caQ}{{\mathcal Q}}
\newcommand{\caR}{{\mathcal R}}
\newcommand{\caS}{{\mathcal S}}
\newcommand{\caT}{{\mathcal T}}
\newcommand{\caU}{{\mathcal U}}
\newcommand{\caV}{{\mathcal V}}
\newcommand{\caW}{{\mathcal W}}
\newcommand{\caX}{{\mathcal X}}
\newcommand{\caY}{{\mathcal Y}}
\newcommand{\caZ}{{\mathcal Z}}
\newcommand{\bbA}{{\mathbb A}}
\newcommand{\bbB}{{\mathbb B}}
\newcommand{\bbC}{{\mathbb C}}
\newcommand{\bbD}{{\mathbb D}}
\newcommand{\bbE}{{\mathbb E}}
\newcommand{\bbF}{{\mathbb F}}
\newcommand{\bbG}{{\mathbb G}}
\newcommand{\bbH}{{\mathbb H}}
\newcommand{\bbI}{{\mathbb I}}
\newcommand{\bbJ}{{\mathbb J}}
\newcommand{\bbK}{{\mathbb K}}
\newcommand{\bbL}{{\mathbb L}}
\newcommand{\bbM}{{\mathbb M}}
\newcommand{\bbN}{{\mathbb N}}
\newcommand{\bbO}{{\mathbb O}}
\newcommand{\bbP}{{\mathbb P}}
\newcommand{\bbQ}{{\mathbb Q}}
\newcommand{\bbR}{{\mathbb R}}
\newcommand{\bbS}{{\mathbb S}}
\newcommand{\bbT}{{\mathbb T}}
\newcommand{\bbU}{{\mathbb U}}
\newcommand{\bbV}{{\mathbb V}}
\newcommand{\bbW}{{\mathbb W}}
\newcommand{\bbX}{{\mathbb X}}
\newcommand{\bbY}{{\mathbb Y}}
\newcommand{\bbZ}{{\mathbb Z}}
\newcommand{\str}{^*}
\newcommand{\lv}{\left \vert}
\newcommand{\rv}{\right \vert}
\newcommand{\lV}{\left \Vert}
\newcommand{\rV}{\right \Vert}
\newcommand{\la}{\left \langle}
\newcommand{\ra}{\right \rangle}
\newcommand{\ltm}{\left \{}
\newcommand{\rtm}{\right \}}
\newcommand{\lcm}{\left [}
\newcommand{\rcm}{\right ]}
\newcommand{\ket}[1]{\lv #1 \ra}
\newcommand{\bra}[1]{\la #1 \rv}
\newcommand{\lmk}{\left (}
\newcommand{\rmk}{\right )}
\newcommand{\al}{{\mathcal A}}
\newcommand{\md}{M_d({\mathbb C})}
\newcommand{\Tr}{\mathop{\mathrm{Tr}}\nolimits}
\newcommand{\Ran}{\mathop{\mathrm{Ran}}\nolimits}
\newcommand{\Ker}{\mathop{\mathrm{Ker}}\nolimits}
\newcommand{\spn}{\mathop{\mathrm{span}}\nolimits}
\newcommand{\Mat}{\mathop{\mathrm{M}}\nolimits}
\newcommand{\UT}{\mathop{\mathrm{UT}}\nolimits}
\newcommand{\DT}{\mathop{\mathrm{DT}}\nolimits}
\newcommand{\GL}{\mathop{\mathrm{GL}}\nolimits}
\newcommand{\spa}{\mathop{\mathrm{span}}\nolimits}
\newcommand{\supp}{\mathop{\mathrm{supp}}\nolimits}
\newcommand{\rank}{\mathop{\mathrm{rank}}\nolimits}
\newcommand{\idd}{\mathop{\mathrm{id}}\nolimits}
\newcommand{\ran}{\mathop{\mathrm{Ran}}\nolimits}
\newcommand{\dr}{ \mathop{\mathrm{d}_{{\mathbb R}^k}}\nolimits} 
\newcommand{\dc}{ \mathop{\mathrm{d}_{\cc}}\nolimits} \newcommand{\drr}{ \mathop{\mathrm{d}_{\rr}}\nolimits} 
\newcommand{\zin}{\mathbb{Z}}
\newcommand{\rr}{\mathbb{R}}
\newcommand{\cc}{\mathbb{C}}
\newcommand{\ww}{\mathbb{W}}
\newcommand{\nan}{\mathbb{N}}\newcommand{\bb}{\mathbb{B}}
\newcommand{\aaa}{\mathbb{A}}\newcommand{\ee}{\mathbb{E}}
\newcommand{\pp}{\mathbb{P}}
\newcommand{\wks}{\mathop{\mathrm{wk^*-}}\nolimits}
\newcommand{\he}{\hat {\mathbb E}}
\newcommand{\ikn}{{\caI}_{k,n}}
\newcommand{\mk}{{\Mat_{k+1}}}
\newcommand{\mnz}{\Mat_{n_0}}
\newcommand{\mn}{\Mat_{n}}
\newcommand{\mkk}{\Mat_{k_1+k_2+1}}
\newcommand{\mnzk}{\mnz\otimes \mkk}
\newcommand{\hbb}{H^{k,\bb}_{m,p,q}}
\newcommand{\gb}[1]{\Gamma_{#1,\bb}}
\newcommand{\cgv}[1]{\caG_{#1,\vv}}
\newcommand{\gv}[1]{\Gamma_{#1,\vv}}
\newcommand{\gvt}[1]{\Gamma_{#1,\vv(t)}}
\newcommand{\gbt}[1]{\Gamma_{#1,\bb(t)}}
\newcommand{\cgb}[1]{\caG_{#1,\bb}}
\newcommand{\cgbt}[1]{\caG_{#1,\bb(t)}}
\newcommand{\gvp}[1]{G_{#1,\vv}}
\newcommand{\gbp}[1]{G_{#1,\bb}}
\newcommand{\gbpt}[1]{G_{#1,\bb(t)}}
\newcommand{\Pbm}[1]{\Phi_{#1,\bb}}
\newcommand{\Pvm}[1]{\Phi_{#1,\bb}}
\newcommand{\mb}{m_{\bb}}
\newcommand{\E}[1]{\widehat{\mathbb{E}}^{(#1)}}
\newcommand{\lal}{{\boldsymbol\lambda}}
\newcommand{\Cl}[1]{\Lambda_{\lal}\lmk \unit+ Y\rmk^{#1}}
\newcommand{\cl}[1]{\check\Lambda^{#1}}
\newcommand{\oo}{{\boldsymbol\omega}}
\newcommand{\uu}{{\boldsymbol u}}
\newcommand{\vv}{{\boldsymbol v}}
\newcommand{\bbm}{{\boldsymbol m}}
\newcommand{\kl}[1]{{\mathcal K}_{#1}}
\newcommand{\wb}[1]{\widehat{B_{\mu^{(#1)}}}}
\newcommand{\ws}[1]{\widehat{\psi_{\mu^{(#1)}}}}
\newcommand{\wsn}[1]{\widehat{\psi_{\nu^{(#1)}}}}
\newcommand{\wv}[1]{\widehat{v_{\mu^{(#1)}}}}
\newcommand{\wbn}[1]{\widehat{B_{\nu^{(#1)}}}}
\newcommand{\wo}[1]{\widehat{\omega_{\mu^{(#1)}}}}
\newcommand{\dist}{\dc}
\newcommand{\hpu}{\hat P^{(n_0,k_R,k_L)}_R}
\newcommand{\hpd}{\hat P^{(n_0,k_R,k_L)}_L}
\newcommand{\pu}{ P^{(k_R,k_L)}_R}
\newcommand{\pd}{ P^{(k_R,k_L)}_L}
\newcommand{\opu}{\overline{ P^{(k_R,k_L)}_R}}
\newcommand{\opd}{\overline{ P^{(k_R,k_L)}_L}}
\newcommand{\puuz}{P_{R}^{(n_0-1,n_0-1)}\otimes P^{(k_R,k_L)}_R}
\newcommand{\pddz}{P_{L}^{(n_0-1,n_0-1)}\otimes P^{(k_R,k_L)}_L}
\newcommand{\puu}{\tilde P_U}
\newcommand{\pdd}{\tilde P_D}
\newcommand{\qu}[1]{ Q^{(k_R,k_L)}_{R, #1}}
\newcommand{\qd}[1]{ Q^{(k_R,k_L)}_{L,#1}}
\newcommand{\hqu}[1]{ \hat Q^{(n_0,k_R,k_L)}_{U, #1}}
\newcommand{\hqd}[1]{ \hat Q^{(n_0,k_R,k_L)}_{D,#1}}
\newcommand{\eijn}[1] {E^{(0,k)}_{#1}}
\newcommand{\eijr}[1] {E^{(k_R,0)}_{#1}}
\newcommand{\eijl}[1] {E^{(0,k_L)}_{#1}}
\newcommand{\fiin}[1] {f^{(0,k)}_{#1}}
\newcommand{\eij}[1] {E^{(k_R,k_L)}_{#1}}
\newcommand{\eijz}[1] {E^{(n_0-1,n_0-1)}_{#1}}
\newcommand{\ir}[1]{I_R^{(k_R,k_L)}\lmk #1\rmk}
\newcommand{\il}[1]{I_L^{(k_R,k_L)}\lmk #1\rmk}
\newcommand{\heij}[1] {\hat E^{(k_R,k_L)}_{#1}}
\newcommand{\cn}{\mathop{\mathrm{CN}(n_0,k_R,k_L)}\nolimits}
\newcommand{\ghd}[1]{\mathop{\mathrm{GHD}(#1,n_0,k_1,k_2,\bbG)}\nolimits}
\newcommand{\ghu}[1]{\mathop{\mathrm{GHU}(#1,n_0,k_1,k_2,\bbD)}\nolimits}
\newcommand{\ghdb}[1]{\mathop{\mathrm{GHD}(#1,n_0,k_1,k_2,\bbG)}\nolimits}
\newcommand{\ghub}[1]{\mathop{\mathrm{GHU}(#1,n_0,k_1,k_2,\bbD)}\nolimits}
\newcommand{\hfu}[1]{{\mathfrak H}_{#1}^U}
\newcommand{\hfd}[1]{{\mathfrak H}_{#1}^D}
\newcommand{\hfui}[1]{{\mathfrak H}_{#1,1}^U}
\newcommand{\hfdi}[1]{{\mathfrak H}_{#1,1}^D}
\newcommand{\hfuz}[1]{{\mathfrak H}_{#1,0}^U}
\newcommand{\hfdz}[1]{{\mathfrak H}_{#1,0}^D}
\newcommand{\tl}{\tilde\Lambda}
\newcommand{\tll}[1]{\tilde\Lambda^{#1}}
\newcommand{\hll}[1]{\hat\Lambda^{#1}}
\newcommand{\CN}{\overline{\hpd}\lmk\mnzk \rmk\overline{\hpu}}
\newcommand{\cnz}[1] {\chi_{#1}^{(n_0)}}
\newcommand{\eu}{\eta_{U}^{(k_1,k_2)}}
\newcommand{\ezu}{\eta_{U}^{(n_0-1,n_0-1)}}
\newcommand{\ed}{\eta_{D}^{(k_1,k_2)}}
\newcommand{\ezd}{\eta_{D}^{(n_0-1,n_0-1)}}
\newcommand{\fii}[1]{f_{#1}^{(k_R,k_L)}}
\newcommand{\fiiz}[1]{f_{#1}^{(0,k)}}
\newcommand{\fiz}[1]{f_{#1}^{(n_0-1,n_0-1)}}
\newcommand{\zeij}[1] {e_{#1}^{(n_0)}}
\newcommand{\CLn}{\Class_2(n,n_0,k_1,k_2)}
\newcommand{\ClassA}{\mathop{\mathrm{ClassA}}\nolimits}
\newcommand{\CL}{\ClassA}
\newcommand{\braket}[2]{\left\langle#1,#2\right\rangle}
\newcommand{\abs}[1]{\left\vert#1\right\vert}
\newtheorem{nota}{Notation}[section]
\def\qed{{\unskip\nobreak\hfil\penalty50
\hskip2em\hbox{}\nobreak\hfil$\square$
\parfillskip=0pt \finalhyphendemerits=0\par}\medskip}
\def\proof{\trivlist \item[\hskip \labelsep{\bf Proof.\ }]}
\def\endproof{\null\hfill\qed\endtrivlist\noindent}
\def\proofof[#1]{\trivlist \item[\hskip \labelsep{\bf Proof of #1.\ }]}
\def\endproofof{\null\hfill\qed\endtrivlist\noindent}
\title{A class of asymmetric gapped Hamiltonians on quantum spin chains and its characterization II}
\author{
{\sc Yoshiko Ogata}\footnote{Supported in part by
the Grants-in-Aid for
Scientific Research, JSPS.}\\
{\small Graduate School of Mathematical Sciences}\\
{\small The University of Tokyo, Komaba, Tokyo, 153-8914, Japan}
}

\maketitle{}
\begin{abstract}
We give a characterization of the class of gapped Hamiltonians introduced in PartI \cite{Ogata}.
The Hamiltonians in this class are given as MPS (Matrix product state) Hamiltonians.
In \cite{Ogata}, we list up properties of ground state structures of Hamiltonians in this class.
In this Part II, we show the converse.
Namely, if a (not necessarily MPS) Hamiltonian $H$ satisfies five of the listed properties,
there is a Hamiltonian $H'$ from the class
in \cite{Ogata},
satisfying the followings: The ground state spaces  of the two Hamiltonians on the infinite intervals coincide.
The spectral projections onto the ground state space of $H$ on each finite intervals
are approximated by that of $H'$ exponentially well, with respect to the interval size.
The latter property has an application to the classification problem with open boundary conditions. 
\end{abstract}

\section{Introduction}
In Part I \cite{Ogata}, we introduced a class of MPS Hamiltonians, which allows asymmetric ground state structures.
There, we gave a list of physical properties that the ground states of these Hamiltonians 
satisfy.
In this Part II, conversely, we show that these physical properties 
actually guarantee the ground state structure of the Hamiltonian to be 
captured by the class of Hamiltonians we introduced.
More precisely, we will show if a Hamiltonian satisfies five physical conditions, there is an MPS Hamiltonian from our class
satisfying the followings: {\it 1.} The ground state spaces  of the two Hamiltonians on the infinite intervals coincide.
{\it 2.} The spectral projections onto the ground state space of the original Hamiltonian on finite intervals
are well approximated by that of the MPS one. 
From the latter property we see that two Hamiltonians are in the same class in the classification problem of gapped Hamiltonians.

We use freely the notations and definitions  given in Part I, Subsection 1.1, 1.2, 1.3, and Appendix A.
In particular, recall the definition of $\ClassA$.
In Part I, we studied the properties of ground state structures of MPS Hamiltonians given by
elements in $\ClassA$.
We consider the quantum spin chain described in Subsection 1.1 of Part I.
Let $n\in\nan$ with $n\ge 2$.
\begin{assum}
Let $m\in\nan$ and $h$ a positive element in $\caA_{[0,m-1]}$,
and let $H$ be the Hamiltonian given by $h$
via the formula (1) and (2) of Part I.
We consider the following conditions.
\begin{enumerate}
\item[A1]There exist $N_1,d_1\in\nan$ such that $1\le \dim \ker\lmk H\rmk_{[0,N-1]}\le d_1$
for all $N_1\le N\in\nan$.
\item[A2]  Let $G_N$ be the orthogonal projection onto $\ker\lmk H\rmk_{[0,N-1]}$
acting on $\bigotimes_{i=0}^{N-1}\cc^n$.There exist $\gamma>0$ and $N_2\in\nan$ such that
\[
\gamma\lmk 1-G_N\rmk
\le \lmk H\rmk_{[0,N-1]},\text{ for all } N\ge N_2.
\] 
\item[A3]$\caS_{\bbZ}(H)$ consists of  a unique state $\omega_{\infty}$ on $\caA_{\bbZ}$,
\item[A4]There exist $0<C_1$, $0<s_1<1$, $N_3\in\nan$ and factor states
 $\omega_R\in \caS_{[0,\infty)}(H)$, 
$\omega_L\in \caS_{(-\infty,-1]}(H)$, such that
\begin{align}\label{eq:ltqo}
&\lv
\frac{\Tr_{[0,N-1]}\lmk G_{N} A\rmk}{\Tr_{[0,N-1]}\lmk G_{N}\rmk}-
\omega_R(A)
\rv\le C_1 s_1^{N-l}\lV A\rV,\notag\\
&\lv
\frac{\Tr_{[0,N-1]}\lmk G_{N}\tau_{N-l}\lmk A\rmk\rmk}{\Tr_{[0,N-1]}\lmk G_{N}\rmk}-
\omega_L\circ\tau_{-l}(A)
\rv\le C_1 s_1^{N-l}\lV A\rV,
\end{align}
for all $l\in\nan$, $A\in\caA_{[0,l-1]}$, and $N\ge \max\{l, N_3\}$,
and 
\begin{align}\label{eq:nzm}
\inf\left\{ \sigma\lmk D_{\omega_R\vert_{\caA_{[0,l-1]}}}\rmk\setminus \{0\}\mid l\in\nan\right\}>0,\notag\\
\inf\left\{ \sigma\lmk D_{\omega_L\vert_{\caA_{[-l,-1]}}}\rmk\setminus \{0\}\mid l\in\nan\right\}>0.
\end{align}
\item[A5]For any $\psi\in\caS_{[0,\infty)}(H)$ (resp. 
$\psi\in\caS_{(-\infty,-1]}(H)$), there exists an  $l_{\psi}\in \nan$
such that \[
\lV\psi-\psi\circ\tau_{l_\psi}\rV<2,\quad
(resp. \lV\psi-\psi\circ\tau_{-l_\psi}\rV<2).
\]
\end{enumerate}
\end{assum}
The main Theorem of Part II is the following.
\begin{thm}\label{thm:main}
Let $n\in\nan$ with $n\ge 2$.
Let $m\in\nan$, and $h$ a positive element in $\caA_{[0,m-1]}$. 
Let $H$ be the Hamiltonian given by this $h$.
Assume that [A1]-[A5] hold for this $h$.
Then there exist $\bb\in\ClassA$ and $m_1\in\nan$ such that 
\begin{align}
\caS_{(-\infty,-1]}(H)=\caS_{(-\infty,-1]}(H_{\Phi_{m_1,\bb}}),\quad
\caS_{[0,\infty)}(H)=\caS_{[0,\infty)}(H_{\Phi_{m_1,\bb}}),\quad
\caS_{\bbZ}(H)=\caS_{\bbZ}(H_{\Phi_{m_1,\bb}}).
\end{align}
Furthermore, there exist $C>0$, $0<s<1$, and $N_0\in\nan$ such that
\begin{align}\label{eq:pp}
\lV G_N-G_{N,\bb}\rV\le C s^N,\quad N\ge N_0.
\end{align}
If $\bb$ belongs to $\ClassA$
with respect to $(n_0,k_R,k_L,\lal, \bbD,\bbG,Y)$,
the above $m_1$ satisfies
\[
m_1\ge \max\left\{2l_{\bb}(n,n_0,k_R,k_L,\lal,\bbD,\bbG,Y),\frac{\log\lmk n_0^2(k_L+1)(k_R+1)+1\rmk}{\log n}\right\}.
\]
\end{thm}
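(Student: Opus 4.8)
\medskip
\noindent\textbf{Strategy of proof.}
The plan is to reconstruct an element of $\ClassA$ directly from the boundary factor states $\omega_R,\omega_L$ and the bulk state $\omega_\infty$ furnished by [A3]--[A4], and then, with the help of the uniform gap [A2] and the exponential clustering \eqref{eq:ltqo}, to show that the associated MPS Hamiltonian already agrees with $H$ at the level of finite-volume ground-state projections. The decisive structural input is \eqref{eq:nzm}: the nonzero part of the spectrum of $D_{\omega_R\vert_{\caA_{[0,l-1]}}}$ stays bounded away from $0$ uniformly in $l$ (and likewise for $\omega_L$), which forces these reduced density matrices to have uniformly bounded rank. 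First I would combine this with \eqref{eq:ltqo} and the factoriality of $\omega_R$ to see that $\omega_R$ is the restriction to $\caA_{[0,\infty)}$ of a finitely correlated state, purely generated in the sense of Fannes--Nachtergaele--Werner: the ranges of $D_{\omega_R\vert_{\caA_{[0,l-1]}}}$ stabilize, producing a finite-dimensional algebra, a unital completely positive generating map, and a faithful generating functional; similarly for $\omega_L$. Condition [A3] identifies the common bulk of $\omega_R$ and $\omega_L$ with $\omega_\infty$, so all three are governed by a single bulk transfer operator, whose peripheral eigenspace gives $n_0$, while the extra boundary-localized directions invisible to $\omega_\infty$ give $k_R$ and $k_L$. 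Condition [A5] is then used to exclude translation-disjoint ground states, so that the peripheral spectrum of the bulk transfer operator is as simple as possible and no periodicity has to be written into the MPS data.

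From these ingredients I would assemble the tuple $(n_0,k_R,k_L,\lal,\bbD,\bbG,Y)$ of a candidate $\bb$: $\lal$ encodes the weights of the faithful generating functionals, $\bbD$ and $\bbG$ the isometric intertwiners relating the boundary blocks to the common bulk block, and $Y$ the residual perturbation. The labour here is to verify that this tuple satisfies every clause in the definition of $\ClassA$ recalled from Part I, Appendix A --- in particular the invertibility and positivity clauses, which is exactly where \eqref{eq:nzm} is consumed, and the left/right compatibility over a common bulk, which is where [A3] and [A5] enter. Once $\bb\in\ClassA$ is in hand, Part I tells us that, for $m_1$ above the stated threshold, the half-chain and full-chain ground state spaces of $H_{\Phi_{m_1,\bb}}$ are exactly the finite-dimensional families generated by $\omega_R$, $\omega_L$, $\omega_\infty$ and their boundary-excited relatives. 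On the other side, [A1]--[A2] bound the edge degrees of freedom of $H$ and \eqref{eq:ltqo} pins the bulk tail of every ground state of $H$ to $\omega_R$ resp.\ $\omega_L$, so that $\caS_{[0,\infty)}(H)$, $\caS_{(-\infty,-1]}(H)$ and $\caS_{\bbZ}(H)$ are the same sets, yielding the three asserted equalities.

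The remaining, and I expect the hardest, point is the exponential estimate \eqref{eq:pp} for the finite-volume spectral projections. The plan is: (i) from [A2] obtain a uniform-in-$N$ spectral gap above $\lmk H\rmk_{[0,N-1]}$, and the analogous uniform gap for $\lmk H_{\Phi_{m_1,\bb}}\rmk_{[0,N-1]}$ from Part I; (ii) from \eqref{eq:ltqo} and its MPS-side counterpart in Part I, deduce that $G_N$ and $G_{N,\bb}$ act on observables supported near either end of $[0,N-1]$ in ways that differ by $O(s^N)$; (iii) convert this into a norm bound on $G_N-G_{N,\bb}$ by a contour-integral / $\sin\Theta$-type argument, expressing the difference as an integral of a difference of resolvents and controlling it through the gaps in (i) and the Hilbert--Schmidt closeness, near the two ends, of the finite-volume interaction terms of $H$ and $H_{\Phi_{m_1,\bb}}$. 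The subtlety is that the bond spaces on the two sides need not literally coincide: one first conjugates by a near-identity intertwiner --- itself exponentially close to $\unit$ --- between the two finitely correlated representations, and tracks every error so that it stays summable and the final bound is a clean $Cs^N$.

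Finally, the lower bound on $m_1$ reflects two independent constraints. Part I's frustration-free MPS construction only exhibits the stated ground-state structure once the interaction range exceeds $2l_{\bb}(n,n_0,k_R,k_L,\lal,\bbD,\bbG,Y)$, an injectivity/correlation-length scale for the $\bb$-data into which the translation-rigidity lengths $l_\psi$ of [A5] feed; and the physical space on $m_1$ sites, of dimension $n^{m_1}$, must be large enough to accommodate the auxiliary boundary-and-bond space of dimension $n_0^2(k_L+1)(k_R+1)$ that indexes the local ground-state constraints, i.e.\ $n^{m_1}\ge n_0^2(k_L+1)(k_R+1)+1$, which is the logarithmic term. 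Taking $m_1$ to be the larger of the two thresholds (and at least large enough for Part I's results to apply) completes the argument.
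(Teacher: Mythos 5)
Your overall route to constructing $\bb\in\ClassA$ from $\omega_R,\omega_L,\omega_\infty$ is broadly the direction the paper takes (Sections 2--7), though you misattribute some inputs: the uniform bound on $\rank D_{\omega_\sigma\vert_{\caA_{\sigma,l}}}$ comes from [A1] (the supports sit under $G_l$, whose rank is at most $d_1$), while \eqref{eq:nzm} is consumed later, to prove that $\Gamma^{(\sigma)}_{l,\vv_\sigma}$ restricted to $B(\caK_\sigma)s(\rho_\sigma)$ is a bijection onto its full range; and [A5] is used to rule out the degenerate rank-one blocks $T_{\oo_{a\sigma}}=0$, whereas the absence of peripheral periodicity (primitivity) comes from [A3]. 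These are repairable. The genuine gap is in your step (iii) for \eqref{eq:pp}. A resolvent-difference or $\sin\Theta$ argument needs $H_{[0,N-1]}-\lmk H_{\Phi_{m_1,\bb}}\rmk_{[0,N-1]}$ to be small in a suitable operator sense, and it is not: $h$ and $h_{m_1,\bb}$ are unrelated local operators that merely share (approximate) kernels, and their difference is a sum of $O(N)$ terms of order one, so the contour integral of $(z-H_N)^{-1}(H_N-H_{N,\bb})(z-H_{N,\bb})^{-1}$ is only $O(N/\gamma^2)$, nowhere near $Cs^N$. There is also no ``near-identity intertwiner'' available between the two models.

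The mechanism that actually yields \eqref{eq:pp} is variational and goes through traces, not resolvents. Since $\omega_R$ and $\omega_L$ are ground states of $H_{\Phi_{m_1,\bb}}$ on the half-lines (by the construction of $\bb$), one has $\omega_R\lmk\tau_x(h_{m_1,\bb})\rmk=0$ and $\omega_L\lmk\tau_x(h_{m_1,\bb})\rmk=0$ for the relevant $x$; then \eqref{eq:ltqo} gives
$\lv \Tr_{[0,N-1]}\lmk G_N\,\tau_x(h_{m_1,\bb})\rmk\rv\le d_1C_1 s_1^{\max\{N-(m_1+x),\,x\}}$,
and summing over $x$ and invoking the gap $\gamma_{m_1,\bb}(1-G_{N,\bb})\le\lmk H_{\Phi_{m_1,\bb}}\rmk_{[0,N-1]}$ from Part I yields
$\Tr_{[0,N-1]}\lmk(1-G_{N,\bb})G_N\rmk\le \tilde C s_1^{N/2}$. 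The symmetric estimate with the roles of the two Hamiltonians exchanged uses [A2] and the MPS-side analogue of [A4]. Finally $\lV G_N(1-G_{N,\bb})\rV^2\le\Tr\lmk G_N(1-G_{N,\bb})\rmk$ and the triangle inequality convert these two trace bounds into $\lV G_N-G_{N,\bb}\rV\le Cs^N$. Without this trace/variational step your argument does not close.
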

The condition [A1] means that the Hamiltonian is frustration free, and 
its ground state dimensions on finite intervals are uniformly bounded.
This is inevitable if one hopes to describe the ground state structure in terms of matrices.
The second condition [A2] means that the Hamiltonian is gapped.  This is the condition under which
we would like to work.
The third one [A3] means that the bulk ground state is unique. We should be able to
extend the Theorem to the case that some discrete symmetry is broken in 
the bulk, but for the simplicity, we assume the uniqueness in the current paper.
The equations (\ref{eq:ltqo}) in [A4] can be called the edge versions of (a relaxed) local topological quantum order introduced in \cite{MP}. It says that the effect of the edge away from
the place the observation is made, decays exponentially fast with respect to the distance.
In \cite{MP}, local topological quantum order was assumed to guarantee the 
stability of the spectral gap. The equations (\ref{eq:nzm}) requires the non-existence
of the zero-mode.
The requirement that $\omega_R,\omega_L$ are factor states can be understood that
these states are in pure phase \cite{BR2}.
[A5]  is rather a technical condition. Recall that the maximal distance that two states can have 
is $2$. This condition [A5] says that for any edge ground state,
there exists a space translation of it whose distance from the original
one is smaller than this maximal number $2$.

From Theorem 1.18 of Part I, $H_{\Phi_{m_1,\bb}}$ with $\bb\in \Class A$
and $m_1$ large enough, satisfies [A1]-[A5].
In this sense, [A1]-[A5] can be understand as a qualitative characterization
of $\ClassA$.

We would like to emphasize that in this approximation (\ref{eq:pp}),
the size of the matrices $\bb$ is fixed, i.e.,
it does not grow with respect to the size of the interval $[0,N-1]$,
nor the precision of the approximation.
The existence of $n$-tuple of matrices 
which describes the bulk ground state of Hamiltonians
satisfying [A1]
is known \cite{Matsui1}, \cite {Matsui2}.
The question here is how to deal with the edge states.

Theorem \ref{thm:main} has an application to the classification problem of gapped
Hamiltonians.
Here, we would like to extend the notion of  $C^1$-classification 
of gapped Hamiltonians with open boundary conditions considered 
in \cite{bo} (which we would like to call $C^1$-classification I 
of gapped Hamiltonians with open boundary conditions, to distinguish from the following one). 
We use the notations and definitions in \cite{bo}.
\begin{defn}[$C^1$-classification II of gapped Hamiltonians with open boundary conditions]\label{def:phasec}
Let $H_0,H_1$ be gapped  Hamiltonians associated with interactions
$\Phi_{0},\Phi_{1}\in{\caJ}$.
We say that  $H_0,H_1$ are 
$C^1$-equivalent of type  II  if the following conditions are satisfied.
\begin{enumerate}
\item There exist $m\in\nan$ and a continuous and piecewise $C^1$-path $\Phi:[0,1]\to {\caJ}_m$ such that $\Phi(0)=\Phi_0$, 
$\Phi(1)= \Phi_1$. Let   $H(t)$ be the Hamiltonian associated with $\Phi(t)$ for each $t\in[0,1]$.
\item  
There are $\gamma>0$, $N_0\in\nan$, and finite intervals $I(t)=[a(t), b(t)]$, $t\in[0,1]$, satisfying the followings:
\begin{description}
\item[(i)] the endpoints $a(t), b(t)$ smoothly depends on $t\in[0,1]$, 
\item[(ii)]there exists a sequence $\{\varepsilon_N\}_{N\in\nan}$
of positive numbers with $\varepsilon_N\to 0$, for $N\to\infty$,
such that $\sigma\lmk H(t)_{[0,N-1]}\rmk\cap I(t)
=\sigma\lmk H(t)_{[0,N-1]}\rmk\cap 
[\lambda(t,N), \lambda(t,N)+\varepsilon_N]$,
and $\sigma\lmk H(t)_{[0,N-1]}\rmk\cap I(t)^c=
\sigma\lmk H(t)_{[0,N-1]}\rmk\cap [b(t)+\gamma,\infty)$
 for all $N\ge N_0$ and $t\in[0,1]$,
where $\lambda(t,N)$
is the smallest eigenvalue  of $H(t)_{[0,N-1]}$.
\end{description}
%
\end{enumerate}
\end{defn}
From Theorem \ref{thm:main}, we obtain the following:
\begin{cor}\label{cor:cl}
Let $n\in\nan$ with $n\ge 2$.
Let $m\in\nan$, and $h$ a positive element in $\caA_{[0,m-1]}$. Let $H$ be the Hamiltonian given by this $h$.
Assume that [A1]-[A5] hold for this $h$.
Let $\bb\in\ClassA$ and $m_1\in\nan$ given in Theorem \ref{thm:main} for this $h$.
Then there exist $\hat \gamma>0$, $0<c_1$, $0<s_1,s_2<1$, and $\hat N_0\in\nan$ such that
\[
\sigma\lmk
(1-t)H_{[0,N-1]}+t\lmk H_{\Phi_{m_1,\bb}}\rmk_{[0,N-1]}
\rmk\cap[\hat \gamma c_1 s_1^{N},\hat \gamma -s_2^{N}\hat \gamma]=\emptyset,\quad
t\in [0,1],\quad N\ge \hat N_0.
\]
In particular, $H$ and $H_{\Phi_{m_1,\bb}}$
are $C^1$-equivalent of type II.
\end{cor}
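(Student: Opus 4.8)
The plan is to derive Corollary~\ref{cor:cl} from Theorem~\ref{thm:main} by a soft spectral-perturbation argument that uses only that both finite-volume Hamiltonians are frustration-free, gapped, and have exponentially close ground-state projections. First I apply Theorem~\ref{thm:main} to $h$, producing $\bb\in\ClassA$, $m_1\in\nan$ and constants $C>0$, $0<s<1$, $N_0$ with $\lV G_N-G_{N,\bb}\rV\le Cs^N$ for $N\ge N_0$, where $G_{N,\bb}$ is the kernel projection of $\lmk H_{\Phi_{m_1,\bb}}\rmk_{[0,N-1]}$. By [A2] for $h$, and by Theorem~1.18 of Part~I applied to $H_{\Phi_{m_1,\bb}}$ (legitimate since $m_1$ is taken large enough in Theorem~\ref{thm:main}), there are $\gamma,\gamma'>0$ and $N_2,N_2'$ with $\gamma(1-G_N)\le H_{[0,N-1]}$ and $\gamma'(1-G_{N,\bb})\le\lmk H_{\Phi_{m_1,\bb}}\rmk_{[0,N-1]}$. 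Put $\hat\gamma:=\min\{\gamma,\gamma'\}$, $H(t)_{[0,N-1]}:=(1-t)H_{[0,N-1]}+t\lmk H_{\Phi_{m_1,\bb}}\rmk_{[0,N-1]}$ and $R_N(t):=(1-t)G_N+tG_{N,\bb}$; then for $N\ge\max\{N_0,N_2,N_2'\}$ and all $t\in[0,1]$ one has the operator inequality $H(t)_{[0,N-1]}\ge\hat\gamma\,(1-R_N(t))$.

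The second step is to observe that $R_N(t)$, being a convex combination of two nearby projections, is almost a projection with a spectral gap. A direct computation gives $R_N(t)-R_N(t)^2=t(1-t)\lmk G_N(1-G_{N,\bb})+G_{N,\bb}(1-G_N)\rmk$, and since $G_N(1-G_{N,\bb})=G_N(G_N-G_{N,\bb})$ and similarly for the companion term, $\lV R_N(t)-R_N(t)^2\rV\le\tfrac12 Cs^N$. Because $0\le R_N(t)\le 1$, this forces, for $N$ large, $\sigma(R_N(t))\subset[0,\delta_N]\cup[1-\delta_N,1]$ with $\delta_N\le Cs^N$; let $P_N(t)$ be the spectral projection of $R_N(t)$ onto $[1-\delta_N,1]$. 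On the complement of its range we then have $H(t)_{[0,N-1]}\ge\hat\gamma(1-R_N(t))\ge\hat\gamma(1-\delta_N)(1-P_N(t))$, which is the ``above-the-gap'' estimate. For the low-energy sector, take a unit $\xi\in\ran P_N(t)$: from $\lV(R_N(t)-1)\xi\rV\le\delta_N$ and $R_N(t)\xi=G_N\xi+t(G_{N,\bb}-G_N)\xi$ we get $\lV(1-G_N)\xi\rV\le\delta_N+Cs^N$ and, similarly, $\lV(1-G_{N,\bb})\xi\rV=O(s^N)$. Using $H_{[0,N-1]}G_N=0=\lmk H_{\Phi_{m_1,\bb}}\rmk_{[0,N-1]}G_{N,\bb}$ (frustration-freeness) together with $\lV H_{[0,N-1]}\rV,\ \lV\lmk H_{\Phi_{m_1,\bb}}\rmk_{[0,N-1]}\rV=O(N)$ (finite-range sums of $N$ bounded terms), we obtain $\la\xi,H(t)_{[0,N-1]}\xi\ra=O(Ns^{2N})$, uniformly in $t$ and in $\xi$.

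The third step is a min--max argument at each fixed $(N,t)$. With $d:=\dim\ran P_N(t)\ge 1$, the two bounds just obtained give $\lambda_{d}\lmk H(t)_{[0,N-1]}\rmk\le O(Ns^{2N})$ (test space $\ran P_N(t)$) and $\lambda_{d+1}\lmk H(t)_{[0,N-1]}\rmk\ge\hat\gamma(1-\delta_N)$ (every $(d+1)$-dimensional subspace meets $\ran(1-P_N(t))$, on which $H(t)_{[0,N-1]}\ge\hat\gamma(1-\delta_N)$). Hence $\sigma\lmk H(t)_{[0,N-1]}\rmk\cap\lmk O(Ns^{2N}),\,\hat\gamma(1-\delta_N)\rmk=\emptyset$. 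Choosing $s_1\in(s^2,1)$, $s_2\in(s,1)$, $c_1>0$ small and $\hat N_0$ large, the window $\lcm\hat\gamma c_1 s_1^N,\ \hat\gamma-s_2^N\hat\gamma\rcm$ lies inside $\lmk O(Ns^{2N}),\,\hat\gamma(1-\delta_N)\rmk$ for all $N\ge\hat N_0$ and $t\in[0,1]$ (since $s_1^N$ beats $Ns^{2N}$ and $s_2^N$ beats $\delta_N$), giving the displayed emptiness of the spectrum. For the $C^1$-equivalence of type~II I would take the affine path $t\mapsto(1-t)\Phi_0+t\Phi_{m_1,\bb}$ in $\caJ_m$, $m$ large enough ($\Phi_0$ the interaction defining $H$), the constant interval $I(t)=[-1,\hat\gamma/2]$, $\varepsilon_N=\hat\gamma c_1 s_1^N\to 0$, and $\gamma=\hat\gamma/4$ in Definition~\ref{def:phasec}; conditions (i)--(ii) hold because the smallest eigenvalue of $H(t)_{[0,N-1]}$ lies in $[0,O(Ns^{2N})]$ while the rest of its spectrum is $\ge\hat\gamma(1-\delta_N)$.

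The only genuine point to be careful about is the low-energy bound $\la\xi,H(t)_{[0,N-1]}\xi\ra=O(Ns^{2N})$: it requires using simultaneously the spectral gap of $R_N(t)$ (to push $\xi$ into an exponentially thin neighbourhood of $\ran G_N\cap\ran G_{N,\bb}$), the exponential closeness $\lV G_N-G_{N,\bb}\rV\le Cs^N$, and the vanishing $H_{[0,N-1]}G_N=0$, and then noting that this exponential smallness overwhelms the at most linear growth of $\lV H_{[0,N-1]}\rV$. Everything else is elementary juggling of constants.
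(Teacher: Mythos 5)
Your proposal is correct, and it reaches the conclusion by a genuinely different mechanism than the paper. The paper never forms the convex combination $R_N(t)=(1-t)G_N+tG_{N,\bb}$ or invokes min--max; instead it fixes a putative eigenvalue $\lambda\in(0,\hat\gamma)$ with unit eigenvector $\xi$, multiplies the eigenvalue equation by $G_N$ (which annihilates the $H_{[0,N-1]}$ term) and then uses $G_{N,\bb}\lmk H_{\Phi_{m_1,\bb}}\rmk_{[0,N-1]}=0$ to reduce everything to the difference $\lmk G_N-G_{N,\bb}\rmk\lmk H_{\Phi_{m_1,\bb}}\rmk_{[0,N-1]}$, obtaining $\lV G_N\xi\rV\le \frac{C}{\lambda}Ns^N$ and its companion for $G_{N,\bb}$; feeding these into the gap inequalities yields the self-referential bound $\hat\gamma\lmk 1-\frac{C}{\lambda}(1+\lV h\rV)Ns^N\rmk\le\lambda$, and a short computation shows no $\lambda$ in the stated window can satisfy it. Your route --- showing $R_N(t)-R_N(t)^2$ has norm $O(s^N)$, so $R_N(t)$ has spectrum pinched near $\{0,1\}$, then bounding the energy from above on $\ran P_N(t)$ by $O(Ns^{2N})$ and from below on its complement by $\hat\gamma(1-\delta_N)$, and separating the two by min--max --- is longer but more structural: it additionally identifies a low-energy subspace of definite dimension and gives a sharper $O(Ns^{2N})$ bound on the low-lying spectrum than the paper's $\hat\gamma c_1 s^{N/2}$. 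Two small points to make explicit if you write this up: you should verify $\ran P_N(t)\neq\{0\}$ (immediate from $\langle\xi,R_N(t)\xi\rangle\ge 1-Cs^N$ for a unit $\xi\in\ran G_N$, which exists by [A1]), and you should justify the operator inequality $1-R_N(t)\ge(1-\delta_N)(1-P_N(t))$ by noting that $R_N(t)$ commutes with its own spectral projection. The concluding derivation of $C^1$-equivalence of type II, which the paper leaves implicit, is handled correctly by your choice of the affine path and constant interval $I(t)$.
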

Let us call $\tilde \caH$ the set of all Hamiltonians satisfying the qualitative conditions
 [A1]-[A5]. The Corollary \ref{cor:cl} means that
for the $C^1$-classification of type II, each equivalence class
with an element from $\tilde \caH$
includes an element  $H_{\Phi_{m_1,\bb}}$
which is given by some $\bb\in\ClassA$.
Therefore, if we consider only $\tilde \caH$,
it suffices to consider MPS Hamiltonians given by $\bb\in\ClassA$.
As $\ClassA$ is given by quite concrete conditions, this is an advantage. 

This article is organized as follows. In Section\ref{sec:ni}, we give representations of $\caS_{(-\infty,-1]}(H)$,
and $\caS_{[0,\infty)}(H)$ by matrices. 
They are given by different $n$-tuples of matrices $\vv_L$, $\vv_R$. 
In our setting, from the theorem of Arveson \cite{arv}, we can find a representation of
the Cuntz algebra associated to the space translation \cite{BJ}. The argument of Matsui \cite{Matsui1}\cite{Matsui2} then
shows that from the representation, we can find an $n$-tuple of matrices which describes the bulk ground state.
In Section\ref{sec:ni}, we overview these materials.
We then start to investigate the structure of these matrices.
In Section \ref{sec:san}, we find that the upper triangular property emerges by the hierarchical structure of 
$\vv_\sigma^*$-invariant subspaces $\caK_{0\sigma}\subsetneq \caK_{1\sigma}\subsetneq\cdots\subsetneq \caK_{b\sigma}$, (Lemma \ref{lem:hie}).
We denote by $p_{a\sigma}$ the orthogonal projection onto $\caK_{a\sigma}$ and 
set $r_{a\sigma}:=p_{a\sigma}-p_{a-1,\sigma}$ and $\omega_{i,a,\sigma}=r_{a\sigma} v_{i\sigma} r_{a\sigma}$.
Each $T_{\oo_{a\sigma}}$ is nonzero due to [A5] (Lemma \ref{lem:oonz}), and
 is an irreducible CP map  (Lemma \ref{lem:hie}).
In Section \ref{sec:yon}, we show that this map (or $T_{\uu_{a\sigma}}$ which is similar to it) is primitive (Lemma \ref{lem:primu}).
This follows from the fact that this $\oo_{a\sigma}$ generates
an element of $\caS_\bbZ(H)$, i.e., from [A3], it generates
$\omega_\infty$ (Lemma \ref{lem:ug}).
Applying a theorem \label{thm:iso} from \cite{fnwpure}, we obtain the primitivity.
More precisely, ${\uu_{a\sigma}}$ are unitarily equivalent to the primitive  $n$-tuple of matrices which gives the minimal representation of $\omega_\infty$.
From this observation, we obtain the structure
$v_{\mu \sigma}\in \Mat_{n_0}\otimes \Mat_{k_\sigma+1}$.
Next, we investigate the $ \Mat_{k_\sigma+1}$ part.
The condition (\ref{eq:nzm}) in [A4] implies 
$\left.\Gamma_{l,\vv_\sigma}^{(\sigma)}\right\vert_{B(\caK_\sigma)r_{0\sigma}}$
is an injection onto $\Gamma_{l,\vv_\sigma}^{(\sigma)}\lmk B(\caK_\sigma)\rmk$,
for $l$ large enough (Lemma \ref{lem:bijecs}).
From this fact, we obtain a basis of 
$\caK_l(\vv^{(\sigma)})$,
$\{y_{a,\alpha,\beta}^{(l,\sigma)}\}$
satisfying the conditions in Lemma \ref{lem:ohy}.
It turns out that these conditions are so restrictive that
we obtain $\lal$, $\bbD$, $\bbG$, $Y$, $\{\hat x_{\mu,b}^{(L)}\}$
and $\{\hat x_{\mu,a}^{(R)}\}$ (Lemma \ref{lem:lrf}).
The key for this procedure is Lemma \ref{lem:key}.
Out of these objects we obtain by the end of Section \ref{sec:roku},
 we construct $\bbB\in\Class A$ in Section \ref{sec:nana}.
This $\bbB$ is obtained  by embedding $\Mat_{k_L+1}$ and $\Mat_{k_R+1}$ into
$\Mat_{k_L+k_R+1}$.
It satisfies $\caS_{[0,\infty)}(H_{\Phi_{m',\bb}})=\caS_{[0,\infty)}(H)$, 
$\caS_{(-\infty,-1]}(H_{\Phi_{m',\bb}})=\caS_{(-\infty,-1]}(H)$, and 
$\omega_\infty=\omega_{\bb,\infty}$.
In the final section, we show that $G_{N,\bbB}$s approximate
$G_N$s exponentially well, with respect to $N$.
The spectral gap and the assumption that the effect of the boundary disappears exponentially fast ([A4]) show (\ref{eq:pp}).

\begin{rem}
In addition to the notations given in Subsection 1.1, 1.2, 1.3, and Appendix A of Part I, we use the notations given in Appendix \ref{sec:nota}.
\end{rem}

\section{Representation of $\caS_{(-\infty,-1]}(H)$, $\caS_{[0,\infty)}(H)$ by matrices}
\label{sec:ni}
In this section, we give a representation of elements in $\caS_{(-\infty,-1]}(H)$, $\caS_{[0,\infty)}(H)$ by matrices.
Readers should be aware that at this point, the matrices representing  $\caS_{(-\infty,-1]}(H)$ and $\caS_{[0,\infty)}(H)$ are different.
Most of the following Lemmas are well known.(See \cite{BJ}\cite{Matsui1}\cite{Matsui2})
\begin{lem}\label{lem:vr}
Let $\sigma=L,R$.
Assume [A1]. Then the followings hold.
\begin{enumerate}
\item For a state $\varrho_\sigma$ on ${\mathcal A}_\sigma$, 
$\varrho_\sigma$ belongs to $\caS_{\sigma}(H)$
if and only if
$\varrho_\sigma(\tau_x(h))=0$ for all $x\in\bbZ^{(\sigma)}$.
For a state $\varrho$ on ${\mathcal A}_{\bbZ}$, 
$\varrho$ belongs to $\caS_{\bbZ}(H)$
if and only if
$\varrho(\tau_x(h))=0$ for all $x\in\bbZ$.
\item  $\caS_\sigma(H)$ is a $wk*$-compact convex face in the set of all states on $\caA_\sigma$.
\item There exists a pure state $\varphi_\sigma$ in $\caS_{\sigma}(H)$.
\end{enumerate}
\end{lem}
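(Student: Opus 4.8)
The plan is to establish the algebraic characterization in (1) first, and then to read off (2) and (3) from it by soft arguments; only one half of (1) requires real work, and that part is essentially standard.

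\textbf{Part (1).} Let $\bar\delta_\sigma$ (resp.\ $\bar\delta_{\bbZ}$) be the closed derivation on $\caA_\sigma$ (resp.\ $\caA_{\bbZ}$) generated by the finite-range interaction attached to $h$, so that by the definition of ground state used in Part I, $\varrho_\sigma\in\caS_\sigma(H)$ iff $-i\varrho_\sigma\lmk A^*\bar\delta_\sigma(A)\rmk\ge 0$ for all $A$ in the domain (similarly for $\caA_{\bbZ}$). For the direction ``$\varrho(\tau_x(h))=0$ for all $x$ $\Rightarrow$ ground state'': since $\tau_x(h)\ge 0$ gives $\tau_x(h)^2\le\lV h\rV\tau_x(h)$, the Cauchy--Schwarz inequality for the sesquilinear form $(a,b)\mapsto\varrho(a^*b)$ yields $\varrho(B\tau_x(h))=\varrho(\tau_x(h)B)=0$ for every $B\in\caA_\sigma$. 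Hence for a strictly local $A$, only finitely many $x\in\bbZ^{(\sigma)}$ contribute to $\bar\delta_\sigma(A)$ and
\[
-i\varrho\lmk A^*\bar\delta_\sigma(A)\rmk=\sum_{x}\varrho\lmk A^*[\tau_x(h),A]\rmk=\sum_{x}\varrho\lmk A^*\tau_x(h)A\rmk\ge 0,
\]
which extends to the whole domain since local elements form a core for $\bar\delta_\sigma$. For the converse I would invoke the standard correspondence for frustration-free finite-range interactions (cf.\ \cite{BJ,Matsui1,Matsui2}): a ground state is implemented in its GNS representation by a positive generator $H_\varrho$ annihilating the cyclic vector, and comparing $H_{[0,N-1]}$ with $H_\varrho$, together with positivity of each $\tau_x(h)$ and the fact that $\min\sigma\lmk H_{[0,N-1]}\rmk=0$ for $N\ge N_1$ (from [A1]), forces $\varrho(\tau_x(h))=0$ for all $x\in\bbZ^{(\sigma)}$. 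The $\bbZ$-version is identical, using $\ker H_{[-N,N]}\neq\{0\}$, which follows from [A1] by translation invariance.

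\textbf{Part (2).} By (1), $\caS_\sigma(H)=\{\varrho:\ \varrho(\tau_x(h))=0\ \text{for all}\ x\in\bbZ^{(\sigma)}\}$ inside the state space of $\caA_\sigma$. Each defining condition is affine and $wk*$-continuous in $\varrho$, so $\caS_\sigma(H)$ is convex and $wk*$-closed, hence $wk*$-compact by Banach--Alaoglu. It is a face: if $\varrho=t\varrho_1+(1-t)\varrho_2\in\caS_\sigma(H)$ with $t\in(0,1)$ and $\varrho_i$ states, then $0=t\varrho_1(\tau_x(h))+(1-t)\varrho_2(\tau_x(h))$ with both terms $\ge 0$ because $\tau_x(h)\ge 0$, so $\varrho_i(\tau_x(h))=0$ and $\varrho_i\in\caS_\sigma(H)$. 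For non-emptiness, extend any unit vector in $\ker H_{[0,N-1]}$ (nonempty by [A1]) to a state $\phi_N$ on $\caA_\sigma$; then $\phi_N(\tau_x(h))=0$ once $N\ge x+m$, so every $wk*$-limit point of $\{\phi_N\}_N$ lies in $\caS_\sigma(H)$.

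\textbf{Part (3).} Being nonempty, $wk*$-compact and convex, $\caS_\sigma(H)$ has an extreme point $\varphi_\sigma$ by the Krein--Milman theorem. Since $\caS_\sigma(H)$ is a face of the full state space by (2), $\varphi_\sigma$ is also extreme in the state space of $\caA_\sigma$, i.e.\ it is pure.

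\textbf{Main obstacle.} All of (2), (3) and the ``$\Leftarrow$'' half of (1) are routine. The only step with genuine content is the ``$\Rightarrow$'' half of (1) --- that a $C^*$-dynamical ground state of a frustration-free interaction annihilates every local term --- and it is the only place where [A1] is actually used. I expect to dispatch it by citing \cite{BJ,Matsui1,Matsui2}, where it is proved in precisely this setting, or, if a self-contained treatment is wanted, by the finite-volume energy comparison indicated above.
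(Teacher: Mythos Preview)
Your proposal is correct and follows essentially the same route as the paper: the paper defers (1) entirely to Lemma~3.10 of Part~I, says (2) is clear from (1), and derives (3) from (2) by Krein--Milman and the face property --- exactly your outline. The only cosmetic difference is that the paper cites Part~I rather than \cite{BJ,Matsui1,Matsui2} for the ``$\Rightarrow$'' half of (1); your explicit treatment of the ``$\Leftarrow$'' direction, of non-emptiness, and of the face argument supplies the details the paper omits.
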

\begin{proof}
The proof of {\it 1.} is the same as the proof of Lemma 3.10 in PartI \cite{Ogata}. {\it 2} is clear from {\it 1}. {\it 3} follows from {\it 2} and the Klein-Milman theorem.
\end{proof}
\begin{rem}\label{rem:zp}
From Lemma \ref{lem:vr},  we choose and fix one pure state $\varphi_\sigma$ in $\caS_{\sigma}(H)$ from now on.
\end{rem}

\begin{lem}\label{lem:a1a4}Let $\sigma=L,R$.
Assume [A1] and [A4].
Then we have the followings.
\begin{enumerate}
\item For $d_1\in\nan$ in [A1] and the state $\omega_\sigma$ given in [A4], 
we have $\varrho_\sigma\le d_1\cdot\omega_{\sigma}$ for any $\varrho_\sigma\in\caS_{\sigma}(H)$.
\item Any elements in $\caS_{\sigma}(H)$ are mutually quasi-equivalent.
\end{enumerate}\end{lem}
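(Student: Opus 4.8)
The plan is to leverage the basic dichotomy structure of $\caS_\sigma(H)$ established in Lemma \ref{lem:vr}: it is a $wk*$-compact convex face in the state space of $\caA_\sigma$. For item \emph{1.}, fix $\varrho_\sigma\in\caS_\sigma(H)$ and let $N\ge N_3$. I would first note that, because $\varrho_\sigma$ kills every $\tau_x(h)$, its restriction to $\caA_{[0,N-1]}$ is supported on $\ker(H)_{[0,N-1]}$, i.e. the density matrix $D_{\varrho_\sigma\vert_{\caA_{[0,N-1]}}}$ satisfies $G_N D_{\varrho_\sigma\vert_{\caA_{[0,N-1]}}} G_N = D_{\varrho_\sigma\vert_{\caA_{[0,N-1]}}}$. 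Since $\rank G_N\le d_1$ by [A1], for any positive $A\in\caA_{[0,N-1]}$ we have $\varrho_\sigma(A)=\Tr(D_{\varrho_\sigma}A)=\Tr(G_N D_{\varrho_\sigma}G_N A)\le \Tr(G_N)\cdot \lV G_N D_{\varrho_\sigma} G_N\rV\cdot\lV A\rV$, but more usefully, comparing with the maximally mixed state on the range of $G_N$: one has $D_{\varrho_\sigma\vert_{\caA_{[0,N-1]}}}\le \Tr(G_N)^{-1}G_N\cdot \Tr(G_N)\cdot(\text{operator bound})$. The clean inequality I want is $D_{\varrho_\sigma\vert_{\caA_{[0,N-1]}}}\le d_1\cdot \Tr(G_N)^{-1}G_N$, which holds because any state supported on a subspace of dimension $\le d_1$ is dominated by $d_1$ times the normalized trace on that subspace. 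Taking the expectation against $A\ge 0$ in $\caA_{[0,l-1]}$ with $l\le N$ and applying the first estimate in (\ref{eq:ltqo}), the right-hand side converges as $N\to\infty$ to $d_1\cdot\omega_\sigma(A)$, while the left-hand side converges to $\varrho_\sigma(A)$; hence $\varrho_\sigma(A)\le d_1\,\omega_\sigma(A)$ for all $A\ge 0$ in $\bigcup_l\caA_{[0,l-1]}$, and by density and normality this extends to all of $\caA_\sigma$. The case $\sigma=L$ is identical using the second estimate in (\ref{eq:ltqo}) and $\tau_{N-l}$.

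For item \emph{2.}, I would argue as follows. By \emph{1.}, every $\varrho_\sigma\in\caS_\sigma(H)$ satisfies $\varrho_\sigma\le d_1\,\omega_\sigma$, so in the GNS representation $(\caH_{\omega_\sigma},\pi_{\omega_\sigma},\Omega_{\omega_\sigma})$ of $\omega_\sigma$ the state $\varrho_\sigma$ is given by a positive element of $\pi_{\omega_\sigma}(\caA_\sigma)'$ (standard domination/Radon–Nikodym argument in von Neumann algebras); equivalently $\varrho_\sigma$ is quasi-contained in $\omega_\sigma$. Thus every element of $\caS_\sigma(H)$ is quasi-contained in $\omega_\sigma$. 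Conversely, since $\caS_\sigma(H)$ is a face (Lemma \ref{lem:vr}\,\emph{2.}) containing $\omega_\sigma$, and $\omega_\sigma$ is a factor state by [A4], I would show $\omega_\sigma$ is quasi-contained in every $\varrho_\sigma\in\caS_\sigma(H)$: indeed the central support argument for factor states says that $\omega_\sigma$ is quasi-equivalent to any state of the form $c\,\omega_\sigma\le \varrho_\sigma$, and by item \emph{1.} applied to any normalized component we get such a nonzero $c$ provided $\varrho_\sigma$ does not vanish where $\omega_\sigma$ is supported — which is automatic since $\varrho_\sigma\le d_1\omega_\sigma$ forces the support of $\varrho_\sigma$ to sit inside that of $\omega_\sigma$, and the factoriality of $\omega_\sigma$ then upgrades "quasi-contained in a factor state" to "quasi-equivalent". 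Combining the two containments gives mutual quasi-equivalence of $\omega_\sigma$ with every element of $\caS_\sigma(H)$, and quasi-equivalence is transitive, so all elements of $\caS_\sigma(H)$ are mutually quasi-equivalent.

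I expect the main obstacle to be the clean justification of the two operator inequalities: first, the elementary fact that a state supported on a subspace of dimension $\le d_1$ is dominated by $d_1$ times the tracial state there (this needs a short spectral argument, noting the density matrix has $\le d_1$ nonzero eigenvalues each $\le 1$); and second, the passage to the limit $N\to\infty$ — one must check that $\varrho_\sigma\vert_{\caA_{[0,l-1]}}$ is genuinely recovered as the $N\to\infty$ limit of $\Tr_{[0,N-1]}(D_{\varrho_\sigma\vert_{\caA_{[0,N-1]}}}\,\cdot)$ restricted to $\caA_{[0,l-1]}$, which is just the consistency of reduced density matrices, but must be stated. The factoriality step in \emph{2.} is standard operator-algebra bookkeeping (a factor state is quasi-equivalent to all its nonzero subrepresentations), so the real content is confined to \emph{1.}
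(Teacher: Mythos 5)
Your argument is correct and follows essentially the same route as the paper: for \emph{1.} the chain $D_{\varrho_\sigma\vert_{\caA_{[0,N-1]}}}\le G_N\le d_1\,G_N/\Tr G_N$ (using $\Tr G_N\le d_1$ from [A1]) followed by the $N\to\infty$ limit via (\ref{eq:ltqo}), and for \emph{2.} domination implies $\omega_\sigma$-normality, which together with factoriality of $\omega_\sigma$ from [A4] yields quasi-equivalence. The extra care you flag (consistency of reduced density matrices, the spectral fact behind $D\le d_1 G_N/\Tr G_N$) is exactly what the paper leaves implicit, so there is no gap.
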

\begin{proof}
To prove {\it 1.}, let $D_{\varrho_R\vert_{\caA_{[0,N-1]}}}$ be the reduced density matrix of  $\varrho_R\in\caS_{R}(H)$, 
on $\caA_{[0,N-1]}$.
Then, by [A1] and {\it 1.} of Lemma \ref{lem:vr}, we have
\[
0\le\varrho_R\lmk A\rmk= \Tr\lmk D_{\varrho_R\vert_{\caA_{[0,N-1]}}} A\rmk
\le \Tr G_N A\le d_1\frac{\Tr G_NA}{Tr G_N},
\]
for any $l\le N$ and $A\in\caA_{[0,l-1],+}$.
Taking $N\to\infty$ limit, and from [A4], we obtain 
\[
0\le\varrho_R\lmk A\rmk\le d_1\omega_R(A),
\]
for any $l\in\nan$ and $A\in\caA_{[0,l-1],+}$, proving {\it 1.} for $\sigma=R$.
The same argument proves {\it 1.} for $\sigma=L$.
From {\it 1.}, any $\varrho_\sigma\in\caS_{\sigma}(H)$ is $\omega_{\sigma}$-normal.
As $\omega_{\sigma}$ is a factor state, this means that $\varrho_\sigma$ and $\omega_\sigma$ are quasi-equivalent, proving {\it 2.}
\end{proof}
The following  (except for {\it 6}) is the list of Lemmas proven in \cite{Matsui1} and \cite{Matsui2}.
\begin{lem}\label{lem:fs}Let $\sigma=L,R$.
Assume [A1] and [A4]. 
Let $\varphi_\sigma\in\caS_{\sigma}(H)$ be the pure state fixed in Remark \ref{rem:zp}, and $(\caH_{\sigma},\pi_{\sigma},\Omega_{\sigma})$
its GNS triple.
Define the subspace $\caK_{\sigma}$ by
\[
\caK_{\sigma}:=\bigcap_{x\in\bbZ^{(\sigma)}}\ker\pi_{\sigma}\lmk\tau_x(h)\rmk\subset\caH_{\sigma}
\]
Then the followings hold.
\begin{enumerate}
\item $1\le \dim\caK_{\sigma}\le d_1$.
\item $\varphi_\sigma$ and $\varphi_\sigma\circ\tau^{(\sigma)}_{x}$ are quasi-equivalent for all $x\in\nan$.
\item 
There exists $S_{i,\sigma}\in B(\caH_{\sigma})$, $i=1,\ldots,n$ such that
\begin{align*}
S_{i,\sigma}^*S_{j,\sigma}=\delta_{ij},\quad
\sum_{j=1}^n S_{j,\sigma} \pi_{\sigma}\lmk A\rmk {S_{j,\sigma}}^*=\pi_\sigma\circ\tau^{(\sigma)}_{1}\lmk A\rmk,\quad A\in \caA_{\sigma}.
\end{align*}
with
\begin{align*}
\pi_{R}\lmk\bigotimes_{k=0}^{l-1}e_{i_k,j_k}^{(n)}\rmk
=S_{(i_0,R)}\cdots S_{(i_{l-1},R)}S_{(j_{l-1},R)}^*\cdots S_{(j_0,R)}^*,\quad \text{if } \sigma=R,\\
\pi_{L}\lmk\bigotimes_{k=-l}^{-1}e_{i_k,j_k}^{(n)}\rmk
=S_{(i_{-1},L)}\cdots S_{(i_{-l},L)}S_{(j_{-l},L)}^*\cdots S_{(j_{-1},L)}^*,
\quad \text{if } \sigma=L,
\end{align*}for all $l\in\nan$, $i_k,j_k\in\{1,\ldots,n\}$.
\item For $\{S_{i,\sigma}\}$ in {\it 3}, we have $S_{i,\sigma}^*\caK_{\sigma}\subset \caK_{\sigma}$, $i=1,\ldots,n$.
\item 
There exists one to one correspondence between 
 $\psi\in\caS_{\sigma}(H)$ and  a density matrix $\rho_\psi$ in $\caH_\sigma$ with support in $\caK_{\sigma}$
via
\[
\psi\lmk A\rmk=\Tr\lmk\rho_{\psi}\pi_{\sigma}(A)\rmk,\quad A\in\caA_{\sigma}.
\]
In this correspondence, $\psi$ is pure if and only if $\rho_{\psi}$ is rank one.
\item
For $\omega_\sigma$ in [A4],
$\rho_{\omega_\sigma}$ is a strictly positive element of
$B(\caK_{\sigma})$. 
\end{enumerate}
\end{lem}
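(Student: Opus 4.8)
I need to propose a proof for Lemma \ref{lem:fs}, which compiles a list of properties about the GNS representation of a pure ground state $\varphi_\sigma$ on the half-infinite chain.

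\textbf{Plan of proof.}

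The plan is to establish the six items mostly by invoking the machinery of Matsui \cite{Matsui1,Matsui2} on split property and Cuntz-algebra representations, adapted to the present edge setting, with item \textit{6} being the genuinely new ingredient. First I would handle \textit{1}: by Lemma \ref{lem:a1a4}, every $\varrho_\sigma\in\caS_\sigma(H)$ satisfies $\varrho_\sigma\le d_1\omega_\sigma$, so $\varrho_\sigma$ is $\omega_\sigma$-normal, hence $\varphi_\sigma$-normal (all elements of $\caS_\sigma(H)$ being mutually quasi-equivalent by Lemma \ref{lem:a1a4}.\textit{2}); conversely a standard argument using \textit{1.} of Lemma \ref{lem:vr} shows any density matrix supported in $\caK_\sigma$ gives an element of $\caS_\sigma(H)$. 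The bound $\dim\caK_\sigma\le d_1$ comes from the fact that a density matrix on $\caK_\sigma$ of rank $>d_1$ would violate the $d_1$ bound in [A1] on the finite-volume ground-state dimension (passing to the limit as in the proof of Lemma \ref{lem:a1a4}); and $\dim\caK_\sigma\ge 1$ since $\Omega_\sigma\in\caK_\sigma$. Item \textit{5} then follows from the correspondence $\psi\leftrightarrow\rho_\psi$ together with the observation that $\mathop{\mathrm{supp}}\rho_\psi$ must lie in $\caK_\sigma$ precisely because $\psi(\tau_x(h))=0$ for all $x\in\bbZ^{(\sigma)}$ (Lemma \ref{lem:vr}.\textit{1}), and that purity corresponds to rank one by general GNS/von Neumann-algebra theory since $\pi_\sigma$ generates a type I factor on $\caK_\sigma$.

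For items \textit{2}, \textit{3}, \textit{4}: item \textit{2} is the quasi-equivalence of $\varphi_\sigma$ and its translate, which follows from Lemma \ref{lem:a1a4}.\textit{2} once we note $\varphi_\sigma\circ\tau^{(\sigma)}_x\in\caS_\sigma(H)$ (the translate of a ground state of a translation-invariant interaction is again a ground state). Item \textit{3} is the construction of the Cuntz isometries: by the split property / quasi-equivalence of $\varphi_\sigma$ with its translate and Arveson's theorem one obtains an endomorphism of $\pi_\sigma(\caA_\sigma)''$ implementing the shift, and by Arveson/Bratteli--Jorgensen \cite{BJ} this endomorphism is inner, implemented by a Cuntz family $\{S_{i,\sigma}\}$ with the stated Cuntz relations and the explicit matrix-unit formulas; this is exactly the content of \cite{Matsui1,Matsui2}, so I would cite it. Item \textit{4}, $S_{i,\sigma}^*\caK_\sigma\subset\caK_\sigma$, follows because $S_{i,\sigma}^*$ implements (a piece of) the shift by one site toward the boundary, and $\caK_\sigma=\bigcap_x\ker\pi_\sigma(\tau_x(h))$ is shift-compatible in that direction: concretely, for $\xi\in\caK_\sigma$ and any $x$, $\pi_\sigma(\tau_x(h))S_{i,\sigma}^*\xi$ can be rewritten using the intertwining relation in \textit{3} as $S_{i,\sigma}^*\pi_\sigma(\tau_{x+1}(h))\xi=0$ (being careful about which sites are involved and that $h\ge 0$ so each translate kills $\caK_\sigma$).

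\textbf{The main obstacle: item \textit{6}.} The hard part is showing $\rho_{\omega_\sigma}$ is strictly positive on $B(\caK_\sigma)$, i.e. has full rank. I would argue as follows. By item \textit{5}, $\rho_{\omega_\sigma}$ is a density matrix on $\caK_\sigma$, and I must exclude degeneracy of its support. Suppose $P:=\mathop{\mathrm{supp}}\rho_{\omega_\sigma}\subsetneq \caK_\sigma$. The reduced density matrices $D_{\omega_\sigma|_{\caA_{[0,l-1]}}}$ are, by the first inequality in \eqref{eq:ltqo} of [A4], the exponentially-good limits of the normalized finite-volume ground-state marginals $\Tr_{[0,N-1]}(G_N A)/\Tr_{[0,N-1]}(G_N)$. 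The point is that these normalized marginals have \emph{uniformly} bounded-below nonzero spectrum — this is precisely what the condition \eqref{eq:nzm} in [A4] asserts for $\omega_\sigma$: $\inf\{\sigma(D_{\omega_\sigma|_{\caA_{[0,l-1]}}})\setminus\{0\}\mid l\}>0$. Translating this spectral-gap-of-the-density-matrix statement through the GNS picture and the formula $\psi(A)=\Tr(\rho_\psi\pi_\sigma(A))$, one shows that if $\rho_{\omega_\sigma}$ were not full rank, then some element of $\caS_\sigma(H)$ — namely one whose density matrix is supported on a rank-one subspace orthogonal to $P$, which exists since $\dim\caK_\sigma\ge 1$ and would exceed $\dim P$ — would fail to be dominated by $d_1\omega_\sigma$, contradicting Lemma \ref{lem:a1a4}.\textit{1}. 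More directly: \textit{1.} of Lemma \ref{lem:a1a4} says $\varrho_\sigma\le d_1\omega_\sigma$ for \emph{all} $\varrho_\sigma\in\caS_\sigma(H)$, which in the GNS picture reads $\rho_{\varrho_\sigma}\le d_1\rho_{\omega_\sigma}$ as operators on $\caK_\sigma$; taking $\varrho_\sigma$ with $\rho_{\varrho_\sigma}$ a rank-one projection onto any unit vector of $\caK_\sigma$ forces $\rho_{\omega_\sigma}$ to be bounded below on all of $\caK_\sigma$, i.e. strictly positive. The role of \eqref{eq:nzm} is to guarantee such $\varrho_\sigma$ actually lies in $\caS_\sigma(H)$ with a uniform lower bound, closing the loop. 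I would present item \textit{6} via this domination argument, and flag the verification that every rank-one density matrix on $\caK_\sigma$ comes from a genuine element of $\caS_\sigma(H)$ (item \textit{5} plus [A4]) as the step requiring the most care.
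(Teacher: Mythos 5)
Your proposal is correct and follows essentially the same route as the paper: items \textit{1}--\textit{4} are delegated to the Matsui/Arveson--Cuntz machinery (the paper's Lemma \ref{lem:srep} and the proofs in \cite{Matsui1,Matsui2}), item \textit{5} to quasi-equivalence plus the support argument from Lemma \ref{lem:vr}, and item \textit{6} to the domination $\omega_\xi\le d_1\cdot\omega_\sigma$ of Lemma \ref{lem:a1a4} applied to the vector state of a unit vector in $\caK_\sigma$ orthogonal to $s(\rho_{\omega_\sigma})$ --- the paper phrases this as a contradiction obtained via Kaplansky's density theorem, while you phrase it positively as the operator inequality $\rho_{\varrho_\sigma}\le d_1\rho_{\omega_\sigma}$ on $\caK_\sigma$, which is the same content since $\pi_\sigma(\caA_\sigma)''=B(\caH_\sigma)$ lets the state inequality pass to the density matrices. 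One small correction: condition (\ref{eq:nzm}) of [A4] plays no role in item \textit{6}; that every rank-one density matrix supported in $\caK_\sigma$ yields a genuine element of $\caS_\sigma(H)$ is precisely item \textit{5} (resting only on Lemma \ref{lem:vr} and quasi-equivalence), and (\ref{eq:nzm}) is reserved for the later bijectivity argument of Lemma \ref{lem:bijecs}.
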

\begin{proof}
Proof of {\it 1} is the same as that of Proposition 5.1 of \cite{Matsui2}.
{\it 2.} is due to the fact that  $\varphi_\sigma\circ\tau^{(\sigma)}_{x}\in \caS_{\sigma}(H)$ and Lemma \ref{lem:a1a4}.
From {\it 2}, we may apply Lemma \ref{lem:srep} to $\varphi_{\sigma}$ and obtain {\it 3,4},
as in the proof of Theorem 1.2 \cite{Matsui1}.
By Lemma \ref{lem:a1a4} {\it 2}. any $\psi\in\caS_{\sigma}(H)$ is quasi-equivalent to
$\varphi_{\sigma}$. Therefore, it can be represented by a density matrix $\rho_{\psi}$ on $\caH_{\sigma}$.
However, as $\psi(\tau_x(h))=0$ for all $x\in\bbZ^{(\sigma)}$,
the support of $\rho_{\psi}$ is in $\caK_{\sigma}$.
Conversely, if $\rho$ is a density matrix in $\caH_\sigma$ with support in $\caK_{\sigma}$, then the state given by $\Tr\rho\pi_{\sigma}(\cdot)$
belongs to $\caS_{\sigma}(H)$.
As $\varphi_{\sigma}$ is pure, we have $\pi_{\sigma}(\caA_\sigma)^{''}=B(\caH_{\sigma})$.
Therefore, the correspondence above is one to one, and the statement about the purity holds.
To prove {\it 6}, note that if $\rho_{\omega_\sigma}$ is not
strictly positive in $B(\caK_\sigma)$, then there exists a unit vector $\xi\in\caK_\sigma$ which is orthogonal to
$s(\rho_{\omega_\sigma})$.
By {\it 5}, this  $\xi$ defines a state
$\omega_{\xi}=\braket{\xi}{\pi_\sigma\lmk\cdot\rmk\xi}\in\caS_\sigma(H)$.
Let $p$ be the orthogonal projection onto $\xi$.
As $\varphi_\sigma$ is pure, we have $\pi_\sigma(\caA_\sigma)^{''}=B(\caK_\sigma)$.
Therefore, by Kaplansky's density Theorem, there exists
a net $\{x_\alpha\}_{\alpha}$ in the unit ball of $\caA_{\sigma,+}$,
such that $\pi_{\sigma}\lmk x_\alpha\rmk\to p$ in the $\sigma w$-topology.
For this net, we have $\lim_{\alpha}\omega_\sigma(x_\alpha)=0$
and $\lim_\alpha\omega_{\xi}(x_\alpha)=1$.
This contradicts $\omega_{\xi}\le d_1\cdot\omega_{\sigma}$
given in Lemma \ref{lem:a1a4}.
\end{proof}
\begin{notation}\label{nota:vv}Assume [A1] and [A4]. 
Let $\sigma=L,R$ and $\varphi_\sigma$ be the pure state fixed in Remark \ref{rem:zp}.
Let $\caK_\sigma$ be the finite dimensional Hilbert space and $\{S_{i,\sigma}\}_{i=1}^n\subset B(\caH_{\sigma})$ given in
Lemma \ref{lem:fs}.
By {\it 4} of Lemma \ref{lem:fs}, we can define $v_{i,\sigma}\in B(\caK_{\sigma})$
by $v_{i,\sigma}^*:=S_{i,\sigma}^*\vert_{\caK_{\sigma}}$, for $i=1,\ldots,n$.
We also set $m_\sigma:=\dim\caK_\sigma$, and define $P_{\caK_\sigma}$ to be the orthogonal projection
onto $\caK_\sigma$ on $\caH_\sigma$.
We constantly identify $B(\caK_\sigma)$, $P_{\caK_\sigma}B(\caH_\sigma)P_{\caK_\sigma}$,
and $\Mat_{m_\sigma}$.
\end{notation}
With this notation, {\it 3,5} of Lemma \ref{lem:fs} can be rephrased as follows.:
\begin{lem}\label{lem:fv}
Assume [A1] and [A4]. Let $\sigma=L,R$.
Then there exist $m_\sigma\in\nan$ and $n$-tuple of matrices $\vv_\sigma=(v_{\mu,\sigma})_{\mu=1}^n \in\Mat_{m_\sigma}^{\times n}$ satisfying the followings.
\begin{enumerate}
\item 
\begin{align*}
\sum_{j=1}^n v_{j\sigma} {v_{j\sigma}}^*=\unit_{\Mat_{m_\sigma}}.
\end{align*}
\item 
There exists one to one correspondence between 
 $\psi\in\caS_{\sigma}(H)$ and  a density matrix $\rho_\psi\in\Mat_{m_\sigma}$ 
via
\begin{align*}
&\psi\lmk\bigotimes_{k=0}^{l-1}e_{i_k,j_k}^{(n)}\rmk
=\Tr\lmk\rho_{\psi}\lmk v_{(i_0,R)}\cdots v_{(i_{l-1},R)}v_{(j_{l-1},R)}^*\cdots v_{(j_0,R)}^*\rmk\rmk,\quad \text{if } \sigma=R,\\
&\psi\lmk\bigotimes_{k=-l}^{-1}e_{i_k,j_k}^{(n)}\rmk
=\Tr\lmk\rho_{\psi}\lmk v_{(i_{-1},L)}\cdots v_{(i_{-l},L)}v_{(j_{-l},L)}^*\cdots v_{(j_{-1},L)}^*\rmk\rmk,\quad \text{if } \sigma=L,
\end{align*}
for all $l\in\nan$, $i_k,j_k\in\{1,\ldots,n\}$.
In this correspondence, $\psi$ is pure if and only if $\rho_{\psi}$ is rank one.
\end{enumerate}
\end{lem}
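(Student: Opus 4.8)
The plan is to obtain this lemma by reformulating parts {\it 3} and {\it 5} of Lemma \ref{lem:fs} in the matrix language of Notation \ref{nota:vv}; all of the substantive content already sits in Lemma \ref{lem:fs}, so the only task is bookkeeping with the orthogonal projection $P_{\caK_\sigma}$. First I would set $m_\sigma:=\dim\caK_\sigma$ (which is a positive integer by part {\it 1} of Lemma \ref{lem:fs}) and, identifying $B(\caK_\sigma)$ with $\Mat_{m_\sigma}$ as in Notation \ref{nota:vv}, let $v_{\mu,\sigma}$ be the matrix determined by $v_{\mu,\sigma}^*=S_{\mu,\sigma}^*\vert_{\caK_\sigma}$. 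Part {\it 4} of Lemma \ref{lem:fs} says $\caK_\sigma$ is invariant under every $S_{\mu,\sigma}^*$; hence $S_{\mu,\sigma}^* P_{\caK_\sigma}=P_{\caK_\sigma}S_{\mu,\sigma}^* P_{\caK_\sigma}$ and, taking adjoints, $P_{\caK_\sigma}S_{\mu,\sigma}=P_{\caK_\sigma}S_{\mu,\sigma}P_{\caK_\sigma}$. Under the above identification these read $v_{\mu,\sigma}^*=P_{\caK_\sigma}S_{\mu,\sigma}^* P_{\caK_\sigma}$ and $v_{\mu,\sigma}=P_{\caK_\sigma}S_{\mu,\sigma}P_{\caK_\sigma}$, and they let a projection be absorbed whenever it sits immediately to the left of an $S_{\mu,\sigma}$ or immediately to the right of an $S_{\mu,\sigma}^*$.

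For {\it 1}, I would apply part {\it 3} of Lemma \ref{lem:fs} with $A=\unit$: since $\pi_\sigma$ is unital and $\tau_1^{(\sigma)}$ fixes $\unit$, this yields the Cuntz-type relation $\sum_{j=1}^n S_{j\sigma}S_{j\sigma}^*=\unit_{\caH_\sigma}$; compressing to $\caK_\sigma$ and using the two identities above gives $\sum_j v_{j\sigma}v_{j\sigma}^*=\sum_j P_{\caK_\sigma}S_{j\sigma}S_{j\sigma}^* P_{\caK_\sigma}=P_{\caK_\sigma}=\unit_{\Mat_{m_\sigma}}$. For {\it 2}, I would take the density matrix $\rho_\psi$ on $\caH_\sigma$ attached to $\psi\in\caS_\sigma(H)$ by part {\it 5} of Lemma \ref{lem:fs}; its support lies in $\caK_\sigma$, so $\rho_\psi=P_{\caK_\sigma}\rho_\psi P_{\caK_\sigma}$ and I may regard $\rho_\psi\in\Mat_{m_\sigma}$. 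Plugging the explicit formula for $\pi_\sigma$ from part {\it 3} of Lemma \ref{lem:fs} into $\psi(A)=\Tr(\rho_\psi\pi_\sigma(A))$ and using cyclicity of the trace, it then remains to check that the compression $P_{\caK_\sigma}\bigl(S_{(i_0,\sigma)}\cdots S_{(i_{l-1},\sigma)}S_{(j_{l-1},\sigma)}^*\cdots S_{(j_0,\sigma)}^*\bigr)P_{\caK_\sigma}$ equals the matrix word $v_{(i_0,\sigma)}\cdots v_{(i_{l-1},\sigma)}v_{(j_{l-1},\sigma)}^*\cdots v_{(j_0,\sigma)}^*$; this is done by inserting $P_{\caK_\sigma}$ between consecutive letters one at a time, peeling the $S^*$'s off from the innermost one outward and the $S$'s off from the left, using the invariance identities at each step. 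The $L$-ordered formula follows from the same computation with the corresponding reversed ordering. Finally, bijectivity of $\psi\mapsto\rho_\psi$ and the characterization of purity as rank-one are inherited verbatim from part {\it 5} of Lemma \ref{lem:fs}.

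The only point requiring genuine care is the word-compression identity in {\it 2}: compression of operators is not multiplicative in general, so the equality of the compressed word with the word of compressed letters is not automatic. It holds here precisely because of the one-sided invariance $S_{\mu,\sigma}^*\caK_\sigma\subset\caK_\sigma$ from part {\it 4} of Lemma \ref{lem:fs}, which is exactly what permits each intermediate projection to be absorbed. Beyond this the argument is pure translation, and I do not anticipate a real obstacle; the analytic work has already been carried out in Lemma \ref{lem:fs}.
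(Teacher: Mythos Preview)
Your proposal is correct and is exactly the approach the paper takes: the paper introduces Lemma~\ref{lem:fv} with the sentence ``With this notation, {\it 3,5} of Lemma \ref{lem:fs} can be rephrased as follows,'' and gives no further argument, treating it as a direct translation via Notation~\ref{nota:vv}. One tiny slip: the $S^*$'s should be peeled from the outermost one inward (the rightmost $P_{\caK_\sigma}$ is what lets you start), not from the innermost outward---but the mechanism you describe is right and the argument goes through.
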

Hence we obtained the $n$-tuples of matrices $\vv_{R},\vv_{L}$ which have all the information of 
$\caS_R(H)$ and $\caS_L(H)$ respectively. These tuples $\vv_{R},\vv_{L}$  are not equal in general.
The question is how to connect these informations in a way to approximate $G_N$ simultaneously.
This requires further investigations on the properties of  $\vv_{R},\vv_{L}$, which will be carried out in the next three sections.
\section{A sequence of $v_{i\sigma}^*$-invariant subspaces}\label{sec:san}
We start from the following observation.
\begin{lem}\label{lem:li}
Assume [A1] and [A4]. For $\sigma=L,R$, let 
$\vv_\sigma=(v_{1,\sigma},\ldots, v_{n,\sigma})$ be the matrices
given in Notation \ref{nota:vv}.
Define for each $N\in\nan$,
\begin{align}
\caL_{N,\sigma}:=
\spa\left\{
v_{(i_1,\sigma)}\cdots v_{(i_{N},\sigma)}v_{(j_{N},\sigma)}^*\cdots v_{(j_1,\sigma)}^*\mid
i_k,j_k\in\{1,\ldots,n\},\;k=1,\ldots,N
\right\}.
\end{align}
Then we have
\begin{align}\label{eq:li}
\caL_{1,\sigma}\subset \caL_{2,\sigma}\subset\cdots\subset  \caL_{N,\sigma}\subset \caL_{N+1,\sigma}\subset\cdots\subset B(\caK_\sigma), 
\end{align}
and there exists $N_\sigma\in\nan$ such that 
$\caL_{N_\sigma,\sigma}=B(\caK_\sigma)$.
In particular, we have $P_{\caK_\sigma}\pi_\sigma(\caA_\sigma)P_{\caK_\sigma}=B(\caK_\sigma)$.
\end{lem}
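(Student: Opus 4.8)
The plan is to recognize each $\caL_{N,\sigma}$ as the compression, to the finite-dimensional subspace $\caK_\sigma$, of the local algebra $\pi_\sigma(\caA_{[0,N-1]})$ (for $\sigma=R$; resp.\ $\pi_\sigma(\caA_{[-N,-1]})$ for $\sigma=L$), and then to exploit the irreducibility of $\pi_\sigma$ (coming from purity of $\varphi_\sigma$) together with $\dim\caK_\sigma<\infty$.

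First I would establish the chain (\ref{eq:li}): in a generator $v_{(i_1,\sigma)}\cdots v_{(i_N,\sigma)}v_{(j_N,\sigma)}^*\cdots v_{(j_1,\sigma)}^*$ of $\caL_{N,\sigma}$ I insert $\unit=\sum_{k=1}^n v_{k\sigma}v_{k\sigma}^*$ (Lemma \ref{lem:fv}) between $v_{(i_N,\sigma)}$ and $v_{(j_N,\sigma)}^*$; each of the $n$ resulting terms is a generator of $\caL_{N+1,\sigma}$. Since $B(\caK_\sigma)\cong\Mat_{m_\sigma}$ is finite dimensional, the nondecreasing integer sequence $\dim\caL_{N,\sigma}$ is eventually constant, so there is $N_\sigma$ with $\caL_{N_\sigma,\sigma}=\caL_{N,\sigma}$ for all $N\ge N_\sigma$, and then $\caL_{N_\sigma,\sigma}=\bigcup_N\caL_{N,\sigma}$.

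Next I would identify $\caL_{N,\sigma}$ with a compression. Writing $P:=P_{\caK_\sigma}$ and using $v_{i,\sigma}^*=S_{i,\sigma}^*\vert_{\caK_\sigma}$ together with the invariance $S_{i,\sigma}^*\caK_\sigma\subseteq\caK_\sigma$ (Lemma \ref{lem:fs} {\it 4}), the string $S_{(j_N,\sigma)}^*\cdots S_{(j_1,\sigma)}^*P$ collapses to $v_{(j_N,\sigma)}^*\cdots v_{(j_1,\sigma)}^*P$; taking adjoints and using the formula for $\pi_\sigma$ on matrix units (Lemma \ref{lem:fs} {\it 3}), this gives, for $\sigma=R$,
\[
P\,\pi_R\lmk\bigotimes_{k=1}^{N}e_{i_k,j_k}^{(n)}\rmk P=v_{(i_1,R)}\cdots v_{(i_N,R)}\,v_{(j_N,R)}^*\cdots v_{(j_1,R)}^*,
\]
with the analogous identity for $\sigma=L$ on $\caA_{[-N,-1]}$; alternatively the same identity can be read off directly from the state formula of Lemma \ref{lem:fv}, by pairing both sides against the rank-one density matrices on $\caK_\sigma$, which span $B(\caK_\sigma)$. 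Since the matrix units span $\caA_{[0,N-1]}$, this yields $\caL_{N,\sigma}=P\,\pi_\sigma(\caA_{[0,N-1]})\,P$, and hence $\caL_{N_\sigma,\sigma}=\bigcup_N P\,\pi_\sigma(\caA_{[0,N-1]})\,P$, a linear subspace whose norm closure equals $\overline{P\,\pi_\sigma(\caA_\sigma)\,P}$ by density of the local algebra and norm continuity of $T\mapsto PTP$.

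Finally, since $\varphi_\sigma$ is pure, $\pi_\sigma(\caA_\sigma)''=B(\caH_\sigma)$, so $\pi_\sigma(\caA_\sigma)$ is $\sigma w$-dense in $B(\caH_\sigma)$; the map $T\mapsto PTP$ is $\sigma w$-continuous and maps $B(\caH_\sigma)$ onto $B(\caK_\sigma)$, so it carries this dense subspace to a $\sigma w$-dense subspace of the finite-dimensional space $B(\caK_\sigma)$, which must therefore be all of it. Hence $P\,\pi_\sigma(\caA_\sigma)\,P=B(\caK_\sigma)$, which is the last assertion of the Lemma; and then $\caL_{N_\sigma,\sigma}$, being a closed subspace whose norm closure is $B(\caK_\sigma)$, equals $B(\caK_\sigma)$, so $\caL_{N,\sigma}=B(\caK_\sigma)$ for every $N\ge N_\sigma$. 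I expect the identification step to be the only delicate point: because $S_{i,\sigma}$ (unlike $S_{i,\sigma}^*$) need not preserve $\caK_\sigma$, the cleanest route is to verify the displayed equality on matrix elements between vectors of $\caK_\sigma$, collapsing only the $S^*$-strings; everything else is routine bookkeeping.
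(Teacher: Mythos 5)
Your proposal is correct and follows essentially the same route as the paper: the inclusion $\caL_{N,\sigma}\subset\caL_{N+1,\sigma}$ via inserting $\sum_i v_{i\sigma}v_{i\sigma}^*=\unit$, the identification of $\bigcup_N\caL_{N,\sigma}$ with $P_{\caK_\sigma}\pi_\sigma(\caA_\sigma\cap\caA^{\rm loc}_{\bbZ})P_{\caK_\sigma}$, and then purity of $\varphi_\sigma$ plus finite-dimensionality of $B(\caK_\sigma)$ to upgrade $\sigma w$-density to equality. The extra care you take with the compression identity (collapsing only the $S^*$-strings, since $S_{i,\sigma}$ need not preserve $\caK_\sigma$) is exactly the right bookkeeping and is implicit in the paper's one-line appeal to the definition of $v_{i\sigma}$ and Lemma \ref{lem:fs}.
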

\begin{proof}
By Lemma \ref{lem:fv} {\it 1}, we have 
\[
v_{(i_1,\sigma)}\cdots v_{(i_{N},\sigma)}v_{(j_{N},\sigma)}^*\cdots v_{(j_1,\sigma)}^*
=\sum_iv_{(i_1,\sigma)}\cdots v_{(i_{N},\sigma)}v_{(i\sigma)}v_{(i\sigma)}^*v_{(j_{N},\sigma)}^*\cdots v_{(j_1,\sigma)}^*\in  \caL_{N+1,\sigma}.
\]
This proves the inclusion $\caL_{N,\sigma}\subset \caL_{N+1,\sigma}$.

Note that by the definition of $v_{i\sigma}$ and Lemma \ref{lem:fs},
we have
\[
\bigcup_{N=1}^{\infty}\caL_{N,\sigma}=P_{\caK_\sigma}\pi_\sigma\lmk \caA_\sigma\cap {\mathcal A}^{\rm loc}_{\bbZ}\rmk P_{\caK_\sigma}\subset B(\caK_\sigma).
\]
Note that this is a subspace of a finite dimensional vector space $B(\caK_\sigma)$. Therefore, it is 
$\sigma w$-closed.
As $\varphi_\sigma$ is pure, we have $\pi_\sigma( \caA_\sigma\cap {\mathcal A}^{\rm loc}_{\bbZ})^{''}=B(\caH_\sigma)$.
Therefore, $\bigcup_{N=1}^{\infty}\caL_{N,\sigma}$ is $\sigma w$-dense in $B(\caK_\sigma)$.
Hence we obtain $\bigcup_{N=1}^{\infty}\caL_{N,\sigma}=B(\caK_\sigma)$.
Combining this with (\ref{eq:li}), we conclude that
$\caL_{N,\sigma}=B(\caK_\sigma)$ for $N$ large enough.
\end{proof}
\begin{lem}\label{lem:vucp}
Assume [A1], [A3], and [A4]. For $\sigma=L,R$, let 
$\vv_\sigma=(v_{1,\sigma},\ldots, v_{n,\sigma})$ be the matrices given in Notation \ref{nota:vv}.
Then  followings hold.
\begin{enumerate}
\item $T_{\vv_{\sigma}}$ is a unital CP map such that
$\sigma\lmk T_{\vv_\sigma}\rmk\cap\bbT=\{1\}$.
\item 
There exists a $T_{\vv_\sigma}$-invariant state $\rho_\sigma$ on $\Mat_{m_\sigma}$ such that
\begin{align*}
&P_{\{1\}}^{T_{\vv_\sigma}}(\cdot)=\rho_\sigma(\cdot)\unit_{\caK_\sigma},\\
&P_{\{1\}}^{T_{\vv_\sigma}}\lmk P_{\caK_\sigma}\pi_\sigma\lmk A\rmk P_{\caK_\sigma}\rmk
=\omega_{\infty}\lmk A\rmk\unit_{\caK_\sigma},\quad A\in\caA_{\sigma}.
\end{align*}
\end{enumerate}
\end{lem}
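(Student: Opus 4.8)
The plan is to derive both items from a single mechanism: every $T_{\vv_\sigma}$-invariant, and more generally every $T_{\vv_\sigma}$-periodic, structure on $\Mat_{m_\sigma}$ produces, via the matrix-product formula of Lemma \ref{lem:fv}, a translation-invariant (resp.\ translation-periodic) state on $\caA_{\bbZ}$ that lies in $\caS_{\bbZ}(H)$, and by [A3] such a state must be $\omega_\infty$. That $T_{\vv_\sigma}(X)=\sum_i v_{i,\sigma}Xv_{i,\sigma}^*$ is completely positive is clear, and $T_{\vv_\sigma}(\unit)=\sum_i v_{i,\sigma}v_{i,\sigma}^*=\unit$ by Lemma~\ref{lem:fv}\,{\it 1}; since each power $T_{\vv_\sigma}^{\,k}$ is again unital CP, we have $\|T_{\vv_\sigma}^{\,k}\|=1$, so the spectral radius of $T_{\vv_\sigma}$ is $1$ and every spectral value on $\bbT$ is semisimple. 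Two further elementary ingredients: the compression $\Lambda_\sigma(\cdot):=P_{\caK_\sigma}\pi_\sigma(\cdot)P_{\caK_\sigma}\colon\caA_\sigma\to\Mat_{m_\sigma}$ satisfies $\Lambda_\sigma\lmk\tau^{(\sigma)}_{1}(A)\rmk=T_{\vv_\sigma}\lmk\Lambda_\sigma(A)\rmk$ for all $A\in\caA_\sigma$ --- this follows from Lemma~\ref{lem:fs}\,{\it 3,4}, the inclusion $S_{i,\sigma}^*\caK_\sigma\subset\caK_\sigma$ giving $P_{\caK_\sigma}S_{i,\sigma}(\unit-P_{\caK_\sigma})=0$ --- and $\Lambda_\sigma$ is onto $\Mat_{m_\sigma}$ by Lemma~\ref{lem:li}.

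Next I would carry out the reduction to $\omega_\infty$. A $T_{\vv_\sigma}$-invariant state $\rho$ exists since $T_{\vv_\sigma}^*$ is an affine self-map of the state simplex of $\Mat_{m_\sigma}$. Writing $D$ for the density matrix of such a $\rho$ (so $\sum_i v_{i,\sigma}^*Dv_{i,\sigma}=D$), and using this together with $\sum_i v_{i,\sigma}v_{i,\sigma}^*=\unit$, the matrix-product prescription $\widehat\omega_\rho\lmk\bigotimes_{k=a}^{b}e^{(n)}_{i_k,j_k}\rmk:=\Tr\lmk D\,v_{i_a,\sigma}\cdots v_{i_b,\sigma}v_{j_b,\sigma}^*\cdots v_{j_a,\sigma}^*\rmk$ is consistent under adjoining identity sites at either end, hence defines a translation-invariant state $\widehat\omega_\rho$ on $\caA_{\bbZ}$; its restriction to $\caA_\sigma$ is exactly the state attached to $D$ in Lemma~\ref{lem:fv}\,{\it 2}, hence lies in $\caS_\sigma(H)$, so $\widehat\omega_\rho(\tau_x(h))=0$ first on the corresponding half-line and then, by translation invariance, for all $x\in\bbZ$; by Lemma~\ref{lem:vr}\,{\it 1}, $\widehat\omega_\rho\in\caS_{\bbZ}(H)$, and by [A3] $\widehat\omega_\rho=\omega_\infty$. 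Hence $D$ is determined by $\omega_\infty\vert_{\caA_\sigma}$ through the bijection of Lemma~\ref{lem:fv}, so the $T_{\vv_\sigma}$-invariant state is unique; call it $\rho_\sigma$. Evaluating the prescription on $\caA_\sigma$ yields $\rho_\sigma\lmk\Lambda_\sigma(A)\rmk=\widehat\omega_{\rho_\sigma}(A)=\omega_\infty(A)$ for all $A\in\caA_\sigma$.

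It then remains to show $\sigma(T_{\vv_\sigma})\cap\bbT=\{1\}$ and $\ker(T_{\vv_\sigma}-\idd)=\bbC\unit$. Granting these, $1$ is an algebraically simple eigenvalue, so the Riesz projection $P^{T_{\vv_\sigma}}_{\{1\}}$ has one-dimensional range $\bbC\unit$ and is of the form $X\mapsto\phi(X)\unit$; the normalized left fixed functional $\phi$ of $T_{\vv_\sigma}$ at $1$ is positive by the Perron--Frobenius theorem (the spectral radius being $1$), hence a state, hence $\phi=\rho_\sigma$ by the uniqueness above, which together with the previous paragraph gives $P^{T_{\vv_\sigma}}_{\{1\}}(\cdot)=\rho_\sigma(\cdot)\unit$ and $P^{T_{\vv_\sigma}}_{\{1\}}\lmk\Lambda_\sigma(A)\rmk=\omega_\infty(A)\unit$, as asserted. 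For the two claimed facts I would invoke the structure theory of peripheral spectra of unital $2$-positive maps on finite-dimensional $C^*$-algebras (as in \cite{fnwpure}, \cite{BJ}): $\caN:=\bigoplus_{\lambda\in\bbT}\ker(T_{\vv_\sigma}-\lambda\idd)$ is a $C^*$-subalgebra of $\Mat_{m_\sigma}$ on which $T_{\vv_\sigma}$ restricts to a $*$-automorphism of finite order $p$, with fixed subalgebra $\caF:=\ker(T_{\vv_\sigma}-\idd)$; the minimal central projections of $\caF$ being in bijection with the extremal $T_{\vv_\sigma}$-invariant states, the uniqueness of $\rho_\sigma$ forces $\caF=\bbC\unit$. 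If $p\ge 2$, the spectral projections $q_0,\dots,q_{p-1}\in\caN$ of a generator of $\caN$ satisfy $\sum_jq_j=\unit$ and are cyclically permuted by $T_{\vv_\sigma}$; inserting these position-dependent projections into the construction of the preceding paragraph produces a genuinely period-$p$ state $\psi\in\caS_{\bbZ}(H)$ with $\omega_\infty=\tfrac1p\sum_{j=0}^{p-1}\psi\circ\tau_j$ (the set $\caS_{\bbZ}(H)$ being a face for frustration-freeness, cf.\ Lemma~\ref{lem:vr}, each $\psi\circ\tau_j$ belongs to it), and $\psi\neq\psi\circ\tau_1$ contradicts [A3]. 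Therefore $p=1$, i.e.\ $\caN=\bbC\unit$.

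The main obstacle is exactly this last construction: to realize the period-$p$ state $\psi$ on $\caA_{\bbZ}$ one must absorb the cyclic projections $q_j$ into the matrix-product formula in a position-dependent way, which rests on the $2$-positive structure theory ensuring that $\caN$ sits inside the multiplicative domain of $T_{\vv_\sigma}$ and is suitably intertwined with the Kraus operators $v_{i,\sigma}$ (so that the $q_j$ track the lattice site modulo $p$); in the present, non-minimal, representation this may require first passing to the minimal matrix-product representation of $\omega_\infty$ (the one of \cite{fnwpure}). The remaining steps --- unitality, the covariance of $\Lambda_\sigma$, consistency of the matrix-product formula, and the Perron--Frobenius identification of $\rho_\sigma$ --- are routine.
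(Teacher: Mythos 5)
Your first two paragraphs are sound: unitality, the intertwining $\Lambda_\sigma\circ\tau_1^{(\sigma)}=T_{\vv_\sigma}\circ\Lambda_\sigma$, surjectivity of $\Lambda_\sigma$ via Lemma \ref{lem:li}, the consistency and positivity of the matrix-product state $\widehat\omega_\rho$ built from an invariant density matrix, and the identification $\widehat\omega_\rho=\omega_\infty$ via [A3] are all correct, and they give existence and uniqueness of the invariant state together with $\rho_\sigma\circ\Lambda_\sigma=\omega_\infty$. The genuine gap is exactly where you flag it: the exclusion of peripheral eigenvalues $\lambda\neq 1$. The structure theory you invoke --- that $\caN=\bigoplus_{\lambda\in\bbT}\ker(T_{\vv_\sigma}-\lambda\idd)$ is a $C^*$-subalgebra on which $T_{\vv_\sigma}$ acts as a $*$-automorphism, with cyclic projections $q_j$ that can be threaded through the Kraus operators to build a period-$p$ state on $\caA_\bbZ$ --- is established under the hypothesis of a \emph{faithful} invariant state. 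Here $\rho_\sigma$ is generically not faithful on $B(\caK_\sigma)$: this is precisely why the paper introduces $s(\rho_\sigma)$ and the hierarchy of invariant subspaces in Section \ref{sec:san}. Your proposed repair, passing to the minimal representation of $\omega_\infty$, only controls the peripheral spectrum of the compressed map $T_{(\vv_\sigma)_{s(\rho_\sigma)}}$ (which is what Lemma \ref{lem:srs} does later); a priori a peripheral eigenvector of $T_{\vv_\sigma}$ could live off the support of $\rho_\sigma$, so this does not yield $\sigma(T_{\vv_\sigma})\cap\bbT=\{1\}$ for the uncompressed map. As written, the period-$p$ state $\psi$ is never constructed, so the contradiction with [A3] is not obtained.

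The paper closes this gap by a much more direct use of [A3], and you already have every ingredient for it. For a unit vector $\xi\in\caK_\sigma$, the state $\omega_\xi=\braket{\xi}{\pi_\sigma(\cdot)\xi}$ lies in $\caS_\sigma(H)$, and any weak-$*$ accumulation point of $\psi\otimes(\omega_\xi\circ\tau_N^{(\sigma)})$ annihilates all $\tau_x(h)$, hence lies in $\caS_\bbZ(H)=\{\omega_\infty\}$; therefore $\omega_\xi\circ\tau_N^{(\sigma)}(A)=\braket{\xi}{T_{\vv_\sigma}^N(\Lambda_\sigma(A))\xi}\to\omega_\infty(A)$ for every $A\in\caA_\sigma$. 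Since $\Lambda_\sigma$ is onto $B(\caK_\sigma)$ by Lemma \ref{lem:li} and $\xi$ is arbitrary, $T_{\vv_\sigma}^N(X)$ converges for every $X\in B(\caK_\sigma)$, with limit $\omega_\infty(A)\unit_{\caK_\sigma}$ when $X=\Lambda_\sigma(A)$. Pointwise convergence of the powers of a matrix forces every peripheral eigenvalue to equal $1$, and (peripheral Jordan blocks being trivial since $\lV T_{\vv_\sigma}^N\rV=1$) identifies the limit with the Riesz projection $P_{\{1\}}^{T_{\vv_\sigma}}$, whose range is $\cc\unit$ and which therefore has the form $\rho_\sigma(\cdot)\unit$ with $\rho_\sigma$ a $T_{\vv_\sigma}$-invariant state. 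This yields both assertions of the Lemma in one stroke, bypassing the peripheral-spectrum structure theory entirely.
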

\begin{proof}
The map  $T_{\vv_{\sigma}}$ is a unital CP map because of the definition and Lemma \ref{lem:fv}.
In particular, $T_{\vv_{\sigma}}$ is a contraction. Therefore, 
the spectrum of $T_{\vv_{\sigma}}$ is in the closed unit ball of $\cc$ and every Jordan cell of $T_{\vv_{\sigma}}$ corresponding to an
eigenvalue in $\bbT$ has dimension $1$.
From Lemma \ref{lem:fs}, for any unit vector $\xi\in\caK_\sigma$, $\omega_{\xi}:=\braket{\xi}{\pi_\sigma\lmk\cdot\rmk\xi}$
defines an element in $\caS_\sigma(H)$.
Choose any state $\psi$ on $\caA_{(\Gamma_\sigma)^c}$. Define for each $N\in\nan$, a state $\psi_{N,\sigma}$
on $\caA_\bbZ$ by 
$\psi_{N,R}:=\lmk \psi\otimes \omega_\xi\rmk\circ\tau^{(R)}_N$ if $\sigma=R$, under the identification 
$\caA_{\bbZ}\simeq \caA_{(\Gamma_R)^c}\otimes \caA_R$
and
$\psi_{N,L}:=\lmk  \omega_\xi \otimes\psi\rmk\circ\tau^{(L)}_N$ if $\sigma=L$, under the identification 
$\caA_{\bbZ}\simeq  \caA_L\otimes \caA_{(\Gamma_L)^c}$.
For any $x\in\bbZ$, we have $\psi_{N,\sigma}\circ\tau_x(h)=0$ eventually as $N\to\infty$.
Therefore, any $\wks$accumulation point of $\psi_{N,\sigma}$ belongs to $\caS_{\bbZ}(H)$. By [A3], this means
that  $\psi_{N,\sigma}$ converges to $\omega_\infty$ in $\wks$topology as $N\to\infty$.

For any $A\in \caA_\sigma$, by Lemma \ref{lem:fs}, we have
\[
\psi_{N,\sigma}(A)=\omega_\xi\circ\tau^{(\sigma)}_N(A)
=\braket{\xi}{\pi_\sigma\circ\tau^{(\sigma)}_N\lmk A\rmk\xi}
=\braket{\xi}{T_{\vv_\sigma}^N\lmk P_{\caK_\sigma}\pi_\sigma \lmk A\rmk P_{\caK_\sigma}\rmk\xi}.
\]
On the other hand, by the argument above, we have $\lim_N\psi_{N,\sigma}(A)=\omega_\infty(A)$.
Therefore, we have
\begin{align}
\lim_NT_{\vv_\sigma}^N\lmk P_{\caK_\sigma}\pi_\sigma \lmk A\rmk P_{\caK_\sigma}\rmk=\omega_\infty(A)\unit_{\caK_\sigma},\quad
A\in\caA_\sigma.
\end{align}
By Lemma \ref{lem:li}, this means that $\sigma\lmk T_{\vv_\sigma}\rmk\cap\bbT=\{1\}$,
$P_{\{1\}}^{T_{\vv_\sigma}}(B(\caK_\sigma))=\cc\unit_{\caK_\sigma}$, and
\[
P_{\{1\}}^{T_{\vv_\sigma}}\lmk P_{\caK_\sigma}\pi_\sigma\lmk A\rmk P_{\caK_\sigma}\rmk
=\omega_{\infty}\lmk A\rmk\unit_{\caK_\sigma},\quad A\in\caA_{\sigma}.
\]
Furthermore, by the positivity and the unitarity of $T_{\vv_\sigma}$, there 
 exists a $T_{\vv_\sigma}$-invariant state $\rho_\sigma$ on $\Mat_{m_\sigma}\simeq B(\caK_\sigma)$ such that
\begin{align*}
&P_{\{1\}}^{T_{\vv_\sigma}}(\cdot)=\rho_\sigma(\cdot)\unit_{\caK_\sigma}.
\end{align*}
\end{proof}
Next we consider the restriction $(\vv_\sigma)_{s(\rho_\sigma)}$. (Recall the definitions in Subsection 1.2 of Part I.)
\begin{lem}\label{lem:srs}
Assume [A1], [A3], and [A4]. For $\sigma=L,R$, let 
$\vv_\sigma=(v_{1,\sigma},\ldots, v_{n,\sigma})$ be the matirces given in Notation \ref{nota:vv}.
Then  for $\rho_\sigma$ given in Lemma \ref{lem:vucp}, we have the followings.
\begin{enumerate}
\item \begin{align}\label{eq:svsvs}
s(\rho_\sigma)v_{\mu\sigma}=s(\rho_\sigma)v_{\mu\sigma}s(\rho_\sigma),\quad\mu=1,\ldots,n.
\end{align}
\item $T_{(\vv_{\sigma})_{s(\rho_\sigma)}}$ is a unital primitive CP map
on $B(s(\rho_\sigma)\caK_\sigma)$. 
\item Define a linear map
$\bbE^{(\sigma)}:\Mat_n\otimes B(s(\rho_\sigma)\caK_\sigma)\to B(s(\rho_\sigma)\caK_\sigma)$ by
\[
\bbE^{(\sigma)}\lmk e_{\mu\nu}^{(n)}\otimes X\rmk:=\lmk v_{\mu\sigma}\rmk_{s(\rho_\sigma)}X\lmk \lmk v_{\nu\sigma}\rmk_{s(\rho_\sigma)}\rmk^*,\quad X\in B(s(\rho_\sigma)\caK_\sigma).
\] 
Then $(B(s(\rho_\sigma)\caK_\sigma)),\bbE^{(\sigma)},\rho_\sigma\vert_{B(s(\rho_\sigma)\caK_\sigma)})$
is a minimal standard triple $\sigma$-generating $\omega_\infty$.
\end{enumerate}
(See Appendix \ref{app:c} for the definitions.)
\end{lem}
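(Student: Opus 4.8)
The plan is to derive all three statements from the invariance identity $\rho_\sigma\circ T_{\vv_\sigma}=\rho_\sigma$ of Lemma \ref{lem:vucp}, used together with the spectral information there and the spanning statement of Lemma \ref{lem:li}. Throughout, set $p:=s(\rho_\sigma)$.

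For {\it 1}, invariance of $\rho_\sigma$ under the unital CP map $T_{\vv_\sigma}(X)=\sum_\mu v_{\mu\sigma}Xv_{\mu\sigma}^*$ reads $\sum_\mu v_{\mu\sigma}^*\rho_\sigma v_{\mu\sigma}=\rho_\sigma$ in the trace pairing. Each summand being positive, $v_{\mu\sigma}^*\rho_\sigma v_{\mu\sigma}=\lmk\rho_\sigma^{1/2}v_{\mu\sigma}\rmk^*\lmk\rho_\sigma^{1/2}v_{\mu\sigma}\rmk\le\rho_\sigma$, so its support, the projection onto $\Ran\lmk v_{\mu\sigma}^*\rho_\sigma^{1/2}\rmk$, is dominated by $p$. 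Hence $(\unit-p)v_{\mu\sigma}^*\rho_\sigma^{1/2}=0$, and taking adjoints together with $\Ker\rho_\sigma^{1/2}=(\unit-p)\caK_\sigma$ gives $p\,v_{\mu\sigma}(\unit-p)=0$, which is (\ref{eq:svsvs}). Taking adjoints in (\ref{eq:svsvs}) shows $p\caK_\sigma$ is invariant under every $v_{\mu\sigma}^*$, so the reduction $(\vv_\sigma)_p$ is defined; moreover, by (\ref{eq:svsvs}) and Lemma \ref{lem:fv}, $\sum_\mu\lmk v_{\mu\sigma}\rmk_p\lmk\lmk v_{\mu\sigma}\rmk_p\rmk^*=\sum_\mu p\,v_{\mu\sigma}v_{\mu\sigma}^*\,p=p$, so $T_{(\vv_\sigma)_p}$ is a unital CP map on $B(p\caK_\sigma)$.

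For {\it 2}, a repeated application of (\ref{eq:svsvs}) telescopes the Kraus words, $\lmk v_{\mu_1\sigma}\rmk_p\cdots\lmk v_{\mu_N\sigma}\rmk_p=p\,v_{\mu_1\sigma}\cdots v_{\mu_N\sigma}\,p$, so that $T_{(\vv_\sigma)_p}^N(X)=p\,T_{\vv_\sigma}^N(X)\,p$ for all $X\in B(p\caK_\sigma)$ and $N\in\nan$. By Lemma \ref{lem:vucp}, $\sigma\lmk T_{\vv_\sigma}\rmk\cap\bbT=\{1\}$ and $P_{\{1\}}^{T_{\vv_\sigma}}(\cdot)=\rho_\sigma(\cdot)\unit_{\caK_\sigma}$; since $T_{\vv_\sigma}$ is a contraction its peripheral eigenvalue $1$ is semisimple, whence $T_{\vv_\sigma}^N\to\rho_\sigma(\cdot)\unit_{\caK_\sigma}$ in norm. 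Therefore $T_{(\vv_\sigma)_p}^N(X)\to\rho_\sigma(X)\,p$ for $X\in B(p\caK_\sigma)$, and since $\rho_\sigma\vert_{B(p\caK_\sigma)}$ is faithful (as $p=s(\rho_\sigma)$), a uniform-convergence argument over the rank-one projections of $p\caK_\sigma$ shows that for $N$ large $T_{(\vv_\sigma)_p}^N$ maps every nonzero positive element to a strictly positive one; that is, $T_{(\vv_\sigma)_p}$ is primitive.

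For {\it 3}, $B(p\caK_\sigma)$ is a full matrix algebra, $\bbE^{(\sigma)}$ has the prescribed Kraus form with $\bbE^{(\sigma)}(\unit_{\Mat_n}\otimes\cdot)=T_{(\vv_\sigma)_p}$ unital and primitive by {\it 2}, and $\rho_\sigma\vert_{B(p\caK_\sigma)}$ is a faithful state which is $T_{(\vv_\sigma)_p}$-invariant (immediate from $\rho_\sigma\circ T_{\vv_\sigma}=\rho_\sigma$ and $\rho_\sigma(pYp)=\rho_\sigma(Y)$); thus the triple satisfies the axioms of a standard triple in the sense of Appendix \ref{app:c}. That it $\sigma$-generates $\omega_\infty$ follows by rewriting, for a matrix-unit word $A=\bigotimes_k e_{i_k,j_k}^{(n)}$ in the $\sigma$-ordering, the compression $P_{\caK_\sigma}\pi_\sigma(A)P_{\caK_\sigma}$ --- which by Lemma \ref{lem:fv} is the corresponding ordered product of the $v_{\mu\sigma}$ and $v_{\nu\sigma}^*$ --- as the analogous product of the $\lmk v_{\mu\sigma}\rmk_p$ and $\lmk\lmk v_{\nu\sigma}\rmk_p\rmk^*$, inserting $p$ between consecutive factors via (\ref{eq:svsvs}); applying $\rho_\sigma$ and using $\omega_\infty(A)\unit_{\caK_\sigma}=P_{\{1\}}^{T_{\vv_\sigma}}\lmk P_{\caK_\sigma}\pi_\sigma(A)P_{\caK_\sigma}\rmk=\rho_\sigma\lmk P_{\caK_\sigma}\pi_\sigma(A)P_{\caK_\sigma}\rmk\unit_{\caK_\sigma}$ from Lemma \ref{lem:vucp} yields the generating identity. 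Minimality is then automatic: compressing the identity $\caL_{N_\sigma,\sigma}=B(\caK_\sigma)$ of Lemma \ref{lem:li} by $p$ shows the words in the $\lmk v_{\mu\sigma}\rmk_p$ and $\lmk\lmk v_{\nu\sigma}\rmk_p\rmk^*$ span $B(p\caK_\sigma)$.

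The main obstacle I anticipate is bookkeeping rather than conceptual: handling the support projection $p=s(\rho_\sigma)$ correctly --- deducing (\ref{eq:svsvs}) from invariance, and then repeatedly absorbing or inserting $p$ via (\ref{eq:svsvs}) in the formulas for $T_{(\vv_\sigma)_p}^N$ and for the Kraus words --- and, for {\it 2}, upgrading the spectral statement of Lemma \ref{lem:vucp} to genuine norm convergence $T_{\vv_\sigma}^N\to\rho_\sigma(\cdot)\unit_{\caK_\sigma}$ before extracting positivity-improvement. Once these are settled, the standard-triple axioms and the generating and minimality conditions amount to matching the definitions of Appendix \ref{app:c}.
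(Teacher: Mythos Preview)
Your proposal is correct and follows essentially the same skeleton as the paper: invariance of $\rho_\sigma$ gives (\ref{eq:svsvs}), the identity $T_{(\vv_\sigma)_p}^N(X)=p\,T_{\vv_\sigma}^N(X)\,p$ together with Lemma~\ref{lem:vucp} yields primitivity, and the generating and minimality statements are read off from the Kraus-word formulas. The only genuine differences are local: for primitivity you argue directly via positivity-improvement from the norm convergence $T_{\vv_\sigma}^N\to\rho_\sigma(\cdot)\unit$, whereas the paper instead checks that the peripheral spectrum of $T_{(\vv_\sigma)_p}$ is $\{1\}$ with one-dimensional eigenspace and invokes the characterization Lemma~C.4 of \cite{bo}; and for minimality you compress Lemma~\ref{lem:li}, while the paper simply notes that primitivity of $T_{(\vv_\sigma)_p}$ already forces $\caK_l\big((\vv_\sigma)_p\big)=B(p\caK_\sigma)$ for large $l$. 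Both pairs of arguments are equivalent and equally short. One small point you gloss over that the paper makes explicit: the $\sigma$-generating condition must be verified for \emph{all} intervals $[a_1,a_2]\subset\bbZ$, not only those in $\Gamma_\sigma$, so a translation by some $l$ (using the translation invariance of $\omega_\infty$ from [A3]) is needed before invoking Lemma~\ref{lem:vucp}.
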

\begin{proof}
{\it 1} : As $\rho_\sigma$ is a $T_{\vv_\sigma}$-invariant state, we have
$
\sum_{\mu=1}^n v_{\mu\sigma}^*\tilde \rho_\sigma v_{\mu\sigma}=\tilde\rho_\sigma
$,
for the density matrix $\tilde\rho_\sigma$ of $\rho_\sigma$.
This implies (\ref{eq:svsvs}).
\\
{\it 2.} : The map $T_{\lmk\vv_\sigma\rmk_{s(\rho_\sigma)}}$ is clearly CP on $B(s(\rho_\sigma)\caK_\sigma)$ but it is also unital because
\[
T_{\lmk\vv_\sigma\rmk_{s(\rho_\sigma)}}(s(\rho_\sigma))
=s(\rho_\sigma)T_{\vv_\sigma}\lmk s(\rho_\sigma)\rmk s(\rho_\sigma)
=s(\rho_\sigma)T_{\vv_\sigma}\lmk \unit_{\caK_\sigma}\rmk s(\rho_\sigma)= s(\rho_\sigma).
\]
Here we used (\ref{eq:svsvs}) for the second equality.
For any $\lambda\in\sigma\lmk T_{(\vv_{\sigma})_{s(\rho_\sigma)}}\rmk\cap\bbT$, there exists
a nonzero $X\in B(s(\rho_\sigma)\caK_\sigma)$ such that $T_{(\vv_{\sigma})_{s(\rho_\sigma)}}(X)=\lambda X$.
Using (\ref{eq:svsvs}) and Lemma \ref{lem:vucp} again, we obtain
\[
\lambda^NX=T_{(\vv_{\sigma})_{s(\rho_\sigma)}}^N(X)=s(\rho_\sigma)T_{\vv_\sigma}^N\lmk X\rmk s(\rho_\sigma)\to\rho_\sigma(X)s(\rho_\sigma),\quad N\to\infty.
\]
From this, we conclude $\lambda=1$ and $X\in\cc s(\rho_\sigma)$.
In other words, we have $\sigma\lmk T_{(\vv_{\sigma})_{s(\rho_\sigma)}}\rmk\cap\bbT=\{1\}$ and
$P_{\{1\}}^{T_{(\vv_{\sigma})_{s(\rho_\sigma)}}}\lmk B(s(\rho_\sigma)\caK_\sigma)\rmk=\cc s(\rho_\sigma)$.
The restriction $\rho_\sigma\vert_{B(s(\rho_\sigma)\caK_\sigma)}$ is $T_{\lmk\vv_\sigma\rmk_{s(\rho_\sigma)}}$-invariant
faithful state on $B(s(\rho_\sigma)\caK_\sigma)$.
Therefore,  $T_{(\vv_{\sigma})_{s(\rho_\sigma)}}$ is primitive from Lemma C.4 of \cite{bo}.
\\
{\it 3} : It is clear from {\it 2}. that  $(B(s(\rho_\sigma)\caK_\sigma),\bbE^{(\sigma)},\rho_\sigma\vert_{B(s(\rho_\sigma)\caK_\sigma)})$
is a standard triple. It is minimal because $T_{(\vv_{\sigma})_{s(\rho_\sigma)}}$ is primitive.
For any $a_1,a_2\in\bbZ$ with $a_1\le a_2$ and $A\in \caA_{[a_1,a_2]}$,
choose $l\in\nan$ so that $\tau_l^{(\sigma)}(A)\in\caA_\sigma$.
We have by (\ref {eq:svsvs}) and Lemma \ref{lem:fs}, \ref{lem:vucp}
\begin{align*}
&\omega_{\infty}\lmk A\rmk\unit_{\caK_\sigma}
=\omega_{\infty}\lmk \tau_l^{(\sigma)}(A)\rmk\unit_{\caK_\sigma}=P_{\{1\}}^{T_{\vv_\sigma}}\lmk P_{\caK_\sigma}\pi_\sigma\lmk \tau_l^{(\sigma)}(A)\rmk P_{\caK_\sigma}\rmk\\
&=\sum_{\mu^{(a_2-a_1+1)},\nu^{(a_2-a_1+1)}}
\braket{\widehat\psi_{\mu^{(a_2-a_1+1)}}}{\tau_{-a_1}\lmk A\rmk\widehat\psi_{\nu^{(a_2-a_1+1)}}}
\rho_\sigma\lmk
\widehat v_{s(\rho_\sigma)),\mu^{(a_2-a_1+1,\sigma)}}
\lmk
\widehat v_{s(\rho_\sigma)),\nu^{(a_2-a_1+1,\sigma)}}
\rmk^*
\rmk\unit_{\caK_\sigma}.
\end{align*}
This means that $(B(s(\rho_\sigma)\caK_\sigma)),\bbE^{(\sigma)},\rho_\sigma\vert_{B(s(\rho_\sigma)\caK_\sigma)})$
 $\sigma$-generates $\omega_\infty$.
\end{proof}
\begin{lem}\label{lem:hie}
Let $\caK$ be a finite dimensional Hilbert space, $n\in\nan$, and $\{v_i\}_{i=1}^n$ a set of elements in $B(\caK)$.
We say a subspace $W$ of $\caK$ is  $\{v_i^*\}_{i=1}^n$-invariant if
$v_i^*W\subset W$ for all $i=1,\ldots,n$.
Let $\caK_0'$ be a  $\{v_i^*\}_{i=1}^n$-invariant subspace of $\caK$. 
Suppose that $\caK_0'$ does not have any proper nonzero subspace which is $\{v_i^*\}_{i=1}^n$-invariant.
Then there exists $k\in\nan\cup\{0\}$ and a finite sequence of subspaces $\{\caK_a\}_{a=0}^k$ of $\caK$ satisfying the followings:
\begin{enumerate}
\item[(i)]
$\caK_0=\caK_0'$.
\item[(ii)]
$\caK_0\subsetneq \caK_1\subsetneq\cdots\subsetneq \caK_k=\caK$.
\item[(iii)]
$v_i^*\caK_a\subset\caK_a$, for any $i=1,\ldots,n$, and $a=0,\ldots,k$.
\item[(iv)]
For any $a=0,\ldots,k-1$,
there is no proper intermediate subspace between $\caK_a$ and $\caK_{a+1}$ which is $\{v_i^*\}_{i=1}^n$-invariant.
. 
\end{enumerate}
Furthermore, set $\caK_{-1}:=\{0\}$ and let $p_a$ be the orthogonal projection onto $\caK_a$,
for $a=-1,\ldots,k$.
Set $r_a:=p_a-p_{a-1}$ and $\oo_a:=(\omega_{i,a})_{i=1}^n$, $\omega_{i,a}=r_a v_i r_a$
for all $a=0,\ldots,k$.
Then for each $a=0,\ldots,k$, $T_{\oo_a}$ is an irreducible CP map on $B(r_a\caK)$ and
\[
\omega_{i_1a}\omega_{i_2a}\cdots\omega_{i_la}=p_av_{i_1}v_{i_2}\cdots v_{i_l}\overline{p_{a-1}},\quad
l\in\nan,\quad i_1,\ldots,i_l\in\{1,\ldots,n\}.
\]
\end{lem}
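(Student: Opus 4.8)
\emph{This is a proof plan.} The plan is to first build the flag $\{\caK_a\}$ by a greedy dimension-counting argument, and then to read off the two assertions about the tuples $\oo_a$ from a handful of elementary invariance relations. Throughout I write $\overline p:=\unit-p$ for a projection $p$ (as in the statement), and I use repeatedly the finite-dimensional fact that a subspace is $\{v_i^*\}_{i=1}^n$-invariant if and only if its orthogonal complement is $\{v_i\}_{i=1}^n$-invariant. For the flag: set $\caK_0:=\caK_0'$, and working inside the finite lattice of $\{v_i^*\}$-invariant subspaces of $\caK$, define $\caK_{a+1}$ recursively as a $\{v_i^*\}$-invariant subspace of \emph{minimal dimension} among those strictly containing $\caK_a$ (the family is nonempty since $\caK$ itself qualifies). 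Minimality immediately yields (iv): a $\{v_i^*\}$-invariant $V$ with $\caK_a\subsetneq V\subsetneq\caK_{a+1}$ would be a lower-dimensional competitor. Since $\dim\caK_a$ strictly increases and is bounded by $\dim\caK$, the recursion halts at some $k$ with $\caK_k=\caK$, and (i)--(iii) hold by construction.

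Next, fix $a\in\{0,\dots,k\}$ and record the invariance relations. From $\{v_i^*\}$-invariance of $\caK_{a-1}$ and $\caK_a$ we get that $\caK_{a-1}^\perp=\overline{p_{a-1}}\caK$ and $\caK_a^\perp=\overline{p_a}\caK$ are $\{v_i\}$-invariant; hence every product $v_{i_1}\cdots v_{i_l}$ maps $\overline{p_{a-1}}\caK$ into itself, so $p_{a-1}\,v_{i_1}\cdots v_{i_l}\,\overline{p_{a-1}}=0$, and $v_i\overline{p_a}=\overline{p_a}v_i\overline{p_a}$. Writing $\overline{p_{a-1}}=r_a+\overline{p_a}$, these give for any $\xi\in\overline{p_{a-1}}\caK$ the identity $p_a v_i\xi=r_a v_i\xi=r_a v_i r_a\xi$; in particular $\omega_{i,a}=r_a v_i r_a=p_a v_i\overline{p_{a-1}}$. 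The formula $\omega_{i_1a}\cdots\omega_{i_la}=p_a v_{i_1}\cdots v_{i_l}\overline{p_{a-1}}$ then follows by induction on $l$: apply $\omega_{i_1a}=p_av_{i_1}\overline{p_{a-1}}$ and $r_ap_a=r_a$, and use that $v_{i_2}\cdots v_{i_l}\overline{p_{a-1}}$ has range in $\overline{p_{a-1}}\caK$, so the single-vector identity upgrades $r_a v_{i_1} r_a v_{i_2}\cdots v_{i_l}\overline{p_{a-1}}$ to $p_a v_{i_1} v_{i_2}\cdots v_{i_l}\overline{p_{a-1}}$.

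For irreducibility: $T_{\oo_a}(X)=\sum_i\omega_{i,a}X\omega_{i,a}^*$ is CP on $B(r_a\caK)$ by construction, and $r_a\caK\neq\{0\}$ since $\caK_{a-1}\subsetneq\caK_a$. Let $W\subseteq r_a\caK$ satisfy $\omega_{i,a}^*W\subseteq W$ for all $i$. As $W\subseteq r_a\caK\subseteq\caK_a$ and $\caK_a$ is $\{v_i^*\}$-invariant, for $w\in W$ we have $v_i^*w\in\caK_a=\caK_{a-1}\oplus r_a\caK$ with $r_a$-component $r_a v_i^* r_a w=\omega_{i,a}^*w\in W$, hence $v_i^*w\in\caK_{a-1}\oplus W$. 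Thus $\caK_{a-1}\oplus W$ is $\{v_i^*\}$-invariant with $\caK_{a-1}\subseteq\caK_{a-1}\oplus W\subseteq\caK_a$; by (iv) for $a\ge1$, and by the hypothesis that $\caK_0'$ has no proper nonzero $\{v_i^*\}$-invariant subspace for $a=0$ (where $\caK_{-1}=\{0\}$), it equals $\caK_{a-1}$ or $\caK_a$, i.e. $W=\{0\}$ or $W=r_a\caK$. Hence $\{\omega_{i,a}^*\}$ has no nontrivial common invariant subspace, which is the irreducibility of $T_{\oo_a}$; the mirror argument with $v_i,\caK_a^\perp,\caK_{a-1}^\perp$ in place of $v_i^*,\caK_{a-1},\caK_a$ shows $\{\omega_{i,a}\}$ likewise has none, so the conclusion is insensitive to the sign convention used for irreducibility.

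The main obstacle is not a conceptual one: the argument is a sequence of short linear-algebra steps. The points needing care are the bookkeeping of which subspace ($\caK_a$ or $\caK_a^\perp$) carries $v_i$- versus $v_i^*$-invariance, the inductive upgrade in the composition formula, and noticing that the boundary case $a=0$ of the irreducibility step is precisely where the hypothesis on $\caK_0'$ (rather than (iv)) is invoked.
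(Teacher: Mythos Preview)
Your proof is correct and follows essentially the same route as the paper. The only cosmetic differences are: for the flag, you pick $\caK_{a+1}$ directly as a minimal-dimensional invariant overspace of $\caK_a$, whereas the paper descends through a chain of intermediate invariant subspaces between $\caK_a$ and $\caK$ until none remain (same content, phrased dually); and for irreducibility, you argue via common invariant subspaces $W$ for $\{\omega_{i,a}^*\}$ and form $\caK_{a-1}\oplus W$, whereas the paper uses the equivalent projection criterion $T_{\oo_a}(pB(r_a\caK)p)\subset pB(r_a\caK)p$ and forms $\caK_{a-1}+(r_a-p)\caK_a$ --- these are the same intermediate subspace viewed through the two standard characterizations of irreducibility for CP maps.
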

\begin{rem}
Here, a proper intermediate space between $W_1$ and $W_2$ means a space $W$ such that  $W_1\subsetneq W\subsetneq W_2$.
\end{rem}
\begin{proof}
We consider the following proposition for $b\in\nan\cup\{0\}$:\\\\
$(P_b)$:
There exists a finite sequence of subspaces $\{\caK_a\}_{a=0}^b$ of $\caK$ satisfying the followings	:
\begin{center}
\begin{enumerate}
\item[(i)]
$\caK_0=\caK_0'$.
\item[(ii)]
$\caK_0\subsetneq \caK_1\subsetneq\cdots\subsetneq \caK_b$.
\item[(iii)]
$v_i^*\caK_a\subset\caK_a$, for any $i=1,\ldots,n$, and $a=0,\ldots,b$.
\item[(iv)]
For any $0\le a\le b-1$,
there is no proper intermediate subspace between $\caK_a$ and $\caK_{a+1}$
which is $\{v_i^*\}_{i=1}^n$-invariant.
. 
\end{enumerate}
\end{center}

\noindent 
Clearly $(P_0)$ holds.
Suppose that $(P_b)$ holds and $\caK_b\neq \caK$.
We claim $(P_{b+1})$ holds.
If there is no intermediate subspace which is $\{v_i^*\}_{i=1}^n$-invariant,
between $\caK_b$ and $\caK$, then $\caK_{b+1}:=\caK$ satisfies the condition $(P_{b+1})$.
If there is a proper intermediate subspace $\caH_1$ which is $\{v_i^*\}_{i=1}^n$-invariant,
between $\caK_b$ and $\caK$, then we have
$\dim\caK_b<\dim \caH_1< \dim\caK$ and $\caK_b\subsetneq\caH_1\subsetneq\caK$.
Set $\caK_{b+1}:=\caH_1$ if there is no proper intermediate subspace which is $\{v_i^*\}_{i=1}^n$-invariant,
between $\caK_b$ and $\caH_1$, and $(P_{b+1})$ holds.
Otherwise, there exists a proper intermediate subspace $\caH_2$ which is $\{v_i^*\}_{i=1}^n$-invariant,
between $\caK_b$ and $\caH_1$, and
$\dim \caK_b<\dim\caH_2<\dim\caH_1<\dim\caK$.
This procedure terminates at some point because $\dim\caK$ is finite and 
at each step, the dimension of $\caH_i$ decreases at least $1$.
Suppose that this iteration terminates at $l$-th step, and we obtain a subspace $\caH_l$.
Setting $\caK_{b+1}=\caH_l$, we obtain $(P_{b+1})$.
Note that if $(P_b)$ holds for some $\{\caK_a\}_{a=0}^b$, then
we have $b\le\dim\caK_b\le\dim\caK$.
Therefore, there exists $k\in\nan\cup\{0\}$ and subspaces 
$\{\caK_a\}_{a=0}^k$ satisfying $(P_k)$ and $\caK_k=\caK$.
This proves the first part of the Lemma.

In order to show that 
$T_{\oo_a}$ is an irreducible CP map on $B(r_a\caK)$ for $a=0,\ldots,k$,
it suffices to show that if a projection $p$ in $B(r_a\caK)$ satisfies
$T_{\oo_a}\lmk pB(r_a\caK)p\rmk\subset p B(r_a\caK)p$, then
$p=r_a$ or $p=0$. See Lemma C.2 \cite{Ogata}.
To do so, we assume that there exists a projection
$p$ in $B(r_a\caK)$ such that
$T_{\oo_a}\lmk pB(r_a\caK)p\rmk\subset p B(r_a\caK)p$ and
$p\neq 0, r_a$, and show a contradiction. 
By $T_{\oo_a}\lmk pB(r_a\caK)p\rmk\subset p B(r_a\caK)p$,
we have
\[
0=(r_a-p)T_{\oo_a}\lmk p\rmk(r_a-p)=\sum_{i=1}^n(r_a-p)w_{ia}p\omega_{ia}^*(r_a-p).
\]
Hence we obtain
\begin{align}\label{eq:pvp}
pv_i^*(r_a-p)=p\omega_{ia}^*(r_a-p)
=0,\quad i=1,\ldots n.
\end{align}
Set $\caK':=\caK_{a-1}+(r_a-p)\caK_a$. Note that $\caK_{a-1}$ and
$(r_a-p)\caK_a$ are orthogonal to each other and  $\caK'$ is a subspace such that
$\caK_{a-1}\subsetneq \caK'\subsetneq \caK_a$.
We claim that $\caK'$ is $\{v_i^*\}_{i=1}^n$-invariant:
for any $\xi\in\caK'$, we have an orthogonal decomposition
$\xi=p_{a-1}\xi+(r_a-p)\xi$.
For the first term, we have $v_i^*p_{a-1}\xi\in v_i^*\caK_{a-1}\subset \caK_{a-1}\subset \caK'$
by the  $v_i^*$-invariance of $\caK_{a-1}$.
For the second term, we have
\[
v_i^*(r_a-p)\xi=p_av_i^*(r_a-p)\xi=
p_{a-1}v_i^*(r_a-p)\xi+(r_a-p)v_i^*(r_a-p)\xi+pv_i^*(r_a-p)\xi,
\]
by the  $v_i^*$-invariance of $\caK_{a}$.
The first term on the right hand side is clearly in $\caK_{a-1}\subset \caK'$
and the second term is in $(r_a-p)\caK_a\subset\caK'$.
The last term is $0$ because of (\ref{eq:pvp}).
Hence we prove the claim.\\
This means that 
$\caK'$ is a proper intermediate subspace which is $\{v_i^*\}_{i=1}^n$-invariant,
between $\caK_a$ and $\caK_{a+1}$. 
This is the contradiction.

To show the last equality, note that the  $\{v_i^*\}$-invariance of $\caK_{a}$ implies
that $p_a v_i =p_a v_ip_a$. From the analogous property of $p_{a-1}$ ,  we also have $\overline{p_{a-1}}v_i\overline{p_{a-1}}=v_i\overline{p_{a-1}}$.
From this we obtain the last equality.
\end{proof}
\begin{notation}\label{nota:oa}We assume [A1],[A3], and [A4].We denote the density matrix of $\rho_\sigma$
by $\tilde \rho_\sigma$.
Set ${n_{0}^{(\sigma)}}=\rank s(\rho_\sigma)$ for $\sigma=R,L$.
Let us consider  $\caK_\sigma$, $v_{i\sigma}$
and $s(\rho_\sigma)\caK_\sigma$, given in Lemma \ref{lem:fs},
Notation \ref{nota:vv}
and Lemma \ref{lem:vucp}
for $\sigma=L,R$.
We set $\widetilde v_{\mu\sigma}:=s(\rho_\sigma)v_{\mu\sigma}s(\rho_\sigma)$,
for $\mu=1,\ldots,n$.
We would like to apply Lemma \ref{lem:hie}
to $\caK=\caK_\sigma$, $v_i=v_{i\sigma}$
and $\caK_0'=s(\rho_\sigma)\caK_\sigma$.
Note that by (\ref{eq:svsvs}), $\caK_0'=s(\rho_\sigma)\caK_\sigma$
is  $\{v_\mu^*\}_{\mu=1}^n$-invariant.
Furthermore, as $T_{(\vv_{\sigma})_{s(\rho_\sigma)}}$
is primitive (Lemma \ref{lem:srs}, {\it 2}), for $l\in\nan$
large enough, we have $\caK_l\lmk(\vv_{\sigma})_{s(\rho_\sigma)}\rmk
=B(s(\rho_\sigma)\caK_\sigma)$.
From this, $\caK_0'=s(\rho_\sigma)\caK_\sigma$
does not have any nonzero proper subspace which is  $\{v_\mu^*\}_{\mu=1}^n$-invariant.
Therefore, we may apply Lemma \ref{lem:hie}.
We then obtain $k_\sigma\in\nan\cup\{0\}$ and 
subspaces $\caK_{a,\sigma}$, $a=0,\ldots, k_\sigma$ satisfying the properties (i)-(iv)
given in Lemma \ref{lem:hie}.
Furthermore, set $\caK_{-1\sigma}:=\{0\}$ and let $p_{a\sigma}$ be the orthogonal projection onto $\caK_{a\sigma}$ on $\caK_\sigma$,
for $a=-1,\ldots,k_\sigma$.
Set $r_{a\sigma}:=p_{a\sigma}-p_{a-1,\sigma}$ for $a=0,\ldots,k_\sigma$ and $\oo_{a,\sigma}:=(\omega_{i,a,\sigma})_i$, $\omega_{i,a,\sigma}=r_{a\sigma} v_{i\sigma} r_{a\sigma}$,
$i=1,\ldots,n$, $a=0,\ldots,k_\sigma$.
Then for each $a=0,\ldots,k_\sigma$, $T_{\oo_{a\sigma}}$ is an irreducible CP map on $B(r_{a\sigma}\caK_\sigma)$ and
\begin{align}\label{eq:rvr}
\omega_{i_1a\sigma}\omega_{i_2a\sigma}\cdots\omega_{i_la\sigma}=p_{a\sigma}v_{i_1\sigma}v_{i_2\sigma}\cdots v_{i_l\sigma}\overline{p_{a-1\sigma}},\quad
l\in\nan,\quad i_1,\ldots,i_l\in\{1,\ldots,n\}.
\end{align}

In particular, either $T_{\oo_{a,\sigma}}=0$ and $r_{a\sigma}$ is rank one,
or $T_{\oo_{a,\sigma}}$ is a nonzero irreducible operator.
For the latter case, 
$r_{T_{\oo_{a,\sigma}}}>0$ and there exists a strictly positive element
$t_{a\sigma}$ in
$B(r_{a\sigma}\caK_{\sigma})$ such that 
$T_{\oo_{a,\sigma}}(t_{a\sigma})=r_{T_{\oo_{a,\sigma}}}\cdot t_{a\sigma}$
and any eigenvector of $T_{\oo_{a,\sigma}}$ corresponding to $r_{T_{\oo_{a,\sigma}}}$ belongs to 
$\cc t_{a\sigma}$.
We define $\uu_{a\sigma}=(u_{\mu a\sigma})_{\mu}$ by
\[
u_{\mu a\sigma}:=r_{T_{\oo_{a,\sigma}}}^{-\frac 12}t_{a\sigma}^{-\frac 12}\omega_{\mu a\sigma}t_{a\sigma}^{\frac 12},\quad
\mu=1,\ldots,n.
\]
By this definition, $T_{\uu_{a\sigma}}$ is a unital irreducible CP map.
Therefore, applying Lemma \ref{lem:irr}, we have 
\begin{enumerate}
\item There exists a $b_{{a\sigma}}\in\nan$ such that
$\sigma(T_{\uu_{a\sigma}})\cap\bbT=\left\{ \exp\lmk \frac{2\pi i}{b_{{a\sigma}}}k\rmk\mid
k=0,\ldots,b_{{a\sigma}}-1\right\}$.
\item
For any $\lambda\in \sigma(T_{\uu_{a\sigma}})\cap\bbT$, $\lambda$ is a nondegenerate
eigenvalue of $T_{\uu_{a\sigma}}$.
\item
There exists a unitary matrix  $U_{a\sigma}\in B(r_{a\sigma}\caK_\sigma)$ such that 
\[
T_{\uu_{a\sigma}}\lmk U_{{a\sigma}}^k\rmk=\exp\lmk \frac{2\pi i}{b_{{a\sigma}}}k\rmk U_{{a\sigma}}^k,\quad k=0,\ldots,b_{{a\sigma}}-1.
\]
\item The unitary matrix $U_{{a\sigma}}$ in {\it 3} has a spectral decomposition
\[
U_{{a\sigma}}=\sum_{k=0}^{b_{{a\sigma}}-1}\exp\lmk \frac{2\pi i}{b_{{a\sigma}}}k\rmk Q_k,
\]
with spectral projections satisfying
\begin{align*}
T_{\uu_{a\sigma}}\lmk Q_k\rmk=Q_{k-1},\quad k\;\mod b_{{a\sigma}}.
\end{align*}
\item
The restriction $T_{\uu_{a\sigma}}^{b_{a\sigma}}\vert_{Q_k  B(r_{a\sigma}\caK_{\sigma})Q_k}$ of $T_{\uu_{a\sigma}}^{b_{a\sigma}}$ on 
$Q_k B(r_{a\sigma}\caK_{\sigma}) Q_k$ defines a primitive unital CP map on
$Q_k B(r_{a\sigma}\caK_{\sigma}) Q_k$.
\item
There exists a faithful $T_{\uu_{a\sigma}}$-invariant state
$\varphi_{a\sigma}$.
\end{enumerate}
\end{notation}

\begin{lem}\label{lem:rta}
For any $a\ge 1$, we have $r_{T_{\oo_{a,\sigma}}}<1$.
\end{lem}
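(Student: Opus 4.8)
The plan is to derive a contradiction from the long–time behaviour of the global map $T_{\vv_\sigma}$ together with the fact that, for $a\ge 1$, the subspace $\caK_{0,\sigma}=s(\rho_\sigma)\caK_\sigma$ sits strictly below $\caK_{a,\sigma}$ in the chain of Lemma~\ref{lem:hie}. If $T_{\oo_{a,\sigma}}=0$ then $r_{T_{\oo_{a,\sigma}}}=0<1$ and we are done, so assume $T_{\oo_{a,\sigma}}$ is a nonzero irreducible CP map and let $t_{a\sigma}\in B(r_{a\sigma}\caK_\sigma)$ be the strictly positive Perron eigenvector of Notation~\ref{nota:oa}, so $T_{\oo_{a,\sigma}}(t_{a\sigma})=r_{T_{\oo_{a,\sigma}}}t_{a\sigma}$ and $t_{a\sigma}\neq 0$ (recall $\caK_{a-1,\sigma}\subsetneq\caK_{a,\sigma}$, so $r_{a\sigma}\neq 0$). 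I will regard $t_{a\sigma}$ also as an element $\hat t$ of $B(\caK_\sigma)$, supported on $r_{a\sigma}\caK_\sigma$; note $\hat t=\overline{p_{a-1,\sigma}}\,\hat t\,\overline{p_{a-1,\sigma}}$ since $r_{a\sigma}\le\overline{p_{a-1,\sigma}}$.

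The first, and main, step is to establish the identity
\[
r_{a\sigma}\,T_{\vv_\sigma}^{N}(\hat t)\,r_{a\sigma}=T_{\oo_{a,\sigma}}^{N}(t_{a\sigma})=\lmk r_{T_{\oo_{a,\sigma}}}\rmk^{N}t_{a\sigma},\qquad N\in\nan .
\]
For the first equality I would expand $T_{\vv_\sigma}^{N}(\hat t)=\sum_{\mu_1,\ldots,\mu_N=1}^{n}v_{\mu_1\sigma}\cdots v_{\mu_N\sigma}\,\hat t\,v_{\mu_N\sigma}^{*}\cdots v_{\mu_1\sigma}^{*}$, insert $\hat t=\overline{p_{a-1,\sigma}}\,\hat t\,\overline{p_{a-1,\sigma}}$, and use (\ref{eq:rvr}) (together with $r_{a\sigma}\le p_{a\sigma}$) in the form $r_{a\sigma}v_{\mu_1\sigma}\cdots v_{\mu_N\sigma}\overline{p_{a-1,\sigma}}=\omega_{\mu_1 a\sigma}\cdots\omega_{\mu_N a\sigma}$; compressing by $r_{a\sigma}$ then turns the $N$-fold word in the $v_{\mu\sigma}$ into the corresponding word in the $\omega_{\mu a\sigma}$, and summing over $\mu_1,\ldots,\mu_N$ gives $T_{\oo_{a,\sigma}}^{N}(t_{a\sigma})$. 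The second equality is just iteration of $T_{\oo_{a,\sigma}}(t_{a\sigma})=r_{T_{\oo_{a,\sigma}}}t_{a\sigma}$.

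To finish, I will let $N\to\infty$. By Lemma~\ref{lem:vucp}, $T_{\vv_\sigma}$ is a unital CP map (hence a contraction) with $\sigma(T_{\vv_\sigma})\cap\bbT=\{1\}$ and $P_{\{1\}}^{T_{\vv_\sigma}}(X)=\rho_\sigma(X)\unit_{\caK_\sigma}$; since $1$ is therefore a simple eigenvalue and the only peripheral one, $T_{\vv_\sigma}^{N}\to P_{\{1\}}^{T_{\vv_\sigma}}$, so the left–hand side of the identity converges to $\rho_\sigma(\hat t)\,r_{a\sigma}$. But for $a\ge 1$ one has $r_{a\sigma}=p_{a\sigma}-p_{a-1,\sigma}\perp p_{0,\sigma}=s(\rho_\sigma)$, whence $s(\rho_\sigma)\hat t=0$ and $\rho_\sigma(\hat t)=\Tr\lmk\tilde\rho_\sigma\hat t\rmk=0$. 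Comparing, $\lmk r_{T_{\oo_{a,\sigma}}}\rmk^{N}t_{a\sigma}\to 0$ in $B(r_{a\sigma}\caK_\sigma)$, and since $t_{a\sigma}\neq 0$ this forces $r_{T_{\oo_{a,\sigma}}}<1$.

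The step I expect to be the real obstacle is the bridging identity of the second paragraph: it is what lets the finite–dimensional Perron data of the ``local'' map $T_{\oo_{a,\sigma}}$ on $B(r_{a\sigma}\caK_\sigma)$ be recovered from iterates of the ``global'' map $T_{\vv_\sigma}$, whose asymptotics are rigidly pinned to $\rho_\sigma(\cdot)\unit_{\caK_\sigma}$ by [A3] (Lemma~\ref{lem:vucp}). That identity rests on the block–triangular form of the $v_{\mu\sigma}$ imposed by the $\{v_{\mu\sigma}^{*}\}$–invariance of $\caK_{0,\sigma}\subsetneq\cdots\subsetneq\caK_{k_\sigma,\sigma}$, i.e.\ on (\ref{eq:rvr}), and crucially on $\caK_{0,\sigma}=s(\rho_\sigma)\caK_\sigma$ lying at the bottom of this chain — which is exactly why the estimate fails for $a=0$ (there $T_{\oo_{0,\sigma}}$ is unital and $r_{T_{\oo_{0,\sigma}}}=1$) but holds for every $a\ge 1$.
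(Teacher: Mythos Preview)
Your proof is correct and follows essentially the same approach as the paper. Both arguments use (\ref{eq:rvr}) to rewrite $T_{\oo_{a,\sigma}}^{N}$ as the $r_{a\sigma}$-compression of $T_{\vv_\sigma}^{N}$, and both exploit that for $a\ge 1$ the support $r_{a\sigma}$ is orthogonal to $s(\rho_\sigma)$, so the limiting projection $P_{\{1\}}^{T_{\vv_\sigma}}=\rho_\sigma(\cdot)\unit$ kills the relevant operator. The only cosmetic difference is that the paper applies the bridging identity to $r_{a\sigma}$ and invokes the spectral radius formula $r_{T_{\oo_{a,\sigma}}}=\lim_N\lV T_{\oo_{a,\sigma}}^N(r_{a\sigma})\rV^{1/N}$ directly (so no case split is needed), whereas you apply it to the Perron eigenvector $t_{a\sigma}$ and argue by convergence.
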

\begin{proof}
Using (\ref{eq:rvr}), we have for $a\ge 1$,
\begin{align*}
&r_{T_{\oo_{a,\sigma}}}
=\lim_{N\to\infty}\lV T_{\oo_{a\sigma}}^N\lmk r_{a\sigma}\rmk\rV^{\frac 1N}
=\lim_{N\to\infty}\lV r_{a\sigma}T_{\vv_\sigma}^N
\lmk r_{a\sigma}\rmk r_{a\sigma}\rV^{\frac 1N}
\le\limsup_{N\to\infty} \lmk
\lV\rho_{\sigma}\lmk r_{a\sigma}\rmk r_{a\sigma}\rV
+\lV T_{\vv_\sigma}^N\lmk\unit-P_{\{1\}}^{T_{\vv_\sigma}}\rmk\rV
\rmk^{\frac1N}\\
&=\limsup_{N\to\infty} \lmk
\lV T_{\vv_\sigma}^N\lmk\unit-P_{\{1\}}^{T_{\vv_\sigma}}\rmk\rV
\rmk^{\frac1N}<1.
\end{align*}
\end{proof}
As a result of [A5], we eliminate the possibility $T_{\oo_{a,\sigma}}=0$.
\begin{lem}\label{lem:oonz}
Assume [A1],[A3],[A4], and [A5], and consider the setting in Notation \ref{nota:oa}.
Then for any $a=1,\ldots,k_\sigma$,
we have $T_{\oo_{a,\sigma}}\neq 0$.
\end{lem}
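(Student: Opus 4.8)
The plan is to argue by contradiction: I would assume $T_{\oo_{a,\sigma}}=0$ for some $a\in\{1,\ldots,k_\sigma\}$ and derive a violation of [A5]. In that case the dichotomy recorded in Notation \ref{nota:oa} shows that $r_{a\sigma}$ is a rank-one projection, say $r_{a\sigma}=\ket{\xi}\bra{\xi}$ for a unit vector $\xi\in\caK_{a\sigma}\ominus\caK_{a-1,\sigma}$; and since $T_{\oo_{a,\sigma}}(r_{a\sigma})=\sum_{\mu}\omega_{\mu,a,\sigma}\omega_{\mu,a,\sigma}^{*}=0$ with $\omega_{\mu,a,\sigma}=r_{a\sigma}v_{\mu\sigma}r_{a\sigma}$, it follows that $r_{a\sigma}v_{\mu\sigma}^{*}r_{a\sigma}=0$ for every $\mu=1,\ldots,n$.

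Next I would track the edge state generated by $\xi$. By {\it 5} of Lemma \ref{lem:fs}, the vector state $\psi_\xi:=\braket{\xi}{\pi_\sigma(\cdot)\,\xi}$ lies in $\caS_\sigma(H)$, with density matrix $r_{a\sigma}$ and hence support $r_{a\sigma}\caK_\sigma$. By iterating the covariance relation of {\it 3} of Lemma \ref{lem:fs} $l$ times, one sees that $\psi_\xi\circ\tau^{(\sigma)}_l$ is a sum of vector states $\braket{\eta}{\pi_\sigma(\cdot)\,\eta}$ over the finitely many vectors $\eta$ obtained from $\xi$ by successively applying $l$ of the operators $v_{\mu\sigma}^{*}$. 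The key observation is that a single such application already lands in $\caK_{a-1,\sigma}$: indeed $v_{\mu\sigma}^{*}\xi\in\caK_{a\sigma}$ by $\{v_{\mu\sigma}^{*}\}$-invariance, while its component along $r_{a\sigma}\caK_\sigma$ equals $r_{a\sigma}v_{\mu\sigma}^{*}r_{a\sigma}\xi=0$; and $\caK_{a-1,\sigma}$ is itself $\{v_{\mu\sigma}^{*}\}$-invariant (property (iii) of Lemma \ref{lem:hie}), so every further application keeps the vector inside it. Hence each $\eta$ lies in $\caK_{a-1,\sigma}$, and the density matrix of $\psi_\xi\circ\tau^{(\sigma)}_l$ is supported in $\caK_{a-1,\sigma}$ for all $l\ge1$.

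To conclude I would invoke orthogonality of supports: $r_{a\sigma}\caK_\sigma\perp\caK_{a-1,\sigma}$, and since $a\ge1$ the subspace $\caK_{a-1,\sigma}$ contains $\caK_{0\sigma}=s(\rho_\sigma)\caK_\sigma\ne\{0\}$. As $\varphi_\sigma$ is pure, $\pi_\sigma(\caA_\sigma)''=B(\caH_\sigma)$, so by Kaplansky's density theorem (exactly as in the proof of {\it 6} of Lemma \ref{lem:fs}) there is a net of self-adjoint contractions $x_\beta\in\caA_\sigma$ with $\pi_\sigma(x_\beta)\to 2\ket{\xi}\bra{\xi}-\unit_{\caH_\sigma}$ in the $\sigma w$-topology; then $\psi_\xi(x_\beta)\to1$ while $\psi_\xi\circ\tau^{(\sigma)}_l(x_\beta)\to-1$, giving $\lV\psi_\xi-\psi_\xi\circ\tau^{(\sigma)}_l\rV=2$ for every $l\ge1$. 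This contradicts [A5] applied to $\psi_\xi\in\caS_\sigma(H)$, which supplies some $l_{\psi_\xi}\ge1$ with $\lV\psi_\xi-\psi_\xi\circ\tau^{(\sigma)}_{l_{\psi_\xi}}\rV<2$; therefore $T_{\oo_{a,\sigma}}\ne0$. I expect the only genuinely delicate point to be the ``collapse'' of the support of $\psi_\xi$ after a single shift; once the flag of $\{v_{\mu\sigma}^{*}\}$-invariant subspaces from Lemma \ref{lem:hie} is available this is a short computation, and the remainder is the standard orthogonal-supports estimate for the norm distance of two states.
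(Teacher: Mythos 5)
Your proposal is correct and follows essentially the same route as the paper: assume $T_{\oo_{a,\sigma}}=0$, note $r_{a\sigma}$ is rank one, show that every $\widehat{v_{\mu^{(l)},\sigma}}^{*}\xi$ falls into $\caK_{a-1,\sigma}\perp\xi$ (the paper reads this off the $l$-fold relation (\ref{eq:rvr}) directly, you do one step plus invariance of $\caK_{a-1,\sigma}$ — equivalent), and then use Kaplansky's density theorem to force $\lV\psi_\xi-\psi_\xi\circ\tau^{(\sigma)}_{l}\rV=2$, contradicting [A5]. The only cosmetic difference is that you approximate $2\ket{\xi}\bra{\xi}-\unit$ by self-adjoint contractions while the paper approximates $\ket{\xi}\bra{\xi}$ by positive contractions and forms $2x_\alpha-\unit$; these are interchangeable.
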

\begin{proof}
If the claim does not hold,
there exists an $a_0=1,\ldots,k_\sigma$ such that
$T_{\oo_{a_0,\sigma}}=0$. As stated in Notation \ref{nota:oa}, in this case,
$r_{a_0\sigma}$ is rank one.
Let $\xi$ be a unit vector in the one dimensional space
$r_{a_0\sigma}\caK_{\sigma}$.
Then $\psi=\braket{\xi}{\pi_\sigma\lmk\cdot \rmk\xi}$
is an element of $\caS_{\sigma}(H)$ by Lemma \ref{lem:fs}.

Because of the choice of $a_0$, we know that 
$\overline{p_{a_0-1,\sigma}}v_{\mu^{(l)},\sigma}^*p_{a_0,\sigma}=0$, using (\ref{eq:rvr}).
Therefore, we have $v_{\mu^{(l)},\sigma}^*\caK_{a_0,\sigma}\subset 
\caK_{a_0-1,\sigma}$, for any $\mu^{(l)}\in\{1,\ldots,n\}^{\times l}$,
and $l\in\nan$.
In particular, $\xi$ and  $v_{\mu^{(l)},\sigma}^*\xi$ are orthogonal,
for any $\mu^{(l)}\in \{1,\ldots,n\}^{\times l}$ and $l\in\nan$.
From this, we obtain
\begin{align*}
\psi\circ\tau_l^{(\sigma)}=
\sum_{\mu^{(l)}}\braket{\xi}{\widehat{v_{\mu^{(l)},\sigma}}
\pi_\sigma\lmk\cdot\rmk {\widehat{v_{\mu^{(l)},\sigma}}}^*\xi}
=\sum_{\mu^{(l)}}\braket{\widehat{v_{\mu^{(l)},\sigma}}^*\xi}
{\pi_\sigma\lmk\cdot\rmk {\widehat{v_{\mu^{(l)},\sigma}}}^*\xi}.
\end{align*}

Let $p$ be the orthogonal projection onto $\cc\xi$.
As $\xi$ and  $v_{\mu^{(l)},\sigma}^*\xi$ are orthogonal, we have
$pv_{\mu^{(l)}}^*\xi=0$.
Since $\varphi_{\sigma}$ is pure, we have $\pi_\sigma(\caA_\sigma)^{''}=B(\caH_\sigma)$.
Therefore, by Kaplansky's density Theorem, there exists
a net $\{x_\alpha\}$ in the unit ball of $\caA_{\sigma+}$
such that 
$
\sigma w-\lim_{\alpha}\pi_{\sigma}\lmk x_\alpha\rmk
=p
$.
For this net, we have
$
\lim_{\alpha}\psi\circ\tau_l^{(\sigma)}\lmk x_\alpha\rmk
=0$, and
$\lim_{\alpha}\psi(x_\alpha)=1$,
for any $l\in\nan$.
Note also that
$-1\le 2 x_\alpha-1\le 1$, hence
$\lV 2x_\alpha-1\rV\le 1$.

This implies
\[
\lV\psi-\psi\circ\tau_l^{(\sigma)}\rV\ge
\lim_\alpha\lv \lmk\psi-\psi\circ\tau_l^{(\sigma)}\rmk\lmk 2 x_\alpha-1\rmk\rv=2,\quad
l\in\nan.
\]
This contradicts [A5].
\end{proof}
\section{Primitivity of $T_{\uu_{a\sigma}}$}\label{sec:yon}
In this section, we show that the number $b_{a\sigma}$, given in
\ref{nota:oa} is $1$. In other words, $T_{\uu_{a\sigma}}$ is primitive.
More precisely, we show the following Lemma.
\begin{lem}\label{lem:primu}
We assume [A1],[A3],[A4], and [A5] and use Notation \ref{nota:vv} and Notation \ref{nota:oa}.
Then for any $a=0,\ldots, k_\sigma$,
there exist a unitary $V_{a\sigma}:\cc^{n_0^{(\sigma)}}\to r_{a\sigma}\caK_{\sigma}$
and $c_{a\sigma}\in \bbT$ such that
\[
u_{\mu a\sigma}=c_{a\sigma}V_{a\sigma}\widetilde{v_{\mu\sigma}}V_{a\sigma}^*,\quad
\mu=1,\ldots,n.
\]
In particular, $T_{\uu_{a\sigma}}$ is primitive and $\rank r_{a\sigma}={n_{0}^{(\sigma)}}$.
\end{lem}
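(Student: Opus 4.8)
The plan is to identify $\uu_{a\sigma}$, up to a unitary conjugation and a phase, with the primitive $n$-tuple $\widetilde{\vv_\sigma}=\lmk s(\rho_\sigma)v_{\mu\sigma}s(\rho_\sigma)\rmk_{\mu}$ carrying the minimal standard triple of $\omega_\infty$ furnished by Lemma \ref{lem:srs}; primitivity of $T_{\uu_{a\sigma}}$ (i.e. $b_{a\sigma}=1$) and $\rank r_{a\sigma}=n_0^{(\sigma)}$ are then immediate. The case $a=0$ is essentially a tautology: by (\ref{eq:svsvs}) one has $r_{0\sigma}=s(\rho_\sigma)$ and $\omega_{\mu 0\sigma}=r_{0\sigma}v_{\mu\sigma}r_{0\sigma}=\widetilde{v_{\mu\sigma}}$, so $T_{\oo_{0\sigma}}=T_{(\vv_\sigma)_{s(\rho_\sigma)}}$, which is unital and primitive by Lemma \ref{lem:srs}; hence $r_{T_{\oo_{0\sigma}}}=1$, $t_{0\sigma}$ may be taken to be $r_{0\sigma}$, $u_{\mu 0\sigma}=\widetilde{v_{\mu\sigma}}$, and one takes $c_{0\sigma}=1$ together with $V_{0\sigma}$ the fixed unitary $\cc^{n_0^{(\sigma)}}\to s(\rho_\sigma)\caK_\sigma$ (recall $n_0^{(\sigma)}=\rank s(\rho_\sigma)$).

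Fix now $a\ge 1$. The first and main step is to prove (this is the separate Lemma \ref{lem:ug}) that the translation invariant finitely correlated state $\omega_{\uu_{a\sigma}}$ on $\caA_\bbZ$ determined by the unital CP map $T_{\uu_{a\sigma}}$ and its faithful invariant state $\varphi_{a\sigma}$ (item 6 of Notation \ref{nota:oa}) lies in $\caS_\bbZ(H)$; then [A3] forces $\omega_{\uu_{a\sigma}}=\omega_\infty$. To see $\omega_{\uu_{a\sigma}}(\tau_x(h))=0$ it is enough, by translation invariance, to treat $x=0$. Writing the non-negative $h\in\caA_{[0,m-1]}$ as $h=\sum_\alpha g_\alpha^*g_\alpha$, the inclusion $\caK_\sigma\subseteq\ker\pi_\sigma(h)$ gives $\pi_\sigma(g_\alpha)P_{\caK_\sigma}=0$ for each $\alpha$; since the length-$m$ products of the Cuntz isometries $S_{i,\sigma}$ of Lemma \ref{lem:fs} are isometries with mutually orthogonal ranges, this forces every ``block'' $B^\alpha_k$ appearing when $\pi_\sigma(g_\alpha)$ is decomposed against those orthogonal ranges — each a fixed polynomial in the $v_{\mu\sigma}$'s — to vanish on $\caK_\sigma$. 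On the other hand, using the definition of $\uu_{a\sigma}$ together with (\ref{eq:rvr}), the relation $\overline{p_{a-1,\sigma}}\,t_{a\sigma}=t_{a\sigma}$, and $T_{\oo_{a\sigma}}^m(t_{a\sigma})=r_{T_{\oo_{a\sigma}}}^m t_{a\sigma}$, one rewrites $\omega_{\uu_{a\sigma}}(h)$ as $r_{T_{\oo_{a\sigma}}}^{-m}$ times a trace, against a suitable conjugate of the density matrix of $\varphi_{a\sigma}$, of $p_{a\sigma}\bigl(\sum_{ij}c_{ij}v_{i_0\sigma}\cdots v_{i_{m-1}\sigma}\,t_{a\sigma}\,v_{j_{m-1}\sigma}^*\cdots v_{j_0\sigma}^*\bigr)p_{a\sigma}$, where $(c_{ij})$ is the coefficient matrix of $h$; grouping this sum according to the factorization $h=\sum_\alpha g_\alpha^*g_\alpha$ exhibits it as $\sum_{\alpha,k}B^\alpha_k\,t_{a\sigma}\,(B^\alpha_k)^*$, which is $0$ since each $B^\alpha_k$ vanishes on $\caK_\sigma$. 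Hence $\omega_{\uu_{a\sigma}}(h)=0$, and the $T_{\oo_{a\sigma}}^m$-eigenvalue identity is precisely what makes the formula a bona fide state, i.e. cancels the normalization factor.

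For the second step, recall from Lemma \ref{lem:srs} that $T_{(\vv_\sigma)_{s(\rho_\sigma)}}$ is primitive, so $\omega_\infty$ is a \emph{pure} finitely correlated state whose minimal standard triple is $\lmk B(s(\rho_\sigma)\caK_\sigma),\bbE^{(\sigma)},\rho_\sigma\vert_{B(s(\rho_\sigma)\caK_\sigma)}\rmk$, carried by the primitive tuple $\widetilde{\vv_\sigma}$; while $T_{\uu_{a\sigma}}$ is a unital \emph{irreducible} CP map (Notation \ref{nota:oa}) which, by the first step, generates that same pure state. Apply now the structure theorem of \cite{fnwpure}: an irreducible generating representation of a pure finitely correlated state is primitive and is the minimal representation, and the minimal representation of a given pure finitely correlated state is unique up to a unitary conjugation together with a phase on the operators (the phase being invisible because the generated state involves only the products $u_{\mu_1 a\sigma}\cdots u_{\mu_l a\sigma}u_{\nu_l a\sigma}^*\cdots u_{\nu_1 a\sigma}^*$, which are unchanged under $u_{\mu a\sigma}\mapsto c\,u_{\mu a\sigma}$). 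This yields $b_{a\sigma}=1$, $\rank r_{a\sigma}=n_0^{(\sigma)}$, and a unitary $V_{a\sigma}:\cc^{n_0^{(\sigma)}}\to r_{a\sigma}\caK_\sigma$ and $c_{a\sigma}\in\bbT$ with $u_{\mu a\sigma}=c_{a\sigma}V_{a\sigma}\widetilde{v_{\mu\sigma}}V_{a\sigma}^*$.

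The part I expect to be the real obstacle is the first step, Lemma \ref{lem:ug}: one must carry the statement that $h$ annihilates $\caK_\sigma$ in the GNS representation $\pi_\sigma$ — a fact about the $v_{\mu\sigma}$'s — over to the transfer-operator representation built from the compressed, rescaled tuple $\uu_{a\sigma}$, keeping careful track of the Perron data $r_{T_{\oo_{a\sigma}}}$ and $t_{a\sigma}$ through (\ref{eq:rvr}) and Lemma \ref{lem:rta}, and verifying that the resulting expression is a state. Once $\omega_{\uu_{a\sigma}}=\omega_\infty$ is in hand, the conclusion is a direct application of the cited theorem together with the irreducibility and unitality of $T_{\uu_{a\sigma}}$ already recorded in Notation \ref{nota:oa}.
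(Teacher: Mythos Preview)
Your two-step strategy matches the paper's: first show that $(B(r_{a\sigma}\caK_\sigma),\uu_{a\sigma},\varphi_{a\sigma})$ $\sigma$-generates $\omega_\infty$ (this is Lemma~\ref{lem:ug}; the paper's computation factors $h=h^{1/2}h^{1/2}$ and collapses the expression to $\varphi_{a\sigma}\bigl(t_{a\sigma}^{-1/2}P_{\caK_\sigma}\pi_\sigma(\tau_y(h))P_{\caK_\sigma}t_{a\sigma}^{-1/2}\bigr)=0$, but your argument via $h=\sum_\alpha g_\alpha^*g_\alpha$ is equivalent), then identify $\uu_{a\sigma}$ with $\widetilde{\vv_\sigma}$ up to a unitary and a phase.

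The difference is in how the second step is justified. You invoke a ``structure theorem'' from \cite{fnwpure} asserting that an irreducible generator of a pure finitely correlated state is automatically primitive and minimal. The paper does not cite such a statement; it uses only Theorem~\ref{thm:iso} (the FNW uniqueness of \emph{minimal} triples), which yields a $*$-isomorphism $\Theta_{a\sigma}:B(s(\rho_\sigma)\caK_\sigma)\to\mathfrak B_{a\sigma}$ onto the minimal $\bbE$-invariant subalgebra $\mathfrak B_{a\sigma}\subset B(r_{a\sigma}\caK_\sigma)$ (Remark~\ref{rem:iso}). The passage from this $*$-isomorphism to a genuine unitary equivalence of the tuples---and in particular the conclusions $b_{a\sigma}=1$ and $\rank r_{a\sigma}=n_0^{(\sigma)}$---is worked out in a separate self-contained Lemma~\ref{lem:irpri}: one uses the cyclic spectral decomposition of the irreducible $T_{\uu_{a\sigma}}$ (Lemma~\ref{lem:irr}), shows $\Theta_{a\sigma}$ restricts to a $*$-isomorphism onto each cyclic block, builds an explicit intertwining unitary $W$, and finally argues that period $b>1$ would force a non-scalar fixed vector of $T_{\uu_{a\sigma}}$, contradicting irreducibility. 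So your outline is correct, but what you package as a single citation the paper spends about a page proving; you should verify whether \cite{fnwpure} actually contains the statement in the precise form you need, or whether Lemma~\ref{lem:irpri} is genuinely required.
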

This is because of the uniqueness of the bulk ground state [A3],
which implies the following Lemma.
\begin{lem}\label{lem:ug}
We assume [A1], [A3], [A4] [A5] ,and use Notation \ref{nota:vv},
 Notation \ref{nota:oa}.
Then $(B(r_{a\sigma}\caK_\sigma), \uu_{a\sigma},\varphi_{a\sigma})$
$\sigma$-generates
$\omega_{\infty}$.
\end{lem}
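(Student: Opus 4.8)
The plan is to write $\Psi$ for the state on $\caA_\bbZ$ that is $\sigma$‑generated by the triple $(B(r_{a\sigma}\caK_\sigma),\uu_{a\sigma},\varphi_{a\sigma})$. By Notation \ref{nota:oa} this triple is standard: $T_{\uu_{a\sigma}}$ is a unital (irreducible) CP map and $\varphi_{a\sigma}$ is a faithful $T_{\uu_{a\sigma}}$‑invariant state; hence $\Psi$ exists, is translation invariant, and for a local observable $A$ the value $\Psi(\tau_x(A))$ depends only on the shape of $A$, not on its position. We want $\Psi=\omega_\infty$. By [A3] this follows once we know $\Psi\in\caS_\bbZ(H)$, and by Lemma \ref{lem:vr}, {\it 1}, this means $\Psi(\tau_x(h))=0$ for all $x\in\bbZ$; by translation covariance it is enough to verify this for a single $x$, namely $x=0$ for $\sigma=R$ and $x=-m$ for $\sigma=L$ (so that the shifted copy of $h$ lies in $\caA_\sigma$). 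When $a=0$ the statement is already Lemma \ref{lem:srs}, {\it 3}, so one may assume $a\ge1$, although the argument is uniform in $a$. Below I do $\sigma=R$; the case $\sigma=L$ is identical after reversing the orders of the indices as in Lemma \ref{lem:fs}, {\it 3} (and replacing $h$ by $\tau_{-m}(h)$).

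The key input is that the compressions $v_{\mu\sigma}$ annihilate $h$ in the strong form
\begin{align}\label{eq:annih}
\sum_{\mu^{(m)},\nu^{(m)}}h_{\mu^{(m)}\nu^{(m)}}\,\hat v_{\mu^{(m)}\sigma}\,X\,\hat v_{\nu^{(m)}\sigma}^*=0,\qquad X\in B(\caK_\sigma),
\end{align}
where $h=\sum_{\mu^{(m)},\nu^{(m)}}h_{\mu^{(m)}\nu^{(m)}}\bigotimes_{k=0}^{m-1}e^{(n)}_{\mu_k\nu_k}$ and $\hat v_{\mu^{(m)}\sigma}:=v_{\mu_0\sigma}\cdots v_{\mu_{m-1}\sigma}$. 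Indeed, $h^{1/2}\in\caA_{[0,m-1]}$ and $\caK_\sigma\subset\ker\pi_\sigma(h)=\ker\pi_\sigma(h^{1/2})$ by Lemma \ref{lem:fs}, so $\pi_\sigma(h^{1/2})\xi=0$ for $\xi\in\caK_\sigma$. Expanding $\pi_\sigma(h^{1/2})$ through the isometries $S_{i\sigma}$ of Lemma \ref{lem:fs}, {\it 3}, hitting the identity $\pi_\sigma(h^{1/2})\xi=0$ on the left with $S_{\kappa_{m-1}\sigma}^*\cdots S_{\kappa_0\sigma}^*$, and using $S_{i\sigma}^*S_{j\sigma}=\delta_{ij}$, the definition $v_{\nu\sigma}^*=S_{\nu\sigma}^*|_{\caK_\sigma}$ and the $\{S_{i\sigma}^*\}$‑invariance of $\caK_\sigma$ (Lemma \ref{lem:fs}, {\it 4}), one gets $\sum_{\nu^{(m)}}(h^{1/2})_{\kappa^{(m)}\nu^{(m)}}\hat v_{\nu^{(m)}\sigma}^*\xi=0$ for all $\xi\in\caK_\sigma$ and all $\kappa^{(m)}$; thus $\sum_{\nu^{(m)}}(h^{1/2})_{\kappa^{(m)}\nu^{(m)}}\hat v_{\nu^{(m)}\sigma}^*=0$ in $B(\caK_\sigma)$, and by adjoints $\sum_{\nu^{(m)}}(h^{1/2})_{\nu^{(m)}\kappa^{(m)}}\hat v_{\nu^{(m)}\sigma}=0$. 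Factoring $h=h^{1/2}h^{1/2}$, i.e.\ $h_{\mu\nu}=\sum_\kappa(h^{1/2})_{\mu\kappa}(h^{1/2})_{\kappa\nu}$, and regrouping yields \eqref{eq:annih}.

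Finally, the $\sigma$‑generated state evaluates on $h$ as $\Psi(h)=\sum_{\mu^{(m)},\nu^{(m)}}h_{\mu^{(m)}\nu^{(m)}}\varphi_{a\sigma}\bigl(\hat u_{\mu^{(m)}a\sigma}(\hat u_{\nu^{(m)}a\sigma})^*\bigr)$ with $\hat u_{\mu^{(m)}a\sigma}:=u_{\mu_0 a\sigma}\cdots u_{\mu_{m-1}a\sigma}$. By the definition $u_{\mu a\sigma}=r_{T_{\oo_{a,\sigma}}}^{-1/2}t_{a\sigma}^{-1/2}\omega_{\mu a\sigma}t_{a\sigma}^{1/2}$ in Notation \ref{nota:oa}, the inner conjugations by $t_{a\sigma}^{\pm1/2}$ telescope and (\ref{eq:rvr}) collapses the product of the $\omega_{\mu a\sigma}$, so $\hat u_{\mu^{(m)}a\sigma}=r_{T_{\oo_{a,\sigma}}}^{-m/2}\,t_{a\sigma}^{-1/2}\bigl(p_{a\sigma}\hat v_{\mu^{(m)}\sigma}\overline{p_{a-1\sigma}}\bigr)t_{a\sigma}^{1/2}$. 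Since $t_{a\sigma}\in B(r_{a\sigma}\caK_\sigma)$ and $r_{a\sigma}=p_{a\sigma}-p_{a-1\sigma}$ satisfies $r_{a\sigma}\le p_{a\sigma}$, $r_{a\sigma}\le\overline{p_{a-1\sigma}}$, we have $t_{a\sigma}^{\pm1/2}p_{a\sigma}=t_{a\sigma}^{\pm1/2}=p_{a\sigma}t_{a\sigma}^{\pm1/2}$ and $\overline{p_{a-1\sigma}}t_{a\sigma}\overline{p_{a-1\sigma}}=t_{a\sigma}$, so that
\[
\sum_{\mu^{(m)},\nu^{(m)}}h_{\mu^{(m)}\nu^{(m)}}\hat u_{\mu^{(m)}a\sigma}(\hat u_{\nu^{(m)}a\sigma})^*
=r_{T_{\oo_{a,\sigma}}}^{-m}\,t_{a\sigma}^{-1/2}p_{a\sigma}\Bigl(\sum_{\mu^{(m)},\nu^{(m)}}h_{\mu^{(m)}\nu^{(m)}}\hat v_{\mu^{(m)}\sigma}\,t_{a\sigma}\,\hat v_{\nu^{(m)}\sigma}^*\Bigr)p_{a\sigma}t_{a\sigma}^{-1/2}=0
\]
by \eqref{eq:annih} with $X=t_{a\sigma}$. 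Hence $\Psi(h)=0$; by the reduction in the first paragraph $\Psi\in\caS_\bbZ(H)$, and [A3] forces $\Psi=\omega_\infty$, which is the claim. I expect the main obstacle to be purely the bookkeeping of this last step — tracking the similarity by $t_{a\sigma}^{1/2}$ and the projections $p_{a\sigma},\overline{p_{a-1\sigma}}$ that enter through the definition of $\uu_{a\sigma}$ and (\ref{eq:rvr}), and matching the evaluation of the $\sigma$‑generated state on $h$ — while the conceptual content is just the reduction, via Lemma \ref{lem:vr} and \eqref{eq:annih}, to the uniqueness of the bulk ground state [A3].
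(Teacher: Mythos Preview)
Your proof is correct and follows the same strategy as the paper: reduce via Lemma~\ref{lem:vr} and [A3] to showing the $\sigma$-generated state annihilates $\tau_y(h)$, convert the $\uu_{a\sigma}$'s back to $\vv_\sigma$'s via the similarity $u_{\mu a\sigma}=r_{T_{\oo_{a,\sigma}}}^{-1/2}t_{a\sigma}^{-1/2}\omega_{\mu a\sigma}t_{a\sigma}^{1/2}$ and (\ref{eq:rvr}), and then use $\caK_\sigma\subset\ker\pi_\sigma(\tau_y(h))$ together with the factorization $h=(h^{1/2})^2$. The only difference in execution is cosmetic: the paper, after reaching the expression with $t_{a\sigma}$ in the middle, uses the operator inequality $t_{a\sigma}\le\|t_{a\sigma}\|\unit_{\caK_\sigma}$ to reduce to $P_{\caK_\sigma}\pi_\sigma(\tau_y(h))P_{\caK_\sigma}=0$, whereas you isolate the exact identity $\sum_{\nu^{(m)}}(h^{1/2})_{\kappa\nu}\hat v_{\nu^{(m)}\sigma}^*=0$ (and hence your \eqref{eq:annih} for arbitrary $X$) and apply it directly with $X=t_{a\sigma}$; your version avoids the inequality step and is marginally cleaner.
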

\begin{proof}
We denote $\vv=\vv_{\sigma}$, $\oo=\oo_{a\sigma}$, and
$\uu=\uu_{a\sigma}$,
in this proof, for simplicity.
Define 
$\bbE^{(a\sigma)}:\Mat_n\otimes B(r_{a\sigma}\caK_\sigma)\to 
B(r_{a\sigma}\caK_\sigma)$ by
\begin{align}\label{eq:as}
\bbE^{(a\sigma)}\lmk e_{\mu\nu}^{(n)}\otimes X\rmk
:=u_{\mu } X u_{\nu }^*,\quad X\in B(r_{a\sigma}\caK_\sigma).
\end{align}
Then $(B(r_{a\sigma}\caK_\sigma), \bbE^{(a\sigma)},\varphi_{a\sigma})$
is a standard triple, and $\sigma$-generates a state
$\widetilde \omega_{a,\sigma}$ on $\caA_{\bbZ}$.
We claim that $\widetilde \omega_{a,\sigma}=\omega_{\infty}$.
To see this, it suffices to show that 
$\widetilde \omega_{a,\sigma}(\tau_x(h))=0$ for all $x\in\bbZ$ because of [A3]
and Lemma \ref{lem:vr}, {\it 1}.

By the definition, we have for any $x\in\bbZ$,
\begin{align*}
&0\le\widetilde \omega_{a,\sigma}\lmk
\tau_x(h)
\rmk
=\sum_{\mu^{(m)}\nu^{(m)}}
\braket{\ws{m}}{ h\wsn{m}}
\varphi_{a\sigma}\lmk \widehat u_{\mu^{(m,\sigma)}} 
\lmk \widehat u_{\nu^{(m,\sigma)}}\rmk^*\rmk\\
&=r_{T_{\oo}}^{- m}\sum_{\mu^{(m)}\nu^{(m)}}\braket{\ws{m}}{ h\wsn{m}}
\varphi_{a\sigma}
\lmk
t_{a\sigma}^{-\frac 12}
\widehat{v_{\mu^{(m,\sigma)}} }t_{a\sigma}\lmk \widehat {v_{\nu^{(m,\sigma)}}}\rmk^*
t_{a\sigma}^{-\frac 12}
\rmk\\
&=r_{T_{\oo}}^{- m}\sum_{\lambda^{(m)}}
\varphi_{a\sigma}
\lmk
t_{a\sigma}^{-\frac 12}
\lmk
\sum_{\mu^{(m)}}\braket{\ws{m}}{ h^{\frac12}\widehat{\psi_{\lambda^{(m)}}}}
\widehat{v_{\mu^{(m,\sigma)}}}\rmk t_{a\sigma}
\lmk
\sum_{\nu^{(m)}}\braket{\wsn{m}}{ h^{\frac12}\widehat{\psi_{\lambda^{(m)}}}}
\widehat{ v_{\nu^{(m,\sigma)}}} \rmk^*t_{a\sigma}^{-\frac 12}\rmk\\
&\le
\lV t_{a\sigma}\rV
r_{T_{\oo}}^{- m}\sum_{\lambda^{(m)}}
\varphi_{a\sigma}
\lmk
t_{a\sigma}^{-\frac 12}
\lmk
\sum_{\mu^{(m)}}\braket{\ws{m}}{ h^{\frac12}\widehat{\psi_{\lambda^{(m)}}}}
\widehat{v_{\mu^{(m,\sigma)}}}\rmk
\lmk
\sum_{\nu^{(m)}}\braket{\wsn{m}}{ h^{\frac12}\widehat{\psi_{\lambda^{(m)}}}}
\widehat{v_{\nu^{(m,\sigma)}}} \rmk^*t_{a\sigma}^{-\frac 12}\rmk\\
&=\lV t_{a\sigma}\rV
r_{T_{\oo}}^{- m}
\sum_{\mu^{(m)}}\sum_{\nu^{(m)}}\braket{\ws{m}}{ h\widehat{\psi_{\nu^{(m)}}}}
\varphi_{a\sigma}
\lmk
t_{a\sigma}^{-\frac 12}
\lmk
\widehat{v_{\mu^{(m,\sigma)}}}\rmk
\lmk
\widehat{v_{\nu^{(m,\sigma)} }}\rmk^*t_{a\sigma}^{-\frac 12}\rmk\\
&=
\lV t_{a\sigma}\rV
r_{T_{\oo}}^{- m}
\varphi_{a\sigma}
\lmk
t_{a\sigma}^{-\frac 12}P_{\caK_\sigma}
\pi_{\sigma}
\lmk\tau_{y}\lmk
h\rmk
\rmk
P_{\caK_\sigma}
t_{a\sigma}^{-\frac 12}\rmk=0,
\end{align*}
where $y=0$ if $\sigma=R$ and $y=-m$ if $\sigma=L$.
\end{proof}
\begin{rem}\label{rem:iso}
Let
${\mathfrak B}_{a\sigma}$ be the minimal $C^*$-subalgebra of 
$B(r_{a\sigma}\caK_\sigma)$ which contains $\unit$ and is
$\bbE_A$-invariant for any $A\in\Mat_n$.
Then for $\bbE^{(a\sigma)}$ given by (\ref{eq:as}),
 $({\mathfrak B}_{a\sigma},\bbE^{(a\sigma)}\vert_{\Mat_n\otimes {\mathfrak B}_{a\sigma}},\varphi_{a\sigma}\vert_{{\mathfrak B}_{a\sigma}})$
is a minimal standard triple $\sigma$-generating $\omega_\infty$, from Lemma \ref{lem:ug}.
The eigenspace of $1$ for $T_{\uu_{a\sigma}}=\bbE^{(a\sigma)}_{\unit}$
is $\bbC\unit$.
Recall that $(B(s(\rho_\sigma)\caK_\sigma)),\bbE^{(\sigma)},\rho_\sigma\vert_{B(s(\rho_\sigma)\caK_\sigma)})$
is a minimal standard triple $\sigma$-generating $\omega_\infty$.
The eigenspace of $1$ for $T_{(\vv_\sigma)_{s(\rho_\sigma)}}=\bbE^{(\sigma)}_\unit$
is $\bbC\unit$ (Lemma \ref{lem:srs}).
For each $N\in \nan$, let $D_N$ be the density matrix of $\omega_\infty\vert_{\caA_{[0,N-1]}}$.
We have $\sup_N\rank D_N<\infty$ because of [A1] and Lemma \ref{lem:vr}.
By Theorem \ref{thm:iso}, this implies the existence of
a $*$-isomorphism $\Theta_{a\sigma}:B\lmk s(\rho_\sigma)\caK_\sigma\rmk\to {\mathfrak B}_{a\sigma}$ satisfying
\[
\bbE^{(a\sigma)}\circ\lmk id_{\Mat_n}\otimes\Theta_{a\sigma}\rmk
=\Theta_{a\sigma}\circ\bbE^{(\sigma)}.
\]
This condition can be written
\begin{align*}
\Theta_{a\sigma}\lmk
\lmk v_{\mu \sigma}\rmk_{s(\rho_\sigma)}X
\lmk\lmk v_{\nu \sigma}\rmk_{s(\rho_\sigma)}\rmk^*
\rmk
=u_{\mu a\sigma}\Theta_{a\sigma}\lmk
X\rmk\lmk u_{\nu a\sigma}\rmk^*,\quad
\mu,\nu=1,\ldots,n,\quad X\in B\lmk s(\rho_\sigma)\caK_\sigma\rmk.
\end{align*}
\end{rem}
We apply the following Lemma to this situation.
\begin{lem}\label{lem:irpri}
Let $n,d_1,d_2\in\nan$. Let $\vv^{(i)}=(v_{\mu}^{(i)})_{\mu=1}^n\in
\Mat_{d_i}^{\times n}$, for $i=1,2$.
Assume that $T_{\vv^{(1)}}$ is a primitive unital CP map and
that $T_{\vv^{(2)}}$ is an irreducible unital CP map.
Furthermore, assume that 
there exists an injective unital $*$-homomorphism $\Theta:\Mat_{d_1}\to\Mat_{d_2}$
such that
\begin{align}\label{eq:th}
\Theta\lmk
v_{\mu}^{(1)}X\lmk v_{\nu}^{(1)}\rmk^*
\rmk
=v_{\mu}^{(2)}\Theta\lmk
X\rmk\lmk v_{\nu}^{(2)}\rmk^*,\quad
\mu,\nu=1,\ldots,n,\quad X\in \Mat_{d_1}.
\end{align}
Then $d_1=d_2$, and $T_{\vv^{(2)}}$ is primitive.
There exists a unitary $W:\cc^{d_2}\to \cc^{d_1}$ and 
a complex number $c\in\bbT$ such that
\[
Wv^{(2)}_{\mu}W^*=cv_{\mu}^{(1)},\quad
\mu=1,\ldots,n.
\]
\end{lem}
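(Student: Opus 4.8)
The plan is to use the intertwining relation \eqref{eq:th} to show that, in suitable coordinates, $\vv^{(2)}$ is a single scalar multiple of $\vv^{(1)}$. First I would normalise $\Theta$. Since $\Theta\colon\Mat_{d_1}\to\Mat_{d_2}$ is a unital $*$-homomorphism between full matrix algebras, there are $m\in\nan$ with $d_2=md_1$ and a unitary $U\colon\cc^{d_2}\to\cc^{d_1}\otimes\cc^{m}$ with $U\Theta(X)U^*=X\otimes\unit_m$ for all $X\in\Mat_{d_1}$. Putting $\tilde v_\mu:=Uv_\mu^{(2)}U^*\in\Mat_{d_1}\otimes\Mat_m$, so that $T_{\tilde\vv}=\Ad_U\circ T_{\vv^{(2)}}\circ\Ad_{U^*}$ is irreducible (resp.\ primitive) iff $T_{\vv^{(2)}}$ is, relation \eqref{eq:th} becomes
\[
\tilde v_\mu\,(X\otimes\unit_m)\,\tilde v_\nu^*=\bigl(v_\mu^{(1)}X(v_\nu^{(1)})^*\bigr)\otimes\unit_m,\qquad X\in\Mat_{d_1},\quad\mu,\nu=1,\dots,n.
\]

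The heart of the argument is to deduce from this that $\tilde v_\mu=v_\mu^{(1)}\otimes W$ for one fixed unitary $W\in\mathrm{U}(m)$. Fix $\mu$ with $v_\mu^{(1)}\neq0$ (one exists because $T_{\vv^{(1)}}$ is unital), write $v_\mu^{(1)}=\sum_{i,k}s_{ik}\,e_{ik}^{(d_1)}$ and $\tilde v_\mu=\sum_{i,k}e_{ik}^{(d_1)}\otimes S_{ik}$ with $S_{ik}\in\Mat_m$. Taking $\nu=\mu$ and $X=e_{kl}^{(d_1)}$ in the displayed identity and comparing the block at position $(i,p)$ gives $S_{ik}S_{pl}^*=s_{ik}\overline{s_{pl}}\,\unit_m$ for all $i,k,p,l$. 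Specialising to $p=i$, $l=k$ yields $S_{ik}S_{ik}^*=|s_{ik}|^2\unit_m$, so in finite dimensions $S_{ik}=s_{ik}W_{ik}$ with $W_{ik}\in\mathrm{U}(m)$ when $s_{ik}\neq0$, and $S_{ik}=0$ otherwise; feeding this back into $S_{ik}S_{pl}^*=s_{ik}\overline{s_{pl}}\,\unit_m$ forces all $W_{ik}$ with nonzero index to coincide with a single $W_\mu$, hence $\tilde v_\mu=v_\mu^{(1)}\otimes W_\mu$ (and $\tilde v_\mu=0$ when $v_\mu^{(1)}=0$, as one sees by testing with a rank-one $X$). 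Finally, comparing the identity for a pair $\mu\neq\nu$ with $v_\mu^{(1)},v_\nu^{(1)}\neq0$ and choosing $X$ with $v_\mu^{(1)}X(v_\nu^{(1)})^*\neq0$ gives $W_\mu W_\nu^*=\unit_m$, so all $W_\mu$ coincide with one unitary $W$. Thus $\tilde v_\mu=v_\mu^{(1)}\otimes W$ for every $\mu$, whence $T_{\tilde\vv}=T_{\vv^{(1)}}\otimes\Ad_W$ on $\Mat_{d_1}\otimes\Mat_m$.

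It remains to kill the multiplicity. If $m\ge2$, choose a rank-one eigenprojection $q$ of $W$; then $WqW^*=q$, and $P:=\unit_{d_1}\otimes q$ is a projection with $0\neq P\neq\unit_{d_2}$ satisfying
\[
T_{\tilde\vv}\bigl(P\,(\Mat_{d_1}\otimes\Mat_m)\,P\bigr)=T_{\vv^{(1)}}(\Mat_{d_1})\otimes q\Mat_m q\subset P\,(\Mat_{d_1}\otimes\Mat_m)\,P,
\]
contradicting the irreducibility of $T_{\tilde\vv}$ via the criterion of Lemma C.2 of \cite{Ogata}. Hence $m=1$, so $d_1=d_2$ and $W$ is a scalar $c\in\bbT$; under the identification $\cc^{d_1}\otimes\cc\cong\cc^{d_1}$ this reads $Uv_\mu^{(2)}U^*=c\,v_\mu^{(1)}$, which is the asserted relation with the unitary $U\colon\cc^{d_2}\to\cc^{d_1}$. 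Moreover $T_{\vv^{(2)}}=\Ad_{U^*}\circ T_{\vv^{(1)}}\circ\Ad_U$ is unitarily conjugate to the primitive map $T_{\vv^{(1)}}$, so it is primitive.

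I expect the main obstacle to be the middle step: extracting the rigid tensor form $\tilde v_\mu=v_\mu^{(1)}\otimes W$ from the single bilinear identity. The delicate points are that $S_{ik}S_{ik}^*=|s_{ik}|^2\unit_m$ genuinely produces a \emph{unitary} rather than a mere co-isometry (which holds only because everything is finite-dimensional), and that the separate unitaries $W_{ik}$, and then the $W_\mu$, are forced to agree — this requires choosing the test element $X$ so that $v_\mu^{(1)}X(v_\nu^{(1)})^*\neq0$, which is exactly where the nonvanishing $v_\mu^{(1)}\neq0$ (coming from unitality of $T_{\vv^{(1)}}$) is used. The normalisation of $\Theta$ and the final elimination of $m\ge 2$ are then routine, resting respectively on the representation theory of matrix algebras and on the standard reducibility criterion.
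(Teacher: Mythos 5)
Your proof is correct, but it reaches the tensor decomposition by a genuinely different route than the paper. The paper first invokes the period decomposition of the irreducible map $T_{\vv^{(2)}}$ (Lemma \ref{lem:irr}), uses the relation $v_\mu^{(2)}P_k=P_{k-1}v_\mu^{(2)}$ together with the primitivity of $T_{\vv^{(1)}}$ (via $\caK_{lb}(\vv^{(1)})=\Mat_{d_1}$) to show that $\Theta$ maps into $\bigoplus_k P_k\Mat_{d_2}P_k$ and compresses to a $*$-isomorphism onto each corner; only then does it obtain the unitary $W:\cc^{d_2}\to\cc^{d_1}\otimes\cc^{b}$, with the multiplicity identified with the period $b$. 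You instead get $\cc^{d_2}\cong\cc^{d_1}\otimes\cc^{m}$ for free from the classification of unital $*$-homomorphisms between full matrix algebras, with no dynamical input at all. The subsequent rigidity step is the same in spirit — the paper uses a polar decomposition $Wv_\mu^{(2)}W^*=(v_\mu^{(1)}\otimes\unit)\caW_\mu$ and analyzes the unitary part on the support of $v_\mu^{(1)}$, while you compute with matrix units; both hinge on the same nonvanishing choices of the test element $X$. Finally, to kill the multiplicity the paper uses nondegeneracy of the peripheral eigenvalue $1$ (building a non-scalar fixed point from a non-scalar element commuting with $w$), whereas you use the invariant-corner characterization of irreducibility with $P=\unit\otimes q$; these are equivalent facets of irreducibility of $T_{\vv^{(2)}}$. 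Your version is shorter, avoids Lemma \ref{lem:irr} entirely, and makes transparent that primitivity of $T_{\vv^{(1)}}$ enters only at the very end, to transfer primitivity to $T_{\vv^{(2)}}$ by unitary conjugation; the paper's version additionally records that the multiplicity coincides with the period of $T_{\vv^{(2)}}$, information that is discarded once $b=1$ is established.
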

\begin{proof}
Applying Lemma \ref{lem:irr} to $T_{\vv^{(2)}}$,
we obtain $b, U, P_k$ satisfying {\it 1.-6.} of Lemma \ref{lem:irr}.
Note from {\it 4.} of Lemma \ref{lem:irr}, we have
\begin{align}\label{eq:qv}
v_{\mu}^{(2)}P_k=P_{k-1}v_{\mu}^{(2)},\quad
k=0,\ldots,b-1\;\;\mod b,\;\; \mu=1,\ldots,n.
\end{align}

First we claim that
\begin{align}\label{eq:tdcom}
\Theta(\Mat_{d_1})\subset \bigoplus_{k=0}^{b-1}P_k\Mat_{d_2}P_k.
\end{align}
To see this, recall that $T_{\vv^{(1)}}$ is primitive.
Therefore, there exists an $l_0\in\nan$ such that
$\caK_{lb}(\vv^{(1)})=\Mat_{d_1}$, for all $l\ge l_0$.
From this and (\ref{eq:th}), (\ref{eq:qv}),  we have
\begin{align}\label{eq:tmd1}
&\Theta\lmk\Mat_{d_1}\rmk
=\Theta\lmk\caK_{lb}\lmk \vv^{(1)}\rmk\lmk 
\caK_{lb}\lmk \vv^{(1)}\rmk\rmk^*\rmk
=\lmk
\caK_{lb}\lmk
\vv^{(2)}\rmk\rmk\Theta(1)\lmk \caK_{lb}\lmk \vv^{(2))}\rmk\rmk^*\notag\\
&=\sum_{k=0}^{b-1}\lmk
\caK_{lb}\lmk \vv^{(2)}\rmk\rmk P_{k}\lmk \caK_{lb}\lmk \vv^{(2)}\rmk\rmk^*
=\sum_{k=0}^{b-1}P_{k}\lmk
\caK_{lb}\lmk \vv^{(2)}\rmk\rmk 
\lmk \caK_{lb}\lmk \vv^{(2)}\rmk\rmk^*P_k,
\end{align}
for any $l\ge l_0$,
proving the claim.

Next we claim that
for each $k=0,\ldots,b-1$, 
there exists a unitary
$V_k:\cc^{d_1}\to P_k\cc^{d_2}$
such that
\[
\Theta(X)P_k= V_k X V_k^*,\quad
X\in\Mat_{d_1}.
\]
To see this, first note that $\Theta_k:\Mat_{d_1}\to P_k\Mat_{d_2}P_k$
given by $\Theta_k(X)=\Theta(X)P_k$, $k=0,\ldots,b$
is a $*$-homomorphism
because of the first observation.\\
The map $T_{\vv^{(2)}}^b\vert_{P_k\Mat_{d_2}P_k}$ is primitive by {\it 5.}
of Lemma \ref{lem:irr}.
This fact, combined with (\ref{eq:qv}),  implies the existence of an $l_1\in\nan$
such that $P_k \caK_{lb}(\vv^{(2)})=P_k\Mat_{d_2}P_k$
for all $l\ge l_1$.
Therefore, using (\ref{eq:tmd1}) and (\ref{eq:tdcom}), we have
for $l\ge \max\{l_1,l_0\}$,
\begin{align*}
P_k\Mat_{d_2}P_k=\lmk P_k\Mat_{d_2}P_k\rmk\lmk P_k\Mat_{d_2}P_k\rmk^*=
P_k\lmk \caK_{lb}(\vv^{(2)})\rmk 
\lmk \caK_{lb}(\vv^{(2)})\rmk^*P_k
=P_k
\Theta\lmk \Mat_{d_1}\rmk P_k
=\Theta_k\lmk \Mat_{d_1}\rmk.
\end{align*}
Therefore, $\Theta_k$ is a  $*$-homomorphism from
$\Mat_{d_1}$ onto $P_k\Mat_{d_2}P_k$.
As $\Mat_{d_1}$ is simple, it is also injective. Hence,
$\Theta_k$ is a $*$-isomorphism between
$\Mat_{d_1}$ and $P_k\Mat_{d_2}P_k$.
By Wigner's Theorem, this implies the existence of $V_k$ as we claimed.

Define a linear map $W:\cc^{d_2}\to\cc^{d_1}\otimes \cc^b$ by 
\[
W\xi:=\sum_{k=0}^{b-1}V_k^*P_k\xi\otimes \chi_{k+1}^{(b)},\quad
\xi\in \cc^{d_2}.
\]
It is easy to check that $W$ is unitary and 
\begin{align}\label{eq:in}
\lmk Wv_{\mu}^{(2)}W^*\rmk
 \lmk X\otimes \unit \rmk \lmk W{ v_{\nu}^{(2)}}W^*\rmk^*
= v_{\mu}^{(1)}X{v_{\nu}^{(1)}}^*\otimes\unit,\quad
X\in \Mat_{d_1},\quad \mu,\nu=1,\ldots,n,
\end{align}
from (\ref{eq:th}).
Substituting $X=1$ and $\mu=\nu$ in (\ref{eq:in}), we obtain
\[
\lmk Wv_{\mu}^{(2)}W^*\rmk
\lmk W{ v_{\mu}^{(2)}}W^*\rmk^*
= v_{\mu}^{(1)}{v_{\mu}^{(1)}}^*\otimes\unit,\quad \mu=1,\ldots,n.
\]
By the polar decomposition,
This means that there exist unitary operators $\caW_\mu$, $\mu=1,\ldots,n$
in $\Mat_{d_1}\otimes \Mat_b$ such that
\begin{align}\label{eq:wvv}
Wv_{\mu}^{(2)}W^*=
\lmk
v_{\mu}^{(1)}\otimes\unit
\rmk
\caW_{\mu}.
\end{align}
We claim for each  $\mu=1,\ldots,n$ that 
$\caW_{\mu}$ has a decomposition
\begin{align}\label{eq:wdec}
\caW_{\mu}=w_{\mu}^{(1)}+w_{\mu}^{(2)},
\end{align}
where
$w_{\mu}^{(1)}$ is a unitary in 
$\lmk s_r(v_{\mu}^{(1)})\Mat_{d_1}s_r(v_{\mu}^{(1)})\rmk\otimes \Mat_b$
and 
$w_{\mu}^{(2)}$ is a unitary in 
$\lmk \overline{s_r(v_{\mu}^{(1)})}\Mat_{d_1}\overline{s_r(v_{\mu}^{(1)})}
\rmk\otimes \Mat_b$.
We substitute $X= \overline{s_r(v_{\mu}^{(1)})}$ in (\ref{eq:in})
with $\mu=\nu$,
and obtain
\begin{align*}
&0=v_{\mu}^{(1)}\overline{s_r(v_{\mu}^{(1)})}{v_{\mu}^{(1)}}^*\otimes\unit
=\lmk Wv_{\mu}^{(2)}W^*\rmk\lmk\overline{s_r(v_{\mu}^{(1)})}\otimes \unit
\rmk
\lmk W{ v_{\mu}^{(2)}}W^*\rmk^*
=
\lmk v_{\mu}^{(1)}\otimes \unit \rmk \caW_\mu
\lmk\overline{s_r(v_{\mu}^{(1)})}\otimes \unit
\rmk\caW_{\mu}^*
\lmk v_{\mu}^{(1)}\otimes \unit \rmk^*.
\end{align*}
This means $\lmk s_r(v_{\mu}^{(1)})\otimes \unit \rmk \caW_\mu
\lmk\overline{s_r(v_{\mu}^{(1)})}\otimes \unit
\rmk=0$ which implies the claim.

Assume that $v_{\mu}^{(1)}\neq 0$.
By (\ref{eq:in}), (\ref{eq:wvv}), and (\ref{eq:wdec}), we have
\begin{align*}
\lmk
v_{\mu}^{(1)}\otimes\unit
\rmk
w_{\mu}^{(1)} \lmk X\otimes \unit\rmk {w_{\mu}^{(1)}}^*
\lmk
v_{\mu}^{(1)}\otimes\unit
\rmk^*
= v_{\mu}^{(1)}X{v_{\mu}^{(1)}}^*\otimes\unit,\quad
X\in  \Mat_{d_1}.
\end{align*}
From this we get
\[
w_{\mu}^{(1)} \lmk X\otimes \unit\rmk {w_{\mu}^{(1)}}^*
= X\otimes\unit,\quad
X\in  s_r(v_{\mu}^{(1)})\Mat_{d_1} s_r(v_{\mu}^{(1)}).
\]
This means $w_{\mu}^{(1)}\in  s_r(v_{\mu}^{(1)})\otimes\Mat_b$.
Therefore, there exists a unitary $\tilde w_{\mu}$
such that
\[
w_{\mu}^{(1)} = s_r(v_{\mu}^{(1)})\otimes \tilde w_{\mu}.
\]
We have
\begin{align}\label{eq:vtw}
v_{\mu}^{(1)}\otimes\tilde w_{\mu}
=\lmk v_{\mu}^{(1)}\otimes\unit\rmk
\lmk s_r(v_{\mu}^{(1)})\otimes \tilde w_{\mu}\rmk
=\lmk v_{\mu}^{(1)}\otimes\unit\rmk\caW_{\mu}
=Wv_{\mu}^{(2)}W^*,
\end{align}
for all $\mu=1,\ldots,n$ with $v_{\mu}^{(1)}\neq 0$.

The unitary matrices $\tilde w_{\mu}$ can be taken independent of $\mu$,
$v_{\mu}^{(1)}\neq 0$.
To see this, substitute (\ref{eq:vtw}) to (\ref{eq:in}) and obtain
\begin{align*}
  v_{\mu}^{(1)}X{v_{\nu}^{(1)}}^*
\otimes \tilde w_{\mu}{\tilde w_{\nu}}^*
= v_{\mu}^{(1)}X{v_{\nu}^{(1)}}^*\otimes\unit,\quad
X\in \Mat_{d_1},\quad \mu,\nu=1,\ldots,n,\quad
v_{\mu}^{(1)}\neq0,\;\; v_{\nu}^{(1)}\neq0.
\end{align*}
If $v_{\mu}^{(1)}$ and $v_{\nu}^{(1)}$ are not zero, this equality means
that $\tilde w_{\mu}{\tilde w_{\nu}}^*=1$, i.e., 
$\tilde w_{\mu}={\tilde w_{\nu}}$.
We deote this common $\tilde w_{\mu}$ by $w$.
(Note that there exists at least one $v_{\mu}^{(1)}\neq 0$ because
$T_{\vv^{(1)}}$ is unital.)
Hence we obtain 
\begin{align}\label{eq:univ}
Wv_{\mu}^{(2)}W^*
=
v_{\mu}^{(1)}\otimes w,\quad \mu=1,\ldots,n.
\end{align}
Note that  this also holds for $\mu$ with  $v_{\mu}^{(1)}= 0$
because of (\ref{eq:wvv}).

Now we prove $b=1$, i.e., $d_1=d_2$.
Assume that $b\neq1$. Then because $w$ is unitary,
there exists an $x\in\Mat_b$ such that $w x w^*=x$ and
$x\notin \cc \unit_b$.
Set $X:=W^*\lmk \unit\otimes x\rmk W\in \Mat_{d_2}$.
We have $X\notin \cc\unit_{d_2}$ and
\[
T_{\vv^{(2)}}(X)
=\sum_{\mu=1}^n W^*Wv_{\mu}^{(2)}W^*
\lmk \unit\otimes x\rmk W {v_{\mu}^{(2)}}^* W^* W
=\sum_{\mu=1}^n W^*
\lmk v_\mu^{(1)}{v_\mu^{(1)}}^*
\otimes wxw^*\rmk 
 W
=W^*\lmk
\unit \otimes x
\rmk W=X.
\]
By {\it 2.} of Lemma \ref{lem:irr},
$1$ is a nondegenerate eigenvalue of $T_{\vv^{(2)}}$.
This is a contradiction.
Therefore, we conclude $b=1$ and 
$d_1=d_2$.
In this case, (\ref{eq:univ}) implies the existence of 
unitary $W:\cc^{d_2}\to \cc^{d_1}$  and $c\in\bbT$ satisfying 
\begin{align*}
Wv_{\mu}^{(2)}W^*
=
c
v_{\mu}^{(1)}\quad \mu=1,\ldots,n.
\end{align*}
Clearly, this implies the primitivity of $T_{\vv^{(2)}}$.
\end{proof}
\begin{proofof}[Lemma \ref{lem:primu}]
Recall Remark \ref{rem:iso}.
Applying Lemma \ref{lem:irpri} to $\Theta_{a\sigma}$,
we 
obtain a unitary $W_{a\sigma}:r_{a\sigma}\caK_\sigma\to s(\rho_\sigma)\caK_{\sigma}$,
and $c_{a\sigma}\in\bbT$
such that
$
W_{a\sigma}u_{\mu a\sigma}W_{a\sigma}^*
=c_{a\sigma}\widetilde{v_{\mu\sigma}}$ for
$\mu=1,\ldots,n$.
Set $V_{a\sigma}:=W_{a\sigma}^*$ and under the identification
$\cc^{n_0^{(\sigma)}}\simeq s(\rho_\sigma)\caK_\sigma$,
we complete the proof.
\end{proofof}
\section{The bijectivity  of $\left.\Gamma_{l,\vv}^{(\sigma)}\right\vert_{B(\caK_\sigma)s(\rho_\sigma)}$ }\label{sec:go}
In this section, we prove the following Lemma.
\begin{lem}\label{lem:bijecs}
Assume [A1], [A3], [A4], and [A5]. Let $\sigma=L,R$.
Let $\vv_\sigma$ be the $n$-tuple of
elements in  $B(\caK_\sigma)$ given in Notation \ref{nota:vv} and $\rho_\sigma$
the state given in Lemma \ref{lem:vucp}.
Then there exists an $l_\sigma'\in\nan$ such that
\[
\left.\Gamma_{l,\vv_\sigma}^{(\sigma)}\right\vert_{B(\caK_\sigma)s(\rho_\sigma)}: B(\caK_\sigma)s(\rho_\sigma)\to \Gamma_{l,\vv_\sigma}^{(\sigma)}\lmk B(\caK_\sigma)\rmk=
\tau_{y_{\sigma}}\lmk
s\lmk \left. \omega_\sigma\right\vert _{\caA_{\sigma,l}}\rmk
\rmk\bigotimes_{i=0}^{l-1}\cc^n
\]
is a bijection for any $l\ge l_\sigma'$.
Here, $y_R=0$ and $y_L=l$.
\end{lem}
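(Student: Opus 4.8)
The plan is to break the claimed bijectivity into two statements, each of which I show holds for $\Gamma_{l,\vv_\sigma}^{(\sigma)}\vert_{B(\caK_\sigma)s(\rho_\sigma)}$ once $l$ is large: (a) surjectivity onto $\Gamma_{l,\vv_\sigma}^{(\sigma)}(B(\caK_\sigma))$, i.e. $\Gamma_{l,\vv_\sigma}^{(\sigma)}(B(\caK_\sigma)s(\rho_\sigma))=\Gamma_{l,\vv_\sigma}^{(\sigma)}(B(\caK_\sigma))$; and (b) injectivity. The inputs I would use are: the description of $\Gamma_{l,\vv_\sigma}^{(\sigma)}$ from Part~I as a linear map built out of the length-$l$ words in $\vv_\sigma$, together with the identification of its image with $\tau_{y_\sigma}$ of the range of $s(\omega_\sigma\vert_{\caA_{\sigma,l}})$; the invariance relation (\ref{eq:svsvs}), $s(\rho_\sigma)v_{\mu\sigma}=s(\rho_\sigma)v_{\mu\sigma}s(\rho_\sigma)$, together with $\sum_jv_{j\sigma}v_{j\sigma}^*=\unit$ from Lemma~\ref{lem:fv}; the contraction $T_{\vv_\sigma}^{N}\to\rho_\sigma(\cdot)\unit_{\caK_\sigma}$ on $P^{T_{\vv_\sigma}}_{\{1\}}$ and the exponential decay of $T_{\vv_\sigma}^{N}(\unit-P^{T_{\vv_\sigma}}_{\{1\}})$ coming from $\sigma(T_{\vv_\sigma})\cap\bbT=\{1\}$ (Lemma~\ref{lem:vucp}), along with the primitivity of $T_{(\vv_\sigma)_{s(\rho_\sigma)}}$ and the minimal standard triple property of Lemma~\ref{lem:srs}; and, crucially, the no-zero-mode condition (\ref{eq:nzm}) of [A4], which I use in the form $D_{\omega_\sigma\vert_{\caA_{\sigma,l}}}\ge c_0\,s(\omega_\sigma\vert_{\caA_{\sigma,l}})$ for some $c_0>0$ and all $l\in\nan$.

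For (a) the mechanism is that the transient part of $B(\caK_\sigma)$ adds nothing new to the image once $l$ is large. Concretely, I would take $X\in B(\caK_\sigma)$ and, working in the word expansion of $\Gamma_{l,\vv_\sigma}^{(\sigma)}(X)$, pad each length-$l$ word on its far end with extra factors $\unit=\sum_jv_{j\sigma}v_{j\sigma}^*$, so that after $\ge N$ paddings the affected block becomes $T_{\vv_\sigma}^{N}$ applied to $X$; by Lemma~\ref{lem:vucp} this equals $\rho_\sigma(X)\,\unit_{\caK_\sigma}$ up to a term decaying in $N$, after which (\ref{eq:svsvs}) collapses the word (which now carries an $s(\rho_\sigma)$) to a word of $(\vv_\sigma)_{s(\rho_\sigma)}$. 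Since all vector spaces in sight are finite dimensional and the family $\Gamma_{l,\vv_\sigma}^{(\sigma)}(B(\caK_\sigma))$ is eventually constant, this shows $\Gamma_{l,\vv_\sigma}^{(\sigma)}(X)\in\Gamma_{l,\vv_\sigma}^{(\sigma)}(B(\caK_\sigma)s(\rho_\sigma))$ for $l$ beyond some $l_1$, giving (a).

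For (b), suppose $Xs(\rho_\sigma)\in\ker\Gamma_{l,\vv_\sigma}^{(\sigma)}$; I want $Xs(\rho_\sigma)=0$. Since the image of $\Gamma_{l,\vv_\sigma}^{(\sigma)}$ equals $\tau_{y_\sigma}(\Ran s(\omega_\sigma\vert_{\caA_{\sigma,l}}))$, any output of $\Gamma_{l,\vv_\sigma}^{(\sigma)}$ lies in that range, so the bound $D_{\omega_\sigma\vert_{\caA_{\sigma,l}}}\ge c_0\,s(\omega_\sigma\vert_{\caA_{\sigma,l}})$ makes the sesquilinear form built from $D_{\omega_\sigma\vert_{\caA_{\sigma,l}}}$ and $\Gamma_{l,\vv_\sigma}^{(\sigma)}$ comparable to $\|\Gamma_{l,\vv_\sigma}^{(\sigma)}(\cdot)\|^2$ on the relevant subspace. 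Using the MPS formula of Lemma~\ref{lem:fv} (with the strictly positive density matrix of $\omega_\sigma$, Lemma~\ref{lem:fs}~{\it 6}) and (\ref{eq:svsvs}), I would identify this form, evaluated on $B(\caK_\sigma)s(\rho_\sigma)$, with an $l$-independent nondegenerate pairing; primitivity of $T_{(\vv_\sigma)_{s(\rho_\sigma)}}$ and the minimal standard triple property (Lemma~\ref{lem:srs}) guarantee that for $l$ large the words occurring in $\Gamma_{l,\vv_\sigma}^{(\sigma)}$ are rich enough to realize that pairing. Hence for $l$ beyond some $l_2$, $\Gamma_{l,\vv_\sigma}^{(\sigma)}(Xs(\rho_\sigma))=0$ forces $Xs(\rho_\sigma)=0$. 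Taking $l_\sigma':=\max\{l_1,l_2\}$, (a) and (b) yield the bijection onto $\Gamma_{l,\vv_\sigma}^{(\sigma)}(B(\caK_\sigma))$ for all $l\ge l_\sigma'$.

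The part I expect to be the real obstacle is (b), and within it the uniformity in $l$: one must extract from (\ref{eq:nzm}) a lower bound for $\Gamma_{l,\vv_\sigma}^{(\sigma)}\vert_{B(\caK_\sigma)s(\rho_\sigma)}$ that does not deteriorate as $l\to\infty$. This is precisely where mere irreducibility of the relevant CP map would not be enough and one needs its primitivity (Lemma~\ref{lem:srs}) together with the exponential decay of the transient block in Lemma~\ref{lem:vucp}; part (a), by contrast, is essentially algebra with (\ref{eq:svsvs}) and $\sum_jv_{j\sigma}v_{j\sigma}^*=\unit$, plus a single use of that decay.
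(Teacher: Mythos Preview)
Your decomposition into surjectivity (a) and injectivity (b) matches the paper's, but you have the roles of the no-zero-mode condition (\ref{eq:nzm}) exactly reversed, and this leads to a genuine gap in your argument for (a).

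For injectivity (b), the paper's argument is considerably simpler than what you outline and does \emph{not} use (\ref{eq:nzm}). One computes directly, for $X\in B(\caK_\sigma)s(\rho_\sigma)$,
\[
\bigl|\,\lV\Gamma_{l,\vv_\sigma}^{(\sigma)}(X)\rV^{2}-\rho_\sigma(X^{*}X)\,\bigr|
\;\le\;\lV T_{\vv_\sigma}^{\,l}(\unit-P_{\{1\}}^{T_{\vv_\sigma}})\rV\cdot m_\sigma^{2}c_\sigma^{-1}\,\rho_\sigma(X^{*}X),
\]
where $c_\sigma=\inf(\sigma(\tilde\rho_\sigma)\setminus\{0\})>0$. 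For $l$ large the right side is $<\tfrac12\rho_\sigma(X^*X)$, so $\lV\Gamma_{l,\vv_\sigma}^{(\sigma)}(X)\rV^{2}\ge\tfrac{c_\sigma}{2}\Tr(X^*X)$, giving injectivity. Only the spectral gap of $T_{\vv_\sigma}$ (Lemma~\ref{lem:vucp}) and faithfulness of $\rho_\sigma$ on $s(\rho_\sigma)B(\caK_\sigma)s(\rho_\sigma)$ enter. The appearance of $T_{\vv_\sigma}^{N}$ that you describe under ``padding'' is exactly this norm calculation (for rank-one $X=\ket{\xi}\bra{\eta}$ one has $\lV\Gamma_l(X)\rV^2=\langle\xi,T_{\vv_\sigma}^{\,l}(\ket{\eta}\bra{\eta})\xi\rangle$); it is an injectivity tool, not a surjectivity tool.

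For surjectivity (a), your proposed mechanism does not work. Inserting $\unit=\sum_j v_{j\sigma}v_{j\sigma}^{*}$ into $\Gamma_{l,\vv_\sigma}^{(\sigma)}(X)=\sum_{\mu^{(l)}}\Tr(X\widehat{v_{\mu^{(l,\sigma)}}}^{*})\,\ws{l}$ is either trivial (the identity factor changes nothing) or, if you mean passing to $\Gamma_{l+N}$, produces terms of the form $\widehat{v_{\mu^{(N)}}}^{*}X$ or $X\widehat{v_{\nu^{(N)}}}^{*}$, never $\widehat{v}\,X\,\widehat{v}^{*}$; so $T_{\vv_\sigma}^{N}(X)$ does not arise in a way that would place $\Gamma_l(X)$ inside $\Gamma_l(B(\caK_\sigma)s(\rho_\sigma))$. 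Injectivity alone gives $\dim\Gamma_l(B(\caK_\sigma)s(\rho_\sigma))=m_\sigma n_0^{(\sigma)}$, but one still needs $\dim\Gamma_l(B(\caK_\sigma))\le m_\sigma n_0^{(\sigma)}$, and nothing in your list of inputs for (a) provides such an upper bound. The paper obtains it precisely from (\ref{eq:nzm}): assuming some $\Gamma_l(X)$ is orthogonal to $\Gamma_l(B(\caK_\sigma)s(\rho_\sigma))$, one splits the density matrix of $\omega_\sigma\vert_{\caA_{\sigma,l}}$ along $s(\rho_\sigma)$ and $1-s(\rho_\sigma)$; the first piece annihilates $\Gamma_l(X)$ by the orthogonality assumption, while the second is bounded by $\lV T_{\vv_\sigma}^{\,l}(\unit-P_{\{1\}}^{T_{\vv_\sigma}})\rV$. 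This would force $\omega_\sigma(\tau_{-y_\sigma}(\ket{\Gamma_l(X)}\bra{\Gamma_l(X)}))$ to be arbitrarily small, contradicting the uniform lower bound $C_\sigma:=\inf_l\inf(\sigma(D_{\omega_\sigma\vert_{\caA_{\sigma,l}}})\setminus\{0\})>0$ from (\ref{eq:nzm}). In short: (\ref{eq:nzm}) is the key to (a), not to (b), and your sketch for (a) needs to be replaced by an argument of this type.
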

We start from the following simple observation.
\begin{lem}\label{lem:Xxp}
Assume [A1] and [A4].
 Let $\vv_\sigma$ be the $n$-tuple of
elements in $B(\caK_\sigma)$ given in Notation \ref{nota:vv}.
For $l\in\nan$,
a unit vector $\xi\in\caK_\sigma$ and a projection $p$ in $B(\caK_\sigma)$,
define $X_{l,\xi,p}^{(\sigma)}\in \caA_{[0,l-1]}$ by
\[
X_{l,\xi,p}^{(\sigma)}:=
\sum_{\mu^{(l)},\nu^{(l)}}\braket{\xi}{{\widehat{v_{\mu^{(l,\sigma)}}}}p {\widehat{v_{\nu^{(l,\sigma)}}}}^*\xi}
\ket{\wsn{l}}\bra{\ws{l}}.
\]
Let $\omega_\xi$ be a state given by $\omega_\xi=\braket{\xi}{\pi_\sigma\lmk\cdot\rmk \xi}$.
Then $X_{l,\xi,p}^{(\sigma)}$ is positive and 
\begin{align*}
s\lmk X_{l,\xi,p}^{(\sigma)}\rmk\le
\tau_{y_\sigma}\lmk
s\lmk\omega_\xi\vert_{\caA_{\sigma,l}}\rmk\rmk,\quad\quad
\omega_{\xi}\circ \tau_{-y_\sigma}\lmk A\rmk
=\Tr\lmk X_{l,\xi,\unit}^{(\sigma)}A
\rmk,\quad A\in \caA_{[0,l-1]},
\end{align*}
where $y_R=0$ and $y_L=l$.
Furthermore, for a unit vector $\eta\in\caK_\sigma$,
\[
X_{l,\xi,\ket{\eta}\bra{\eta}}^{(\sigma)}
=\ket{\Gamma_{l\vv_\sigma}^{(\sigma)}\lmk\ket{\xi}\bra{\eta}\rmk}\bra{\Gamma_{l\vv_\sigma}^{(\sigma)}\lmk\ket{\xi}\bra{\eta}\rmk}.
\]
\end{lem}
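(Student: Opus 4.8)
The plan is to establish the four assertions in the order: first the rank-one identity for $X_{l,\xi,\ket{\eta}\bra{\eta}}^{(\sigma)}$, then positivity (and monotonicity in $p$), then the trace formula, and finally the support bound; the last three all flow from the first together with the $\{S_{i,\sigma}\}$-representation of $\pi_\sigma$ in Lemma \ref{lem:fs}. For the rank-one identity I would unwind the definition of $\Gamma_{l,\vv_\sigma}^{(\sigma)}$ from Part I: $\Gamma_{l,\vv_\sigma}^{(\sigma)}\lmk\ket{\xi}\bra{\eta}\rmk$ is the vector in $\bigotimes_{i=0}^{l-1}\cc^n$ whose component along $\widehat{\psi_{\mu^{(l)}}}$ equals $\braket{\eta}{\widehat{v_{\mu^{(l,\sigma)}}}^*\xi}$, with the product $\widehat{v_{\mu^{(l,\sigma)}}}$ ordered as appropriate for $\sigma=R$ or $\sigma=L$. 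Forming $\ket{\Gamma_{l,\vv_\sigma}^{(\sigma)}\lmk\ket{\xi}\bra{\eta}\rmk}\bra{\Gamma_{l,\vv_\sigma}^{(\sigma)}\lmk\ket{\xi}\bra{\eta}\rmk}$, expanding in the basis $\{\widehat{\psi_{\mu^{(l)}}}\}$, and using $\braket{\xi}{\widehat{v_{\mu^{(l,\sigma)}}}\eta}=\overline{\braket{\eta}{\widehat{v_{\mu^{(l,\sigma)}}}^*\xi}}$ together with a relabelling $\mu^{(l)}\leftrightarrow\nu^{(l)}$, one matches the result term by term against the defining sum for $X_{l,\xi,\ket{\eta}\bra{\eta}}^{(\sigma)}$. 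This is pure bookkeeping.

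Next, observing that $p\mapsto X_{l,\xi,p}^{(\sigma)}$ extends to a self-adjointness-preserving linear map on $B(\caK_\sigma)$, I would decompose any positive $r\in B(\caK_\sigma)$ as $r=\sum_a\lambda_a\ket{\eta_a}\bra{\eta_a}$ with $\lambda_a\ge 0$ and $\{\eta_a\}$ orthonormal; the rank-one identity then gives $X_{l,\xi,r}^{(\sigma)}=\sum_a\lambda_a\ket{\Gamma_{l,\vv_\sigma}^{(\sigma)}\lmk\ket{\xi}\bra{\eta_a}\rmk}\bra{\Gamma_{l,\vv_\sigma}^{(\sigma)}\lmk\ket{\xi}\bra{\eta_a}\rmk}\ge 0$. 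In particular $X_{l,\xi,p}^{(\sigma)}\ge 0$, and applying the same to $\unit-p$ yields $0\le X_{l,\xi,p}^{(\sigma)}\le X_{l,\xi,\unit}^{(\sigma)}$, which will be used for the support bound.

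For the trace formula I would expand $A\in\caA_{[0,l-1]}$ in the matrix-unit basis $\{\bigotimes_k e_{i_k,j_k}^{(n)}\}$, so that $\Tr\lmk X_{l,\xi,\unit}^{(\sigma)}A\rmk$ becomes a sum of the numbers $\braket{\xi}{\widehat{v_{\mu^{(l,\sigma)}}}\widehat{v_{\nu^{(l,\sigma)}}}^*\xi}$ paired with the matrix entries of $A$. On the other side, $\omega_\xi\circ\tau_{-y_\sigma}(A)=\braket{\xi}{\pi_\sigma\lmk\tau_{-y_\sigma}(A)\rmk\xi}$; inserting the formula of Lemma \ref{lem:fs}~{\it 3} for $\pi_\sigma$ on matrix units and reducing the resulting products of $S_{i,\sigma}^*$ acting on $\xi\in\caK_\sigma$ to products of $v_{i,\sigma}^*$ via $S_{i,\sigma}^*\caK_\sigma\subset\caK_\sigma$ and $S_{i,\sigma}^*|_{\caK_\sigma}=v_{i,\sigma}^*$ (Lemma \ref{lem:fs}~{\it 4}, Notation \ref{nota:vv}), the two expressions agree coefficientwise. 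The shift $y_R=0$, $y_L=l$ is precisely what puts $\tau_{-y_\sigma}(A)$ inside $\caA_R$ (resp.\ $\caA_L$) so that the representation formula applies. This coefficient matching is where most of the care goes — especially for $\sigma=L$, where the reversal of the spin order in $\pi_L$ must be reconciled with the direction of $\tau_{-l}$ — though it remains essentially a routine computation.

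Finally, taking $p=\unit$ in the trace formula identifies $X_{l,\xi,\unit}^{(\sigma)}$ as the density matrix of $\omega_\xi\circ\tau_{-y_\sigma}$ restricted to $\caA_{[0,l-1]}$. For $\sigma=R$ this state is $\omega_\xi|_{\caA_{R,l}}$, so $s\lmk X_{l,\xi,\unit}^{(R)}\rmk=s\lmk\omega_\xi|_{\caA_{R,l}}\rmk=\tau_{y_R}\lmk s\lmk\omega_\xi|_{\caA_{R,l}}\rmk\rmk$. For $\sigma=L$, since $\tau_l$ is a trace-preserving $*$-isomorphism of $\caA_{[-l,-1]}$ onto $\caA_{[0,l-1]}$, the density matrix in question is $\tau_l$ applied to that of $\omega_\xi|_{\caA_{L,l}}$, hence $s\lmk X_{l,\xi,\unit}^{(L)}\rmk=\tau_l\lmk s\lmk\omega_\xi|_{\caA_{L,l}}\rmk\rmk=\tau_{y_L}\lmk s\lmk\omega_\xi|_{\caA_{L,l}}\rmk\rmk$. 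Combining with $X_{l,\xi,p}^{(\sigma)}\le X_{l,\xi,\unit}^{(\sigma)}$ from the second step gives $s\lmk X_{l,\xi,p}^{(\sigma)}\rmk\le s\lmk X_{l,\xi,\unit}^{(\sigma)}\rmk=\tau_{y_\sigma}\lmk s\lmk\omega_\xi|_{\caA_{\sigma,l}}\rmk\rmk$, completing the proof. (Note $\omega_\xi\in\caS_\sigma(H)$ by Lemma \ref{lem:fs}~{\it 5}, so speaking of $\omega_\xi\circ\tau_{-y_\sigma}$ is legitimate.)
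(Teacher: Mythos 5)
Your proposal is correct and follows essentially the same route as the paper: the paper's (very terse) proof consists of exactly the two quadratic-form identities $\braket{\zeta}{X_{l,\xi,p}^{(\sigma)}\zeta}=\lV p\,w_\zeta\rV^2$ and $\omega_\xi\circ\tau_{-y_\sigma}\lmk\ket{\zeta}\bra{\zeta}\rmk=\lV w_\zeta\rV^2$ for a suitable vector $w_\zeta$, from which positivity, the trace formula, the support bound, and the rank-one identity all follow. Your reorganization — deriving everything from the rank-one identity plus the monotonicity $0\le X_{l,\xi,p}^{(\sigma)}\le X_{l,\xi,\unit}^{(\sigma)}$ — is an equivalent repackaging of the same computation, and your handling of the $\sigma=L$ shift is correct.
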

\begin{proof}
For any $\zeta\in\bigotimes_{i=0}^{l-1}\cc^n$,
we have
\[
\braket{\zeta}{X_{l,\xi,p}^{(\sigma)}\zeta}
=\lV p\lmk\sum_{\nu^{(l)}}\braket{\wsn{l}}{\zeta}
{\widehat{v_{\nu^{(l,\sigma)},\sigma}}}\rmk^*\xi\rV^2\ge 0,
\]
and
\[
\omega_{\xi}\lmk \tau_{-y_\sigma}\lmk \ket{\zeta}\bra{\zeta}\rmk\rmk
=\lV \lmk \sum_{\nu^{(l)}}\braket{\wsn{l}}{\zeta}
{\widehat{v_{\nu^{(l,\sigma)},\sigma}}}\rmk^*\xi\rV^2,
\]
where $y_R=0$ and $y_L=l$.
The claim of the Lemma can be checked from these equations.
\end{proof}
\begin{lem}\label{lem:Xxp2}
Assume [A1] and [A4].
 Let $\omega_\sigma$ be the state in [A4] and $\vv_\sigma$ the $n$-tuple of
elements in $B(\caK_\sigma)$ given in Notation \ref{nota:vv}.
Then for any $l\in\nan$,
$\tau_{y_\sigma}\lmk s(\omega_{\sigma}\vert_{\caA_{\sigma,l}})\rmk$ is equal to
the orthogonal projection onto $\Gamma_{l,\vv_\sigma}^{(\sigma)}\lmk B(\caK_\sigma)\rmk$,
where $y_R=0$ and $y_L=l$.
\end{lem}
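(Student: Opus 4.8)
The plan is to show that $\Gamma_{l,\vv_\sigma}^{(\sigma)}\lmk B(\caK_\sigma)\rmk$ coincides with the support projection of $\omega_\sigma$ (suitably translated), by squeezing it between two images of $X_{l,\xi,p}^{(\sigma)}$-type operators and using the faithfulness/positivity of $\rho_{\omega_\sigma}$ from Lemma \ref{lem:fs}, {\it 6}. First I would invoke Lemma \ref{lem:Xxp} with $p=\unit$: for any unit vector $\xi\in\caK_\sigma$, the operator $X_{l,\xi,\unit}^{(\sigma)}$ is positive, satisfies $\omega_\xi\circ\tau_{-y_\sigma}(A)=\Tr\lmk X_{l,\xi,\unit}^{(\sigma)}A\rmk$, and its range is spanned by the vectors $\Gamma_{l,\vv_\sigma}^{(\sigma)}\lmk\ket{\xi}\bra{\eta}\rmk$ as $\eta$ runs over $\caK_\sigma$ (this follows from the last displayed identity of Lemma \ref{lem:Xxp}, summing $\ket{\eta}\bra{\eta}$ over an orthonormal basis and using that the range of a sum of positive rank-one operators is the span of their vectors). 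Hence the support projection of $X_{l,\xi,\unit}^{(\sigma)}$ equals the orthogonal projection onto $\Gamma_{l,\vv_\sigma}^{(\sigma)}\lmk\ket{\xi}\bra{\caK_\sigma}\rmk$, and taking the join over a spanning set of $\xi$'s shows that the join of these support projections is exactly the projection onto $\Gamma_{l,\vv_\sigma}^{(\sigma)}\lmk B(\caK_\sigma)\rmk$.

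Next I would identify this join with $\tau_{y_\sigma}\lmk s(\omega_\sigma\vert_{\caA_{\sigma,l}})\rmk$. The reduced density matrix of $\omega_\sigma\circ\tau_{-y_\sigma}$ on $\caA_{[0,l-1]}$ is, by Lemma \ref{lem:fv}, {\it 2} applied to $\psi=\omega_\sigma$, the operator $\sum_{\mu^{(l)},\nu^{(l)}}\Tr\lmk\rho_{\omega_\sigma}\widehat{v_{\mu^{(l,\sigma)}}}\widehat{v_{\nu^{(l,\sigma)}}}^*\rmk\ket{\wsn{l}}\bra{\ws{l}}$, which is precisely $\sum_i \lambda_i X_{l,\xi_i,\unit}^{(\sigma)}$ where $\rho_{\omega_\sigma}=\sum_i\lambda_i\ket{\xi_i}\bra{\xi_i}$ is a spectral decomposition with all $\lambda_i>0$ (here I use Lemma \ref{lem:fs}, {\it 6}, that $\rho_{\omega_\sigma}$ is strictly positive on $\caK_\sigma$, so the $\xi_i$ form an orthonormal basis of $\caK_\sigma$). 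The support projection of a strictly-positively-weighted sum of positive operators is the join of their support projections, so $s\lmk\omega_\sigma\circ\tau_{-y_\sigma}\vert_{\caA_{[0,l-1]}}\rmk=\bigvee_i s\lmk X_{l,\xi_i,\unit}^{(\sigma)}\rmk$. Since the $\{\xi_i\}$ span $\caK_\sigma$, this join equals the projection onto $\Gamma_{l,\vv_\sigma}^{(\sigma)}\lmk B(\caK_\sigma)\rmk$ by the previous paragraph. Finally, $s\lmk\omega_\sigma\circ\tau_{-y_\sigma}\vert_{\caA_{[0,l-1]}}\rmk=\tau_{y_\sigma}\lmk s\lmk\omega_\sigma\vert_{\caA_{\sigma,l}}\rmk\rmk$ since $\tau_{y_\sigma}$ is a $*$-automorphism carrying $\caA_{[0,l-1]}$ onto $\caA_{\sigma,l}$ and automorphisms preserve supports.

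The main obstacle I anticipate is purely bookkeeping: keeping the direction of the translation $y_\sigma$ (with $y_R=0$, $y_L=l$) consistent between the $\sigma=L$ and $\sigma=R$ cases, and verifying that the spanning statement ``$\Gamma_{l,\vv_\sigma}^{(\sigma)}(\ket{\xi}\bra{\caK_\sigma})$ over all $\xi$ spans $\Gamma_{l,\vv_\sigma}^{(\sigma)}(B(\caK_\sigma))$'' really gives the full range — this is immediate because every $X\in B(\caK_\sigma)$ is a linear combination of rank-one operators $\ket{\xi}\bra{\eta}$. There is no analytic subtlety; everything reduces to the elementary fact that for positive operators $Y_1,\dots,Y_r$ one has $s(\sum_j Y_j)=\bigvee_j s(Y_j)$, applied twice.
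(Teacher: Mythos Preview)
Your proposal is correct and follows essentially the same route as the paper: decompose the strictly positive $\rho_{\omega_\sigma}=\sum_i\lambda_i\ket{\eta_i}\bra{\eta_i}$ (Lemma \ref{lem:fs}, {\it 6}), use Lemma \ref{lem:Xxp} to express the density matrix of $\omega_\sigma\circ\tau_{-y_\sigma}\vert_{\caA_{[0,l-1]}}$ as a positive combination of rank-one operators $\ket{\Gamma_{l,\vv_\sigma}^{(\sigma)}(\ket{\eta_i}\bra{\eta_j})}\bra{\Gamma_{l,\vv_\sigma}^{(\sigma)}(\ket{\eta_i}\bra{\eta_j})}$, and conclude from the spanning. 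The only cosmetic difference is that the paper writes the double sum over $i,j$ directly, whereas you first group by $\xi_i$ into $X_{l,\xi_i,\unit}^{(\sigma)}$ and then expand; the underlying argument is identical.
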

\begin{proof}
Let $\rho_{\omega_\sigma}$ be the density matrix given by
Lemma \ref{lem:fs}.
As it is strictly positive, it can be decomposed as $\rho_{\omega_\sigma}=\sum_i \lambda_i\ket{\eta_i}\bra{\eta_i}$
with numbers $\lambda_i>0$ and 
CONS $\{\eta_i\}_i$ of $\caK_\sigma$.
By Lemma \ref{lem:Xxp}, we get
\begin{align*}
&\omega_\sigma\circ \tau_{-y_\sigma}\lmk A\rmk
=\sum_i\lambda_i \omega_{\eta_i}\circ \tau_{-y_\sigma}\lmk A\rmk
=\sum_{ij}\lambda_i\Tr\lmk X_{l,\eta_i,\ket{\eta_j}\bra{\eta_j}}^{(\sigma)}A\rmk\\
&=\sum_{ij}\lambda_i
\braket{\Gamma_{l\vv_\sigma}^{(\sigma)}\lmk\ket{\eta_i}\bra{\eta_j}\rmk}{A \Gamma_{l\vv_\sigma}^{(\sigma)}\lmk\ket{\eta_i}\bra{\eta_j}\rmk},\quad
A\in \caA_{[0,l-1]}, \;l\in\nan.
\end{align*}
As $\{\Gamma_{l\vv_\sigma}^{(\sigma)}\lmk\ket{\eta_i}\bra{\eta_j}\rmk\}_{ij}$ spans  $\Gamma_{l,\vv_\sigma}^{(\sigma)}\lmk B(\caK_\sigma)\rmk$,
this proves the Lemma.
\end{proof}

\begin{lem}
Assume [A1], [A3], [A4], and [A5]. Let $\vv_\sigma$ be the $n$-tuple of
elements in $B(\caK_\sigma)$ given in Notation \ref{nota:vv} and $\rho_\sigma$
the state given in Lemma \ref{lem:vucp}.
Then there exists
an $\tilde l_\sigma\in\nan$ such that
\[
\Gamma_{l\vv_\sigma}^{(\sigma)}\lmk B(\caK_\sigma)s(\rho_\sigma)\rmk
=\Gamma_{l\vv_\sigma}^{(\sigma)}\lmk B(\caK_\sigma)\rmk,\quad
l\ge \tilde l_\sigma. 
\]
\end{lem}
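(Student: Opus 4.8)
Since $\Gamma_{l,\vv_\sigma}^{(\sigma)}\lmk B(\caK_\sigma)s(\rho_\sigma)\rmk\subset \Gamma_{l,\vv_\sigma}^{(\sigma)}\lmk B(\caK_\sigma)\rmk$ is automatic, the plan is to show the two subspaces of $\bigotimes_{i=0}^{l-1}\cc^n$ have equal dimension for $l$ large, and hence coincide. Let $P_l$ and $Q_l$ be the orthogonal projections onto $\Gamma_{l,\vv_\sigma}^{(\sigma)}\lmk B(\caK_\sigma)\rmk$ and onto $\Gamma_{l,\vv_\sigma}^{(\sigma)}\lmk B(\caK_\sigma)s(\rho_\sigma)\rmk$; then $Q_l\le P_l$, so it suffices to prove $\lV P_l-Q_l\rV\to0$ as $l\to\infty$, because two orthogonal projections at distance $<1$ have equal rank, and this together with the inclusion gives $Q_l=P_l$. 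The two ingredients will be: (i) an exponential decay estimate exploiting that $\caK_{0\sigma}=s(\rho_\sigma)\caK_\sigma$ sits at the bottom of the $\{v_{\mu\sigma}^*\}$-invariant hierarchy of Notation \ref{nota:oa}, and (ii) a uniform-in-$l$ lower bound on the smallest non-zero singular value of $\Gamma_{l,\vv_\sigma}^{(\sigma)}$ coming from the non-zero-mode hypothesis (\ref{eq:nzm}).

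For (i): for $X\in B(\caK_\sigma)$ I would split $\Gamma_{l,\vv_\sigma}^{(\sigma)}(X)=\Gamma_{l,\vv_\sigma}^{(\sigma)}\lmk Xs(\rho_\sigma)\rmk+\Gamma_{l,\vv_\sigma}^{(\sigma)}\lmk X(\unit-s(\rho_\sigma))\rmk$, the first summand lying in $\Ran Q_l$. Since each $\caK_{a\sigma}$ is $\{v_{\mu\sigma}^*\}$-invariant (Lemma \ref{lem:hie}~(iii)), the subspace $(\unit-s(\rho_\sigma))\caK_\sigma$ is $\{v_{\mu\sigma}\}$-invariant; using the definition of $\Gamma_{l,\vv_\sigma}^{(\sigma)}$ (cf.\ Lemma \ref{lem:Xxp}) one checks that the second summand depends only on $(\unit-s(\rho_\sigma))X(\unit-s(\rho_\sigma))$, through products of the compressions $(\unit-s(\rho_\sigma))v_{\mu\sigma}(\unit-s(\rho_\sigma))$, and a Cauchy--Schwarz bound yields
\[
\lV\Gamma_{l,\vv_\sigma}^{(\sigma)}\lmk X-Xs(\rho_\sigma)\rmk\rV^2\le \mathrm{const}\cdot\lV X\rV^2\,\Tr\lmk T_{(\vv_\sigma)_{\unit-s(\rho_\sigma)}}^{\,l}\lmk\unit-s(\rho_\sigma)\rmk\rmk .
\]
By (\ref{eq:rvr}) the tuple $(\vv_\sigma)_{\unit-s(\rho_\sigma)}$ is block-lower-triangular with respect to $r_{1\sigma},\dots,r_{k_\sigma\sigma}$ with diagonal blocks $\oo_{a\sigma}$ ($a\ge1$), so the spectral radius of $T_{(\vv_\sigma)_{\unit-s(\rho_\sigma)}}$ equals $\max_{a\ge1}r_{T_{\oo_{a,\sigma}}}$, which is $<1$ by Lemma \ref{lem:rta}. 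Hence the right-hand side is $\le C s^{\,l}\lV X\rV^2$ for some $0<s<1$, i.e.\ $\lV\Gamma_{l,\vv_\sigma}^{(\sigma)}\lmk X-Xs(\rho_\sigma)\rmk\rV\le\varepsilon_l\lV X\rV$ with $\varepsilon_l\to0$ exponentially, uniformly over $\lV X\rV\le1$.

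For (ii): from the computation in the proof of Lemma \ref{lem:Xxp2}, the density matrix of $\omega_\sigma\circ\tau_{-y_\sigma}$ restricted to $\caA_{[0,l-1]}$ — which has the same non-zero spectrum as $D_{\omega_\sigma\vert_{\caA_{\sigma,l}}}$ — equals $\Gamma_{l,\vv_\sigma}^{(\sigma)}\circ L_{\rho_{\omega_\sigma}}\circ\lmk\Gamma_{l,\vv_\sigma}^{(\sigma)}\rmk^{*}$, where $L_{\rho_{\omega_\sigma}}$ is left multiplication on $B(\caK_\sigma)$ by the density matrix $\rho_{\omega_\sigma}$ of Lemma \ref{lem:fs}, a strictly positive, hence invertible, operator. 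Since invertible factors do not change ranks and shift singular values only by bounded amounts, the smallest non-zero singular value of $\Gamma_{l,\vv_\sigma}^{(\sigma)}$ is at least $\lmk\,\inf\bigl(\sigma(D_{\omega_\sigma\vert_{\caA_{\sigma,l}}})\setminus\{0\}\bigr)/\lV\rho_{\omega_\sigma}\rV\rmk^{1/2}$, which by (\ref{eq:nzm}) is bounded below by a constant $c>0$ independent of $l$.

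Combining: given a unit vector $w\in\Ran P_l$, take the minimal-norm preimage $X_0\in\lmk\ker\Gamma_{l,\vv_\sigma}^{(\sigma)}\rmk^{\perp}$ with $\Gamma_{l,\vv_\sigma}^{(\sigma)}(X_0)=w$, so $\lV X_0\rV\le c^{-1}$; then $\mathrm{dist}\lmk w,\Ran Q_l\rmk\le\lV\Gamma_{l,\vv_\sigma}^{(\sigma)}\lmk X_0-X_0s(\rho_\sigma)\rmk\rV\le\varepsilon_l c^{-1}$. With $Q_l\le P_l$ this gives $\lV P_l-Q_l\rV\le\varepsilon_l c^{-1}\to0$, so there is $\tilde l_\sigma\in\nan$ with $\lV P_l-Q_l\rV<1$, hence $Q_l=P_l$, for $l\ge\tilde l_\sigma$. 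The main obstacle I expect is ingredient (ii): turning the scalar non-zero-mode bound (\ref{eq:nzm}) into a singular-value bound for $\Gamma_{l,\vv_\sigma}^{(\sigma)}$ requires identifying $D_{\omega_\sigma\vert_{\caA_{\sigma,l}}}$ with $\Gamma_{l,\vv_\sigma}^{(\sigma)}L_{\rho_{\omega_\sigma}}\lmk\Gamma_{l,\vv_\sigma}^{(\sigma)}\rmk^{*}$ and carefully tracking the invertible conjugation by $\rho_{\omega_\sigma}^{1/2}$; a secondary point is to justify that the spectral radius of the block-triangular CP map $T_{(\vv_\sigma)_{\unit-s(\rho_\sigma)}}$ is the maximum of the $r_{T_{\oo_{a,\sigma}}}$, which follows from (\ref{eq:rvr}) together with Lemma \ref{lem:rta}.
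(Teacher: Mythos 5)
Your argument is correct, and it runs on exactly the two ingredients the paper uses: the non-zero-mode bound (\ref{eq:nzm}) supplies the uniform constant $C_\sigma$, and the spectral gap of $T_{\vv_\sigma}$ (Lemma \ref{lem:vucp}) makes the $(\unit-s(\rho_\sigma))$ contribution exponentially small. The difference is in the packaging. The paper argues by contradiction: it takes a unit vector $\Gamma_{l,\vv_\sigma}^{(\sigma)}(X)$ orthogonal to $\Gamma_{l,\vv_\sigma}^{(\sigma)}\lmk B(\caK_\sigma)s(\rho_\sigma)\rmk$, evaluates $\omega_\sigma$ on the corresponding rank-one projection, and squeezes $C_\sigma\le\lV T_{\vv_\sigma}^{l}\lmk\unit-P_{\{1\}}^{T_{\vv_\sigma}}\rmk\rV\le\tfrac12C_\sigma$ directly, splitting the density matrix into the $X^{(\sigma)}_{l,\eta_i,s(\rho_\sigma)}$ and $X^{(\sigma)}_{l,\eta_i,\unit-s(\rho_\sigma)}$ pieces. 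You instead convert (\ref{eq:nzm}) into a uniform lower bound on the smallest non-zero singular value of $\Gamma_{l,\vv_\sigma}^{(\sigma)}$ via the identification of $D_{\omega_\sigma\vert_{\caA_{\sigma,l}}}$ with $\Gamma_{l,\vv_\sigma}^{(\sigma)}L_{\rho_{\omega_\sigma}}\lmk\Gamma_{l,\vv_\sigma}^{(\sigma)}\rmk^*$ (which is legitimate: $L_{\rho_{\omega_\sigma}}$ is invertible because $\rho_{\omega_\sigma}$ is strictly positive by Lemma \ref{lem:fs}.6, and a sandwiching argument on the common range gives the bound $C_\sigma/\lV\rho_{\omega_\sigma}\rV$), and then finish with a projection-distance argument. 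What your version buys is a reusable quantitative statement (a singular-value bound for $\Gamma_{l,\vv_\sigma}^{(\sigma)}$, essentially what Lemma \ref{lem:bijecs}'s proof establishes separately afterwards); what the paper's version buys is brevity, since it never needs minimal-norm preimages or the fact that nested projections at distance $<1$ coincide.

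One remark on your ingredient (i): the detour through the block-lower-triangular structure of $(\vv_\sigma)_{\unit-s(\rho_\sigma)}$ and the spectral radii $r_{T_{\oo_{a,\sigma}}}$ is more than you need, and as stated it quietly uses a fact about CP maps of block-triangular tuples (that the off-diagonal blocks do not raise the spectral radius) that you would have to justify. It is cleaner to observe, as the paper does in (\ref{eq:Xsnd}), that $P_{\{1\}}^{T_{\vv_\sigma}}\lmk\unit-s(\rho_\sigma)\rmk=\rho_\sigma\lmk\unit-s(\rho_\sigma)\rmk\unit=0$, whence
\[
\Tr\lmk T_{\vv_\sigma}^{l}\lmk\unit-s(\rho_\sigma)\rmk\rmk
=\Tr\lmk T_{\vv_\sigma}^{l}\circ\lmk\unit-P_{\{1\}}^{T_{\vv_\sigma}}\rmk\lmk\unit-s(\rho_\sigma)\rmk\rmk
\le m_\sigma\lV T_{\vv_\sigma}^{l}\lmk\unit-P_{\{1\}}^{T_{\vv_\sigma}}\rmk\rV,
\]
which decays exponentially by Lemma \ref{lem:vucp}. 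With that substitution your proof is complete and fully rigorous.
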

\begin{proof}
Set 
\[
C_\sigma:=\inf\left\{ \sigma\lmk D_{\omega_\sigma\vert_{\caA_{\sigma,l}}}\rmk
\setminus \{0\}\mid l\in\nan\right\}>0,
\]
and 
\[
\tilde l_\sigma
:=\min\left\{
l\in\nan\mid
\sup_{l':l\le l'}\left\{
\lV T_{\vv_\sigma}^{l'}\lmk\unit-P_{\{1\}}^{T_{\vv_\sigma}}\rmk\rV\right\}
<\frac 12 C_\sigma
\right\}\in\nan.
\]
(Recall [A4] and Lemma \ref{lem:vucp} to see $C_\sigma>0$ and $\tilde l_\sigma\in\nan$.)
We assume that there exists an $l\ge\tilde l_\sigma$
such that $\Gamma_{l\vv_\sigma}^{(\sigma)}\lmk B(\caK_\sigma)s(\rho_\sigma)\rmk
\neq\Gamma_{l\vv_\sigma}^{(\sigma)}\lmk B(\caK_\sigma)\rmk$
and show a contradiction.

If $\Gamma_{l\vv_\sigma}^{(\sigma)}\lmk B(\caK_\sigma)s(\rho_\sigma)\rmk
\neq\Gamma_{l\vv_\sigma}^{(\sigma)}\lmk B(\caK_\sigma)\rmk$, then there exists an $X\in B(\caK_\sigma)$ such that 
$\Gamma_{l\vv_\sigma}^{(\sigma)}(X)$ and 
$\Gamma_{l\vv_\sigma}^{(\sigma)}\lmk B(\caK_\sigma s(\rho_\sigma))\rmk$
are orthogonal and $\lV \Gamma_{l\vv_\sigma}^{(\sigma)}(X)\rV=1$.

First we show 
\begin{align}\label{eq:ranX}
\ket{\Gamma_{l\vv_\sigma}^{(\sigma)}(X)}\bra{\Gamma_{l\vv_\sigma}^{(\sigma)}(X)}
\le \tau_{y_\sigma}\lmk s\lmk D_{\omega_\sigma\vert_{\caA_{\sigma,l}}}\rmk\rmk,
\end{align}
where $y_R=0$ and $y_L=l$.
We may represent $X=\sum_{i=1}^{m_\sigma}c_i\ket{\xi_i}\bra{\eta_i}$, with 
$\lV \xi_i\rV=\lV \eta_i\rV=1$.
From Lemma \ref{lem:Xxp}, we have
\[
\ket{\Gamma_{l\vv_\sigma}^{(\sigma)}\lmk \ket{\xi_i}\bra{\eta_i}\rmk}
\bra{\Gamma_{l\vv_\sigma}^{(\sigma)}\lmk \ket{\xi_i}\bra{\eta_i}\rmk}
=X_{l,\xi_i,\ket{\eta_i}\bra{\eta_i}}^{(\sigma)},
\]
with notation in Lemma \ref{lem:Xxp}.
By Lemma \ref{lem:Xxp}, and Lemma \ref{lem:a1a4} we have
\[
s\lmk\ket{\Gamma_{l\vv_\sigma}^{(\sigma)}\lmk \ket{\xi_i}\bra{\eta_i}\rmk}
\bra{\Gamma_{l\vv_\sigma}^{(\sigma)}\lmk \ket{\xi_i}\bra{\eta_i}\rmk}\rmk=
s\lmk X_{l,\xi_i,\ket{\eta_i}\bra{\eta_i}}^{(\sigma)}\rmk\le \tau_{y_\sigma}\lmk
s\lmk\omega_{\xi_i}\vert_{\caA_{\sigma,l}}\rmk\rmk
\le
\tau_{y_\sigma}\lmk s\lmk\omega_\sigma\vert_{\caA_{\sigma,l}}\rmk\rmk.
\]
where $y_R=0$ and $y_L=l$.
Hence each term in the decomposition $\Gamma_{l\vv_\sigma}^{(\sigma)}(X)=\sum_{i=1}^{m_\sigma}c_i\Gamma_{l\vv_\sigma}^{(\sigma)}\lmk \ket{\xi_i}\bra{\eta_i}\rmk$
is in $\tau_{y_\sigma}\lmk s\lmk\omega_\sigma\vert_{\caA_{\sigma,l}}\rmk\rmk\lmk \bigotimes_{i=0}^{l-1}\cc^n\rmk$.
From this, we obtain (\ref{eq:ranX}).

Next we show 
\begin{align}\label{eq:Xsnd}
0\le X_{l,\xi,(1-s(\rho_\sigma))}^{(\sigma)}\le \lV T_{\vv_\sigma}^{l}\lmk\unit-P_{\{1\}}^{T_{\vv_\sigma}}\rmk\rV,\quad
\text{for all } \; \xi\in\caK_\sigma,\quad\text{with}\quad \lV\xi\rV=1.
\end{align}
The first inequality is already proven in Lemma \ref{lem:Xxp2}.
The second one follows from the following calculation for any $\zeta\in  \bigotimes_{i=0}^{l-1}\cc^n$:
\begin{align*}
&\braket{\zeta}{ X_{l,\xi,(1-s(\rho_\sigma))}^{(\sigma)}\zeta}=
\lV \lmk 1-s(\rho_\sigma)\rmk
\lmk \sum_{\nu^{(l)}}\braket{\wsn{l}}{\zeta}
{\widehat{v_{\nu^{(l,\sigma)},\sigma}}}\rmk^*\xi\rV^2
\le\lmk
\sum_{\nu^{(l)}}\lv \braket{\wsn{l}}{\zeta}\rv
\lV \lmk 1-s(\rho_\sigma)\rmk
{\widehat{v_{\nu^{(l,\sigma)},\sigma}}}^*\xi\rV
\rmk^2\\
&\le\lmk \sum_{\nu^{(l)}}\lv \braket{\wsn{l}}{\zeta}\rv^2\rmk
\lmk\sum_{\nu^{(l)}}
\lV \lmk 1-s(\rho_\sigma)\rmk
{\widehat{v_{\nu^{(l,\sigma)},\sigma}}}^*\xi\rV^2
\rmk
=\lV\zeta\rV^2 
\braket{\xi}{\lmk
 T_{\vv_\sigma}^{l}\lmk 1-s(\rho_\sigma)\rmk
\rmk
\xi}\\
&=\lV\zeta\rV^2 
\braket{\xi}{\lmk
 T_{\vv_\sigma}^{l}\circ\lmk\unit-P_{\{1\}}^{T_{\vv_\sigma}}\rmk\lmk 1-s(\rho_\sigma)\rmk
+T_{\vv_\sigma}^{l}\circ P_{\{1\}}^{T_{\vv_\sigma}}\lmk 1-s(\rho_\sigma)\rmk
\rmk
\xi}\\
&=\lV\zeta\rV^2 
\braket{\xi}{\lmk
 T_{\vv_\sigma}^{l}\circ\lmk\unit-P_{\{1\}}^{T_{\vv_\sigma}}\rmk\lmk 1-s(\rho_\sigma)\rmk
\rmk
\xi}
\le
\lV
 T_{\vv_\sigma}^{l}\circ\lmk\unit-P_{\{1\}}^{T_{\vv_\sigma}}\rmk\rV
\lV\zeta\rV^2\lV \xi\rV^2.
\end{align*}
We used Lemma \ref{lem:vucp} in the last equality.

Finally, we claim 
\begin{align}\label{eq:Xthird}
\ran\lmk X_{l,\xi,s(\rho_\sigma)}^{(\sigma)}\rmk\subset \Gamma_{l,\vv_\sigma}^{(\sigma)}\lmk B\lmk\caK_\sigma\rmk s(\rho_\sigma)\rmk,\quad
\text{for all} \quad \xi\in\caK_\sigma,\quad \text{with}\quad \lV \xi\rV=1.
\end{align}
To see this, decompose $s(\rho_\sigma)$ as $s(\rho_\sigma)=\sum_i\ket{x_i}\bra{x_i}$
with CONS $\{x_i\}$ of $s(\rho_\sigma)\caK_\sigma$.
We then obtain
\begin{align*}
&X_{l,\xi,s(\rho_\sigma)}^{(\sigma)}:=
\sum_{\mu^{(l)},\nu^{(l)}}\braket{\xi}{{\widehat{v_{\mu^{(l,\sigma)},\sigma}}}s(\rho_\sigma) {\widehat{v_{\nu^{(l,\sigma)},\sigma}}}^*\xi}
\ket{\wsn{l}}\bra{\ws{l}}
=\sum_i\sum_{\mu^{(l)},\nu^{(l)}}\braket{\xi}{{\widehat{v_{\mu^{(l,\sigma)},\sigma}}}\lmk \ket{x_i}\bra{x_i}\rmk {\widehat{v_{\nu^{(l,\sigma)},\sigma}}}^*\xi}
\ket{\wsn{l}}\bra{\ws{l}}\\
&=\sum_i \ket{\Gamma_{l,\vv_\sigma}^{(\sigma)}\lmk \ket{\xi}\bra{x_i}\rmk}\bra{\Gamma_{l,\vv_\sigma}^{(\sigma)}\lmk \ket{\xi}\bra{x_i}\rmk},
\end{align*}
for any $\xi\in\caK_\sigma$, with $\lV \xi\rV=1$.
As each $\Gamma_{l,\vv_\sigma}^{(\sigma)}\lmk \ket{\xi}\bra{x_i}\rmk$ is in $\Gamma_{l,\vv_\sigma}^{(\sigma)}\lmk B\lmk\caK_\sigma\rmk s(\rho_\sigma)\rmk$,
this proves the claim.

Now we derive the claim of the Lemma, combining the above statements.
Let $\rho_{\omega_\sigma}$ be the density matrix given by
Lemma \ref{lem:fs}.
It can be decomposed as 
$\rho_{\omega_\sigma}=\sum_i \lambda_i\ket{\eta_i}\bra{\eta_i}$,
with numbers $\lambda_i>0$, $\sum_i\lambda_i=1$, and
CONS $\{\eta_i\}$ of $\caK_\sigma$.
Then we have
\begin{align*}
&C_\sigma\le \omega_\sigma\lmk
\tau_{-y_\sigma}\lmk\ket{\Gamma_{l\vv_\sigma}^{(\sigma)}(X)}\bra{\Gamma_{l\vv_\sigma}^{(\sigma)}(X)}
\rmk
\rmk\\
&=\sum_i\lambda_i
\lmk
\braket{\Gamma_{l,\vv_\sigma}^{(\sigma)}\lmk X\rmk}{X_{l,\eta_i,s(\rho_\sigma)}^{(\sigma)}\Gamma_{l,\vv_\sigma}^{(\sigma)}\lmk X\rmk}
+\braket{\Gamma_{l,\vv_\sigma}^{(\sigma)}\lmk X\rmk}{X_{l,\eta_i,\lmk 1-s(\rho_\sigma)\rmk}^{(\sigma)}\Gamma_{l,\vv_\sigma}^{(\sigma)}\lmk X\rmk}
\rmk.
\end{align*}
In the first inequality, we used (\ref{eq:ranX}) and the definition of $C_\sigma$.
The equality follows from Lemma \ref{lem:fs}, and the definition of $X_{l,\xi,p}^{(\sigma)}$.
By the third claim (\ref{eq:Xthird}) and the orthogonality of
$\Gamma_{l\vv_\sigma}^{(\sigma)}(X)$ and 
$\Gamma_{l\vv_\sigma}^{(\sigma)}\lmk B(\caK_\sigma s(\rho_\sigma))\rmk$,
the first term on the right hand side is $0$.
The second term can be bounded using (\ref{eq:Xsnd}) and we have
\begin{align}&C_\sigma\le 
\lV T_{\vv_\sigma}^{l}\lmk\unit-P_{\{1\}}^{T_{\vv_\sigma}}\rmk\rV
\le \frac 12C_\sigma.
\end{align}
This is a contradiction. Hence we proved the Lemma.
\end{proof}
\begin{proofof}[Lemma \ref{lem:bijecs}]
Set $c_\sigma:=\inf\{\sigma\lmk\tilde \rho_\sigma\rmk\setminus \{0\}\}>0$.
By the routine calculation from Part I, we obtain
\[
\lv
\lV\Gamma_{l,\vv_\sigma}^{(\sigma)}\lmk X\rmk\rV^2-\rho_\sigma\lmk X^* X\rmk
\rv
\le
\lV T_{\vv_\sigma}^{l}\lmk\unit-P_{\{1\}}^{T_{\vv_\sigma}}\rmk\rV m_\sigma^2 c_\sigma^{-1}\rho_\sigma\lmk X^*X\rmk,\quad X\in B(\caK_\sigma)s(\rho_\sigma).
\]
We set
\[
 l_\sigma'
:=\min\left\{
\tilde l_\sigma \le l\in\nan\mid
\sup_{l':l\le l'}\left\{
\lV T_{\vv_\sigma}^{l'}\lmk\unit-P_{\{1\}}^{T_{\vv_\sigma}}\rmk\rV m_\sigma^2 c_\sigma^{-1}\right\}
<\frac 12
\right\}
\]
with $\tilde l_\sigma$ given in the previous Lemma.
Then for $l_\sigma'\le l$, we have
\[
 \frac{c_\sigma}{2}\Tr\lmk X^* X\rmk\le \frac 12 \rho_\sigma\lmk X^*X\rmk\le \lV \Gamma_{l,\vv_\sigma}^{(\sigma)}\lmk X\rmk\rV^2,\quad X\in B(\caK_\sigma)s(\rho_\sigma).
\]
This means $\left.\Gamma_{l,\vv_\sigma}^{(\sigma)}\right\vert_{B(\caK_\sigma)s(\rho_\sigma)}$ is injective for $l_\sigma'\le l$.
As we have chosen $l_\sigma'$ so that $\tilde l_\sigma\le l_\sigma'$, it is also onto $\Gamma_{l,\vv_\sigma}^{(\sigma)}\lmk B(\caK_\sigma)\rmk$.
That $\Gamma_{l,\vv_\sigma}^{(\sigma)}\lmk B(\caK_\sigma)\rmk=
\tau_{y_{\sigma}}\lmk
s\lmk \left. \omega_\sigma\right\vert _{\caA_{\sigma,l}}\rmk
\rmk\bigotimes_{i=0}^{l-1}\cc^n$ is proven in Lemma \ref{lem:Xxp2}.
\end{proofof}
As a result, we obtain the following Lemma.
\begin{lem}\label{lem:yaab}
Assume [A1], [A3], [A4], and [A5]. Let $\sigma=L,R$. Let $\vv_\sigma$ be the $n$-tuple of
elements in  $B(\caK_\sigma)$ given in Notation \ref{nota:vv},
$\rho_\sigma$
the state given in Lemma \ref{lem:vucp},
and  $V_{a\sigma}$the unitary given in Lemma \ref{lem:primu}.
Let $l_\sigma'\in\nan$ be the number given in Lemma \ref{lem:bijecs}.
For each $a=0,\ldots, k_\sigma$,
let  $\{g_{\alpha}^{(a)}\}_{\alpha=1}^{{n_{0}^{(\sigma)}}}$ be a CONS
of $r_{a\sigma}\caK_\sigma\simeq \cc^{n_0^{(\sigma)}}$ given by
$g_\alpha^{(a)}:=V_{a\sigma}\chi_\alpha^{(n_0^{(\sigma)})}$.
Then
there exist a
 $y_{a,\alpha,\beta,\sigma}^{(l)}\in B(\caK_\sigma)$,
for each $a=0,\ldots,k_\sigma$, $\alpha,\beta=1,\ldots,{n_{0}^{(\sigma)}}$, and 
$l\ge l_\sigma'$,
satisfying the followings.
\begin{enumerate}
\item[(1)] For each $l\ge l_\sigma'$, the set
 $\{y_{a,\alpha,\beta,\sigma}^{(l)}\}_{a=0,\ldots,k_\sigma,
\alpha,\beta=1,\ldots,{n_{0}^{(\sigma)}}}$ is a basis of $\caK_l(\vv_\sigma)$.
\item[(2)]For any $a_1=0,\ldots,k_\sigma$, $\alpha_1,\alpha_2,\beta_1,\beta_2=1,\ldots,n_0^{(\sigma)}$, and $l_1,l_2\ge l_\sigma'$, we have
\[
y_{a_1,\alpha_1,\beta_1,\sigma}^{(l_1)}y_{0,\alpha_2,\beta_2,\sigma}^{(l_2)}
=\delta_{\beta_1\alpha_2} y_{a_1,\alpha_1,\beta_2,\sigma}^{(l_1+l_2)}.
\]
\item[(3)]
For any $a=0,\ldots,k_\sigma$, $\alpha,\beta=1,\ldots,{n_{0}^{(\sigma)}}$, and 
$l\ge l_\sigma'$,
\[
y_{a,\alpha,\beta,\sigma}^{(l)}s(\rho_\sigma)
=\ket{g_\alpha^{(a)}}\bra{g_{\beta}^{(0)}}.
\]
\item[(4)]If $X\in \caK_l(\vv_\sigma)$, $l\ge l_\sigma'$,
satisfies $Xs(\rho_\sigma)=0$, then $X=0$.
\end{enumerate}
\end{lem}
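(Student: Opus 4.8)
The plan is to deduce all four assertions from the single claim that, for $l\ge l_\sigma'$, the linear map
\[
R_l^{(\sigma)}\colon\caK_l(\vv_\sigma)\to B(\caK_\sigma)s(\rho_\sigma),\qquad X\mapsto Xs(\rho_\sigma)
\]
is a bijection. Granting this, first note that $\{g_\alpha^{(a)}\}_{a=0,\dots,k_\sigma;\ \alpha=1,\dots,n_0^{(\sigma)}}$ is a CONS of $\caK_\sigma$: the $r_{a\sigma}\caK_\sigma$ are mutually orthogonal with $\sum_ar_{a\sigma}=\unit$ and, by Lemma~\ref{lem:primu}, $\sum_a\dim r_{a\sigma}\caK_\sigma=(k_\sigma+1)n_0^{(\sigma)}=m_\sigma$; since $g_\beta^{(0)}\in r_{0\sigma}\caK_\sigma=s(\rho_\sigma)\caK_\sigma$, the family $\{\ket{g_\alpha^{(a)}}\bra{g_\beta^{(0)}}\}_{a,\alpha,\beta}$ is then a basis of $B(\caK_\sigma)s(\rho_\sigma)$. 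Define $y_{a,\alpha,\beta,\sigma}^{(l)}:=(R_l^{(\sigma)})^{-1}(\ket{g_\alpha^{(a)}}\bra{g_\beta^{(0)}})$; then $(3)$ holds by definition, $(1)$ because $R_l^{(\sigma)}$ carries a basis to a basis, and $(4)$ is exactly the injectivity of $R_l^{(\sigma)}$. For $(2)$, both sides lie in $\caK_{l_1}(\vv_\sigma)\caK_{l_2}(\vv_\sigma)=\caK_{l_1+l_2}(\vv_\sigma)$, so by injectivity of $R_{l_1+l_2}^{(\sigma)}$ it suffices to compare their images under it; using $(3)$ together with $s(\rho_\sigma)\ket{g_{\alpha_2}^{(0)}}=\ket{g_{\alpha_2}^{(0)}}$ one gets $y_{a_1,\alpha_1,\beta_1,\sigma}^{(l_1)}y_{0,\alpha_2,\beta_2,\sigma}^{(l_2)}s(\rho_\sigma)=y_{a_1,\alpha_1,\beta_1,\sigma}^{(l_1)}s(\rho_\sigma)\ket{g_{\alpha_2}^{(0)}}\bra{g_{\beta_2}^{(0)}}=\delta_{\beta_1\alpha_2}\ket{g_{\alpha_1}^{(a_1)}}\bra{g_{\beta_2}^{(0)}}=\delta_{\beta_1\alpha_2}\,y_{a_1,\alpha_1,\beta_2,\sigma}^{(l_1+l_2)}s(\rho_\sigma)$.

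For the claim, domain and codomain of $R_l^{(\sigma)}$ have the same dimension: $\dim B(\caK_\sigma)s(\rho_\sigma)=m_\sigma n_0^{(\sigma)}$, and $\dim\caK_l(\vv_\sigma)=\dim\Gamma_{l,\vv_\sigma}^{(\sigma)}(B(\caK_\sigma))=m_\sigma n_0^{(\sigma)}$ as well — the first equality because (by the routine computation recalled from Part~I) $\ker\Gamma_{l,\vv_\sigma}^{(\sigma)}$ is the orthogonal complement of $\caK_l(\vv_\sigma)$ in $B(\caK_\sigma)$, and the second because $\Gamma_{l,\vv_\sigma}^{(\sigma)}\vert_{B(\caK_\sigma)s(\rho_\sigma)}$ is injective onto $\Gamma_{l,\vv_\sigma}^{(\sigma)}(B(\caK_\sigma))$ (Lemma~\ref{lem:bijecs}). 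Hence it suffices to prove $R_l^{(\sigma)}$ surjective, i.e.\ $\caK_l(\vv_\sigma)s(\rho_\sigma)=B(\caK_\sigma)s(\rho_\sigma)$ for $l$ large.

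Write $r_{0\sigma}=s(\rho_\sigma)=\sum_\alpha\ket{g_\alpha^{(0)}}\bra{g_\alpha^{(0)}}$, identify $B(\caK_\sigma)r_{0\sigma}$ with the Hilbert space $\caK_\sigma\otimes\overline{r_{0\sigma}\caK_\sigma}$ via $\ket{\eta}\bra{g}\leftrightarrow\eta\otimes\bar g$, and set $\ket{\Omega_\sigma}:=\sum_\alpha g_\alpha^{(0)}\otimes\overline{g_\alpha^{(0)}}$. Using $T_{\vv_\sigma}(\,\cdot\,)=\sum_\mu v_{\mu\sigma}(\,\cdot\,)v_{\mu\sigma}^*$ and $\widehat{v_{\mu^{(l,\sigma)}}}r_{0\sigma}=(\widehat{v_{\mu^{(l,\sigma)}}}\otimes\unit)\ket{\Omega_\sigma}$, one checks
\[
\caK_l(\vv_\sigma)r_{0\sigma}=\ran\Xi_l,\qquad
\Xi_l:=\sum_{\mu^{(l)}}\bigl|\widehat{v_{\mu^{(l,\sigma)}}}r_{0\sigma}\bigr\rangle\bigl\langle\widehat{v_{\mu^{(l,\sigma)}}}r_{0\sigma}\bigr|
=\bigl(T_{\vv_\sigma}\otimes\idd_{\Mat_{n_0^{(\sigma)}}}\bigr)^{l}\bigl(\ket{\Omega_\sigma}\bra{\Omega_\sigma}\bigr).
\]
By Lemma~\ref{lem:vucp}, $\sigma(T_{\vv_\sigma})\cap\bbT=\{1\}$ with a one-dimensional Jordan cell there and $P_{\{1\}}^{T_{\vv_\sigma}}(X)=\rho_\sigma(X)\unit_{\caK_\sigma}$, so $T_{\vv_\sigma}^{l}\to P_{\{1\}}^{T_{\vv_\sigma}}$ in norm, whence
\[
\Xi_l\ \longrightarrow\ \bigl(P_{\{1\}}^{T_{\vv_\sigma}}\otimes\idd\bigr)\bigl(\ket{\Omega_\sigma}\bra{\Omega_\sigma}\bigr)=\unit_{\caK_\sigma}\otimes\overline{\tilde\rho_\sigma},
\]
a short computation from $\rho_\sigma(\ket{g_\alpha^{(0)}}\bra{g_\beta^{(0)}})=\braket{g_\beta^{(0)}}{\tilde\rho_\sigma g_\alpha^{(0)}}$, where $\tilde\rho_\sigma$ is the density matrix of $\rho_\sigma$. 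Since $s(\tilde\rho_\sigma)=r_{0\sigma}$, the limit is strictly positive on $\caK_\sigma\otimes\overline{r_{0\sigma}\caK_\sigma}$; hence $\Xi_l$ is strictly positive — i.e.\ of full rank, $\ran\Xi_l=B(\caK_\sigma)r_{0\sigma}$ — for all $l$ past an explicit threshold governed by the decay rate of $\|T_{\vv_\sigma}^{l}(\unit-P_{\{1\}}^{T_{\vv_\sigma}})\|$ and the least eigenvalue of $\tilde\rho_\sigma$, which, enlarging $l_\sigma'$ if necessary, we may take to be $l_\sigma'$. This proves the claim and hence the Lemma. The one genuinely substantive point is this surjectivity step: one must see that the length‑$l$ products $\widehat{v_{\mu^{(l,\sigma)}}}$, restricted to the possibly very small subspace $r_{0\sigma}\caK_\sigma$, already linearly span all of $B(\caK_\sigma)r_{0\sigma}$, and the mechanism for this is the identification of $\Xi_l$ with an iterate of the amplified channel $T_{\vv_\sigma}\otimes\idd$ together with its convergence to a faithful limit — which is precisely where the uniqueness of the peripheral eigenvalue of $T_{\vv_\sigma}$ (Lemma~\ref{lem:vucp}, a consequence of [A3]) is used. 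Everything else is bookkeeping.
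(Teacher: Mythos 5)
Your proof is correct, but the key step is carried out by a genuinely different mechanism than the paper's. The paper never proves surjectivity of $X\mapsto Xs(\rho_\sigma)$ head-on: it uses the injectivity of $\Gamma_{l,\vv_\sigma}^{(\sigma)}\vert_{B(\caK_\sigma)s(\rho_\sigma)}$ from Lemma~\ref{lem:bijecs} as a \emph{duality} statement --- linear independence of the vectors $\Gamma_{l,\vv_\sigma}^{(\sigma)}\lmk\ket{g_{\alpha}^{(a)}}\bra{g_{\beta}^{(0)}}\rmk$ yields dual vectors $\xi_{a,\alpha,\beta}^{(l)}$ in $\bigotimes_{i=0}^{l-1}\cc^n$, and $y_{a,\alpha,\beta,\sigma}^{(l)}:=\sum_{\mu^{(l)}}\braket{\ws{l}}{\xi_{a,\alpha,\beta}^{(l)}}\widehat{v_{\mu^{(l,\sigma)},\sigma}}$ is then an explicit element of $\caK_l(\vv_\sigma)$ satisfying (3) by construction; the basis property (1) follows from the same dimension count you use, and (2), (4) are derived exactly as you derive them. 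You instead re-prove the surjectivity of $X\mapsto Xs(\rho_\sigma)$ dynamically, by identifying the Gram operator $\Xi_l$ of $\{\widehat{v_{\mu^{(l,\sigma)}}}r_{0\sigma}\}$ with $(T_{\vv_\sigma}\otimes\idd)^l\lmk\ket{\Omega_\sigma}\bra{\Omega_\sigma}\rmk$ and invoking the convergence $T_{\vv_\sigma}^l\to P_{\{1\}}^{T_{\vv_\sigma}}$ and faithfulness of $\tilde\rho_\sigma$ on its support. This is valid (the identification, the limit $\unit\otimes\overline{\tilde\rho_\sigma}$, and the eventual strict positivity all check out), and it isolates nicely where [A3] enters; its cost is that it duplicates analytic work already packaged in Lemma~\ref{lem:bijecs}, which your argument then only uses for the dimension count, whereas the paper's duality argument gets the preimages for free from that lemma.

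One small point to repair: the lemma asserts the conclusions for the \emph{specific} $l_\sigma'$ of Lemma~\ref{lem:bijecs}, and you hedge with ``enlarging $l_\sigma'$ if necessary.'' In fact no enlargement is needed: bounding $\lV \Xi_l-\unit\otimes\overline{\tilde\rho_\sigma}\rV\le (n_0^{(\sigma)})^2\lV T_{\vv_\sigma}^{l}\lmk\unit-P_{\{1\}}^{T_{\vv_\sigma}}\rmk\rV$ and comparing with the definition of $l_\sigma'$ in the proof of Lemma~\ref{lem:bijecs} (which forces $\lV T_{\vv_\sigma}^{l}\lmk\unit-P_{\{1\}}^{T_{\vv_\sigma}}\rmk\rV<\tfrac{1}{2}c_\sigma m_\sigma^{-2}\le \tfrac{1}{2}c_\sigma (n_0^{(\sigma)})^{-2}$) shows $\Xi_l$ is already strictly positive for all $l\ge l_\sigma'$. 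You should include that one-line verification so that the statement is proved with the stated constant.
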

\begin{proof}
By Lemma \ref{lem:bijecs}, for $l\ge l_\sigma'$,
the set $\left
\{\Gamma_{l,\vv_\sigma}^{(\sigma)}
\lmk\ket{g_\alpha^{(a)}}\bra{g_{\beta}^{(0)}}\rmk\right\}_{a=0,\ldots,k_\sigma,\alpha,\beta=1,\ldots,{n_{0}^{(\sigma)}}}$ is linearly independent.
Therefore, there exist $\xi_{a,\alpha,\beta}^{(l)}\in \bigotimes_{i=0}^{l-1}\cc^n$,
$a=0,\ldots,k_\sigma,\alpha,\beta=1,\ldots,{n_{0}^{(\sigma)}}$ such that
\[
\braket{\xi_{a,\alpha,\beta}^{(l)}}{\Gamma_{l,\vv_\sigma}^{(\sigma)}
\lmk\ket{g_{\alpha'}^{(a')}}\bra{g_{\beta'}^{(0)}}\rmk}
=\delta_{aa'}\delta_{\alpha\alpha'}\delta_{\beta\beta'}.
\]
Set
\[
y_{a,\alpha,\beta,\sigma}^{(l)}
:=\sum_{\mu^{(l)}}\braket{\ws{l}}{\xi_{a,\alpha,\beta}^{(l)}}\widehat{v_{\mu^{(l,\sigma)},\sigma}}\in \caK_{l}(\vv_\sigma),\quad
a=0,\ldots,k_\sigma,\quad \alpha,\beta=1,\ldots, n_0^{(\sigma)},\quad l\ge l_{\sigma}'.
\]
By a straightforward calculation, we have
\[
\delta_{aa'}\delta_{\alpha\alpha'}\delta_{\beta\beta'}
=\braket{\xi_{a,\alpha,\beta}^{(l)}}{\Gamma_{l,\vv_\sigma}^{(\sigma)}
\lmk\ket{g_{\alpha'}^{(a')}}\bra{g_{\beta'}^{(0)}}\rmk}
=\overline{\braket{g_{\alpha'}^{(a')}}{y_{a,\alpha,\beta,\sigma}^{(l)} g_{\beta'}^{(0)}}}.
\]
This means
\begin{align}\label{eq:yc}
y_{a,\alpha,\beta,\sigma}^{(l)}s(\rho_\sigma)
=\ket{g_\alpha^{(a)}}\bra{g_{\beta}^{(0)}},\quad
a=0,\ldots,k_\sigma,\quad \alpha,\beta=1,\ldots, n_0^{(\sigma)},\quad l\ge l_{\sigma}',
\end{align}
corresponding to (3) in the claim.
This means  $\left\{y_{a,\alpha,\beta,\sigma}^{(l)}
\right\}_{a=0,\ldots,k_\sigma,\alpha,\beta=1,\ldots,{n_{0}^{(\sigma)}}}$
is linearly independent, spanning $({n_{0}^{(\sigma)}})^2(k_\sigma+1)$-dimensional
subspace of $\caK_l(\vv_\sigma)$.
However, from Lemma \ref{lem:bijecs} ,
the dimension of $\caK_l(\vv_\sigma)$ is  $({n_{0}^{(\sigma)}})^2(k_\sigma+1)$.
Hence, $\{y_{a,\alpha,\beta,\sigma}^{(l)}\}_{a=0,\ldots,k_\sigma,
\alpha,\beta=1,\ldots,{n_{0}^{(\sigma)}}}$ is a basis of $\caK_l(\vv_\sigma)$,
proving (1).

Let us prove (4). Let $X\in \caK_l(\vv_\sigma)$, $l\ge l_\sigma'$,
such that $Xs(\rho_\sigma)=0$.
Then $X$ can be written as a linear combination $X=\sum_{a\alpha\beta}C_{a\alpha\beta}y_{a,\alpha,\beta,\sigma}^{(l)}$,
and we obtain
\[
0=Xs(\rho_\sigma)
=\sum_{a\alpha\beta}C_{a\alpha\beta}\ket{g_\alpha^{(a)}}\bra{g_{\beta}^{(0)}}.
\]
This implies $C_{a\alpha\beta}=0$ and we conclude $X=0$.

 To prove (2), note that
\[X=
y_{a_1,\alpha_1,\beta_1,\sigma}^{(l_1)}y_{0,\alpha_2,\beta_2,\sigma}^{(l_2)}
-\delta_{\beta_1\alpha_2} y_{a_1,\alpha_1,\beta_2,\sigma}^{(l_1+l_2)}
\in\caK_{l_1+l_2}(\vv_\sigma)
\]
and 
\[
Xs(\rho_\sigma)
=y_{a_1,\alpha_1,\beta_1,\sigma}^{(l_1)}
\ket{g_{\alpha_2}^{(0)}}\bra{g_{\beta_2}^{(0)}}-\delta_{\beta_1\alpha_2}\ket{g_{\alpha_1}^{(a_1)}}\bra{g_{\beta_2}^{(0)}}
=\ket{g_{\alpha_1}^{(a_1)}}\bra{g_{\beta_1}^{(0)}}\cdot\ket{g_{\alpha_2}^{(0)}}\bra{g_{\beta_2}^{(0)}}-\delta_{\beta_1\alpha_2}\ket{g_{\alpha_1}^{(a_1)}}\bra{g_{\beta_2}^{(0)}}
=0.
\]
Applying the above argument, we conclude $X=0$, proving (2).
\end{proof}

\section{Deformation of $\vv_\sigma$}\label{sec:roku}
By Lemma \ref{lem:primu}, we have 
${n_{0}^{(\sigma)}}=\rank r_{a\sigma}=\rank s(\rho_\sigma)$, for all $a=0,\ldots, k_\sigma$.
This means that we can identify $B(\caK_\sigma)$ with $\Mat_{{n_{0}^{(\sigma)}}}\otimes\Mat_{k_\sigma+1}$.
We introduce two conditions.
\begin{defn}
Let $n, n_0\in\nan$ and  $k\in\nan\cup\{0\}$. Let
$\oo=(\omega_\mu)_{\mu=1}^n\in\Mat_{n_0}^{\times n}$,
$\vv=(v_\mu)_{\mu=1}^n\in\lmk\Mat_{n_0}\otimes\Mat_{k+1}\rmk^{\times n}$, and
 $\lal=(\lambda_a)_{a=0}^k\in\cc^{k+1}$.
Let $l_0\in\nan$ and $y_{a,\alpha,\beta}^{(l)}\in \Mat_{n_{0}}\otimes\Mat_{k+1}$,
for $a=0,\ldots,k$, $\alpha,\beta=1,\ldots,n_0$, and $l\ge l_0$.
We say that the septuplet $(n_0,k,\oo,\vv,\lal,l_0,\{y_{a,\alpha,\beta}^{(l)}\})$
satisfies {\it Condition 5} if the following holds.
\begin{enumerate}
\item[(i)]
$\lambda_0=1$ and $0<\lv\lambda_a\rv<1$ for all $a\ge 1$.
\item[(ii)]$v_\mu\in \Mat_{n_0}\otimes\DT_{k+1}$, $\mu=1,\ldots,n$.
\item[(iii)] $(\unit\otimes E_{aa}^{(0,k)})v_\mu(\unit\otimes E_{aa}^{(0,k)})
=\lambda_a \omega_\mu\otimes  E_{aa}^{(0,k)}$,
for all $a=0,\ldots,k$, and $\mu=1,\ldots,n$.
\item[(iv)]
\begin{enumerate}
\item[(1)] For each $l\ge l_0$, the set
 $\{y_{a,\alpha,\beta}^{(l)}\}_{a=0,\ldots,k,
\alpha,\beta=1,\ldots,n_0}$ is a basis of $\caK_l(\vv)$.
\item[(2)]For any $a_1=0,\ldots,k$, $\alpha_1,\alpha_2,\beta_1,\beta_2=1,\ldots,n_0$, and $l_1,l_2\ge l_0$, we have
\[
y_{a_1,\alpha_1,\beta_1}^{(l_1)}y_{0,\alpha_2,\beta_2}^{(l_2)}
=\delta_{\beta_1\alpha_2} y_{a_1,\alpha_1,\beta_2}^{(l_1+l_2)}
\]
\item[(3)]
For any $\alpha,\beta=1,\ldots,n_0$, and $l\ge l_0$,
\[
y_{0,\alpha,\beta}^{(l)}\lmk\unit\otimes E_{00}^{(0,k)}\rmk
=\lmk\unit\otimes E_{00}^{(0,k)}\rmk y_{0,\alpha,\beta}^{(l)}\lmk\unit\otimes E_{00}^{(0,k)}\rmk.
\]
\end{enumerate}
\end{enumerate}
\end{defn}
\begin{defn}
Let $n, n_0\in\nan$ and  $k\in\nan\cup\{0\}$. Let
$\oo=(\omega_\mu)_{\mu=1}^n \in\Mat_{n_0}^{\times n}$,
$\vv=(v_\mu)_{\mu=1}^n\in\lmk\Mat_{n_0}\otimes\Mat_{k+1}\rmk^{\times n}$, and
 $\lal=(\lambda_a)_{a=0}^k\in\cc^{k+1}$.
Let $l_0\in\nan$ and $y_{a,\alpha,\beta}^{(l)}\in \Mat_{n_{0}}\otimes\Mat_{k+1}$,
for $a=0,\ldots,k$, $\alpha,\beta=1,\ldots,n_0$, and $l\ge l_0$.
Let $i\in\{0,\ldots,k\}$. We say that the septuplet $(n_0,k,\oo,\vv,\lal,l_0,\{y_{a,\alpha,\beta}^{(l)}\})$
satisfies {\it Condition 6-i} if the followings hold.
\begin{enumerate}
\item[(i)]The septuplet $(n_0,k,\oo,\vv,\lal,l_0,\{y_{a,\alpha,\beta}^{(l)}\})$
satisfies {\it Condition 5}.
\item[(ii)] There exists a $Y\in \DT_{0, k+1}$ such that $[\Lambda_\lal,Y]=0$.
\item[(iii)] For any $\alpha,\beta=1,\ldots,n_0$, $l\ge l_0$, and $Y$ in (ii), we have
\[
y_{0,\alpha,\beta}^{(l)}-\zeij{\alpha\beta}\otimes\Lambda_\lal^l (1+Y)^l
\in \mnz\otimes\sum_{a,a':a-a'\ge i+1} E_{aa}^{(0,k)}\Mat_{k+1}E_{a'a'}^{(0,k)}.
\]
\end{enumerate}
\end{defn}
\begin{rem}
When we would like to specify $Y$, we say
the septuplet $(n_0,k,\oo,\vv,\lal,l_0,\{y_{a,\alpha,\beta}^{(l)}\})$
satisfies {\it Condition 6-i}
with respect to $Y$.
We may set $Y=0$ for  {\it Condition 6-0}.
\end{rem}

Out of our $\vv_\sigma$, we can construct a
septuplet satisfying {\it Condition 6-0}.
\begin{lem}\label{lem:ohy}
Assume [A1],[A3],[A4], and [A5]. We use Notation \ref{nota:vv} and Notation \ref{nota:oa}.
Then for each $\sigma=R,L$, there exist 
$\oo^{(\sigma)}\in \Primz_u(n,n_0^{(\sigma)})$,
$\vv^{(\sigma)}\in \lmk \Mat_{{n_{0}^{(\sigma)}}}\otimes\Mat_{k_\sigma+1}\rmk^{\times n}$,
 $\lal^{(\sigma)}=(\lambda^{(\sigma)}_a)_{a=0,\ldots,k_\sigma}\in \cc^{k_\sigma+1}$,
${l_{0}^{(\sigma)}}\in\nan$,
and 
 $\{y_{a,\alpha,\beta}^{(l,\sigma)}\}_{a=0,\ldots,k_\sigma,
\alpha,\beta=1,\ldots,{n_{0}^{(\sigma)}},l\ge l_0^{(\sigma)}}\subset  \Mat_{{n_{0}^{(\sigma)}}}\otimes\Mat_{k_\sigma+1}$
satisfying the followings.
\begin{enumerate}
\item The septuplet $({n_{0}^{(\sigma)}},k_\sigma,\oo^{(\sigma)},\vv^{(\sigma)},\lal^{(\sigma)},l_0^{(\sigma)},\{y_{a,\alpha,\beta}^{(l,\sigma)}\})$
satisfies {\it Condition 6-0}.
\item For  the state $\omega_\sigma$ in  [A4] and $ l\in\nan$,
 $\tau_{y_\sigma}\lmk s(\omega_{\sigma}\vert_{\caA_{\sigma,l}})\rmk$ is equal to
the orthogonal projection onto $\Gamma_{l,\vv^{(\sigma)}}^{(\sigma)}\lmk \Mat_{{n_{0}^{(\sigma)}}}\otimes\Mat_{k_\sigma+1}\rmk$,
where $y_R=0$ and $y_L=l$.
\item The triple $(\Mat_{{n_{0}^{(\sigma)}}}, \oo^{(\sigma)}, \rho_\sigma\vert _{\Mat_{{n_{0}^{(\sigma)}}}})$ $\sigma$-generates $\omega_\infty$.
\item There exist strictly positive elements 
$h_{a\sigma}$ $a=0,\ldots, k$ in $\Mat_{n_0^{(\sigma)}}$ with $h_{0\sigma}=\unit$ such that for all $ l\ge l_0^{(\sigma)}$, $a=0,\ldots, k_\sigma$, and $\alpha,\beta=1,\ldots,n_0^{(\sigma)}$,
\[
y_{a,\alpha,\beta}^{(l,\sigma)}\lmk \unit\otimes E_{00}^{(0,k_\sigma)}\rmk
=h_{a\sigma}^{\frac 12}e_{\alpha\beta}^{(n_0^{(\sigma)})}\otimes E_{a0}^{(0,k_\sigma)}.
\]
\end{enumerate}
\end{lem}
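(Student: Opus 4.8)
The plan is to build $\vv^{(\sigma)}$ out of $\vv_\sigma$ by a single invertible similarity $Z_\sigma$ that is block diagonal for the grading $\{r_{a\sigma}\}_{a=0}^{k_\sigma}$ of Notation \ref{nota:oa} and that on the $a$-th block both unitarily trivialises $T_{\oo_{a\sigma}}$ and rescales it. Set $\oo^{(\sigma)}:=(\vv_\sigma)_{s(\rho_\sigma)}=(\widetilde v_{\mu\sigma})_{\mu=1}^n$ acting on $s(\rho_\sigma)\caK_\sigma\simeq\cc^{n_0^{(\sigma)}}$; by Lemma \ref{lem:srs} this tuple is primitive and unital, so $\oo^{(\sigma)}\in\Primz_u(n,n_0^{(\sigma)})$, and statement 3 is exactly Lemma \ref{lem:srs}(3). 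For $a=0,\dots,k_\sigma$ let $t_{a\sigma}$, $V_{a\sigma}\colon\cc^{n_0^{(\sigma)}}\to r_{a\sigma}\caK_\sigma$ and $c_{a\sigma}\in\bbT$ be the objects of Notation \ref{nota:oa} and Lemma \ref{lem:primu}, normalised (as we may, since $T_{\oo_{0\sigma}}$ is unital and $u_{\mu 0\sigma}=\widetilde v_{\mu\sigma}$) so that $t_{0\sigma}=s(\rho_\sigma)$, $V_{0\sigma}=\unit$, $c_{0\sigma}=1$. Put $S_{a\sigma}:=V_{a\sigma}^*t_{a\sigma}^{-1/2}$ and $Z_\sigma:=\bigoplus_{a=0}^{k_\sigma}S_{a\sigma}\colon\caK_\sigma\to\cc^{n_0^{(\sigma)}}\otimes\cc^{k_\sigma+1}$, invertible and block diagonal for the orthogonal decompositions $\caK_\sigma=\bigoplus_a r_{a\sigma}\caK_\sigma$ and $\cc^{n_0^{(\sigma)}}\otimes\cc^{k_\sigma+1}=\bigoplus_a\cc^{n_0^{(\sigma)}}\otimes\cc\,\chi_{a+1}^{(k_\sigma+1)}$. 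Then set
\[
v^{(\sigma)}_\mu:=Z_\sigma v_{\mu\sigma}Z_\sigma^{-1},\qquad
\lambda^{(\sigma)}_a:=c_{a\sigma}\,r_{T_{\oo_{a\sigma}}}^{1/2},\qquad
y_{a,\alpha,\beta}^{(l,\sigma)}:=Z_\sigma y_{a,\alpha,\beta,\sigma}^{(l)}Z_\sigma^{-1},
\]
with $\lal^{(\sigma)}:=(\lambda^{(\sigma)}_a)_{a=0}^{k_\sigma}$, $l_0^{(\sigma)}:=l_\sigma'$ from Lemma \ref{lem:bijecs}, and $h_{a\sigma}:=V_{a\sigma}^*t_{a\sigma}^{-1}V_{a\sigma}$ (so $h_{0\sigma}=\unit$).

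I would then check Condition 5. Clause (i): $\lambda^{(\sigma)}_0=1$, while for $a\ge1$ Lemmas \ref{lem:oonz} and \ref{lem:rta} give $0<r_{T_{\oo_{a\sigma}}}<1$, so $0<|\lambda^{(\sigma)}_a|<1$. Clause (ii): $v^*_{\mu\sigma}\caK_{a\sigma}\subset\caK_{a\sigma}$ (Notation \ref{nota:oa}) means $v_{\mu\sigma}$ preserves each $\caK_{a\sigma}^\perp=\bigoplus_{b>a}r_{b\sigma}\caK_\sigma$, i.e.\ $v_{\mu\sigma}$ is block lower triangular for $\{r_{a\sigma}\}$; as $Z_\sigma$ is block diagonal, $v^{(\sigma)}_\mu\in\Mat_{n_0^{(\sigma)}}\otimes\DT_{k_\sigma+1}$. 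Clause (iii): from $r_{a\sigma}v_{\mu\sigma}r_{a\sigma}=\omega_{\mu a\sigma}$, the definition $u_{\mu a\sigma}=r_{T_{\oo_{a\sigma}}}^{-1/2}t_{a\sigma}^{-1/2}\omega_{\mu a\sigma}t_{a\sigma}^{1/2}$ and $u_{\mu a\sigma}=c_{a\sigma}V_{a\sigma}\widetilde v_{\mu\sigma}V_{a\sigma}^*$ (Lemma \ref{lem:primu}) one gets $S_{a\sigma}\omega_{\mu a\sigma}S_{a\sigma}^{-1}=\lambda^{(\sigma)}_a\widetilde v_{\mu\sigma}$, hence $(\unit\otimes E^{(0,k_\sigma)}_{aa})v^{(\sigma)}_\mu(\unit\otimes E^{(0,k_\sigma)}_{aa})=\lambda^{(\sigma)}_a\widetilde v_{\mu\sigma}\otimes E^{(0,k_\sigma)}_{aa}$. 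Clause (iv)(1)--(2): conjugation by $Z_\sigma$ is multiplicative, so $\caK_l(\vv^{(\sigma)})=Z_\sigma\caK_l(\vv_\sigma)Z_\sigma^{-1}$ and the relations of Lemma \ref{lem:yaab}(1)--(2) transport verbatim; clause (iv)(3) will follow from statement 4.

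For statement 4: the two gradings are orthogonal, so $Z_\sigma s(\rho_\sigma)Z_\sigma^{-1}=\unit\otimes E^{(0,k_\sigma)}_{00}$, and Lemma \ref{lem:yaab}(3) gives
\[
y_{a,\alpha,\beta}^{(l,\sigma)}\bigl(\unit\otimes E^{(0,k_\sigma)}_{00}\bigr)
=Z_\sigma\,y_{a,\alpha,\beta,\sigma}^{(l)}s(\rho_\sigma)\,Z_\sigma^{-1}
=Z_\sigma\,\ket{g_\alpha^{(a)}}\bra{g_\beta^{(0)}}\,Z_\sigma^{-1};
\]
since $Z_\sigma g_\alpha^{(a)}=h_{a\sigma}^{1/2}\chi_\alpha^{(n_0^{(\sigma)})}\otimes\chi_{a+1}^{(k_\sigma+1)}$ and $(Z_\sigma^{-1})^*g_\beta^{(0)}=\chi_\beta^{(n_0^{(\sigma)})}\otimes\chi_1^{(k_\sigma+1)}$ (here $S_{0\sigma}=\unit$), the right-hand side equals $h_{a\sigma}^{1/2}e_{\alpha\beta}^{(n_0^{(\sigma)})}\otimes E^{(0,k_\sigma)}_{a0}$, which is statement 4; the case $a=0$ reads $y_{0,\alpha,\beta}^{(l,\sigma)}(\unit\otimes E^{(0,k_\sigma)}_{00})=e_{\alpha\beta}^{(n_0^{(\sigma)})}\otimes E^{(0,k_\sigma)}_{00}$, i.e.\ Condition 5(iv)(3). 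For Condition 6-0 I take $Y=0$ (clause (ii) is then trivial); its clause (iii) asks that $y_{0,\alpha,\beta}^{(l,\sigma)}$ and $e_{\alpha\beta}^{(n_0^{(\sigma)})}\otimes\Lambda_{\lal^{(\sigma)}}^l=\sum_a(\lambda^{(\sigma)}_a)^l e_{\alpha\beta}^{(n_0^{(\sigma)})}\otimes E^{(0,k_\sigma)}_{aa}$ agree on every block $(a,a')$ with $a\le a'$. The strictly-upper blocks of both vanish, by clause (ii) and by diagonality. For the diagonal blocks: every $W\in\caK_l(\vv^{(\sigma)})$ is a combination of products $v^{(\sigma)}_{\mu_1}\cdots v^{(\sigma)}_{\mu_l}$, whose $(a,a)$-block is $(\lambda^{(\sigma)}_a)^l\,\widetilde v_{\mu_1\sigma}\cdots\widetilde v_{\mu_l\sigma}$; hence the $(a,a)$-block of $W$ is $(\lambda^{(\sigma)}_a)^l$ times its $(0,0)$-block, independently of the chosen representation of $W$. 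Applied to $W=y_{0,\alpha,\beta}^{(l,\sigma)}$, whose $(0,0)$-block is $e_{\alpha\beta}^{(n_0^{(\sigma)})}$ by the $a=0$ computation, this proves clause (iii). Finally, statement 2: conjugating $\vv_\sigma$ by $Z_\sigma$ replaces, in the defining formula of $\Gamma^{(\sigma)}_{l,\cdot}$, its matrix argument by its image under an invertible linear map (immediate from that formula and $v^{(\sigma)}_{\mu_1}\cdots v^{(\sigma)}_{\mu_l}=Z_\sigma(v_{\mu_1\sigma}\cdots v_{\mu_l\sigma})Z_\sigma^{-1}$), so $\Gamma_{l,\vv^{(\sigma)}}^{(\sigma)}(\Mat_{n_0^{(\sigma)}}\otimes\Mat_{k_\sigma+1})=\Gamma_{l,\vv_\sigma}^{(\sigma)}(B(\caK_\sigma))$, whose orthogonal projection is $\tau_{y_\sigma}(s(\omega_\sigma\vert_{\caA_{\sigma,l}}))$ by Lemma \ref{lem:Xxp2}.

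The anticipated difficulty is bookkeeping rather than conceptual: one must handle the \emph{non-unitary} similarity $Z_\sigma$ with care --- its adjoint enters statements 2 and 4, and one must verify that the block structures are genuinely preserved rather than merely ``triangular up to conjugation'' --- and one must match the conventions of Part I (the exact definitions of $\Primz_u$, $\DT$, $\Lambda_{\lal}$, the matrix units $E^{(0,k)}_{ab}$, and $\Gamma^{(\sigma)}_{l,\cdot}$) so that the block and trace identities land in precisely the stated normalisation. No single step looks hard; the risk is an accumulation of index and adjoint errors.
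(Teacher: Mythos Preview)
Your proposal is correct and follows essentially the same route as the paper's proof: the paper writes the similarity as the composite $V^{(\sigma)}(R^{(\sigma)})^{-1}$ with $R^{(\sigma)}=\bigoplus_a t_{a\sigma}^{1/2}$ and the unitary $V^{(\sigma)}=\bigoplus_a V_{a\sigma}^*$, which is exactly your $Z_\sigma=\bigoplus_a V_{a\sigma}^*t_{a\sigma}^{-1/2}$. The verifications of Condition~5, Condition~6-0 with $Y=0$, statement~4 via Lemma~\ref{lem:yaab}(3), and statement~2 via similarity-invariance of the range of $\Gamma^{(\sigma)}_{l,\cdot}$ and Lemma~\ref{lem:Xxp2}, are all argued in the paper in the same way you outline.
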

\begin{proof}
We use Notation \ref{nota:vv} and Notation \ref{nota:oa}.
Define $\oo^{(\sigma)}:=\widetilde{\vv_{\sigma}}\in \Mat_{{n_{0}^{(\sigma)}}}^{\times n}$ under the identification $\Mat_{{n_{0}^{(\sigma)}}}\simeq B(s(\rho_\sigma)\caK_\sigma)$.
By Lemma \ref{lem:srs}, we have $\oo^{(\sigma)}\in\Primz_u(n, n_0)$,
and $(\Mat_{{n_{0}^{(\sigma)}}}, \oo^{(\sigma)}, \rho_\sigma\vert _{\Mat_{{n_{0}^{(\sigma)}}}})$ $\sigma$-generates $\omega_\infty$.
(This proves {\it 3} of the Lemma.)

From  Notation \ref{nota:oa} and
Lemma \ref{lem:primu}, for each $a=0,\ldots, k_\sigma$,
there is a unitary $V_{a\sigma}:\cc^{n_0^{(\sigma)}}\to r_{a\sigma}\caK_{\sigma}$
and $c_{a\sigma}\in \bbT$ such that
\begin{align}\label{eq:ntlm}
r_{T_{\oo_{a,\sigma}}}^{-\frac 12}t_{a\sigma}^{-\frac 12}r_{a\sigma}
v_{\mu\sigma}r_{a\sigma}t_{a\sigma}^{\frac 12}=u_{\mu a\sigma}=c_{a\sigma}V_{a\sigma}\widetilde{v_{\mu\sigma}}V_{a\sigma}^*
=c_{a\sigma}V_{a\sigma}\omega_\mu^{(\sigma)}V_{a\sigma}^*,\quad
\mu=1,\ldots,n.
\end{align}
Note that $r_{T_{\oo_{0,\sigma}}}=1$, and we may choose $c_{0\sigma}=1$, $t_{0\sigma}=r_{0\sigma}$, and $V_{0\sigma}$ is the identity map.

Define an invertible element
$R^{(\sigma)}:=\bigoplus_{a=0}^{k_\sigma} t_{a\sigma}^{\frac 12}$ in $B(\caK_{\sigma})$,
and  $\lal^{(\sigma)}=(\lambda^{(\sigma)}_a)_{a=0,\ldots,k_\sigma}\in \cc^{k_\sigma+1}$
by $\lambda^{(\sigma)}_a=r_{T_{\oo_{a\sigma}}}^{\frac 12}c_{a\sigma}$.
Furthermore, we set $h_{a\sigma}^{\frac 12}:= V_{a\sigma}^*t_{a\sigma}^{-\frac 12} V_{a\sigma}$.
Note that $h_{0\sigma}=1$.
Using unitaries $V_{a\sigma}$ above, we define a linear map
$V^{\sigma}:\caK_\sigma\to \cc^{n_0^{(\sigma)}}\otimes\cc^{k_\sigma+1}$
by
\[
V^{(\sigma)}\xi
:=\sum_{a=0}^{k_\sigma}
\lmk V_{a\sigma}^* r_{a\sigma}\xi\rmk
\otimes f_{a}^{(0,k_\sigma)},\quad \xi\in\caK_\sigma.
\]
By definition, this is unitary.

Let
 $ l_\sigma'$, and
 $\{y_{a,\alpha,\beta,\sigma}^{(l)}\}_{a=0,\ldots,k_\sigma,
\alpha,\beta=1,\ldots,{n_{0}^{(\sigma)}},l\ge l_\sigma'}\subset B(\caK_\sigma)$
given in Lemma \ref{lem:yaab}.

We define $\vv^{(\sigma)}=(v_{\mu}^{(\sigma)})_{\mu=1}^{n}\in 
\lmk \Mat_{n_{0}^{(\sigma)}}\otimes\Mat_{k_\sigma+1}\rmk^{\times n}$ by
\begin{align*}
v_{\mu}^{(\sigma)}:=
V^{(\sigma)}(R^{(\sigma)})^{-1}v_{\mu\sigma}R^{(\sigma)}{V^{(\sigma)}}^*,\quad
\mu=1,\ldots,n.
\end{align*}
We also set
\begin{align*}
y_{a,\alpha,\beta}^{(l,\sigma)}
=V^{(\sigma)}(R^{(\sigma)})^{-1}y_{a,\alpha,\beta,\sigma}^{(l)}
R^{(\sigma)}{V^{(\sigma)}}^*,\quad
a=0,\ldots,k_\sigma,\;
\alpha,\beta=1,\ldots,{n_{0}^{(\sigma)}},\; l\ge l_\sigma'.
\end{align*}
Furthermore, we set $l_0^{(\sigma)}:=l_\sigma'$.
As $v_\mu^{(\sigma)}$ is similar to $v_{\mu\sigma}$ with common
invertible operator $R^{(\sigma)}{V^{(\sigma)}}^*$,
Lemma \ref{lem:Xxp2} implies {\it 2.} of the current Lemma.

Next we show that the septuplet $({n_{0}^{(\sigma)}},k_\sigma,\oo^{(\sigma)},\vv^{(\sigma)},\lal^{(\sigma)},l_0^{(\sigma)},\{y_{a,\alpha,\beta}^{(l,\sigma)}\})$
satisfies {\it Condition 5}.
(i) follows from Lemma \ref{lem:rta} and the above remark on $c_{0\sigma}$ etc.
Recall the definition of $\caK_{a,\sigma}$, given as an
$v_{\nu}^*$-invariant subspace.
This property is translated to (ii) of {\it Condition 5}
for $\vv^{(\sigma)}$.
The equality (\ref{eq:ntlm}) implies (iii) of  {\it Condition 5}.
(1), (2) of (iv) follows from the fact that 
$y_{a,\alpha,\beta}^{(l,\sigma)}$ (resp. $\vv^{(\sigma)}$) and
$y_{a,\alpha,\beta,\sigma}^{(l)}$ (resp. $\vv_{\sigma}$) are similar to each other with common 
invertible operator $R^{(\sigma)}{V^{(\sigma)}}^*$.
(3) of (iv) follows from (3) of Lemma \ref{lem:yaab} and the fact that
$R^{(\sigma)}{V^{(\sigma)}}^*\lmk\unit\otimes E_{00}^{(0,k_\sigma)}\rmk
=s(\rho_\sigma)R^{(\sigma)}{V^{(\sigma)}}^*$.

It is left to prove (ii), (iii) of {\it Condition 6-0}.
We set $Y=0$.
Then clearly (ii) holds.
From (ii), (iii) of {\it Condition 5} which we have already proved,
we see that
\[
(\unit\otimes E_{aa}^{(0,k)})v_{\mu^{(l)}}^{(\sigma)}(\unit\otimes E_{aa}^{(0,k)})
=\lmk \lambda_a^{(\sigma)}\rmk^l  \omega_{\mu^{(l)}}^{(\sigma)}\otimes  E_{aa}^{(0,k)},
\quad \mu^{(l)}\in\{1,\ldots,n\}^{\times l}, \quad l\in\nan,\quad
a=0,\ldots,k.
\]
From this, we  obtain
\[
(\unit\otimes E_{aa}^{(0,k)})y_{0,\alpha,\beta}^{(l,\sigma)}
(\unit\otimes E_{aa}^{(0,k)})
=\lmk \lambda_a^{(\sigma)}\rmk^l\zeij{\alpha\beta}\otimes  E_{aa}^{(0,k)}.
\]
As $y_{a,\alpha,\beta}^{(l,\sigma)}$ is a lower triangular matrix
(because each  $v_\mu^{(\sigma)}$ is),
this implies (iii) of {\it Condition 6-0}.

Lastly, we prove {\it 4}.
As we observed, we have
\begin{align*}
R^{(\sigma)}{V^{(\sigma)}}^*\lmk\unit\otimes E_{00}^{(0,k_\sigma)}\rmk
=s(\rho_\sigma)R^{(\sigma)}{V^{(\sigma)}}^*,\quad
\lmk\unit\otimes E_{00}^{(0,k_\sigma)}\rmk{V^{(\sigma)}}R^{(\sigma)}
={V^{(\sigma)}}R^{(\sigma)}s(\rho_\sigma).
\end{align*}
Therefore, we have
\begin{align*}
&y_{a,\alpha,\beta}^{(l,\sigma)}\lmk\unit\otimes E_{00}^{(0,k_\sigma)}\rmk
=V^{(\sigma)}(R^{(\sigma)})^{-1}y_{a,\alpha,\beta,\sigma}^{(l)}
R^{(\sigma)}{V^{(\sigma)}}^*\lmk\unit\otimes E_{00}^{(0,k_\sigma)}\rmk
=V^{(\sigma)}(R^{(\sigma)})^{-1}y_{a,\alpha,\beta,\sigma}^{(l)}
s(\rho_\sigma)R^{(\sigma)}{V^{(\sigma)}}^*\\
&=V^{(\sigma)} t_{a\sigma}^{-\frac 12}\ket{g_{\alpha}^{(a)}}\bra{g_{\beta}^{(0)}}R^{(\sigma)}{V^{(\sigma)}}^*
=\lmk h_{a\sigma}^{\frac 12}\otimes E_{aa}^{(0,k_\sigma)}\rmk 
V^{(\sigma)} \ket{g_{\alpha}^{(a)}}\bra{g_{\beta}^{(0)}}R^{(\sigma)}{V^{(\sigma)}}^*
=h_{a\sigma}^{\frac12}\zeij{\alpha\beta}\otimes E_{a0}^{(0,k_\sigma)}.
\end{align*}

\end{proof}
We prove that {\it Condition 6-0} implies {\it Condition 6-$k_\sigma$},
inductively.
\begin{lem}\label{lem:c6in}
Let $0 \le i \le k-1$.
Assume that the septuplet $(n_0,k,\oo,\vv,\lal,l_0,\{y_{a,\alpha,\beta}^{(l)}\})$
satisfies {\it Condition 6-i} with respect to $Y$.
Then there exist 
$\{J_j\}_{j={i+1}}^{k}\subset \mnz$
and
$\{c_j\}_{j={i+1}}^{k}\subset \cc$
satisfying the followings.
\begin{enumerate}
\item
Set 
$R:=\unit-\sum_{j=i+1}^{k} J_j\otimes E_{j,j-(i+1)}^{(0,k)}$.
Then 
$R$ is invertible, $R-\unit\in\mnz\otimes \DT_{0,k+1}$
and
$R\lmk \unit\otimes E_{00}^{(0,k)}\rmk
=R^{-1}\lmk \unit\otimes E_{00}^{(0,k)}\rmk
= \unit\otimes E_{00}^{(0,k)}$.
\item
Set
$Y':=\sum_{i+1\le j\le k:\lambda_j=\lambda_{j-(i+1)}} c_j E_{j,j-(i+1)}^{(0,k)}$.
The septuplet $(n_0,k,\oo,R\vv R^{-1},\lal,l_0,\{R y_{a,\alpha,\beta}^{(l)}R^{-1}\})$
satisfies {\it Condition 6-(i+1)} with respect to $Y+Y'$.
\end{enumerate}
\end{lem}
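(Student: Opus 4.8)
The plan is to observe that the bulk of the statement is automatic, and to reduce everything to a ``twisted cocycle'' computation on block-subdiagonal $i+1$. For any invertible $R$ of the prescribed shape, $R$ and $R^{-1}=\sum_{m\ge0}J^m$ (a finite sum, $J:=\sum_jJ_j\otimes E_{j,j-(i+1)}^{(0,k)}$ being block-nilpotent) are block-lower-triangular with identity diagonal blocks, and, using $E_{00}^{(0,k)}E_{j,j-(i+1)}^{(0,k)}=0$ for $j\ge1$ together with $J_{i+1}=0$ (established below), $R$ and $R^{-1}$ fix $\unit\otimes E_{00}^{(0,k)}$ from either side. Since $\caK_l(R\vv R^{-1})=R\caK_l(\vv)R^{-1}$, one checks immediately that the conjugated septuplet $(n_0,k,\oo,R\vv R^{-1},\lal,l_0,\{Ry_{a,\alpha,\beta}^{(l)}R^{-1}\})$ inherits all of {\it Condition 5} and satisfies clauses (i),(ii) of {\it Condition 6-(i+1)} with respect to $Y+Y'$ for any $Y'\in\DT_{0,k+1}$ commuting with $\Lambda_\lal$, in particular for $Y':=\sum_{i+1\le j\le k,\ \lambda_j=\lambda_{j-(i+1)}}c_jE_{j,j-(i+1)}^{(0,k)}$. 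So the whole content of the Lemma is to choose $\{J_j\}$, $\{c_j\}$ so that clause (iii) of {\it Condition 6-(i+1)} holds, plus to check $J_{i+1}=0$.

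To localize clause (iii), set $W_{\alpha\beta}^{(l)}(j):=\big(y_{0,\alpha,\beta}^{(l)}-\zeij{\alpha\beta}\otimes\Lambda_\lal^{l}(\unit+Y)^{l}\big)_{j,j-(i+1)}\in\mnz$; by {\it Condition 6-i}(iii) block-subdiagonals $0,\dots,i$ of $y_{0,\alpha,\beta}^{(l)}$ already agree with those of $\zeij{\alpha\beta}\otimes\Lambda_\lal^{l}(\unit+Y)^{l}$, so the $W$'s carry the entire discrepancy. As $J$ lies on block-subdiagonal $i+1\ge1$, all iterated commutators with $J$ lie on block-subdiagonal $\ge2(i+1)\ge i+2$, so modulo $\mnz\otimes\sum_{a-a'\ge i+2}E_{aa}^{(0,k)}\Mat_{k+1}E_{a'a'}^{(0,k)}$ we get $Ry_{0,\alpha,\beta}^{(l)}R^{-1}\equiv y_{0,\alpha,\beta}^{(l)}+[\,\zeij{\alpha\beta}\otimes\Lambda_\lal^{l},\,J\,]$, the block-subdiagonal-$0$ part of $y_{0,\alpha,\beta}^{(l)}$ being $\zeij{\alpha\beta}\otimes\Lambda_\lal^{l}$. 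Since moreover $\big(\zeij{\alpha\beta}\otimes\Lambda_\lal^{l}(\unit+Y+Y')^{l}\big)_{j,j-(i+1)}=\lambda_j^{l}\big(\eta_j^{(l)}+l\,c_j\big)\zeij{\alpha\beta}$ with $\eta_j^{(l)}:=\big((\unit+Y)^{l}\big)_{j,j-(i+1)}$ (and $c_j=0$ unless $\lambda_j=\lambda_{j-(i+1)}$, forced by $[\Lambda_\lal,Y]=[\Lambda_\lal,Y']=0$), clause (iii) of {\it Condition 6-(i+1)} becomes exactly: for all $j,\alpha,\beta$ and $l\ge l_0$,
\[
W_{\alpha\beta}^{(l)}(j)+\lambda_j^{l}\,\zeij{\alpha\beta}J_j-\lambda_{j-(i+1)}^{l}\,J_j\zeij{\alpha\beta}=\lambda_j^{l}\,l\,c_j\,\zeij{\alpha\beta}.
\]

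To determine the $W_{\alpha\beta}^{(l)}(j)$ I would use the multiplicativity relations {\it Condition 5}(iv)(2), i.e.\ $y_{0,\alpha,\beta}^{(l_1)}y_{0,\beta,\gamma}^{(l_2)}=y_{0,\alpha,\gamma}^{(l_1+l_2)}$ and $y_{0,\alpha,\beta}^{(l_1)}y_{0,\beta',\gamma}^{(l_2)}=0$ for $\beta\ne\beta'$. Reading off the $(j,j-(i+1))$ block of each product, every intermediate block occurring there has block-gap $\le i$ and hence, by {\it Condition 6-i}(iii), is a scalar multiple of $\zeij{\cdot\cdot}$ dictated by $\Lambda_\lal(\unit+Y)$; the resulting ``diagonal'' contributions telescope, and one is left with, for all indices and all $l_1,l_2\ge l_0$,
\begin{align*}
W_{\alpha\gamma}^{(l_1+l_2)}(j)&=\lambda_j^{l_1}\,\zeij{\alpha\beta}\,W_{\beta\gamma}^{(l_2)}(j)+\lambda_{j-(i+1)}^{l_2}\,W_{\alpha\beta}^{(l_1)}(j)\,\zeij{\beta\gamma},\\
0&=\lambda_j^{l_1}\,\zeij{\alpha\beta}\,W_{\beta'\gamma}^{(l_2)}(j)+\lambda_{j-(i+1)}^{l_2}\,W_{\alpha\beta}^{(l_1)}(j)\,\zeij{\beta'\gamma}\qquad(\beta\ne\beta').
\end{align*}
A direct entrywise analysis (the first summand lives in the $\alpha$-th row, the second in the $\gamma$-th column) shows that $W_{\alpha\gamma}^{(l)}(j)$ is supported on the $\alpha$-th row and the $\gamma$-th column; that its off-diagonal row-$\alpha$ entries equal $\lambda_j^{l}$ times an $(\alpha,l)$-independent matrix $D_j^{\mathrm{off}}\in\mnz$, its off-diagonal column-$\gamma$ entries equal $\lambda_{j-(i+1)}^{l}$ times a $(\gamma,l)$-independent matrix, and the ``$\beta\ne\beta'$'' identity forces the latter matrix to be $-D_j^{\mathrm{off}}$; and that the $(\alpha,\gamma)$-entry $u^{(l)}(j)$ is $(\alpha,\gamma)$-independent and satisfies $u^{(l_1+l_2)}=\lambda_j^{l_1}u^{(l_2)}+\lambda_{j-(i+1)}^{l_2}u^{(l_1)}$, whence $u^{(l)}(j)=A_j\big(\lambda_j^{l}-\lambda_{j-(i+1)}^{l}\big)$ if $\lambda_j\ne\lambda_{j-(i+1)}$ and $u^{(l)}(j)=a_j\,l\,\lambda_j^{l}$ if $\lambda_j=\lambda_{j-(i+1)}$. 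Reassembling, there are $\hat D_j\in\mnz$ and $a_j\in\cc$, with $a_j=0$ whenever $\lambda_j\ne\lambda_{j-(i+1)}$, such that for $l\ge l_0$
\[
W_{\alpha\gamma}^{(l)}(j)=\lambda_j^{l}\,\zeij{\alpha\gamma}\hat D_j-\lambda_{j-(i+1)}^{l}\,\hat D_j\,\zeij{\alpha\gamma}+a_j\,l\,\lambda_j^{l}\,\zeij{\alpha\gamma}.
\]
Setting $J_j:=-\hat D_j$ and $c_j:=a_j$ makes the boxed equation of the previous paragraph an identity, so clause (iii) of {\it Condition 6-(i+1)} is verified. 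It remains to see $J_{i+1}=0$: for $j=i+1$ the block $(i+1,0)$ lies in the $0$-th block-column of $y_{0,\alpha,\beta}^{(l)}$, which by {\it Condition 5}(iv)(3) — together with $\big(y_{0,\alpha,\beta}^{(l)}\big)_{00}=\zeij{\alpha\beta}$ from {\it Condition 6-i}(iii) — vanishes off the $(0,0)$-block; hence $W_{\alpha\beta}^{(l)}(i+1)=0$, and since $\lambda_0=1\ne\lambda_{i+1}$ (because $0<|\lambda_{i+1}|<1$) the last display forces $\hat D_{i+1}=0=a_{i+1}$, i.e.\ $J_{i+1}=0=c_{i+1}$. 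This also yields $R-\unit\in\mnz\otimes\DT_{0,k+1}$ and the invertibility of $R$, completing part 1.

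I expect the main obstacle to be the cocycle analysis just sketched: solving the two twisted-cocycle identities simultaneously over the additive semigroup $\{l\ge l_0\}$, in particular extracting the negation relation between the row- and column-off-diagonal pieces (this is exactly what the ``$\beta\ne\beta'$'' identity is for) and handling the resonant case $\lambda_j=\lambda_{j-(i+1)}$, where the linear growth $a_j\,l\,\lambda_j^{l}$ cannot be produced by a conjugation and so forces the correction term $Y'$ into the picture. Everything else — stability of {\it Condition 5} and of {\it Condition 6-(i+1)}(i),(ii) under conjugation, and the block-subdiagonal bookkeeping — is routine.
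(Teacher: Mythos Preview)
Your approach coincides with the paper's: reduce to the $(i{+}1)$-st block-subdiagonal, extract the twisted cocycle identity from Condition~5(iv)(2), and solve it to produce $J_j,c_j$. The paper isolates the cocycle step as a standalone Lemma~\ref{lem:key}, applied after the rescaling $\tilde x_{\alpha,\beta,j}^{(l)}:=\lambda_{j-(i+1)}^{-l}W_{\alpha\beta}^{(l)}(j)$; everything else in your outline (stability of Condition~5 under conjugation, the block-subdiagonal bookkeeping, the verification $J_{i+1}=0$ from Condition~5(iv)(3)) matches the paper and is correct.

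There is, however, a genuine gap in your inline sketch of the cocycle step. You assert that the $(\alpha,\gamma)$-entry $u^{(l)}(j)$ of $W_{\alpha\gamma}^{(l)}(j)$ is $(\alpha,\gamma)$-independent. This is false in general, and in fact contradicts your own final formula: if $W_{\alpha\gamma}^{(l)}(j)=\lambda_j^{l}\zeij{\alpha\gamma}\hat D_j-\lambda_{j-(i+1)}^{l}\hat D_j\zeij{\alpha\gamma}+a_jl\lambda_j^{l}\zeij{\alpha\gamma}$, then its $(\alpha,\gamma)$-entry equals $\lambda_j^{l}(\hat D_j)_{\gamma\gamma}-\lambda_{j-(i+1)}^{l}(\hat D_j)_{\alpha\alpha}+a_jl\lambda_j^{l}$, which depends on $\alpha,\gamma$ through the diagonal of $\hat D_j$. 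Your row/column analysis only pins down the \emph{off-diagonal} part of $\hat D_j$; the diagonal must be extracted from the $(\alpha,\gamma)$-entry itself, and this is exactly where the paper's Lemma~\ref{lem:key} does real work. Concretely, taking the $(\alpha,\gamma)$-entry of your first cocycle identity with the middle index $\beta$ free gives
\[
u_{\alpha\gamma}^{(l_1+l_2)}=\lambda_j^{l_1}u_{\beta\gamma}^{(l_2)}+\lambda_{j-(i+1)}^{l_2}u_{\alpha\beta}^{(l_1)},
\]
whose $\beta$-independence is only the starting point; turning this into the form $u_{\alpha\gamma}^{(l)}=\lambda_j^{l}d_\gamma-\lambda_{j-(i+1)}^{l}d_\alpha+a_jl\lambda_j^{l}$ requires choosing a reference index, showing certain differences are index-independent, and solving the resulting scalar recursion (in the resonant case one must also show the constant term in the affine function $l\mapsto u^{(l)}/\lambda_j^{l}$ vanishes). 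The paper carries this out in full in Lemma~\ref{lem:key}; your sketch skips it and, as written, would produce a $\hat D_j$ with zero diagonal, which does not satisfy the target identity.
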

For the proof, we need the following Lemma.
\begin{lem}\label{lem:key}
Let $n_0,l_0\in\nan$, $\lambda\in\cc\setminus \{0\}$
and suppose that
 $x_{\alpha,\beta}^{(l)}\in \mnz$,
$\alpha,\beta=1,\ldots,n_0$, are given for
all $l\ge l_0$.
Assume that
they satisfy the following condition:
For
any $\alpha_1,\beta_1,\alpha_2,\beta_2\in\{1,\ldots,n_0\}$
and $l_1,\l_2\ge l_0$,
\begin{align}\label{eq:oi}
x_{\alpha_1,\beta_1}^{(l_1)}\zeij{\alpha_2\beta_2}
+\lambda^{-l_1}\zeij{\alpha_1\beta_1}x_{\alpha_2\beta_2}^{(l_2)}
=\delta_{\beta_1,\alpha_2}x_{\alpha_1,\beta_2}^{(l_1+l_2)}.
\end{align}
Then we have the followings.
\begin{enumerate}
\item[(1)]
If $\lambda\neq 1$, 
there exists $J\in\mnz$
such that
\[
x_{\alpha\beta}^{(l)}
=J\zeij{\alpha\beta}-\lambda^{-l}\zeij{\alpha\beta}J,\quad
l\ge l_0,\quad  \alpha,\beta\in\{1,\ldots,n_0\}.
\]
\item[(2)]
If $\lambda= 1$, 
there exist $J\in\mnz$ and $c\in\cc$
such that
\[
x_{\alpha\beta}^{(l)}
=J\zeij{\alpha\beta}-\zeij{\alpha\beta}J+c\cdot l\cdot \zeij{\alpha\beta}
,\quad
l\ge l_0,\quad  \alpha,\beta\in\{1,\ldots,n_0\}.
\]
\end{enumerate}
\end{lem}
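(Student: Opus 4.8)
The plan is to treat the relation (\ref{eq:oi}) as a kind of twisted cocycle identity and extract the "generator" $J$ from it. First I would introduce the shorthand $z_{\alpha\beta}^{(l)}:=\lambda^{l}x_{\alpha\beta}^{(l)}$, so that multiplying (\ref{eq:oi}) by $\lambda^{l_1+l_2}$ turns the twisted relation into the cleaner form
\begin{align}
\lambda^{l_2}z_{\alpha_1\beta_1}^{(l_1)}\zeij{\alpha_2\beta_2}
+\lambda^{l_1}\zeij{\alpha_1\beta_1}z_{\alpha_2\beta_2}^{(l_2)}
=\delta_{\beta_1\alpha_2}z_{\alpha_1\beta_2}^{(l_1+l_2)}.\notag
\end{align}
Next I would specialize indices to decouple the matrix-unit bookkeeping from the scalar/recursive content. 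Taking $\alpha_2=\beta_1$ and tracing against $\zeij{\beta_2\alpha_1}$ (i.e. looking at the $(\alpha_1,\beta_2)$-type component) isolates scalar sequences; more directly, one observes from (\ref{eq:oi}) with a single fixed pair of indices that the "diagonal blocks" $x_{\alpha\alpha}^{(l)}$ and the off-diagonal pieces $x_{\alpha\beta}^{(l)}$ for $\alpha\neq\beta$ obey recursions in $l$ that force an ansatz of the stated shape. Concretely: fix $\alpha\ne\beta$; setting $\beta_1=\alpha_2$ and choosing the other indices suitably shows $x_{\alpha\beta}^{(l_1+l_2)}$ is determined by $x_{\alpha\beta}^{(l_1)}$ and $x_{\alpha\beta}^{(l_2)}$ together with the fixed matrices $\zeij{\cdot\cdot}$, which (after passing to $z$) is an additive relation $z^{(l_1+l_2)}=z^{(l_1)}+z^{(l_2)}$ up to the $\zeij{}$-weights — forcing linearity in $l$ in the $\lambda=1$ case and a geometric-type closed form when $\lambda\ne1$.

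Having guessed the closed form, the cleanest route is to \emph{define} $J$ directly from the data and then verify. Write $x_{\alpha\beta}^{(l)}=\sum_{\gamma\delta}\big(x_{\alpha\beta}^{(l)}\big)_{\gamma\delta}\zeij{\gamma\delta}$ in matrix units; applying (\ref{eq:oi}) with $l_1=l$, $l_2=l_0$ (and varying $\alpha_2,\beta_2$) lets me express $x_{\alpha\beta}^{(l+l_0)}$ in terms of $x_{\cdot\cdot}^{(l)}$ and the fixed family $x_{\cdot\cdot}^{(l_0)}$, giving a first-order linear recursion $x^{(l+l_0)}=\Phi(x^{(l)})$ whose solution is pinned down once the $l=l_0$ value is known. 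Then I would set, for the case $\lambda\ne1$, $J:=(1-\lambda^{-l_0})^{-1}\,\big(\text{appropriate combination of }x_{\alpha\beta}^{(l_0)}\big)$ — solving the single equation $x_{\alpha\beta}^{(l_0)}=J\zeij{\alpha\beta}-\lambda^{-l_0}\zeij{\alpha\beta}J$ for $J$, which is a Sylvester-type equation $J\,e_{\alpha\beta}-\lambda^{-l_0}e_{\alpha\beta}\,J=(\text{given})$; summing over $\alpha=\beta$ recovers $J$ up to its trace, and the off-diagonal equations fix the rest, with solvability guaranteed precisely because $1$ and $\lambda^{-l_0}$ are distinct (no resonance). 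For $\lambda=1$ the analogous equation $x_{\alpha\beta}^{(l_0)}=J\zeij{\alpha\beta}-\zeij{\alpha\beta}J+c\,l_0\,\zeij{\alpha\beta}$ has a one-parameter defect exactly matching the freedom in $c$: the commutator $[J,\,\cdot\,]$ kills scalars, so $c$ absorbs the trace part and $J$ the traceless part. Once $J$ (and $c$) are defined this way, I would prove the formula for all $l\ge l_0$ by induction on $l$ in steps of $l_0$ — actually, using (\ref{eq:oi}) with $l_1,l_2$ both $\ge l_0$ and arbitrary one gets the additivity needed to reach every $l\ge l_0$, not just $l\in l_0+l_0\bbN$, because for $l$ between consecutive multiples one writes $l=l'+l_0$ with $l'\ge l_0$ once $l\ge 2l_0$, and the range $l_0\le l<2l_0$ is handled by noting the recursion expresses $x^{(l)}$ consistently — here one must be slightly careful and may instead run the induction using the relation with $l_1=l_0$ fixed and $l_2=l-l_0$ ranging over $\ge l_0$, covering all $l\ge 2l_0$, then separately check $l_0\le l<2l_0$ directly from the defining equation at those values, which also follow from (\ref{eq:oi}) with $l_1=l_2$ small.

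The main obstacle I anticipate is the bookkeeping that separates the genuinely scalar recursion (governing the "$c\cdot l$" growth, and in general the $\lambda^{-l}$ dependence) from the purely algebraic matrix-unit identity that produces $J$ as a commutator/Sylvester solution. In other words, the content is: (a) show the ansatz is forced — this is the recursion-solving part and where the dichotomy $\lambda=1$ vs $\lambda\ne1$ enters as a resonance condition for the operator $X\mapsto X e_{\alpha\beta}-\lambda^{-l_0}e_{\alpha\beta}X$; and (b) check that a single $J$ (independent of $\alpha,\beta$) works for all index pairs simultaneously — this consistency across the $(\alpha,\beta)$ is exactly what (\ref{eq:oi}) with \emph{both} pairs of indices varying is designed to deliver, via the $\delta_{\beta_1\alpha_2}$ gluing. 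I expect the write-up to proceed by first deriving, from (\ref{eq:oi}), that $l\mapsto x_{\alpha\beta}^{(l)}$ satisfies a clean two-term recurrence, then reading off $J$ and $c$ from the $l=l_0$ (and $l=l_0,\,2l_0$) data, and finally a short induction closing the loop; the verification direction is routine once $J,c$ are correctly defined, so the real work is the forced-ansatz step.
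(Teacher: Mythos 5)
Your overall architecture matches the paper's: split $x_{\alpha\beta}^{(l)}$ into an off-diagonal piece, controlled by a single matrix extracted from the $\delta_{\beta_1\alpha_2}=0$ instances of (\ref{eq:oi}), and the diagonal scalars $C_{\alpha\beta}^{(l)}=\braket{\cnz{\alpha}}{x_{\alpha\beta}^{(l)}\cnz{\beta}}$, which satisfy an additive recursion in $l$; the dichotomy $\lambda=1$ versus $\lambda\neq 1$ enters only through that scalar recursion, producing the $c\cdot l$ term in the resonant case. The cross-index consistency you flag as the crux is indeed the first thing the paper proves (its two candidate matrices $\tilde J$ and $\tilde J'$ coincide).

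There is, however, one concrete step that fails as stated. You justify solvability of the Sylvester-type equation $x_{\alpha\beta}^{(l_0)}=J\zeij{\alpha\beta}-\lambda^{-l_0}\zeij{\alpha\beta}J$ ``because $1$ and $\lambda^{-l_0}$ are distinct,'' but $\lambda\neq 1$ does not imply $\lambda^{l_0}\neq 1$: take $\lambda=-1$ and $l_0$ even. This is not a vacuous case --- in the application (Lemma \ref{lem:c6in}) $\lambda$ is a ratio $\lambda_{j-(i+1)}/\lambda_j$ of eigenvalues of equal modulus and can be a nontrivial root of unity. When $\lambda^{l_0}=1$ the diagonal part of your defining equation reads $J_{\alpha\alpha}(1-\lambda^{-l_0})=C_{\alpha\alpha}^{(l_0)}$ and cannot be solved for $J_{\alpha\alpha}$; your $J$ is simply not defined there. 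The repair is exactly what the paper does: fix some $l_0'\ge l_0$ with $\lambda^{l_0'}\neq 1$ (always possible when $\lambda\neq 1$, since $\lambda^{l}$ and $\lambda^{l+1}$ cannot both be $1$) and build $J$ and the constant $d$ from the data at $l_0'$. A secondary, lesser point: your induction in steps of $l_0$ determines $x^{(l)}$ for $l\ge 2l_0$ only from the window $l_0\le l<2l_0$, and your proposed treatment of that window (``(\ref{eq:oi}) with $l_1=l_2$ small'') is not available, since both exponents in (\ref{eq:oi}) must be at least $l_0$. The paper avoids induction altogether by reading off the closed form of $x^{(l)}$ for every $l\ge l_0$ from the occurrences of $x^{(l)}$ on the left-hand side of (\ref{eq:oi}) with the other exponent free; you would need to do the same for the initial window in any case, at which point the induction is superfluous.
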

\begin{proof}
We claim 
there exists ${\tilde J}\in\mnz$
such that
\begin{align}\label{eq:claimxi}
x_{\alpha\beta}^{(l)}(1-\zeij{\beta\beta})
=-\lambda^{-l}\zeij{\alpha\beta}{\tilde J},\quad
(1-\zeij{\alpha\alpha})x_{\alpha\beta}^{(l)}
= {\tilde J}\zeij{\alpha\beta},\quad
l\ge l_0,\quad  \alpha,\beta\in\{1,\ldots,n_0\}.
\end{align}
To see this, 
set $F_\beta:=\sum_{\alpha:\alpha\neq\beta}
x_{\alpha\alpha}^{(l_0)}$.
Then by (\ref{eq:oi}) with $\beta_1\neq \alpha_2$, we have for $l\ge l_0$ and $\alpha,\beta\in\{1,\ldots,n_0\}$,
\begin{align}\label{eq:jx}
x_{\alpha\beta}^{(l)}(1-\zeij{\beta\beta})
+\lambda^{-l}\zeij{\alpha\beta}F_\beta=
x_{\alpha\beta}^{(l)}\sum_{\alpha_2:\alpha_2\neq\beta}\zeij{\alpha_2\alpha_2}
+\lambda^{-l}\zeij{\alpha\beta}
\sum_{\alpha_2:\alpha_2\neq\beta}x_{\alpha_2\alpha_2}^{(l_0)}
=0.
\end{align}
We also have for $l\ge l_0$ and $\alpha,\beta\in\{1,\ldots,n_0\}$,
\begin{align}\label{eq:xj}
F_{\alpha}\zeij{\alpha\beta}+
\lambda^{-l_0}\lmk 1-\zeij{\alpha\alpha}\rmk x_{\alpha\beta}^{(l)}
=
\sum_{\alpha_1:\alpha_1\neq\alpha}
x_{\alpha_1\alpha_1}^{(l_0)}\zeij{\alpha\beta}
+\sum_{\alpha_1:\alpha_1\neq\alpha}\lambda^{-l_0}\zeij{\alpha_1\alpha_1}x_{\alpha\beta}^{(l)}=0
\end{align}
from (\ref{eq:oi}).

Set ${\tilde J}:=\sum_{\beta=1}^{n_0} \zeij{\beta\beta} F_{\beta}(1-\zeij{\beta\beta})$
and ${\tilde J}':=-\lambda^{l_0}
\sum_{\alpha=1}^{n_0} (1-\zeij{\alpha\alpha})
F_{\alpha}\zeij{\alpha\alpha} $.
By (\ref{eq:jx}) and (\ref{eq:xj}), we have for $l\ge l_0$ and $\alpha,\beta\in\{1,\ldots,n_0\}$,
\begin{align}\label{eq:jj}
x_{\alpha\beta}^{(l)}(1-\zeij{\beta\beta})
=-\lambda^{-l}\zeij{\alpha\beta}F_\beta
=-\lambda^{-l}\zeij{\alpha\beta}{\tilde J},\quad
(1-\zeij{\alpha\alpha})x_{\alpha\beta}^{(l)}
= -\lambda^{l_0}F_{\alpha}\zeij{\alpha\beta}
=
{\tilde J}'\zeij{\alpha\beta}.
\end{align}
 
To complete the proof of the claim, we show ${\tilde J}={\tilde J}'$.
For any $\alpha_1,\beta_1,\alpha_2,\beta_2\in\{1,\ldots,n_0\}$ with $\beta_1\neq\alpha_2$ and $l_1,l_2\ge l_0$, 
we have from (\ref{eq:jj}) and (\ref{eq:oi}),
\begin{align*}
-\lambda^{-l_1}\zeij{\alpha_1\beta_1}{\tilde J}\zeij{\alpha_2\beta_2}
+\lambda^{-l_1}\zeij{\alpha_1\beta_1}{\tilde J}'\zeij{\alpha_2\beta_2}
=x_{\alpha_1\beta_1}^{(l_1)}\zeij{\alpha_2\beta_2}
+\lambda^{-l_1}\zeij{\alpha_1\beta_1}x_{\alpha_2\beta_2}^{(l_2)}
=0.
\end{align*}
This means the off diagonal elements of ${\tilde J}$ and ${\tilde J}'$ coincide.
As the diagonal elements of ${\tilde J}$ and ${\tilde J}'$
are zero, we obtain ${\tilde J}={\tilde J}'$, proving the claim.

To proceed, we first consider the $\lambda\neq 1$ case.
We fix some $l_0'\ge l_0$ such that $\lambda^{l_0'}\neq 1$.
Set 
\[
C_{\alpha\beta}^{(l)}
:=\braket{\cnz{\alpha}}{x_{\alpha\beta}^{(l)}\cnz{\beta}},\quad l\ge l_0,\quad
\alpha,\beta=1,\ldots,n_0,
\]
and
$d:=(1-\lambda^{-{l_0'}})^{-1}\lambda^{-{l_0'}}C_{11}^{({l_0'})}$.
We claim
\begin{align}\label{eq:claimcd}
C_{\alpha\beta}^{(l)}=
C_{\alpha 1}^{({l_0'})}+d-\lambda^{-l}\lmk C_{\beta 1}^{({l_0'})}+d\rmk
,\quad l\ge {l_0},\quad
\alpha,\beta=1,\ldots,n_0.
\end{align}
For any $\alpha_1,\beta_2,\beta\in\{1,\ldots,n_0\}$ and $l_1,\l_2\ge {l_0}$, 
substituting (\ref{eq:claimxi}), we have
\begin{align*}
&x_{\alpha_1,\beta}^{(l_1)}\zeij{\beta,\beta_2}
+\lambda^{-l_1}\zeij{\alpha_1\beta}x_{\beta\beta_2}^{(l_2)}\\
&=\zeij{\alpha_1\alpha_1}
x_{\alpha_1,\beta}^{(l_1)}\zeij{\beta,\beta_2}
+\lmk 1- \zeij{\alpha_1\alpha_1}\rmk 
x_{\alpha_1,\beta}^{(l_1)}\zeij{\beta,\beta_2}
+\lambda^{-l_1}\zeij{\alpha_1\beta}x_{\beta\beta_2}^{(l_2)}
\zeij{\beta_2\beta_2}
+\lambda^{-l_1}\zeij{\alpha_1\beta}x_{\beta\beta_2}^{(l_2)}
(1-\zeij{\beta_2\beta_2})\\
&=\lmk
C_{\alpha_1\beta}^{(l_1)}+\lambda^{-l_1}C_{\beta\beta_2}^{(l_2)}
\rmk
\zeij{\alpha_1\beta_2}
+\tilde J\zeij{\alpha_1\beta_2}-\lambda^{-l_1-l_2}\zeij{\alpha_1\beta_2}\tilde J.
\end{align*}
Note that the left hand side is $\beta$-independent because of
(\ref{eq:oi}), and the second and the third term on the right hand side is also $\beta$-independent.
Therefore,
for any $\alpha_1,\beta_2\in\{1,\ldots,n_0\}$, and $l_1,\l_2\ge {l_0}$, 
$C_{\alpha_1\beta}^{(l_1)}+\lambda^{-l_1}C_{\beta\beta_2}^{(l_2)}$
is $\beta$-independent.
Hence, for $\alpha,\beta\in\{1,\ldots,n_0\}$, and 
$l\ge {l_0}$,
\begin{align}\label{cab}
C_{\alpha\beta}^{(l)}=-\lambda^{-l}C_{\beta1}^{({l_0'})}
+C_{\alpha 1}^{(l)}
+\lambda^{-l} C_{11}^{({l_0'})}.
\end{align}

Substituting this to $C_{\alpha_1\beta}^{(l_1)}+\lambda^{-l_1}C_{\beta\beta_2}^{(l_2)}$, we see that
for $\alpha_1,\beta_2,\beta=1,\ldots,n_0$
 and $l_1,l_2\ge {l_0}$,
\begin{align*}
C_{\alpha_1\beta}^{(l_1)}+\lambda^{-l_1}C_{\beta\beta_2}^{(l_2)}
=-\lambda^{-l_1}C_{\beta 1}^{({l_0'})}+C_{\alpha_1 1}^{(l_1)}
+\lambda^{-l_1}C_{11}^{({l_0'})}
+\lambda^{-l_1}
\lmk
-\lambda^{-l_2}C_{\beta_2 1}^{({l_0'})}+C_{\beta 1}^{(l_2)}
+\lambda^{-l_2} C_{11}^{({l_0'})}
\rmk.
\end{align*}
Recall that the left hand side is $\beta$-independent.
This means, $W_l:=C_{\beta 1}^{(l)}-C_{\beta 1}^{({l_0'})}$
is $\beta$-independent for $l\ge l_0$.
Note that $W_{{l_0'}}=0$.

Using $W_l$, (\ref{cab}) can be written as
\begin{align}\label{eq:alw}
C_{\alpha\beta}^{(l)}=\lambda^{-l}\lmk C_{11}^{({l_0'})}-C_{\beta 1}^{({l_0'})}\rmk
+W_l+C_{\alpha 1}^{({l_0'})},\quad
l\ge {l_0},\quad \alpha,\beta=1,\ldots,n_0.
\end{align}

Considering $(\alpha_1,\beta_2)$-matrix element of (\ref{eq:oi}),
with $\beta_1=\alpha_2=\beta$,
$l_1=l\ge {l_0}$ and $l_2={l_0}$, we have
$
C_{\alpha_1\beta}^{(l)}+\lambda^{-l}C_{\beta\beta_2}^{({l_0'})}
=C_{\alpha_1\beta_2}^{({l+l_0'})}
$.
Same consideration with 
$l_1= {l_0'}$ and $l_2=l\ge {l_0}$, 
implies
$
C_{\alpha_1\beta}^{(l_0')}+\lambda^{-{l_0'}}C_{\beta\beta_2}^{({l})}
=C_{\alpha_1\beta_2}^{({l+l_0'})}
$.
From these, we obtain
$
C_{\alpha_1\beta}^{(l)}+\lambda^{-l}C_{\beta\beta_2}^{({l_0'})}
=C_{\alpha_1\beta}^{({l_0'})}+\lambda^{-{l_0'}}C_{\beta\beta_2}^{(l)}
$.
Substituting (\ref{eq:alw}) and $W_{{l_0'}}=0$ to this,
we obtain
\[
W_l=(\lambda^{-{l_0'}}-\lambda^{-l})
\lmk 1-\lambda^{-{l_0'}}\rmk^{-1}C_{1,1}^{({l_0'})},\quad l\ge l_0.
\]
Substituting this to (\ref{eq:alw}), we
obtain
(\ref{eq:claimcd}).

Set $J:=\tilde J+\sum_{\alpha}\lmk C_{\alpha 1}^{({l_0'})}+d \rmk \zeij{\alpha\alpha}\in\mnz$.
Then for $l\ge {l_0}$ and $\alpha,\beta\in\{1,\dots,n_0\}$,
using (\ref{eq:claimxi}),
\begin{align*}
&x_{\alpha\beta}^{(l)}
=\lmk 1-\zeij{\alpha\alpha}\rmk
x_{\alpha\beta}^{(l)}
+\zeij{\alpha\alpha}x_{\alpha\beta}^{(l)}\lmk 1-\zeij{\beta\beta}\rmk
+\zeij{\alpha\alpha}x_{\alpha\beta}^{(l)}\zeij{\beta\beta}
= \tilde J \zeij{\alpha\beta}
-\lambda^{-l}\zeij{\alpha\beta}\tilde J
+C_{\alpha\beta}^{(l)}\zeij{\alpha\beta} \\
&=\tilde J \zeij{\alpha\beta}
-\lambda^{-l}\zeij{\alpha\beta}\tilde J
+\lmk
C_{\alpha 1}^{({l_0'})}+d-\lambda^{-l}\lmk C_{\beta 1}^{({l_0'})}+d\rmk
\rmk
\zeij{\alpha\beta} 
=J\zeij{\alpha\beta}- \lambda^{-l}\zeij{\alpha\beta}J.
\end{align*}

Now we turn to the case $\lambda=1$.
Set
\[
C_{\alpha\beta}^{(l)}
:=\braket{\cnz{\alpha}}{x_{\alpha\beta}^{(l)}\cnz{\beta}},\quad l\ge l_0,\quad
\alpha,\beta=1,\ldots,n_0.
\]
By (\ref{eq:oi}) with $\beta_1=\alpha_2=\beta$ we have 
\begin{align}\label{eq:ccc}
C_{\alpha_1\beta}^{(l_1)}+C_{\beta,\beta_2}^{(l_2)}
=C_{\alpha_1,\beta_2}^{(l_1+l_2)},\quad
\alpha_1,\beta_2,\beta\in\{1,\ldots,n_0\},
\quad l_1,\l_2\ge l_0.
\end{align}
Therefore, for any $l\ge l_0$, we have
\[
C_{11}^{(l+2)}-C_{11}^{(l+1)}
=C_{11}^{(l+1)}-C_{11}^{(l)},
\]
and obtain
\begin{align}\label{eq:c11}
C_{11}^{(l)}-C_{11}^{(l_0)}
=\sum_{j=l_0+1}^{l} \lmk C_{11}^{(j)}-C_{11}^{(j-1)}\rmk
=(l-l_0)\lmk
C_{11}^{(l_0+1)}-C_{11}^{(l_0)}
\rmk,\quad
l\ge l_0+1.
\end{align}
Note that we also have 
$
C_{11}^{(l_0)}-C_{11}^{(l_0)}=0
=(l_0-l_0)\lmk
C_{11}^{(l_0+1)}-C_{11}^{(l_0)}
\rmk$.
Set $c':=-l_0\lmk C_{11}^{(l_0+1)}-C_{11}^{(l_0)}\rmk+ C_{11}^{(l_0)}$
and $c=C_{11}^{(l_0+1)}-C_{11}^{(l_0)}$.
Then from (\ref{eq:ccc}) and (\ref{eq:c11}), we have
\begin{align*}
C_{\alpha\beta}^{(l)}
&=C_{11}^{(l_0)}-C_{\beta1}^{(l_0)}+C_{\alpha1}^{(l)}\\
&=C_{11}^{(l_0)}-C_{\beta1}^{(l_0)}+\lmk
C_{\alpha1}^{(l)} -C_{\alpha 1}^{(l_0)}\rmk
+C_{\alpha1}^{(l_0)}\\
&=C_{11}^{(l_0)}-C_{\beta1}^{(l_0)}+\lmk
C_{11}^{(l)} -C_{1 1}^{(l_0)}\rmk
+C_{\alpha1}^{(l_0)}\\
&=C_{11}^{(l_0)}-C_{\beta1}^{(l_0)}+
(l-l_0)\lmk
C_{11}^{(l_0+1)}-C_{11}^{(l_0)}
\rmk
+C_{\alpha1}^{(l_0)}\\
&=C_{\alpha1}^{(l_0)}-C_{\beta1}^{(l_0)}+
c\cdot l+c',
\end{align*}
for all $\alpha,\beta\in\{1,\ldots,n_0\}$, $l\ge l_0$.
Substituting this to (\ref{eq:ccc}), we get $c'=0$.
Hence we obtain
\begin{align}\label{eq:snana}
C_{\alpha\beta}^{(l)}
=C_{\alpha1}^{(l_0)}-C_{\beta1}^{(l_0)}+
c\cdot l,
\quad l\ge l_0,\quad
\alpha,\beta=1,\ldots,n_0.
\end{align}

Set $\tilde I:=\sum_{\alpha=1}^{n_0}C_{\alpha,1}^{(l_0)}\zeij{\alpha\alpha}\in\mnz$.
Recall $\tilde J$ in (\ref{eq:claimxi}), and set $J:=\tilde J+\tilde I$.
Then substituting (\ref{eq:snana}), we have
for $l\ge l_0$, and
$\alpha,\beta=1,\ldots,n_0$,
\begin{align*}
&x_{\alpha\beta}^{(l)}
=\lmk 1-\zeij{\alpha\alpha}\rmk
x_{\alpha\beta}^{(l)}
+\zeij{\alpha\alpha}x_{\alpha\beta}^{(l)}\lmk 1-\zeij{\beta\beta}\rmk
+\zeij{\alpha\alpha}x_{\alpha\beta}^{(l)}\zeij{\beta\beta}
= \tilde J \zeij{\alpha\beta}
-\zeij{\alpha\beta}\tilde J
+C_{\alpha\beta}^{(l)}\zeij{\alpha\beta} \\
&= \tilde J \zeij{\alpha\beta}
-\zeij{\alpha\beta}\tilde J
+\lmk C_{\alpha1}^{(l_0)}-C_{\beta1}^{(l_0)}+
c\cdot l\rmk
\zeij{\alpha\beta} 
=\tilde J \zeij{\alpha\beta}
-\zeij{\alpha\beta}\tilde J
+\tilde I\zeij{\alpha\beta}-\zeij{\alpha\beta}\tilde I
+c\cdot l\cdot \zeij{\alpha\beta}\\
&= J \zeij{\alpha\beta}
-\zeij{\alpha\beta}J
+c\cdot l\cdot
\zeij{\alpha\beta} .
\end{align*}
\end{proof}
\begin{proofof}[Lemma \ref{lem:c6in}]
By {\it Condition 6-i}, each $y_{0,\alpha,\beta}^{(l)}$, $l\ge l_0$, and
$\alpha,\beta=1,\ldots,n_0$, is of the form
\[
y_{0,\alpha,\beta}^{(l)}
=\zeij{\alpha\beta}\otimes\Lambda_\lal^l \lmk \unit+Y\rmk^l
+\sum_{j=i+1}^{k}x_{\alpha,\beta,j}^{(l)}\otimes E_{j,j-(i+1)}^{(0,k)}
+Y_{\alpha,\beta}^{(l)},
\]
with $x_{\alpha,\beta,j}^{(l)}\in\mnz$ and 
$Y_{\alpha,\beta}^{(l)}\in \mnz\otimes \sum_{a,a':a-a'\ge i+2} E_{aa}^{(0,k)}\Mat_{k+1} E_{a'a'}^{(0,k)}$.
Recall that
\[y_{0,\alpha_1,\beta_1}^{(l_1)}y_{0,\alpha_2,\beta_2}^{(l_2)}
=\delta_{\beta_1\alpha_2} y_{0,\alpha_1,\beta_2}^{(l_1+l_2)},\quad
\alpha_1,\alpha_2,\beta_1,\beta_2=1,\ldots,n_0,\quad l_1,l_2\ge l_0,
\]
by {\it Condition 6-i}.
With the above representation, the left hand side of this equation
can be written
\begin{align*}
&y_{0,\alpha_1,\beta_1}^{(l_1)}y_{0,\alpha_2,\beta_2}^{(l_2)}\\
&=\delta_{\beta_1\alpha_2}\zeij{\alpha_1\beta_2}
\otimes\Lambda_{\lal}^{l_1+l_2}\lmk \unit +Y \rmk^{l_1+l_2}
+\sum_{j=i+1}^k
\lmk
\lambda_j^{l_1}\zeij{\alpha_1\beta_1}x_{\alpha_2,\beta_2,j}^{(l_2)}
+\lambda_{j-(i+1)}^{l_2}x_{\alpha_1,\beta_1,j}^{(l_1)}\zeij{\alpha_2\beta_2}
\rmk\otimes
E_{j,j-(i+1)}^{(0,k)}\\
&\quad +
\text{ an element of }
\mnz\otimes \sum_{a,a':a-a'\ge i+2} E_{aa}^{(0,k)}\Mat_{k+1} E_{a'a'}^{(0,k)}.
\end{align*}
Compare the $ \mnz\otimes \sum_{a,a':a-a'= i+1} E_{aa}^{(0,k)}\Mat_{k+1} E_{a'a'}^{(0,k)}$ part of this and
that of the right hand side:
\begin{align}
\delta_{\beta_1\alpha_2} y_{0,\alpha_1,\beta_2}^{(l_1+l_2)}
=\delta_{\beta_1\alpha_2}\lmk
\zeij{\alpha_1\beta_2}\otimes\Lambda_\lal^{l_1+l_2}\lmk \unit+Y\rmk^{l_1+l_2}
+\sum_{j=i+1}^{k}x_{\alpha_1,\beta_2,j}^{(l_1+l_2)}\otimes E_{j,j-(i+1)}^{(0,k)}
+Y_{\alpha_1,\beta_2}^{(l_1+l_2)}
\rmk.
\end{align}
Then we obtain
\begin{align*}
\lambda_j^{l_1}\zeij{\alpha_1\beta_1}x_{\alpha_2,\beta_2,j}^{(l_2)}
+\lambda_{j-(i+1)}^{l_2}x_{\alpha_1,\beta_1,j}^{(l_1)}\zeij{\alpha_2\beta_2}
=\delta_{\beta_1\alpha_2}x_{\alpha_1,\beta_2,j}^{(l_1+l_2)}.
\end{align*}
Set $\widetilde{x}_{\alpha,\beta,j}^{(l)}:=
\lambda_{j-(i+1)}^{-l }x_{\alpha,\beta,j}^{(l)}$.
From the above equality,  $\widetilde{x}_{\alpha,\beta,j}^{(l)}$
satisfies condition of Lemma \ref{lem:key}
with $\lambda=\lambda_{j-(i+1)}/\lambda_j$.
Applying Lemma \ref{lem:key},
we obtain $\{J_j\}_{j={i+1}}^{k}\subset \mnz$
and
$\{c_j\}_{j={i+1}}^{k}\subset \cc$
such that 
\begin{align}\label{eq:xform}
x_{\alpha,\beta,j}^{(l)}=
\lambda_{j-(i+1)}^l J_j\zeij{\alpha\beta}-
\lambda_j^l\zeij{\alpha\beta}J_j
+\delta_{\lambda_j,\lambda_{j-(i+1)}}c_jl\lambda_j^l
\zeij{\alpha\beta},\quad
\alpha,\beta=1,\ldots,n_0,\quad l\ge l_0.
\end{align}
Hence, for any $i+1\le j\le k$,
$\alpha,\beta=1,\ldots,n_0$, and
$l\ge l_0$, we have
\begin{align}\label{eq:yfm}
&\lmk\unit\otimes E_{jj}^{(0,k)}\rmk
\lmk
y^{(l)}_{0,\alpha,\beta}-
\zeij{\alpha\beta}\otimes \Lambda_{\lal}^l\lmk 1+ Y\rmk ^l
\rmk
\lmk\unit\otimes E_{j-(i+1),j-(i+1)}^{(0,k)}\rmk\notag\\
&=
\lmk
\lambda_{j-(i+1)}^l J_j\zeij{\alpha\beta}-
\lambda_j^l\zeij{\alpha\beta}J_j
+\delta_{\lambda_j,\lambda_{j-(i+1)}}c_jl\lambda_j^l
\zeij{\alpha\beta}
\rmk
	\otimes E_{j,j-(i+1)}^{(0,k)}.
\end{align}

Now we check {\it 1,2} of the Lemma.
We start from {\it 1}.
Set $\hat J:=\sum_{j=i+1}^{k} J_j\otimes E_{j,j-(i+1)}^{(0,k)}$, and
$R:=\unit-\hat J$.
Clearly, $\hat J\in\mnz\otimes\DT_{0,k+1}$.
As $\hat J^{k+1}=0$, $R$ is invertible and
$R^{-1}=\unit+\hat J+\cdots+\hat J^{k}$.
Consider (\ref{eq:yfm}) with $j=i+1$.
Then by {\it Condition 5} (iv) (3) for $\{y^{(l)}_{a,\alpha,\beta}\}$,
we get 
\begin{align*}
0=\lmk\unit\otimes E_{{i+1},{i+1}}^{(0,k)}\rmk
 \lmk
y_{0,\alpha,\beta}^{(l)}-\zeij{\alpha\beta}\otimes \Lambda_{\lal}^l\lmk 1+ Y\rmk ^l
\rmk
\lmk\unit\otimes E_{00}^{(0,k)}\rmk
=\lmk
 J_{i+1}\zeij{\alpha\beta}-
\lambda_{i+1}^l\zeij{\alpha\beta}J_{i+1}
\rmk
	\otimes E_{i+1,0}^{(0,k)},
\end{align*}
for all $\alpha,\beta=1,\ldots,n_0$, and  $l\ge l_0$.
This implies $J_{i+1}=0$.
Therefore, we have
$\hat J\lmk \unit\otimes E_{00}^{(0,k)}\rmk =J_{i+1}\otimes E_{i+1,0}^{(0,k)}=0$,
which implies
\begin{align}\label{eq:re}
R\lmk \unit\otimes E_{00}^{(0,k)}\rmk
=R^{-1}\lmk \unit\otimes E_{00}^{(0,k)}\rmk
= \unit\otimes E_{00}^{(0,k)}.
\end{align}

To prove {\it 2},
we first show that the septuplet $(n_0,k,\oo,R\vv R^{-1},\lal,l_0,\{R y_{a,\alpha,\beta}^{(l)}R^{-1}\})$
satisfies {\it Condition 5}.
(i) follows from the assumption that {\it Condition 6-i} holds.
As all of $v_{\mu}, R,R^{-1}$ belongs to $\mnz\otimes \DT_{k+1}$, 
we have $R v_\mu R^{-1}\in \mnz\otimes \DT_{k+1}$, proving (ii).
(iii) can be checked as follows:
\begin{align*}
&\lmk \unit\otimes E_{aa}^{(0,k)}\rmk Rv_\mu R^{-1}\lmk \unit\otimes E_{aa}^{(0,k)}\rmk
=\lmk \unit\otimes E_{aa}^{(0,k)}\rmk R\lmk \unit\otimes E_{aa}^{(0,k)}\rmk
v_\mu\lmk \unit\otimes E_{aa}^{(0,k)}\rmk R^{-1}\lmk \unit\otimes E_{aa}^{(0,k)}\rmk\\
&=\lmk \unit\otimes E_{aa}^{(0,k)}\rmk
v_\mu
\lmk \unit\otimes E_{aa}^{(0,k)}\rmk=\lambda_a \omega_\mu\otimes  E_{aa}^{(0,k)}
\end{align*} 
for all $a=0,\ldots,k$ and $\mu=1,\ldots,n$.
As $v_{\mu}, R,R^{-1}\in \mnz\otimes \DT_{k+1}$,
only "diagonal parts" of $R,v_\mu, R^{-1}$ are left when $Rv_\mu R^{-1}$ is sandwiched by $\unit\otimes E_{aa}^{(0,k)}$.
This corresponds to the first equality.
The second equality is due to $R-\unit, R^{-1}-\unit\in\mnz\otimes\DT_{0,k+1}$.
The last equality is because of {\it Condition 5} (iii) for $\vv,\oo$.
(1), (2) of (iv) follows from the fact that 
$y_{a,\alpha,\beta}^{(l)}$ and
$Ry_{a,\alpha,\beta}^{(l)}R^{-1}$ are similar to each other,
and $v_\mu$ and $Rv_\mu R^{-1}$ are similar to each other with the common 
invertible operator $R$.
(3) of (iv) follows from the following calculation:
\begin{align*}
&Ry_{0,\alpha,\beta}^{(l)}R^{-1}\lmk \unit\otimes E_{00}^{(0,k)}\rmk=Ry_{0,\alpha,\beta}^{(l)}\lmk \unit\otimes E_{00}^{(0,k)}\rmk
=R\lmk \unit\otimes E_{00}^{(0,k)}\rmk y_{0,\alpha,\beta}^{(l)}\lmk \unit\otimes E_{00}^{(0,k)}\rmk\\
&=\lmk \unit\otimes E_{00}^{(0,k)}\rmk y_{0,\alpha,\beta}^{(l)}\lmk \unit\otimes E_{00}^{(0,k)}\rmk
=\lmk \unit\otimes E_{00}^{(0,k)}\rmk R y_{0,\alpha,\beta}^{(l)} R^{-1}\lmk \unit\otimes E_{00}^{(0,k)}\rmk.
\end{align*}
The first and third equality follows from (\ref{eq:re}). The second equality is from {\it Condition 5} (iv)(3) of 
$y_{0,\alpha,\beta}^{(l)}$.
The last equality is from (\ref{eq:re}) and $R-\unit\in \mnz\otimes\DT_{0,k+1}$.

Set
$Y':=\sum_{i+1\le j\le k:\lambda_j=\lambda_{j-(i+1)}} c_j E_{j,j-(i+1)}^{(0,k)}$.
We show (ii), (iii) of {\it Conditions 6-(i+1)} for $(n_0,k,\oo,R\vv R^{-1},\lal,l_0,\{R y_{a,\alpha,\beta}^{(l)}R^{-1}\})$
with respect to $Y+Y'$.
(ii) is clear from the definition.
To see (iii), note that 
\begin{align}\label{eq:YY}
&\lmk \unit +Y+Y'\rmk^l
=\lmk \unit +Y\rmk^l
+\sum_{k=0}^{l-1} \lmk 1+Y\rmk^kY'\lmk \unit+Y\rmk^{l-(k+1)}
+\text{ terms with more than one }Y'\nonumber \\
&=\lmk \unit +Y\rmk^l
+l Y'
+\text{ elements in } \sum_{a,a':a-a'\ge i+2} E_{aa}^{(0,k)}\Mat_{k+1}E_{a'a'}^{(0,k)}.
\end{align}
We used the fact that terms with more than one $Y'$
belong to
$ \sum_{a,a':a-a'\ge i+2} E_{aa}^{(0,k)}\Mat_{k+1}E_{a'a'}^{(0,k)}$
because
$Y'\in \sum_{a,a':a-a'= i+1} E_{aa}^{(0,k)}\Mat_{k+1}E_{a'a'}^{(0,k)}$.
Similarly, the terms with one $Y'$ and more than zero $Y\in  \sum_{a,a':a-a'\ge 1} E_{aa}^{(0,k)}\Mat_{k+1}E_{a'a'}^{(0,k)}$ belong to
$\sum_{a,a':a-a'\ge i+2} E_{aa}^{(0,k)}\Mat_{k+1}E_{a'a'}^{(0,k)}$.
By (\ref{eq:xform}) and (\ref{eq:YY}), we have
\begin{align*}
& \widetilde{Y_{\alpha,\beta}^{(l)}}
:=y_{0,\alpha,\beta}^{(l)}
-\zeij{\alpha\beta}\otimes\Lambda_\lal^l \lmk \unit+Y\rmk^l
-\hat J\lmk \zeij{\alpha\beta}\otimes\Lambda_\lal^l \lmk \unit+Y\rmk^l\rmk
+\lmk \zeij{\alpha\beta}\otimes\Lambda_\lal^l \lmk \unit+Y\rmk^l\rmk\hat J
-l\zeij{\alpha\beta}\otimes Y'\Lambda_\lal^l\\
&=
y_{0,\alpha,\beta}^{(l)}
-\zeij{\alpha\beta}\otimes\Lambda_\lal^l \lmk \unit+Y\rmk^l
-\sum_{j=i+1}^{k}
\lmk \lambda_{j-(i+1)}^l J_j\zeij{\alpha\beta}-
\lambda_j^l\zeij{\alpha\beta}J_j
+\delta_{\lambda_j,\lambda_{j-(i+1)}}c_jl\lambda_j^l
\zeij{\alpha\beta}
\rmk
\otimes E_{j,j-(i+1)}^{(0,k)}\\
&-\hat J\lmk \zeij{\alpha\beta}\otimes\Lambda_\lal^l\lmk  \lmk \unit+Y\rmk^l-\unit\rmk\rmk
+\lmk \zeij{\alpha\beta}\otimes\Lambda_\lal^l \lmk \lmk \unit+Y\rmk^l-\unit\rmk\rmk \hat J\\
&=Y_{\alpha,\beta}^{(l)}-\hat J\lmk \zeij{\alpha\beta}\otimes\Lambda_\lal^l\lmk  \lmk \unit+Y\rmk^l-\unit\rmk\rmk+\lmk \zeij{\alpha\beta}\otimes\Lambda_\lal^l \lmk \lmk \unit+Y\rmk^l-\unit\rmk\rmk \hat J\\
&\in \mnz\otimes\sum_{a,a':a-a'\ge i+2} E_{aa}^{(0,k)}\Mat_{k+1}E_{a'a'}^{(0,k)},
\end{align*}
for $\alpha,\beta=1,\ldots, n_0$ and $l\ge l_0$.
The last inclusion is because of 
$\hat J\in \mnz\otimes\sum_{a,a':a-a'= i+1} E_{aa}^{(0,k)}\Mat_{k+1}E_{a'a'}^{(0,k)}$
and $Y\in  \sum_{a,a':a-a'\ge 1} E_{aa}^{(0,k)}\Mat_{k+1}E_{a'a'}^{(0,k)}$.
Recall that
$R:=\unit-\hat J$
and 
$R^{-1}=\unit+\hat J+\cdots+\hat J^{k}$.
We have
\begin{align}\label{eq:RR}
&Ry_{0,\alpha,\beta}^{(l)}R^{-1}-\zeij{\alpha\beta}
\otimes\Lambda_\lal^l \lmk \unit+Y+Y'\rmk^l\notag\\
&=
\lmk \unit-\hat J\rmk\lmk
\zeij{\alpha\beta}\otimes\Lambda_\lal^l \lmk \unit+Y\rmk^l
+\hat J\lmk \zeij{\alpha\beta}\otimes\Lambda_\lal^l \lmk \unit+Y\rmk^l\rmk
-
\zeij{\alpha\beta}\otimes\lmk \Lambda_\lal^l 
\lmk \unit+Y\rmk^l\rmk\hat J
+l\zeij{\alpha\beta}\otimes Y'\Lambda_\lal^l
\rmk
\lmk
\unit+\hat J+\cdots+\hat J^{k}
\rmk\notag\\
&\quad +\lmk \unit-\hat J\rmk\
\widetilde{Y_{\alpha,\beta}^{(l)}}
\lmk
\unit+\hat J+\cdots+\hat J^{k}
\rmk-\zeij{\alpha\beta}
\otimes\Lambda_\lal^l \lmk \unit+Y+Y'\rmk^l\notag\\
&=
\text{ terms with more than one } \hat J+\text{ terms with more than zero }\hat J \text{ and one } Y'\notag\\
&\quad +\lmk \unit-\hat J\rmk
\widetilde{Y_{\alpha,\beta}^{(l)}}
\lmk
\unit+\hat J
\rmk-\zeij{\alpha\beta}
\otimes\Lambda_\lal^l\lmk  \lmk \unit+Y+Y'\rmk^l
- \lmk \unit+Y\rmk^l-l Y'
\rmk,
\end{align}
for $\alpha,\beta=1,\ldots, n_0$, and $l\ge l_0$.

Note that
$\hat J\in \mnz\otimes\sum_{a,a':a-a'= i+1} E_{aa}^{(0,k)}\Mat_{k+1}E_{a'a'}^{(0,k)}$,
$Y\in  \sum_{a,a':a-a'\ge 1} E_{aa}^{(0,k)}\Mat_{k+1}E_{a'a'}^{(0,k)}$,
$Y'\in \sum_{a,a':a-a'= i+1} E_{aa}^{(0,k)}\Mat_{k+1}E_{a'a'}^{(0,k)}$,
and $\widetilde{Y_{\alpha,\beta}^{(l)}}\in  \mnz\otimes\sum_{a,a':a-a'\ge i+2} E_{aa}^{(0,k)}\Mat_{k+1}E_{a'a'}^{(0,k)}$.
This observation and (\ref{eq:YY}) implies that the right hand side of (\ref{eq:RR}) is in
$ \mnz\otimes\sum_{a,a':a-a'\ge i+2} E_{aa}^{(0,k)}\Mat_{k+1}E_{a'a'}^{(0,k)}$.
This proves (iii).
\end{proofof}

\begin{lem}\label{lem:hy}
Let $(n_0,k,\oo,\vv,\lal,l_0,\{y_{a,\alpha,\beta}^{(l)}\})$ be a the septuplet 
satisfying {\it Condition 6-0}. Suppose that there exist strictly positive operators 
$h_a$ in $\mnz$, $a=0,\ldots, k$ with $h_0=\unit$ such that
\[
y_{a,\alpha,\beta}^{(l)}\lmk \unit\otimes E_{00}^{(0,k)}\rmk
=h_a^{\frac 12}\zeij{\alpha\beta}\otimes E_{a0}^{(0,k)},\quad
l\ge l_0, \quad a=0,\ldots, k,\quad \alpha,\beta=1,\ldots,n_0.
\]
Then there exist $R\in\mnz\otimes\Mat_{k+1}$, $Y\in \DT_{0,k+1}$,
and $\{ \hat y_{a,\alpha,\beta}^{(l)}\}_{a=0,\ldots,k, \alpha,\beta=1,\ldots,n_0,l\ge l_0}$
satisfying the followings.
\begin{enumerate}
\item
We have $[\Lambda_{\lal},Y]=0$.
\item  For any $\alpha,\beta=1,\ldots,n_0$ , and $l\ge l_0$, we have
$\hat y_{0,\alpha,\beta}^{(l)}=\zeij{\alpha\beta}\otimes \Lambda_{\lal}^l\lmk \unit+Y\rmk^l$.
\item The septuplet 
$(n_0,k,\oo,R\vv R^{-1},\lal,l_0,\{\hat y_{a,\alpha,\beta}^{(l)}\})$
satisfies {\it Condition 5}.
\item  For any $a=0,\ldots,k$,
$\alpha,\beta=1,\ldots,n_0$ , and $l\ge l_0$, we have
\[
\hat y_{a,\alpha,\beta}^{(l)}\lmk \unit\otimes E_{00}^{(0,k)}\rmk
=\zeij{\alpha,\beta}\otimes E_{a,0}^{(0,k)}.
\]
\item
If $X\in \caK_l(R\vv R^{-1})$, $l\ge l_0$,
satisfies $X\lmk \unit\otimes E_{00}^{(0,k)}\rmk=0$, then $X=0$.
\item
Set $\tilde \Lambda:=\Lambda_{\lal}\lmk \unit+Y\rmk$ and
 $\hat \Lambda:=\unit\otimes \tilde \Lambda$.
Then we have
\begin{align}\label{eq:ly}
&\hat\Lambda^l \hat y_{a,\alpha,\beta}^{(l_1)}
=\sum_{a'=1}^{k}\braket{f_{a'}^{(0,k)}}{\tilde\Lambda^{l}f_a^{(0,k)}}\hat y_{a',\alpha,\beta}^{(l_1)}\hat\Lambda^l, \notag\\
&\hat y_{a,\alpha,\beta}^{(l_1)}\hat\Lambda^l 
=\sum_{a'=1}^{k}\braket{f_{a'}^{(0,k)}}{\tilde\Lambda^{-l}f_a^{(0,k)}}\hat\Lambda^l\hat y_{a',\alpha,\beta}^{(l_1)} ,
\end{align}
for all $l\in\nan$,  $l_1\ge l_0$, $a=1,\ldots,k$, and $\alpha,\beta=1,\ldots,n_0$.
\item For all $l\ge l_0$, $a=1,\ldots,k$, and $\alpha,\beta=1,\ldots,n_0$, we have
\begin{align*}
\hat y_{a,\alpha,\beta}^{(l)}
\in \mnz\otimes \sum_{\stackrel{a',a^{''}=0,\ldots,k}{\lambda_{a'}=\lambda_{a''}\lambda_a}}\eijn{a'a'}\mk\eijn{a^{''}a^{''}}.
\end{align*}

\end{enumerate}

\end{lem}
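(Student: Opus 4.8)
The plan is to start from Lemma~\ref{lem:ohy}, which gives us a septuplet satisfying \emph{Condition 6-0}, and then apply Lemma~\ref{lem:c6in} iteratively $k_\sigma$ times. Each application bumps the index $i$ by one, so after $k$ steps we arrive at a septuplet satisfying \emph{Condition 6-$k$}. The cumulative deformation is a product of the $R$'s produced at each step; call it $R$. Since each $R$ at step $i$ lies in $\mnz\otimes\DT_{0,k+1}$ and fixes $\unit\otimes E_{00}^{(0,k)}$ on both sides, so does the product, and $R\in\mnz\otimes\Mat_{k+1}$ is invertible. The accumulated $Y$-correction is $Y_{\mathrm{final}}=Y_{\mathrm{initial}}+\sum_i Y'_i$, where $Y_{\mathrm{initial}}=0$ from Lemma~\ref{lem:ohy}; by construction each $Y'_i\in\sum_{a-a'=i+1}\eijn{aa}\mk\eijn{a'a'}$ with coefficients supported on pairs with $\lambda_a=\lambda_{a'}$, so $[\Lambda_\lal,Y_{\mathrm{final}}]=0$, giving item~1. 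Set $\hat y_{a,\alpha,\beta}^{(l)}:=Ry_{a,\alpha,\beta}^{(l)}R^{-1}$ (the fully deformed quantities) and $\vv\mapsto R\vv R^{-1}$.

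Item~2 follows because \emph{Condition 6-$k$} with respect to $Y_{\mathrm{final}}$ forces $y_{0,\alpha,\beta}^{(l)}-\zeij{\alpha\beta}\otimes\Lambda_\lal^l(\unit+Y_{\mathrm{final}})^l$ to lie in $\mnz\otimes\sum_{a-a'\ge k+1}\eijn{aa}\mk\eijn{a'a'}=\{0\}$ (there are no such pairs with $a,a'\in\{0,\dots,k\}$). Item~3 is immediate from part (i) of \emph{Condition 6-$k$}, which includes \emph{Condition 5}. Item~4: by item~2 we have $\hat y_{a,\alpha,\beta}^{(l)}(\unit\otimes E_{00}^{(0,k)})=(\zeij{\alpha\beta}\otimes\Lambda_\lal^l(\unit+Y_{\mathrm{final}})^l)(\unit\otimes E_{00}^{(0,k)})$ plus a term in $\mnz\otimes\sum_{a-a'\ge1}\eijn{aa}\mk\eijn{a'0}$; but applying $\unit\otimes E_{00}^{(0,k)}$ on the right kills everything except the $E_{a0}$-component, and on $E_{00}$ the diagonal $\Lambda_\lal^l(\unit+Y_{\mathrm{final}})^l$ acts as $\lambda_0^l=1$, yielding $\zeij{\alpha\beta}\otimes E_{a0}^{(0,k)}$. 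Here I need to check that the hypothesis $y_{a,\alpha,\beta}^{(l)}(\unit\otimes E_{00}^{(0,k)})=h_a^{1/2}\zeij{\alpha\beta}\otimes E_{a0}^{(0,k)}$ is consistent with this, and actually the deformation by $R$ is what removes the $h_a^{1/2}$ factors --- this is the content that needs care. Item~5: if $X\in\caK_l(R\vv R^{-1})$ and $X(\unit\otimes E_{00}^{(0,k)})=0$, expand $X$ in the basis $\{\hat y_{a,\alpha,\beta}^{(l)}\}$ (part (iv)(1) of \emph{Condition 5}); applying item~4, $X(\unit\otimes E_{00}^{(0,k)})=\sum C_{a\alpha\beta}\zeij{\alpha\beta}\otimes E_{a0}^{(0,k)}$, which vanishes iff all $C_{a\alpha\beta}=0$, so $X=0$.

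Items~6 and~7 are the substantive parts. For item~7, the structural constraint $\hat y_{a,\alpha,\beta}^{(l)}\in\mnz\otimes\sum_{\lambda_{a'}=\lambda_{a''}\lambda_a}\eijn{a'a'}\mk\eijn{a''a''}$ should follow by tracking, through the inductive deformation in Lemma~\ref{lem:c6in}, that the multiplicativity relation $\hat y_{a_1,\alpha_1,\beta_1}^{(l_1)}\hat y_{0,\alpha_2,\beta_2}^{(l_2)}=\delta_{\beta_1\alpha_2}\hat y_{a_1,\alpha_1,\beta_2}^{(l_1+l_2)}$ together with the now-explicit form $\hat y_{0}^{(l)}=\zeij{}\otimes\tilde\Lambda^l$ forces each block $(\unit\otimes E_{a'a'})\hat y_{a,\alpha,\beta}^{(l)}(\unit\otimes E_{a''a''})$ to satisfy a recursion whose solvability requires $\lambda_{a'}=\lambda_{a''}\lambda_a$; blocks violating this must be zero. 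For item~6, using $\hat y_{0,\alpha,\beta}^{(l)}=\zeij{\alpha\beta}\otimes\tilde\Lambda^l$ and $\hat\Lambda=\unit\otimes\tilde\Lambda$, we compute $\hat\Lambda^l\hat y_{a,\alpha,\beta}^{(l_1)}$ and commute $\hat\Lambda^l$ past $\hat y_{a,\alpha,\beta}^{(l_1)}$; since $\tilde\Lambda$ is a diagonal-plus-$Y$ operator commuting suitably with the block structure from item~7, the off-diagonal rearrangement produces exactly the coefficients $\braket{f_{a'}^{(0,k)}}{\tilde\Lambda^{\pm l}f_a^{(0,k)}}$, turning $a$ into a sum over $a'$. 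The main obstacle I anticipate is item~7: isolating precisely which $(a',a'')$ blocks can be nonzero requires carefully extracting, from the eigenvalue data $\lambda_a$ and the multiplicativity/compatibility relations preserved by all $k$ deformation steps, the resonance condition $\lambda_{a'}=\lambda_{a''}\lambda_a$ --- this is essentially a spectral bookkeeping argument on the lower-triangular structure, and getting the index ranges and the inductive hypothesis exactly right is the delicate point. Once item~7 is in hand, item~6 is a direct computation.
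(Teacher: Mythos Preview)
Your proposal has a genuine gap at item~4, and this propagates to the rest. With $\hat y_{a,\alpha,\beta}^{(l)}:=Ry_{a,\alpha,\beta}^{(l)}R^{-1}$ and $R$ the product of the $R_i$ from Lemma~\ref{lem:c6in}, one has $R(\unit\otimes E_{00}^{(0,k)})=R^{-1}(\unit\otimes E_{00}^{(0,k)})=\unit\otimes E_{00}^{(0,k)}$, so
\[
Ry_{a,\alpha,\beta}^{(l)}R^{-1}(\unit\otimes E_{00}^{(0,k)})=R\bigl(h_a^{1/2}\zeij{\alpha\beta}\otimes E_{a0}^{(0,k)}\bigr)
= h_a^{1/2}\zeij{\alpha\beta}\otimes E_{a0}^{(0,k)} + (\text{terms in }\mnz\otimes\textstyle\sum_{a'>a}E_{a'0}^{(0,k)}).
\]
Since $R-\unit\in\mnz\otimes\DT_{0,k+1}$ is strictly lower triangular, the $E_{a0}$-component is exactly $h_a^{1/2}\zeij{\alpha\beta}$, not $\zeij{\alpha\beta}$: the conjugation by $R$ does \emph{not} remove $h_a^{1/2}$. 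The paper fixes this by a further change of \emph{basis} (not a further conjugation of $\vv$): keeping $R\vv R^{-1}$ fixed, one redefines $\hat y_{a,\alpha,\beta}^{(l)}$ for $a\ge 1$ as the specific linear combination $\sum_{i,\alpha'}\braket{\cnz{\alpha'}\otimes\fiiz{i}}{\tilde R^{-1}R^{-1}(\cnz{\alpha}\otimes\fiiz{a})}\,Ry_{i,\alpha',\beta}^{(l)}R^{-1}$ with $\tilde R=\sum_a h_a^{1/2}\otimes\eijn{aa}$. This is the step that produces item~4 exactly, and without it item~5 (and everything downstream) does not follow.

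Your plan for items~6 and~7 also inverts the logical order used in the paper and is too vague to stand on its own. The paper first proves the first line of~(\ref{eq:ly}) for $l\ge l_0$ by a short argument using items~4 and~5: since $\hat\Lambda^l\in\caK_l(R\vv R^{-1})$, the difference $\hat\Lambda^l\hat y_{a,\alpha,\beta}^{(l_1)}-\sum_{a'}\braket{f_{a'}}{\tilde\Lambda^l f_a}\hat y_{a',\alpha,\beta}^{(l_1)}\hat\Lambda^l$ lies in $\caK_{l+l_1}(R\vv R^{-1})$ and kills $\unit\otimes E_{00}^{(0,k)}$, hence vanishes by item~5. Only then is item~7 extracted, by expanding this identity blockwise, writing both sides as combinations of ${}_lC_j\cdot s^l$ with $s\in\{\lambda_b\}\cup\{\lambda_a\lambda_{b'}\}$, and invoking Lemma~C.7 of Part~I to separate the $\lambda_b\neq\lambda_a\lambda_{b'}$ blocks (forcing them to zero). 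Finally item~7 is used to extend~(\ref{eq:ly}) from $l\ge l_0$ to all $l\in\nan$. Your sketch (``spectral bookkeeping'' from multiplicativity alone, then~6 as a direct computation) does not identify the key injectivity argument via item~5 nor the use of Lemma~C.7, and it is not clear how you would obtain item~7 without first having the partial version of item~6.
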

\begin{proof}
By Lemma \ref{lem:c6in} and the assumption that {\it Condition 6-0} holds,
we obtain invertible matrices $R_i\in\mnz\otimes\Mat_{k+1}$, $i=1,\ldots,k$,
and lower triangular matrices $Y_i\in\DT_{0,k+1} $, $i=1,\ldots,k$.
They satisfy the following properties.
\begin{enumerate}
\item $R_i-\unit\in\mnz\otimes \DT_{0,k+1}$,
$R_i\lmk \unit\otimes E_{00}^{(0,k)}\rmk
=R_i^{-1}\lmk \unit\otimes E_{00}^{(0,k)}\rmk
= \unit\otimes E_{00}^{(0,k)}$.
\item
The septuplet $(n_0,k,\oo,\lmk R_i\cdots R_1\rmk \vv \lmk R_i\cdots R_1\rmk^{-1},\lal,l_0,
\{ \lmk R_i\cdots R_1\rmk y_{a,\alpha,\beta}^{(l)}\lmk R_i\cdots R_1\rmk^{-1}\})$
satisfies {\it Condition 6-i} with respect to $Y_1+\cdots Y_i$.
\end{enumerate}
Set $R:= R_{k}\cdots R_1$ and $Y:=Y_1+\cdots+Y_k$.
Then $R$ is invertible, $R-\unit\in\mnz\otimes \DT_{0,k+1}$
and
$R\lmk \unit\otimes E_{00}^{(0,k)}\rmk
=R^{-1}\lmk \unit\otimes E_{00}^{(0,k)}\rmk
= \unit\otimes E_{00}^{(0,k)}$.
The matrix $Y$ belongs to $\DT_{0,k+1}$ and  
the septuplet $(n_0,k,\oo,R\vv R^{-1},\lal,l_0,
\{  Ry_{a,\alpha,\beta}^{(l)}R^{-1}\})$
satisfies {\it Condition 6-k} with respect to $Y$.

Now we would like to change the basis $\{  Ry_{a,\alpha,\beta}^{(l)}R^{-1}\}$
so that the new basis satisfies {\it 4} of the current Lemma.
We set $\tilde R:=\sum_{a=0}^{k}h_a^{\frac 12}\otimes \eijn{aa}$ and define
\begin{align*}
\hat y_{a,\alpha,\beta}^{(l)}
= \sum_{i=1}^{k}\sum_{\alpha'=1}^{n_0}\braket{\cnz{\alpha'}\otimes \fiiz{i}}{\tilde R^{-1}R^{-1}\lmk \cnz{\alpha}\otimes \fiiz{a}\rmk } Ry_{i,\alpha',\beta}^{(l)}R^{-1},\quad
1\le a\le k,\quad \alpha,\beta=1,\ldots,n_0,\quad l\ge l_0.
\end{align*}
Furthermore, we set
$
\hat y_{0,\alpha,\beta}^{(l)}
:= Ry_{0,\alpha,\beta}^{(l)}R^{-1}$,
$ \alpha,\beta=1,\ldots,n_0$, $l\ge l_0$.

This defines the change of basis of $\caK_l(R\vv R^{-1})$ for $l\ge l_0$.
Note that
\begin{align*}
R y_{a,\alpha,\beta}^{(l)} R^{-1}
= \sum_{i=1}^{k}\sum_{\alpha'=1}^{n_0}\braket{\cnz{\alpha'}\otimes \fiiz{i}}{R\tilde R\lmk \cnz{\alpha}\otimes \fiiz{a}\rmk } 
\hat y_{i,\alpha',\beta}^{(l)},\quad
1\le a\le k,\quad \alpha,\beta=1,\ldots,n_0,\quad l\ge l_0.
\end{align*}

Let us check the properties {\it 1,2,4,5} of Lemma \ref{lem:hy} for
$R\in\mnz\otimes\Mat_{k+1}$, $Y\in \DT_{0,k+1}$,
and $\{ \hat y_{a,\alpha,\beta}^{(l)}\}_{a=0,\ldots,k, \alpha,\beta=1,\ldots,n_0,l\ge l_0}$.
As 
the septuplet $(n_0,k,\oo,R\vv R^{-1},\lal,l_0,
\{  Ry_{a,\alpha,\beta}^{(l)}R^{-1}\})$
satisfies {\it Condition 6-k} with respect to $Y$,
 {\it 1} of Lemma \ref{lem:hy} holds.
The third condition of   {\it Condition 6-k} 
implies
\[
\hat y_{0,\alpha,\beta}^{(l)}= Ry_{0,\alpha,\beta}^{(l)}R^{-1}
=\zeij{\alpha\beta}\otimes\Lambda_\lal^l \lmk \unit +Y\rmk^l,\quad \alpha,\beta=1,\ldots,n_0,\quad l\ge l_0,
\]
proving {\it 2} of Lemma \ref{lem:hy}.

Next we prove {\it 4} of Lemma \ref{lem:hy}. 
For any $1\le a\le k$,
$\alpha,\beta=1,\ldots,n_0$ , and $l\ge l_0$, we have
\begin{align*}
&\hat y_{a,\alpha,\beta}^{(l)}\lmk \unit\otimes E_{00}^{(0,k)}\rmk
=\sum_{i=1}^{k}\sum_{\alpha'=1}^{n_0}\braket{\cnz{\alpha'}\otimes \fiiz{i}}{\tilde R^{-1}R^{-1}\lmk \cnz{\alpha}\otimes \fiiz{a}\rmk } Ry_{i,\alpha',\beta}^{(l)}R^{-1}\lmk \unit\otimes E_{00}^{(0,k)}\rmk\\
&=\sum_{i=1}^{k}\sum_{\alpha'=1}^{n_0}\braket{\cnz{\alpha'}\otimes \fiiz{i}}{\tilde R^{-1}R^{-1}\lmk \cnz{\alpha}\otimes \fiiz{a}\rmk } Ry_{i,\alpha',\beta}^{(l)}\lmk \unit\otimes E_{00}^{(0,k)}\rmk\\
&=\sum_{i=1}^{k}\sum_{\alpha'=1}^{n_0}\braket{\cnz{\alpha'}\otimes \fiiz{i}}{\tilde R^{-1}R^{-1}\lmk \cnz{\alpha}\otimes \fiiz{a}\rmk } R
\lmk h_i^{\frac 12} \zeij{\alpha'\beta}\otimes E_{i0}^{(0,k)}\rmk
\\
&=\sum_{i=1}^{k}\sum_{\alpha'=1}^{n_0}\sum_{i'=1}^{k}\sum_{\alpha^{''}=1}^{n_0}
\braket{\cnz{\alpha'}\otimes \fiiz{i}}{\tilde R^{-1}R^{-1}\lmk \cnz{\alpha}\otimes \fiiz{a}\rmk }
\braket{\cnz{\alpha^{''}}\otimes \fiiz{i'}}{R\tilde R\lmk \cnz{\alpha'}\otimes \fiiz{i}\rmk}
\lmk \zeij{\alpha^{''}\beta}\otimes E_{i'0}^{(0,k)}\rmk
\\
&=\sum_{i=0}^{k}\sum_{\alpha'=1}^{n_0}\sum_{i'=1}^{k}\sum_{\alpha^{''}=1}^{n_0}
\braket{\cnz{\alpha'}\otimes \fiiz{i}}{\tilde R^{-1}R^{-1}\lmk \cnz{\alpha}\otimes \fiiz{a}\rmk }
\braket{\cnz{\alpha^{''}}\otimes \fiiz{i'}}{R\tilde R\lmk \cnz{\alpha'}\otimes \fiiz{i}\rmk}
\lmk \zeij{\alpha^{''}\beta}\otimes E_{i'0}^{(0,k)}\rmk\\
&= \zeij{\alpha\beta}\otimes E_{a0}^{(0,k)}.
\end{align*}
The second equality is from $R^{-1}\lmk \unit\otimes E_{00}^{(0,k)}\rmk
= \unit\otimes E_{00}^{(0,k)}$, while the third one is by the assumption.
For the fourth and fifth equality, we used the fact that $R\tilde R,\tilde R^{-1}R^{-1}\in\mnz\otimes\DT_{k+1}$.
We also have
\begin{align}\label{eq:yzz}
\hat y_{0,\alpha,\beta}^{(l)}\lmk \unit\otimes E_{00}^{(0,k)}\rmk
=\zeij{\alpha\beta}\otimes\Lambda_\lal^l\lmk \unit + Y\rmk^l E_{00}^{(0,k)}
=\zeij{\alpha\beta}\otimes E_{00}^{(0,k)}
,\quad \alpha,\beta=1,\ldots,n_0,\quad l\ge l_0,
\end{align}
because $Y\in\DT_{0,k+1}$ and $\lambda_0=1$.

The proof of {\it 5} is the same as the proof  of Lemma \ref{lem:yaab} (4), using {\it 4}
and the fact that $\{\hat y_{a,\alpha,\beta}^{(l)}\}$
is a basis of $\caK_l(R\vv R^{-1})$.

Now we prove {\it 3}, of Lemma \ref{lem:hy}, i.e., that
the septuplet 
$(n_0,k,\oo,R\vv R^{-1},\lal,l_0,\{\hat y_{a,\alpha,\beta}^{(l)}\})$
satisfies {\it Condition 5}.
As the septuplet $(n_0,k,\oo,R\vv R^{-1},\lal,l_0,
\{  Ry_{a,\alpha,\beta}^{(l)}R^{-1}\})$ satisfies {\it Condition 5},
(i), (ii), (iii) of {\it Condition 5} for $(n_0,k,\oo,R\vv R^{-1},\lal,l_0,\{\hat y_{a,\alpha,\beta}^{(l)}\})$ hold.
(1) of (iv) is already shown, and (3) of (iv) is clear from (\ref{eq:yzz}).
The proof of (iv)(2)  is the same as the proof  of Lemma \ref{lem:yaab} (2), using {\it 4}, {\it 5}.

Next we prove the first line of (\ref{eq:ly}) for $l\ge l_0$.
We extend this to all $l\in\nan$ after that.
First, note that $\hat\Lambda^l=\sum_{\alpha=1}^{n_0}\hat y_{0,\alpha,\alpha}^{(l)}\in\caK_l\lmk
R\vv R^{-1}\rmk$ for any $l\ge l_0$.
Therefore, for any  $l,l_1\ge l_0$, $a=1,\ldots,k$, and $\alpha,\beta=1,\ldots,n_0$,
we have
\begin{align*}
\hat\Lambda^l \hat y_{a,\alpha,\beta}^{(l_1)}
-\sum_{a'=1}^{k}\braket{f_{a'}^{(0,k)}}{\tilde\Lambda^{l}f_a^{(0,k)}}\hat y_{a',\alpha,\beta}^{(l_1)}\hat\Lambda^l 
\in\caK_{l+l_1}\lmk R\vv R^{-1}\rmk.
\end{align*}
On the other hand, we have
\begin{align*}
&\lmk
\hat\Lambda^l \hat y_{a,\alpha,\beta}^{(l_1)}
-\sum_{a'=1}^{k}\braket{f_{a'}^{(0,k)}}{\tilde\Lambda^{l}f_a^{(0,k)}}\hat y_{a',\alpha,\beta}^{(l_1)}\hat\Lambda^l 
\rmk\lmk\unit\otimes\eijn{00}\rmk\\
&=\lmk
\hat\Lambda^l 
\rmk\lmk\zeij{\alpha\beta}\otimes\eijn{a0}\rmk
-
\lmk
\sum_{a'=1}^{k}\braket{f_{a'}^{(0,k)}}{\tilde\Lambda^{l}f_a^{(0,k)}}
\lmk\zeij{\alpha\beta}\otimes\eijn{a'0}\rmk\rmk=0.
\end{align*}
Therefore, from {\it 5}, we obtain the first line of (\ref{eq:ly}) with $l\ge l_0$. 

Lastly, we prove {\it 6}, {\it 7}.
As we have just proven, the first line of  (\ref{eq:ly}) holds for $l\ge l_0$,
Therefore, for all
 $l\ge \max \{l_0, 2k\}$, $l_1\ge l_0$, $a=1,\ldots,k$, 
$b,b'=0,\ldots, k$, and $\alpha,\beta=1,\ldots,n_0$, we have
\begin{align*}
&\sum_{j=0}^k{}_lC_{j}\cdot
\lambda_b^l\lmk \unit\otimes \eijn{bb}\rmk
 \lmk \unit\otimes Y^j \rmk\hat y_{a,\alpha,\beta}^{(l_1)}
\lmk \unit\otimes \eijn{b^{'}b^{'}}\rmk
= 
\lmk \unit\otimes \eijn{bb}\rmk
\hat\Lambda^l \hat y_{a,\alpha,\beta}^{(l_1)}
\lmk \unit\otimes \eijn{b^{'}b^{'}}\rmk\\
&=\lmk \unit\otimes \eijn{bb}\rmk
\lmk
\sum_{a'=1}^{k}\braket{f_{a'}^{(0,k)}}{\tilde\Lambda^{l}f_a^{(0,k)}}\hat y_{a',\alpha,\beta}^{(l_1)}\hat\Lambda^l 
\rmk
\lmk \unit\otimes \eijn{b^{'}b^{'}}\rmk\\
&=\lambda_a^l\lambda_{b'}^l\sum_{j_1=0}^k{}_lC_{j_1}\sum_{j_2=0}^k{}_lC_{j_2}
\lmk \unit\otimes \eijn{bb}\rmk
\lmk
\sum_{a'=1}^{k}\braket{f_{a'}^{(0,k)}}{Y^{j_1}f_a^{(0,k)}}\hat y_{a',\alpha,\beta}^{(l_1)} \lmk \unit\otimes Y^{j_2}\rmk 
\rmk
\lmk \unit\otimes \eijn{b^{'}b^{'}}\rmk\\
&=\lambda_a^l\lambda_{b'}^l\sum_{j_1=0}^k\sum_{j_2=0}^k
\sum_{i=0}^{j_1+j_2}
\alpha_{(j_1,j_2)}^{(i)}\cdot {}_lC_i
\lmk \unit\otimes \eijn{bb}\rmk
\lmk
\sum_{a'=1}^{k}\braket{f_{a'}^{(0,k)}}{Y^{j_1}f_a^{(0,k)}}\hat y_{a',\alpha,\beta}^{(l_1)}\lmk\unit\otimes  Y^{j_2}\rmk 
\rmk
\lmk \unit\otimes \eijn{b^{'}b^{'}}\rmk.
\end{align*}
In the last line we used Lemma \ref{lem:ccac}.
Let $\{s_i\}_{i=1}^{m_1}$ be the set of distinct elements
in $\{\lambda_j\}_{j=0}^{k}\cup \{\lambda_j\lambda_{j'}\}_{j,j'=1}^{k}$.
Applying Lemma C.7 of Part I, with
$(l,k,m)$ replaced by $(\max\{l_0, 2k+1\}, 2k+1, m_1)$, and $\{s_i\}_{i=1}^{m_1}$, we obtain
\begin{align*}
&\lmk \unit\otimes \eijn{bb}\rmk
\lmk \unit\otimes Y^j \rmk\hat y_{a,\alpha,\beta}^{(l_1)}
\lmk \unit\otimes \eijn{b^{'}b^{'}}\rmk\\
&=\begin{cases}
\sum_{\stackrel{j_1,j_2=0,\ldots,k:}{j\le j_1+j_2}}
\quad \alpha_{(j_1,j_2)}^{(j)}
\lmk \unit\otimes \eijn{bb}\rmk
\lmk
\sum_{a'=1}^{k}\braket{f_{a'}^{(0,k)}}{Y^{j_1}f_a^{(0,k)}}\hat y_{a',\alpha,\beta}^{(l_1)} \lmk\unit\otimes Y^{j_2}\rmk 
\rmk
\lmk \unit\otimes \eijn{b^{'}b^{'}}\rmk &\text{ if }\lambda_b=\lambda_a\lambda_{b'},\\
0&\text{ if }\lambda_b\neq \lambda_a\lambda_{b'}
\end{cases},
\end{align*}
for $j=0,\ldots, k$, $l_1\ge l_0$, $a=1,\ldots,k$, 
$b,b'=0,\ldots, k$, and $\alpha,\beta=1,\ldots,n_0$.
The case with $j=0$ proves {\it 7}.

Similarly, we have
\begin{align*}
0=
\sum_{\stackrel{j_1,j_2=0,\ldots,k:}{j\le j_1+j_2}}
\quad \alpha_{(j_1,j_2)}^{(j)}
\lmk \unit\otimes \eijn{bb}\rmk
\lmk
\sum_{a'=1}^{k}\braket{f_{a'}^{(0,k)}}{Y^{j_1}f_a^{(0,k)}}\hat y_{a',\alpha,\beta}^{(l_1)}\lmk \unit\otimes  Y^{j_2}\rmk 
\rmk
\lmk \unit\otimes \eijn{b^{'}b^{'}}\rmk,
\end{align*}
for $k< j\le 2k$, if $\lambda_b=\lambda_a\lambda_{b'}$,
 $a=1,\ldots,k$, 
$b,b'=0,\ldots, k$,  $l_1\ge l_0$, and $\alpha,\beta=1,\ldots,n_0$.

Let us consider the case with $\lambda_b= \lambda_a\lambda_{b'}$.
We claim
\begin{align}\label{eq:eachj}
&\lmk \unit\otimes \eijn{bb}\rmk
\lmk
\sum_{a'=1}^{k}\hat y_{a',\alpha,\beta}^{(l_1)}
\lmk
\unit\otimes
\braket{f_{a'}^{(0,k)}}
{\lmk Y\otimes \unit_{\Mat_{k+1}} +\lmk\unit_{\Mat_{k+1}}+ Y\rmk\otimes Y \rmk^j f_a^{(0,k)}}
\rmk
\rmk
\lmk \unit\otimes \eijn{b^{'}b^{'}}\rmk\notag\\
&=
\begin{cases}
\lmk \unit\otimes \eijn{bb}\rmk
 \lmk\unit\otimes Y^j \rmk\hat y_{a,\alpha,\beta}^{(l_1)}
\lmk \unit\otimes \eijn{b^{'}b^{'}}\rmk&\text{ if }0\le j\le k\\
0&\text{ if }k<j\le 2k
\end{cases},
\end{align}
for $a=1,\ldots,k$, 
$b,b'=0,\ldots, k$,  $l_1\ge l_0$, and $\alpha,\beta=1,\ldots,n_0$.
Here,  $Y\otimes \unit_{\Mat_{k+1}} +\lmk\unit_{\Mat_{k+1}}+ Y\rmk\otimes Y
\in\Mat_{k+1}\otimes\Mat_{k+1}$, and
$\braket{f_{a'}^{(0,k)}}
{\lmk Y\otimes \unit_{\Mat_{k+1}} +\lmk\unit_{\Mat_{k+1}}+ Y\rmk\otimes Y \rmk^j f_a^{(0,k)}}$ denotes a matrix in $\mk$
such that
\[
\braket{\xi}{
\braket{f_{a'}^{(0,k)}}
{\lmk Y\otimes \unit_{\Mat_{k+1}} +\lmk\unit_{\Mat_{k+1}}+ Y\rmk\otimes Y \rmk^j f_a^{(0,k)}
}\eta
}=\braket{\lmk f_{a'}^{(0,k)}\otimes\xi\rmk}
{\lmk Y\otimes \unit_{\Mat_{k+1}} +\lmk\unit_{\Mat_{k+1}}+ Y\rmk\otimes Y \rmk^j \lmk f_a^{(0,k)}\otimes\eta\rmk},
\]
for
$\xi,\eta\in \cc^{k+1}$.
From Lemma \ref{lem:ccac}, we have 
\begin{align}\label{eq:decomac}
&
\sum_{\stackrel{j_1,j_2=0,\ldots,k:}{j\le j_1+j_2}}
\quad \alpha_{(j_1,j_2)}^{(j)}
\lmk \unit\otimes \eijn{bb}\rmk
\lmk
\sum_{a'=1}^{k}\braket{f_{a'}^{(0,k)}}{Y^{j_1}f_a^{(0,k)}}\hat y_{a',\alpha,\beta}^{(l_1)} \lmk \unit\otimes Y^{j_2}\rmk 
\rmk
\lmk \unit\otimes \eijn{b^{'}b^{'}}\rmk\notag\\
&=\sum_{\stackrel{j_1,j_2=0,\ldots,k:}{j_2\le j\le j_1+j_2}}
\quad {}_jC_{j_2}\cdot\sum_{i=0}^{j_2}\delta_{j_1,j-i}\cdot {}_{j_2}C_{i}\cdot
\lmk \unit\otimes \eijn{bb}\rmk
\lmk
\sum_{a'=1}^{k}\braket{f_{a'}^{(0,k)}}{Y^{j_1}f_a^{(0,k)}}\hat y_{a',\alpha,\beta}^{(l_1)} \lmk\unit\otimes Y^{j_2}\rmk 
\rmk
\lmk \unit\otimes \eijn{b^{'}b^{'}}\rmk.
\end{align}
If $0\le j\le k$, we have
\begin{align}
&(\ref{eq:decomac})
=
\sum_{j_2=0}^j\sum_{j_1=j-j_2}^{j}
\quad {}_jC_{j_2}\cdot {}_{j_2}C_{j-j_1}\cdot
\lmk \unit\otimes \eijn{bb}\rmk
\lmk
\sum_{a'=1}^{k}\braket{f_{a'}^{(0,k)}}{Y^{j_1}f_a^{(0,k)}}\hat y_{a',\alpha,\beta}^{(l_1)} \lmk\unit\otimes  Y^{j_2}\rmk 
\rmk
\lmk \unit\otimes \eijn{b^{'}b^{'}}\rmk\notag\\
&=
\sum_{j_2=0}^j\sum_{i=0}^{j_2}
\quad {}_jC_{j_2}\cdot {}_{j_2}C_{i}\cdot
\lmk \unit\otimes \eijn{bb}\rmk
\lmk
\sum_{a'=1}^{k}\braket{f_{a'}^{(0,k)}}{Y^{j-i}f_a^{(0,k)}}\hat y_{a',\alpha,\beta}^{(l_1)} \lmk\unit\otimes  Y^{j_2}\rmk 
\rmk
\lmk \unit\otimes \eijn{b^{'}b^{'}}\rmk
\notag\\
&=
\sum_{j_2=0}^j\sum_{i=0}^{j_2}
\quad {}_jC_{j_2}\cdot {}_{j_2}C_{i}\cdot
\lmk \unit\otimes \eijn{bb}\rmk
\lmk
\sum_{a'=1}^{k}\braket{f_{a'}^{(0,k)}}{Y^{j-j_2} Y^{j_2-i}f_a^{(0,k)}}\hat y_{a',\alpha,\beta}^{(l_1)} \lmk\unit\otimes  Y^{j_2}\rmk 
\rmk
\lmk \unit\otimes \eijn{b^{'}b^{'}}\rmk
\notag\\
&=
\sum_{j_2=0}^j
\quad {}_jC_{j_2}\cdot 
\lmk \unit\otimes \eijn{bb}\rmk
\lmk
\sum_{a'=1}^{k}\braket{f_{a'}^{(0,k)}}{Y^{j-j_2} \lmk \unit +Y\rmk^{j_2}f_a^{(0,k)}}\hat y_{a',\alpha,\beta}^{(l_1)} \lmk\unit\otimes  Y^{j_2}\rmk 
\rmk
\lmk \unit\otimes \eijn{b^{'}b^{'}}\rmk
\notag\\
&= 
\lmk \unit\otimes \eijn{bb}\rmk
\lmk
\sum_{a'=1}^{k}\hat y_{a',\alpha,\beta}^{(l_1)}\lmk\unit\otimes 
\braket{f_{a'}^{(0,k)}}
{\lmk Y\otimes \unit +\lmk\unit+ Y\rmk\otimes Y \rmk^j f_a^{(0,k)}}
\rmk\rmk
\lmk \unit\otimes \eijn{b^{'}b^{'}}\rmk.
\end{align}
If $k<j\le 2k$, then
\begin{align}
&(\ref{eq:decomac})
=
\sum_{j_2=0}^k\sum_{j_1=j-j_2}^{k}
\quad {}_jC_{j_2}\cdot {}_{j_2}C_{j-j_1}\cdot
\lmk \unit\otimes \eijn{bb}\rmk
\lmk
\sum_{a'=1}^{k}\braket{f_{a'}^{(0,k)}}{Y^{j_1}f_a^{(0,k)}}\hat y_{a',\alpha,\beta}^{(l_1)} \lmk\unit\otimes  Y^{j_2}\rmk 
\rmk
\lmk \unit\otimes \eijn{b^{'}b^{'}}\rmk\notag\\
&=
\sum_{j_2=0}^j\sum_{j_1=j-j_2}^{j}
\quad {}_jC_{j_2}\cdot {}_{j_2}C_{j-j_1}\cdot
\lmk \unit\otimes \eijn{bb}\rmk
\lmk
\sum_{a'=1}^{k}\braket{f_{a'}^{(0,k)}}{Y^{j_1}f_a^{(0,k)}}\hat y_{a',\alpha,\beta}^{(l_1)} \lmk\unit\otimes  Y^{j_2}\rmk 
\rmk
\lmk \unit\otimes \eijn{b^{'}b^{'}}\rmk\notag\\
&= 
\lmk \unit\otimes \eijn{bb}\rmk
\lmk
\sum_{a'=1}^{k}\hat y_{a',\alpha,\beta}^{(l_1)}\lmk
\unit\otimes 
\braket{f_{a'}^{(0,k)}}
{\lmk Y\otimes \unit +\lmk\unit+ Y\rmk\otimes Y \rmk^j f_a^{(0,k)}}
\rmk\rmk
\lmk \unit\otimes \eijn{b^{'}b^{'}}\rmk.
\end{align}
In the second equality, we used $Y^{j_1}=0$ for $j_1>k$.
Hence we proved the claim (\ref{eq:eachj}).

By (\ref{eq:eachj}) and {\it 7}, we obtain
\begin{align*}
&\hat\Lambda^l \hat y_{a,\alpha,\beta}^{(l_1)}
=\sum_{j=0}^{\min\{k,l\}}{}_lC_{j}\cdot\lmk\unit\otimes  \Lambda_{\lal}^l\rmk 
\lmk\unit\otimes Y^j \rmk\hat y_{a,\alpha,\beta}^{(l_1)}\\
&=\sum_{j=0}^{\min\{k,l\}}{}_lC_{j}\cdot\lmk\unit\otimes  \Lambda_{\lal}^l\rmk 
\lmk
\sum_{a'=1}^{k}\hat y_{a',\alpha,\beta}^{(l_1)}
\lmk \unit \otimes \braket{f_{a'}^{(0,k)}}
{\lmk Y\otimes \unit +\lmk\unit+ Y\rmk\otimes Y \rmk^j f_a^{(0,k)}}\rmk
\rmk
\notag\\
&=\sum_{j=0}^{\min\{k,l\}}{}_lC_{j}\cdot\lambda_{a}^l
\lmk
\sum_{a'=1}^{k}\hat y_{a',\alpha,\beta}^{(l_1)} \lmk\unit\otimes  \Lambda_{\lal}^l\rmk 
\lmk\unit\otimes \braket{f_{a'}^{(0,k)}}
{\lmk Y\otimes \unit +\lmk\unit+ Y\rmk\otimes Y \rmk^j f_a^{(0,k)}}
\rmk\rmk
\notag\\
&=\sum_{j=0}^{l}{}_lC_{j}\cdot\lambda_{a}^l
\lmk
\sum_{a'=1}^{k}\hat y_{a',\alpha,\beta}^{(l_1)} \lmk\unit\otimes  \Lambda_{\lal}^l\rmk 
\lmk\unit\otimes \braket{f_{a'}^{(0,k)}}
{\lmk Y\otimes \unit +\lmk\unit+ Y\rmk\otimes Y \rmk^j f_a^{(0,k)}}
\rmk\rmk
\notag\\
&=\lambda_{a}^l
\lmk
\sum_{a'=1}^{k}\hat y_{a',\alpha,\beta}^{(l_1)} \lmk\unit\otimes  \Lambda_{\lal}^l
\braket{f_{a'}^{(0,k)}}
{\lmk \lmk\unit+ Y\rmk\otimes \lmk \unit+ Y \rmk\rmk ^l f_a^{(0,k)}}
\rmk\rmk
\notag\\
&=\sum_{a'=1}^{k}\braket{f_{a'}^{(0,k)}}{\tilde\Lambda^{l}f_a^{(0,k)}}\hat y_{a',\alpha,\beta}^{(l_1)}\hat\Lambda^l ,
\end{align*}
for {\it all}  $l\in\nan$, $l_1\ge l_0$, $a=1,\ldots,k$, and $\alpha,\beta=1,\ldots,n_0$.
In the second and third equality we used {\it 7.} and the fact that
$\braket{f_{a'}^{(0,k)}}
{\lmk Y\otimes \unit +\lmk\unit+ Y\rmk\otimes Y \rmk^j f_a^{(0,k)}}=0$
unless $\lambda_{a'}=\lambda_a$.
For the forth equality we used (\ref{eq:eachj}) for $k<j\le 2k$
and $Y^{k+1}=0$.
This proves
the first line of
 {\it 6}.
The second line can be checked by substituting the first line to the right hand side of the second equality.
\end{proof}

\begin{lem}\label{lem:dyl}
Let $(n_0,k,\oo,\vv,\lal,l_0,\{y_{a,\alpha,\beta}^{(l)}\})$ be a septuplet 
satisfying {\it Condition 6-0}.
Suppose that there exist strictly positive elements $h_a\in \Mat_{n_0}$, $a=0,\ldots, 1$
with $h_0=\unit$ such that 
\[
y_{a,\alpha,\beta}^{(l)}\lmk \unit\otimes E_{00}^{(0,k)}\rmk
=h_a^{\frac 12}\zeij{\alpha\beta}\otimes E_{a0}^{(0,k)},\quad
l\ge l_0, \quad a=0,\ldots, k,\quad \alpha,\beta=1,\ldots,n_0.
\]
Then there exist $R\in\mnz\otimes\Mat_{k+1}$, $Y\in \DT_{0,k+1}$, and
$D_a\in\DT_{0,k+1}$,  $a=1,\ldots,k$, satisfying the followings.
\begin{enumerate}
\renewcommand{\labelenumi}{\roman{enumi}}
\item We have $[\Lambda_\lal,Y ]=0$.
\item For any $a=1,\ldots,k$, we have
$\Lambda_\lal D_a=\lambda_{a}D_a\Lambda_{\lal} $.
\item  For any $a=1,\ldots,k$, we have 
$D_a\eijn{00}=\eijn{a0}$.
\item  The set of matrices $\{D_a\}_{a=1}^{k}\cup\{1\}$
is linearly independent.
\item For any $a,a'=1,\ldots,k$, 
$D_aD_{a'}$ belongs to the linear span of $\{D_b\mid \lambda_a\lambda_{a'}=\lambda_b\}$.
\item 
Set $\tilde \Lambda:=\Lambda_{\lal}\lmk \unit+Y\rmk$.
Then we have
\begin{align}\label{eq:ld}
&\tilde \Lambda^l D_a
=\sum_{a'=1}^{k}\braket{f_{a'}^{(0,k)}}{\tilde\Lambda^{l}f_a^{(0,k)}}D_{a'}\tilde\Lambda^l \notag\\
&D_a\tilde \Lambda^l 
=\sum_{a'=1}^{k}\braket{f_{a'}^{(0,k)}}{\tilde\Lambda^{-l}f_a^{(0,k)}}\tilde \Lambda^l D_{a'}
\end{align}
for all $l\in\nan$ and  $a=1,\ldots,k$.
\item 
For any $l\ge l_0$,
\[
\left \{\zeij{\alpha\beta}\otimes \tll{l}\mid \alpha,\beta=1,\ldots,n_0\right\}
\cup \left \{\zeij{\alpha\beta}\otimes D_a\tll{l}\mid \alpha,\beta=1,\ldots,n_0,a=1,\ldots,k\right\}
\]
is a basis of $\caK_{l}\lmk R\vv R^{-1}\rmk$.
\item
For each $\mu=1,\ldots,n$, there exist unique $x_{\mu a}\in\mnz$, $a=1,\ldots, k$
such that
\[
R v_{\mu} R^{-1}
=\omega_\mu\otimes \tl
+\sum_{a=1}^k x_{\mu a}\otimes D_a\tl.
\]
\end{enumerate}\end{lem}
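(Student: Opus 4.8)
The plan is to deduce everything from Lemma~\ref{lem:hy}. Applying it to the given septuplet yields $R\in\mnz\otimes\Mat_{k+1}$, $Y\in\DT_{0,k+1}$ and a basis $\{\hat y_{a,\alpha,\beta}^{(l)}\}$ of $\caK_l(R\vv R^{-1})$ ($l\ge l_0$) with properties {\it 1}--{\it 7} there: $[\Lambda_\lal,Y]=0$ (this is (i)), $\hat y_{0,\alpha,\beta}^{(l)}=\zeij{\alpha\beta}\otimes\tll{l}$, $\hat y_{a,\alpha,\beta}^{(l)}(\unit\otimes\eijn{00})=\zeij{\alpha\beta}\otimes\eijn{a0}$, the septuplet $(n_0,k,\oo,R\vv R^{-1},\lal,l_0,\{\hat y_{a,\alpha,\beta}^{(l)}\})$ satisfies {\it Condition 5}, any $X\in\caK_l(R\vv R^{-1})$ with $X(\unit\otimes\eijn{00})=0$ vanishes, and the commutation relations {\it 6} hold. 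Set $\tl:=\Lambda_\lal(\unit+Y)$. Because $[\Lambda_\lal,Y]=0$, the entries $\braket{f_{a'}^{(0,k)}}{Yf_{a''}^{(0,k)}}$ vanish unless $\lambda_{a'}=\lambda_{a''}$; since $Y$ is strictly lower triangular and $\lambda_a\neq1=\lambda_0$ for $a\ge1$, this forces $Yf_0^{(0,k)}=0$, so $\tl f_0^{(0,k)}=\Lambda_\lal f_0^{(0,k)}=f_0^{(0,k)}$, i.e. $\tl\,\eijn{00}=\eijn{00}$; moreover $\tl$, and hence $\tll{-l}$, is lower triangular, invertible, and $\lal$-graded in the sense that its $(a',a'')$-entry vanishes unless $\lambda_{a'}=\lambda_{a''}$.

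The heart of the argument is the factorization $\hat y_{a,\alpha,\beta}^{(l)}=\zeij{\alpha\beta}\otimes D_a\tll{l}$ for $a=1,\dots,k$, all $\alpha,\beta$ and all $l\ge 2l_0$, with $D_a\in\Mat_{k+1}$ independent of $\alpha,\beta,l$. From {\it Condition 5}(iv)(2) for the $\hat y$'s (right multiplication by $\hat y_0$), together with the injectivity of $X\mapsto X(\unit\otimes\eijn{00})$ and property {\it 4}, one first derives the left-multiplication identity $\hat y_{0,\zeta,\alpha}^{(l_0)}\hat y_{a,\alpha,\beta}^{(l)}=\sum_{a'\ge a}\braket{f_{a'}^{(0,k)}}{\tll{l_0}f_a^{(0,k)}}\hat y_{a',\zeta,\beta}^{(l+l_0)}$ (both sides computed on $\unit\otimes\eijn{00}$, then lifted via injectivity). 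Writing $\hat y_{a,\alpha,\beta}^{(l)}=\sum_{\gamma,\delta}\zeij{\gamma\delta}\otimes M_{\gamma\delta}$, right multiplication forces $M_{\gamma\delta}=0$ unless $\delta=\beta$, and left multiplication — solving the triangular linear system from $a=k$ downward, whose diagonal coefficients $\lambda_a^{l_0}$ are nonzero — forces $M_{\gamma\delta}=0$ unless $\gamma=\alpha$; both are valid once $l\ge2l_0$. A sandwich $\hat y_0\cdot\hat y\cdot\hat y_0$ then shows the surviving block $M_{\alpha\beta}$ is $\alpha,\beta$-independent, and right multiplication by $\hat y_0$ shows it has the form $D_a\tll{l}$ with $D_a$ independent of $l\ge2l_0$; put $D_0:=\unit$ (consistent with $\hat y_0$). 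Now (ii) and the strict lower triangularity of $D_a$ follow from property {\it 7} of Lemma~\ref{lem:hy}, which forces the $(a',a'')$-entry of $D_a$ to vanish unless $\lambda_{a'}=\lambda_{a''}\lambda_a$ (so no diagonal, as $\lambda_a\neq1$), whence $\Lambda_\lal D_a=\lambda_a D_a\Lambda_\lal$; (iii) is $\hat y_{a,\alpha,\beta}^{(2l_0)}(\unit\otimes\eijn{00})=\zeij{\alpha\beta}\otimes\eijn{a0}$ combined with $\tl\,\eijn{00}=\eijn{00}$; and (iv) follows by applying a relation $c_0\unit+\sum_a c_aD_a=0$ to $f_0^{(0,k)}$ and using $D_af_0^{(0,k)}=f_a^{(0,k)}$.

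For (vi), substitute the factorization into property {\it 6} of Lemma~\ref{lem:hy} with a fixed $l_1\ge2l_0$ and cancel $\tll{l_1}$. For (v), use $\caK_p(R\vv R^{-1})\,\caK_q(R\vv R^{-1})\subseteq\caK_{p+q}(R\vv R^{-1})$ with $p,q\ge2l_0$ and the factorized bases of all three spaces to get $D_a\tll{p}D_{a'}\tll{q}\in\spn\{D_b\tll{p+q}\}$; cancelling $\tll{q}$ and moving $\tll{p}$ through $D_{a'}$ by (vi) gives $\sum_{a''}\braket{f_{a''}^{(0,k)}}{\tll{p}f_{a'}^{(0,k)}}D_aD_{a''}\in\spn\{D_b\}$ for all $p\ge2l_0$; since $[\Lambda_\lal,Y]=0$ controls the coefficients $\braket{f_{a''}^{(0,k)}}{\tll{p}f_{a'}^{(0,k)}}$ as polynomials in $p$ times $\lambda_{a'}^p$, a Vandermonde/interpolation argument (Lemma C.7 of Part~I) isolates $D_aD_{a'}\in\spn\{D_b\}$, and applying this to $f_0^{(0,k)}$ with property {\it 7} shows only $D_b$ with $\lambda_b=\lambda_a\lambda_{a'}$ occur. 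For (viii), $\hat v_\mu:=Rv_\mu R^{-1}\in\caK_1(R\vv R^{-1})$ sends $\caK_{2l_0}(R\vv R^{-1})$ into $\caK_{2l_0+1}(R\vv R^{-1})$; comparing the factorized bases of these spaces and cancelling $\tl$ forces $\hat v_\mu=\tilde x_{\mu0}\otimes\tl+\sum_{a=1}^k x_{\mu a}\otimes D_a\tl$, and {\it Condition 5}(iii) at $a=0$ together with $\tl\,\eijn{00}=\eijn{00}$ and $D_bf_0^{(0,k)}=f_b^{(0,k)}$ pins $\tilde x_{\mu0}=\omega_\mu$; uniqueness of the $x_{\mu a}$ is clear from (iv). Finally (vii): for $l\ge l_0$, $\caK_l(R\vv R^{-1})=\spn\{\hat v_{\mu^{(l)}}\}$, and expanding each $\hat v_{\mu^{(l)}}$ by (viii), (v), (vi) places it in $\spn\{\zeij{\alpha\beta}\otimes D_a\tll{l}\}$; as these $n_0^2(k+1)$ vectors are linearly independent (by (iv) and invertibility of $\tl$) and $\dim\caK_l(R\vv R^{-1})=n_0^2(k+1)$, they form a basis.

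The main obstacle is the factorization step: extracting from the $\hat y_{a,\alpha,\beta}^{(l)}$ a pure matrix unit $\zeij{\alpha\beta}$ in the $\mnz$-factor and a single $l$-independent $D_a$ in the $\Mat_{k+1}$-factor. Products with $\hat y_0$ only detect one $\mnz$-index at a time, so one must pair the right-multiplication relation of {\it Condition 5} with a left-multiplication relation that has to be manufactured from the injectivity property of Lemma~\ref{lem:hy}, and then run a triangular elimination in the $\lal$-weighted coefficients. One must also keep in mind that this clean factorization is available only at levels $\ge2l_0$, so (vii) and (viii) at small levels are obtained not from the $\hat y$'s directly but from the generators $\hat v_\mu$ together with a dimension comparison. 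Once the factorization is in place, the remaining identities reduce to the commutation relations of Lemma~\ref{lem:hy}, the Vandermonde lemma of Part~I, and linear-independence bookkeeping.
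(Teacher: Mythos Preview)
Your approach matches the paper's: invoke Lemma~\ref{lem:hy}, establish the factorization $\hat y_{a,\alpha,\beta}^{(l)}=\zeij{\alpha\beta}\otimes D_a\tll{l}$, and read off (i)--(viii). A few points where the paper is more direct are worth noting.

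For the factorization the paper uses a single sandwich $\hat y_{0,\alpha_1,\beta_1}^{(l_1)}\hat y_{a,\alpha,\beta}^{(l)}\hat y_{0,\alpha_2,\beta_2}^{(l_2)}$. Writing $\hat y_{a,\alpha,\beta}^{(l)}(\unit\otimes\tll{-l})=\sum_{i,j}x_{ij}\otimes\eijn{ij}$, the sandwich equals $\sum_{i,j}\zeij{\alpha_1\beta_1}x_{ij}\zeij{\alpha_2\beta_2}\otimes(\cdots)$, and by injectivity it vanishes whenever $\beta_1\neq\alpha$ or $\alpha_2\neq\beta$; this forces $x_{ij}\in\cc\,\zeij{\alpha\beta}$ in one stroke, for \emph{every} $l\ge l_0$ (the $2l_0$ restriction is spurious). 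Your separate left/right argument can be made to work, but the left-multiplication identity you state (second index of $\hat y_0$ equal to $\alpha$) only sees the block $M_{\alpha\beta}$; to kill $M_{\gamma\beta}$ for $\gamma\neq\alpha$ you need the case $\hat y_{0,\zeta,\eta}^{(l_0)}\hat y_{a,\alpha,\beta}^{(l)}=0$ for $\eta\neq\alpha$ (again by evaluating on $\unit\otimes\eijn{00}$ and using injectivity). No triangular elimination in $a$ is needed.

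For (v) the paper avoids Lemma~C.7 entirely: once $D_aD_{a'}\in\spn(\unit,D_1,\dots,D_k)$ is known from the product structure of the $\caK_l$'s, the relation $\Lambda_\lal D_aD_{a'}=\lambda_a\lambda_{a'}D_aD_{a'}\Lambda_\lal$ (from (ii)) together with (iv) immediately singles out the $D_b$ with $\lambda_b=\lambda_a\lambda_{a'}$. And with the factorization valid for all $l\ge l_0$, (vii) is just {\it Condition~5}(iv)(1) for the $\hat y$'s, so the expansion-and-dimension argument is unnecessary.
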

\begin{proof}
Let $R\in\mnz\otimes\Mat_{k+1}$, $Y\in \DT_{0,k+1}$,
and $\{ \hat y_{a,\alpha,\beta}^{(l)}\}_{a=0,\ldots,k, \alpha,\beta=1,\ldots,n_0,l\ge l_0}$
given in Lemma \ref{lem:hy}.
We claim that there exist $D_a\in\Mat_{k+1}$,  $a=1,\ldots,k$, such that
\begin{align}\label{eq:yda}
\hat y_{a,\alpha,\beta}^{(l)}=\zeij{\alpha,\beta}\otimes D_a  \tll{l},\quad
\alpha,\beta=1,\ldots,n_0,\; a=1,\ldots,k, \; l\ge l_0.
\end{align}
Set $\tilde y_{a,\alpha,\beta}^{(l)}:=\hat y_{a,\alpha,\beta}^{(l)} \lmk\unit\otimes \tll{-l} \rmk$, and decompose it 
as $\tilde y_{a,\alpha,\beta}^{(l)}=\sum_{ij=0}^{k} x_{ij}^{(a,\alpha,\beta,l)}\otimes \eijn{ij}$
with $x_{ij}^{(a,\alpha,\beta,l)}\in \mnz$.
Note that 
\begin{align*}
&\hat y_{0,\alpha_1,\beta_1}^{(l_1)}\hat y_{a,\alpha,\beta}^{(l)}\hat y_{0,\alpha_2,\beta_2}^{(l_2)}\lmk\unit\otimes \eijn{00}\rmk
=\lmk \zeij{\alpha_1\beta_1}\otimes \tll{l_1}\rmk
\lmk \zeij{\alpha,\beta}\zeij{\alpha_2\beta_2}\otimes E_{a,0}^{(0,k)}\rmk\\
&=\delta_{\beta,\alpha_2}\delta_{\beta_1,\alpha}
\lmk \zeij{\alpha_1\beta_2}\otimes \tll{l_1} E_{a,0}^{(0,k)}\rmk
=\delta_{\beta,\alpha_2}\delta_{\beta_1,\alpha}
\sum_{a'=1}^k  \braket{\fiin{a'}}{\tll{l_1}\fiin{a}}
 \zeij{\alpha_1\beta_2}\otimes E_{a',0}^{(0,k)}\\
&=\delta_{\beta,\alpha_2}\delta_{\beta_1,\alpha}
\sum_{a'=1}^k  \braket{\fiin{a'}}{\tll{l_1}\fiin{a}}
\hat y_{a',\alpha_1,\beta_2}^{(l_1+l_2+l)}\lmk\unit\otimes \eijn{00}\rmk,
\end{align*}
for any $l,l_1,l_2\ge l_0$, $\alpha,\beta,\alpha_1,\beta_1,\alpha_2,\beta_2=1,\ldots, n_0$,
$a=1,\ldots,k$.
As $\hat y_{0,\alpha_1,\beta_1}^{(l_1)}\hat y_{a,\alpha,\beta}^{(l)}\hat y_{0,\alpha_2,\beta_2}^{(l_2)}-\delta_{\beta,\alpha_2}\delta_{\beta_1,\alpha}
\sum_{a'=1}^k  \braket{\fiin{a'}}{\tll{l_1}\fiin{a}}
\hat y_{a',\alpha_1,\beta_2}^{(l_1+l_2+l)}$ belongs to $\caK_{l+l_1+l_2}(R\vv R^{-1})$,
this equality and {\it 5} of Lemma \ref{lem:hy} implies 
\begin{align}\label{eq:yyy}
\hat y_{0,\alpha_1,\beta_1}^{(l_1)}\hat y_{a,\alpha,\beta}^{(l)}\hat y_{0,\alpha_2,\beta_2}^{(l_2)}
=\delta_{\beta,\alpha_2}\delta_{\beta_1,\alpha}
\sum_{a'=1}^k  \braket{\fiin{a'}}{\tll{l_1}\fiin{a}}
\hat y_{a',\alpha_1,\beta_2}^{(l_1+l_2+l)},
\end{align}for any $l,l_1,l_2\ge l_0$, $\alpha,\beta,\alpha_1,\beta_1,\alpha_2,\beta_2=1,\ldots, n_0$, $a=1,\ldots,k$.
The left hand side can be written
\begin{align}\label{eq:yyl}
&\hat y_{0,\alpha_1,\beta_1}^{(l_1)}\hat y_{a,\alpha,\beta}^{(l)}\hat y_{0,\alpha_2,\beta_2}^{(l_2)}
=\lmk \zeij{\alpha_1\beta_1}\otimes \tll{l_1}\rmk
\tilde y_{a,\alpha,\beta}^{(l)}\lmk\unit\otimes \tll{l}\rmk
\lmk \zeij{\alpha_2\beta_2}\otimes \tll{l_2}\rmk\notag\\
&=\sum_{ij=0}^{k} \zeij{\alpha_1\beta_1}x_{ij}^{(a,\alpha,\beta,l)}\zeij{\alpha_2\beta_2}\otimes 
\tll{l_1}\eijn{ij}\tll{l+l_2}.
\end{align}
Let $\beta\neq\alpha_2$ or $\beta_1\neq\alpha$ in (\ref{eq:yyy}).
Comparing with (\ref{eq:yyl}), we obtain 
\[
 \braket{\cnz{\beta_1}}{x_{ij}^{(a,\alpha,\beta,l)}\cnz{\alpha_2}}=0,
\]
for  any $i,j=0,\ldots,k$,
$l\ge l_0$, 
$a=1,\ldots,k$, with
$\alpha,\beta,\beta_1,\alpha_2=1,\ldots, n_0$,
such that  $\beta\neq\alpha_2$ or $\beta_1\neq\alpha$.
This means that there exist $c_{ij}^{(a,\alpha,\beta,l)}\in\cc$,
$i,j=0,\ldots,k$,
$l\ge l_0$, 
$a=1,\ldots,k$, 
$\alpha,\beta=1,\ldots, n_0$, such that
\[
x_{ij}^{(a,\alpha,\beta,l)}=c_{ij}^{(a,\alpha,\beta,l)}\zeij{\alpha\beta}.
\]
Set $Z_{a,\alpha,\beta}^{(l)}:=\sum_{ij=0}^{k}c_{ij}^{(a,\alpha,\beta,l)}\eijn{ij}$.
Then we have
\[
\tilde y_{a,\alpha,\beta}^{(l)}=\zeij{\alpha,\beta}\otimes Z_{a,\alpha,\beta}^{(l)},\quad
\alpha,\beta=1,\ldots,n_0, \; a=1,\ldots,k, \; l\ge l_0.
\]
Furthermore, considering the case $\beta=\alpha_2$ and $\beta_1=\alpha$ in (\ref{eq:yyy}), we see that
$\hat y_{0,\alpha_1,\alpha}^{(l_1)}\hat y_{a,\alpha,\beta}^{(l)}\hat y_{0,\beta,\beta_2}^{(l_2)}$ is independent of
$\alpha, \beta$.
Therefore, $Z_{a,\alpha,\beta}^{(l)}$ is $\alpha,\beta$-independent. We denote this  $\alpha,\beta$-independent
matrix by $\tilde Z_a^{(l)}$.
Lastly, we would like to show that  $\tilde Z_a^{(l)}$ is $l$-independent.
Note that for any $\alpha,\beta=1,\ldots,n_0$, $a=1,\ldots,k$, and $l_1,l_2\ge l_0$,
we have
\begin{align*}
\hat y_{a,\alpha,\beta}^{(l_1)}\lmk \unit \otimes \tll{l_2}\rmk
=\hat y_{a,\alpha,\beta}^{(l_1)}\sum_{\gamma=1}^{n_0} \hat y_{0,\gamma,\gamma}^{(l_2)}
=\hat y_{a,\alpha,\beta}^{(l_1+l_2)}
=\hat y_{a,\alpha,\beta}^{(l_2)}\sum_{\gamma=1}^{n_0} \hat y_{0,\gamma,\gamma}^{(l_1)}
=\hat y_{a,\alpha,\beta}^{(l_2)}\lmk \unit \otimes \tll{l_1}\rmk,
\end{align*}
by {\it Condition 5}.
Therefore, multiplying $ \unit \otimes  \tilde\Lambda^{-(l_1+l_2)}$
from right, we have
$
\tilde y_{a,\alpha,\beta}^{(l_1)}
=\tilde y_{a,\alpha,\beta}^{(l_2)}$.
This implies that $\tilde Z_a^{(l)}$ is $l$-independent, $a=1,\ldots,k$. 
We set $D_a:=\tilde Z_a^{(l)}$, $a=1,\ldots,k$, and obtain the claim (\ref{eq:yda}).

Next we show $D_a\in \DT_{0,k+1}$.
As the septuplet 
$(n_0,k,\oo,R\vv R^{-1},\lal,l_0,\{\hat y_{a,\alpha,\beta}^{(l)}\})$
satisfies {\it Condition 5},
we have $R v_{\mu} R^{-1},\hat y_{a,\alpha,\beta}^{(l)}\in\mnz\otimes \DT_{k+1}$.
We also have $\unit\otimes\tll{l}\in
\mnz\otimes \DT_{k+1}$ because $Y\in \DT_{0,k+1}$.
Therefore, we have $\zeij{\alpha,\beta}\otimes D_a =\hat y_{a,\alpha,\beta}^{(l)}\lmk \unit\otimes \tll{-l}\rmk\in\mnz\otimes \DT_{k+1}$.
Furthermore, for any $i=0,\ldots,k$ and $a=1,\ldots,k$, we have
\begin{align*}
&\zeij{\alpha,\beta}\otimes \eijn{ii} D_a \eijn{ii}
= \lmk \unit\otimes \eijn{ii}\rmk \hat y_{a,\alpha,\beta}^{(l)} \lmk \unit\otimes \eijn{ii}\rmk\lmk \unit\otimes  \eijn{ii}\tll{-l} \eijn{ii}\rmk
=\lmk \unit\otimes \eijn{i0}\rmk \hat y_{a,\alpha,\beta}^{(l)} \lmk \unit\otimes \eijn{0i}\rmk\\
&=\lmk \unit\otimes \eijn{i0}\rmk \lmk\zeij{\alpha\beta}\otimes \eijn{a0}\rmk \lmk \unit\otimes \eijn{0i}\rmk=0.
\end{align*}
Here, we used the fact that 
as $\hat y_{a,\alpha,\beta}^{(l)}, \unit\otimes \tll{-l} \in\mnz\otimes \DT_{k+1}$.
From this fact, only the "diagonal part" of these matrices can contribute  
 when $\hat y_{a,\alpha,\beta}^{(l)}\lmk \unit\otimes \tll{-l}\rmk$ is sandwiched by $\unit\otimes E_{ii}^{(0,k)}$.
This corresponds to the first equality.
The similar consideration and (iii) of {\it Condition 5} for $(n_0,k,\oo,R\vv R^{-1},\lal,l_0,\{\hat y_{a,\alpha,\beta}^{(l)}\})$
implies the second equality.
For the third equality, we used {\it 4} of Lemma \ref{lem:hy}.
This proves $D_a\in \DT_{0,k+1}$.

Now let us check the properties i-vii.
 i. is from  Lemma \ref{lem:hy}.
ii. follows from the definition of $D_a$, and {\it 1,7} of Lemma \ref{lem:hy}.
Note that $Y\eijn{00}=0$ because $Y\in\DT_{0,k+1}$,  $[Y,\Lambda_\lal]=0$,
and $\lambda_i\neq1$ if $i\neq 0$,
From this and {\it 4} of Lemma \ref{lem:hy}  and (\ref{eq:yda}),
we obtain iii.
iii implies iv.
By {\it 6} of Lemma \ref{lem:hy} and (\ref{eq:yda}), we have
\begin{align*}
&\zeij{\alpha,\beta}\otimes\tll{l} D_a  \tll{l_1},
=\sum_{a'=1}^{k}\braket{f_{a'}^{(0,k)}}{\tilde\Lambda^{l}f_a^{(0,k)}}
\zeij{\alpha,\beta}\otimes D_{a'}  \tll{l_1+l},
\end{align*}
for all $l\in\nan$,  $l_1\ge l_0$, $a=1,\ldots,k$, and $\alpha,\beta=1,\ldots,n_0$.
This implies the first equation of vi.
The second one follows from this.
From {\it 6} of Lemma \ref{lem:hy} and (\ref{eq:yda}), we have
\begin{align*}
&\unit\otimes D_aD_{a'}
=\sum_{\alpha=1}^{n_0}\hat y_{a,\alpha,\alpha}^{(l)}\lmk\unit\otimes\tll{-l}\rmk
\hat y_{a',\alpha,\alpha}^{(l)}\lmk\unit\otimes\tll{-l}\rmk
=\sum_{\alpha=1}^{n_0}
\sum_{b=1}^{k}\braket{f_{b}^{(0,k)}}{\tilde\Lambda^{-l}f_{a'}^{(0,k)}}
\hat y_{a,\alpha,\alpha}^{(l)}\hat y_{b,\alpha,\alpha}^{(l)}
\lmk\unit\otimes\tll{-2l}\rmk\\
&\in
\caK_{2l}\lmk R\vv R^{-1}\rmk\lmk\unit\otimes\tll{-2l}\rmk
=\spa\left\{
\hat y_{b,\alpha,\beta}^{(2l)}\lmk\unit\otimes\tll{-2l}\rmk
\right\}_{b=0,\ldots,k,\; \alpha,\beta=1,\ldots, n_0}
=\Mat_{n_0}\otimes\spa\lmk \{\unit\}\cup\{D_b\}_{b=1}^k\rmk,
\end{align*}
for all $a,a'=1,\ldots, k$, $l\ge l_0$.

On the other hand, 
from ii,
we have
$
\Lambda_{\lal} D_aD_{a'}
=\lambda_a\lambda_{a'}
D_aD_{a'}\Lambda_{\lal}$.
Combining these and iv, we obtain v.
vii follows directly from {\it Condition 5} (iv)(1) of  
$(n_0,k,\oo,R\vv R^{-1},\lal,l_0,\{\hat y_{a,\alpha,\beta}^{(l)}\})$.

Finally, we prove viii. Set $\hat \Lambda:=\unit\otimes \tilde \Lambda$.
Note that $\hat \Lambda^l=\unit\otimes\tilde\Lambda^l=\sum_{\gamma=1}^{n_0}\hat y_{0\gamma\gamma}^{(l)}\in \caK_l\lmk
R\vv R^{-1}\rmk$ for $l\ge l_0$.
Let $\mu=1,\ldots,n$, and $l_2\ge l_0$.
Then we have
\begin{align*}
&Rv_{\mu} R^{-1}\hll{l_2}\lmk\unit\otimes \eijn{00}\rmk
=Rv_{\mu} R^{-1}\lmk\unit\otimes \eijn{00}\rmk\\
&=\sum_{a=1}^{k}\sum_{\alpha,\beta=1}^{n_0}
\braket{\cnz{\alpha}\otimes\fiin{a}}{Rv_{\mu} R^{-1}
\lmk\cnz{\beta}\otimes\fiin{0}\rmk}
\lmk\zeij{\alpha\beta}\otimes \eijn{a0}\rmk\\
&\quad
+ 
\lmk\unit\otimes\eijn{00}\rmk
Rv_{\mu} R^{-1}
\lmk\unit\otimes \eijn{00}\rmk\\
&=\sum_{a=1}^{k}\sum_{\alpha,\beta=1}^{n_0}
\braket{\cnz{\alpha}\otimes\fiin{a}}{Rv_{\mu} R^{-1}
\lmk\cnz{\beta}\otimes\fiin{0}\rmk}
\lmk\zeij{\alpha\beta}\otimes \eijn{a0}\rmk\\
&\quad
+ 
\omega_{\mu}\otimes \eijn{00}\\
&=\sum_{a=1}^{k}\sum_{\alpha,\beta=1}^{n_0}
\braket{\cnz{\alpha}\otimes\fiin{a}}{Rv_{\mu} R^{-1}
\lmk\cnz{\beta}\otimes\fiin{0}\rmk}
\hat y_{a,\alpha,\beta}^{(l_2+1)}\lmk\unit\otimes \eijn{00}\rmk\\
&\quad
+\sum_{\alpha,\beta=1}^{n_0}
\braket{\cnz{\alpha}}{\omega_\mu\cnz{\beta}}
\hat y_{0,\alpha,\beta}^{(l_2+1)}\lmk\unit\otimes \eijn{00}\rmk
\end{align*}
As $ \hat y_{b,\alpha,\beta}^{(l_2+1)}$ and $Rv_{\mu} R^{-1}\hll{l_2}$
belong to $\caK_{l_2+1}(Rv_{\mu} R^{-1})$,
from {\it 5} of Lemma \ref{lem:hy}, we obtain
\begin{align*}
&Rv_{\mu} R^{-1}\hll{l_2}\\
&=\sum_{a=1}^{k}\sum_{\alpha,\beta=1}^{n_0}
\braket{\cnz{\alpha}\otimes\fiin{a}}{Rv_{\mu} R^{-1}
\lmk\cnz{\beta}\otimes\fiin{0}\rmk}
\hat y_{a,\alpha,\beta}^{(l_2+1)}
+\sum_{\alpha,\beta=1}^{n_0}
\braket{\cnz{\alpha}}{\omega_\mu\cnz{\beta}}
\hat y_{0,\alpha,\beta}^{(l_2+1)}\\
&=\sum_{a=1}^{k}\sum_{\alpha,\beta=1}^{n_0}
\braket{\cnz{\alpha}\otimes\fiin{a}}{Rv_{\mu} R^{-1}
\lmk\cnz{\beta}\otimes\fiin{0}\rmk}
\zeij{\alpha\beta}\otimes D_{a}\tll{l_2+1}
+\omega_\mu\otimes \tll{l_2+1}.
\end{align*}
Multiplying $\tll{-l_2}$ from right,
we obtain
\begin{align*}
&Rv_{\mu} R^{-1}\\
&=\sum_{a=1}^{k}\sum_{\alpha,\beta=1}^{n_0}
\braket{\cnz{\alpha}\otimes\fiin{a}}{Rv_{\mu} R^{-1}
\lmk\cnz{\beta}\otimes\fiin{0}\rmk}
\zeij{\alpha\beta}\otimes D_{a}\tl
+\omega_\mu\otimes \tl.
\end{align*}
This gives the representation of $R v_{\mu}R^{-1}$ in viii.
The uniqueness of the representation follows from iv.
\end{proof}
We would like to apply the last Lemma to our setting.
In order to do so, we need the following Lemma.
\begin{lem}\label{lem:lrc}
Let $n_0^{(\sigma)},n \in\nan$ and  $\oo^{(\sigma)}\in\Primz_u(n,n_0^{(\sigma)})$,
for each $\sigma=L,R$. Let $\rho_\sigma$ be the faithful
$T_{\oo^{(\sigma)}}$-invariant state.
Let $\tilde \rho$ be the density matrix of $\rho_L$.
Let $\varphi$ be a state on $\caA_{\bbZ}$.For each $N\in \nan$, let $D_N$ be the density matrix of $\varphi\vert_{\caA_{[0,N-1]}}$, and assume $\sup_N\rank D_N<\infty$.
Assume that $(\Mat_{n_0^{(R)}}, \oo^{(R)},\rho_R)$ right-generates $\varphi$ and
that $(\Mat_{n_0^{(L)}}, \oo^{(L)},\rho_L)$ left-generates $\varphi$.
Then there exists an antiunitary 
$J:\cc^{n_0^{(R)}}\to\cc^{n_0^{(L)}}$ and $c\in\bbT$ such that
\[
\omega_{\mu}^{(R)}
=cJ^*\tilde\rho^{-\frac 12}\lmk \omega_\mu^{(L)}\rmk^*\tilde\rho^{\frac 12}J
\quad
\mu=1,\ldots,n.
\]
\end{lem}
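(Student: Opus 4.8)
The plan is to reflect everything about the point $-\tfrac12$ so that the two hypotheses become two descriptions of one and the same state by \emph{right}-generating triples, and then to apply the rigidity of minimal generating triples. Fix once and for all an orthonormal basis, so that transpose $X\mapsto X^{T}$ and complex conjugation $X\mapsto\overline X$ are defined, let $r\colon x\mapsto -1-x$ be the reflection $*$-automorphism of $\caA_{\bbZ}$, and write $\tilde\rho_R$ for the density matrix of $\rho_R$ and $\tilde\rho$ for that of $\rho_L$ (as in the statement). The first step is to show that $\varphi\circ r$ is right-generated both by $(\Mat_{n_0^{(L)}},\oo^{(L)},\rho_L)$ — this is immediate, since reflecting a left-generating triple gives a right-generating triple for $\varphi\circ r$ with the same matrices and the same invariant state — and by $(\Mat_{n_0^{(R)}},\oo'',\overline{\tilde\rho_R})$, where $\overline{\tilde\rho_R}=\tilde\rho_R^{T}$ (again a faithful density matrix) and $\omega''_\mu:=\overline{\tilde\rho_R}^{-1/2}(\omega^{(R)}_\mu)^{T}\overline{\tilde\rho_R}^{1/2}$. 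For the latter, using $(XY)^{T}=Y^{T}X^{T}$, $(X^{T})^{*}=\overline X$ and $\Tr(M)=\Tr(M^{T})$ one checks that $\sum_\mu\omega''_\mu(\omega''_\mu)^{*}=\unit$, that $\overline{\tilde\rho_R}$ is the faithful $T_{\oo''}$-invariant state, and (since transposition, passage to the predual, and similarity preserve the relevant spectral data) that $T_{\oo''}$ is a primitive CP map; the right-generating identity for $\varphi\circ r$ is then the computation that starts from the right-generating expansion of $\varphi$, uses translation invariance, applies $\Tr(M)=\Tr(M^{T})$ together with a cyclic permutation, and absorbs the resulting central factor $\overline{\tilde\rho_R}$ into the two ends of the matrix word via $\overline{\tilde\rho_R}^{-1/2}\overline{\tilde\rho_R}^{1/2}$.

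Now $\varphi\circ r$ is right-generated by two minimal standard triples (each being unital, with a faithful invariant state, and with $T$ primitive, so that the generated algebra is the full matrix algebra for large powers), and its reduced density matrices are unitary conjugates of the $D_N$, hence also of uniformly bounded rank. Theorem~\ref{thm:iso} then yields a $*$-isomorphism between $\Mat_{n_0^{(R)}}$ and $\Mat_{n_0^{(L)}}$ intertwining the corresponding $\bbE$-maps; as both triples are primitive, Lemma~\ref{lem:irpri} applied to this isomorphism gives $n_0^{(R)}=n_0^{(L)}$, a unitary $U\colon\cc^{n_0^{(R)}}\to\cc^{n_0^{(L)}}$ and $c\in\bbT$ with $\omega^{(L)}_\mu=c\,U\,\omega''_\mu\,U^{*}$ for all $\mu$. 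Since the $T_{\oo^{(L)}}$-invariant state is unique and $U\overline{\tilde\rho_R}U^{*}$ is $T_{\oo^{(L)}}$-invariant, we get $\tilde\rho=U\overline{\tilde\rho_R}U^{*}$, hence $\overline{\tilde\rho_R}^{\pm1/2}=U^{*}\tilde\rho^{\pm1/2}U$.

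Finally, substituting the formula for $\omega''_\mu$ and $\overline{\tilde\rho_R}^{\pm1/2}=U^{*}\tilde\rho^{\pm1/2}U$ into $\omega^{(L)}_\mu=c\,U\,\omega''_\mu\,U^{*}$ gives $\omega^{(L)}_\mu=c\,\tilde\rho^{-1/2}\,U(\omega^{(R)}_\mu)^{T}U^{*}\,\tilde\rho^{1/2}$. Taking adjoints (using $(X^{T})^{*}=\overline X$ and self-adjointness of $\tilde\rho^{\pm1/2}$) and rearranging yields $\overline{\omega^{(R)}_\mu}=c'\,U^{*}\,\tilde\rho^{-1/2}(\omega^{(L)}_\mu)^{*}\tilde\rho^{1/2}\,U$ for some $c'\in\bbT$. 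Writing $\overline{\omega^{(R)}_\mu}=K\omega^{(R)}_\mu K$ with $K$ the complex conjugation on $\cc^{n_0^{(R)}}$ in the fixed basis, and putting $J:=U K\colon\cc^{n_0^{(R)}}\to\cc^{n_0^{(L)}}$ (antiunitary, with $J^{*}=K U^{*}$), this rearranges to $\omega^{(R)}_\mu=c''\,J^{*}\,\tilde\rho^{-1/2}(\omega^{(L)}_\mu)^{*}\tilde\rho^{1/2}\,J$ for some $c''\in\bbT$, which is the assertion.

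The delicate part is the transpose/conjugate bookkeeping in the first and last steps: one must check that the gauge transformation legitimately moves $\overline{\tilde\rho_R}$ from the middle of the MPS word to the front, so that the right-generating normal form for states on $\caA_{\bbZ}$ (the analogue of Lemma~\ref{lem:fv}) is matched exactly, and that the residual complex conjugation is correctly absorbed together with $U$ into one antiunitary $J$, so that only $(\omega^{(L)}_\mu)^{*}$ and the density matrix $\tilde\rho$ of $\rho_L$ — not $\tilde\rho_R$ — appear at the end. A smaller point is that the rigidity inputs (Theorem~\ref{thm:iso} and Lemma~\ref{lem:irpri}) are phrased for right-generating triples, which is the reason for reflecting by $r$ at the outset.
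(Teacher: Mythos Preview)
Your argument is correct and follows essentially the same strategy as the paper: convert both hypotheses into two \emph{right}-generating minimal triples for one and the same state, then invoke the rigidity result (Theorem~\ref{thm:iso} together with Lemma~\ref{lem:irpri}, which is exactly what Lemma~\ref{lem:ouo} packages) to obtain a unitary and a phase, and finally absorb the residual complex conjugation into an antiunitary $J$.

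The execution differs in one cosmetic but instructive way. The paper keeps $\varphi$ fixed and modifies the \emph{left} triple: it takes the complex conjugation $J_0$ in the eigenbasis of $\tilde\rho$ (so that $J_0\tilde\rho J_0=\tilde\rho$) and sets $\tilde\omega^{(L)}_\mu:=J_0\tilde\rho^{-1/2}(\omega^{(L)}_\mu)^*\tilde\rho^{1/2}J_0$, checks directly that $(\Mat_{n_0^{(L)}},\tilde\oo^{(L)},\rho_L)$ right-generates $\varphi$, and then compares with $(\Mat_{n_0^{(R)}},\oo^{(R)},\rho_R)$ via Lemma~\ref{lem:ouo}. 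Because $J_0$ commutes with $\tilde\rho$, the target formula for $\omega^{(R)}_\mu$ falls out immediately with $J:=J_0V$. You instead reflect to $\varphi\circ r$ and modify the \emph{right} triple via transpose plus a gauge by $\overline{\tilde\rho_R}^{\pm1/2}$; this forces an extra step at the end---using uniqueness of the $T_{\oo^{(L)}}$-invariant state to identify $\tilde\rho=U\overline{\tilde\rho_R}U^{*}$---in order to eliminate $\tilde\rho_R$ and leave only $\tilde\rho$ in the final relation. Both routes buy the same conclusion; the paper's choice of $J_0$ in the $\tilde\rho$-eigenbasis is precisely what lets it skip that last identification.
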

\begin{proof}
As $\oo^{(L)}$ belongs to $\Primz_u(n,n_0^{(L)})$, the density matrix $\tilde\rho$
 is strictly positive and  has a decomposition $\tilde\rho=\sum_{i=1}^{n_0^{(L)}}\lambda_i
\ket{\xi_i}\bra{\xi_i}$ with a CONS $\{\xi_i\}_i$ of $\cc^{n_0^{(L)}}$ and 
$\lambda_i>0$.
Let $J_0$ be the operator of complex conjugation with respect to this CONS $\{\xi_i\}_i$.
Then $J_0$ is an antiunitary operator on $\cc^{n_0^{(L)}}$ such that $J_0^2=1$ and $J_0=J_0^*$.
Furthermore, we have 
\begin{align}\label{eq:rxj}
\rho_L\lmk J_0 X^* J_0\rmk
=
\Tr\tilde\rho\lmk J_0 X^* J_0\rmk=\Tr \tilde\rho X
=\rho_L\lmk X\rmk,\quad X\in \Mat_{n_0^{(L)}}.
\end{align}
Define $\tilde\oo^{(L)}=(\tilde\omega_\mu^{(L)})_{\mu=1}^n$ by
\[
\tilde\omega^{(L)}_\mu:= J_0 \tilde\rho^{-\frac 12}\lmk \omega_\mu^{(L)}\rmk^*\tilde\rho^{\frac 12}J_0,\quad
\mu=1,\ldots,n.
\]
As $\oo^{(L)}$ belongs to $\Primz_u(n,n_0^{(L)})$, 
and $\rho$ is $T_{\oo^{(L)}}$-invariant,
$\tilde \oo^{(L)}$ also belongs to $\Primz_u(n,n_0^{(L)})$. From (\ref{eq:rxj}) and the fact that $T_{\oo^{(L)}}$ is unital, $\rho_L$ is $T_{\tilde \oo^{(L)}}$-invariant.
We claim that $(\Mat_{n_0^{(L)}}, \tilde\oo^{(L)},\rho_L)$ right generates $\varphi$.
This can be seen by
\begin{align*}
&\rho_L\lmk
\tilde\omega^{(L)}_{\mu_{a}}\tilde\omega^{(L)}_{\mu_{a+1}}\cdots
\tilde\omega^{(L)}_{\mu_{a+l-1}}
\lmk \tilde\omega^{(L)}_{\nu_{a+l-1}}\rmk^*\cdots \lmk\tilde\omega^{(L)}_{\nu_{a+1}}\rmk^*
\lmk\tilde\omega^{(L)}_{\nu_{a}}\rmk^*
\rmk\\
&=\rho_L\lmk
 J_0 \tilde\rho^{-\frac 12}\lmk
\omega^{(L)}_{\mu_a}\rmk^*\tilde\rho^{\frac 12}J_0
J_0 \tilde\rho^{-\frac 12}\lmk \omega^{(L)}_{\mu_{a+1}}\rmk^*\tilde\rho^{\frac 12}J_0
\cdots J_0 \tilde\rho^{-\frac 12}\lmk \omega^{(L)}_{\mu_{a+l-1}}\rmk ^*\tilde\rho^{\frac 12}J_0
J_0\tilde\rho^{\frac 12} \omega^{(L)}_{\nu_{a+l-1}}\tilde\rho^{-\frac 12}J_0 
\cdots 
J_0\tilde\rho^{\frac 12}\omega^{(L)}_{\nu_a}\tilde\rho^{-\frac 12}J_0 
\rmk\\
&=\rho_L\lmk
 J_0 \tilde\rho^{-\frac 12}\lmk \omega^{(L)}_{\mu_a}\rmk^*\lmk \omega^{(L)}_{\mu_{a+1}}\rmk^*
\cdots\lmk  \omega^{(L)}_{\mu_{a+l-1}}\rmk^*\tilde\rho\omega^{(L)}_{\nu_{a+l-1}}
\cdots 
\omega^{(L)}_{\nu_{a+1}}
\omega^{(L)}_{\nu_a}\tilde\rho^{-\frac 12}J_0 
\rmk\\
&=\rho_L\lmk
\tilde\rho^{-\frac 12}\lmk \omega^{(L)}_{\nu_{a+l-1}}\cdots \omega^{(L)}_{\nu_{a+1}}
\omega^{(L)}_{\nu_a}\rmk^*
\tilde\rho\omega^{(L)}_{\mu_{a+l-1}}\cdots \omega^{(L)}_{\mu_{a+1}}
\omega^{(L)}_{\mu_a}\tilde\rho^{-\frac 12}
\rmk\\
&=\rho_L\lmk
\omega^{(L)}_{\mu_{a+l-1}}\cdots \omega^{(L)}_{\mu_{a+1}}
\omega^{(L)}_{\mu_a}
\lmk \omega^{(L)}_{\nu_{a+l-1}}\cdots \omega^{(L)}_{\nu_{a+1}}
\omega^{(L)}_{\nu_a}\rmk^*
\rmk=\varphi\lmk
\bigotimes_{i=a}^{a+l-1}e_{\mu_i\nu_i}^{(n)}
\rmk,
\end{align*}
for $a\in\bbZ$, $l\in\nan$, $\mu_i,\nu_i\in\{1,\ldots,n\}$.
For the third equality, we used (\ref{eq:rxj}).

By this observation, we can apply Lemma \ref{lem:ouo}
to $\oo^{(R)}$ and $\tilde \oo^{(L)}$. We then obtain
a unitary $V:\cc^{n_0^{(R)}}\to\cc^{n_0^{(L)}}$ and $c\in\bbT$
such that
\[
V\omega_{\mu}^{(R)}=c\tilde \omega_{\mu}^{(L)}V
=cJ_0\tilde\rho^{-\frac 12}\lmk \omega_\mu^{(L)}\rmk^*\tilde\rho^{\frac 12}J_0V
\quad
\mu=1,\ldots,n.
\]
The operator $J:=J_0V$ is an antiunitary from $\cc^{n_0^{(R)}}$ to $\cc^{n_0^{(L)}}$ such that
\[
\omega_{\mu}^{(R)}
=cJ^*\tilde\rho^{-\frac 12}\lmk \omega_\mu^{(L)}\rmk^*\tilde\rho^{\frac 12}J
\quad
\mu=1,\ldots,n.
\]
\end{proof}
\begin{lem}\label{lem:lrf}
Assume [A1],[A3],[A4], and [A5]. 
There exist $n_0\in\nan$,
$\oo\in \Primz_u(n,n_0)$, $k_L,k_R\in\nan\cup\{0\}$,
$\hat \vv^{(L)}\in \lmk \Mat_{{n_{0}}}\otimes\Mat_{k_L+1}\rmk^{\times n}$,
$\hat \vv^{(R)}\in \lmk \Mat_{{n_{0}}}\otimes\Mat_{k_R+1}\rmk^{\times n}$,
 $\hat \lal^{(L)}=(\hat  \lambda^{(L)}_b)_{b=0,\ldots,k_L}\in \cc^{k_L+1}$,
$\hat \lal^{(R)}=(\hat \lambda^{(R)}_a)_{a=0,\ldots,k_R}\in \cc^{k_R+1}$,
$\hat Y_L\in\UT_{0, k_L+1} $, $\hat Y_R\in\DT_{0, k_R+1}$,
$\hat D_b^{(L)}\in\UT_{0, k_L+1} $, $\hat D_a^{(R)}\in\DT_{0, k_R+1}$,
$b=1,\ldots,k_L$, $a=1,\ldots, k_R$, and 
$l_{0}\in\nan$,
satisfying the followings.
\begin{enumerate}
\item We have $[\Lambda_{\hat \lal^{(\sigma)}},\hat Y_\sigma]=0$, for $\sigma=L,R$.
\item For each $\sigma=L,R$, $\hat \lambda_0^{(\sigma)}=1$ and $0<\lv\hat \lambda_a^{(\sigma)}\rv<1$ for all $a\ge 1$.
\item There exist $\{\hat x_{\mu,b}^{(L)}\}_{\mu=1,\ldots,n, b=1,\ldots, k_L}\subset\mnz$
and $\{\hat x_{\mu,a}^{(R)}\}_{\mu=1,\ldots,n, a=1,\ldots, k_R}\subset\mnz$
such that 
\begin{align*}
&\hat v_{\mu}^{(L)}=
\omega_\mu\otimes \Lambda_{\hat \lal^{(L)}}\lmk \unit+\hat Y_L\rmk
+\sum_{b=1}^{k_L} \hat x_{\mu b}^{(L)}\otimes \Lambda_{\hat \lal^{(L)}}\lmk \unit+\hat Y_L\rmk \hat D_b^{(L)},\\
&\hat v_{\mu}^{(R)}=
\omega_\mu\otimes \Lambda_{\hat \lal^{(R)}}\lmk \unit+\hat Y_R\rmk
+\sum_{a=1}^{k_R} \hat x_{\mu a}^{(R)}\otimes \hat D_a ^{(R)}\Lambda_{\hat \lal^{(R)}}\lmk \unit+\hat Y_R\rmk.
\end{align*}
\item We have
\begin{align*}
&\Lambda_{\hat \lal^{(R)}} \hat D_a^{(R)}=\hat \lambda_{a}^{(R)}
\hat D_a^{(R)}\Lambda_{\hat \lal^{(R)}} \quad a=1,\ldots,k_R,\\
&\Lambda_{\hat \lal^{(L)}} \hat D_b^{(L)}=\lmk \hat \lambda_{b}^{(L)}\rmk^{-1}
\hat D_b^{(L)}\Lambda_{\hat \lal^{(L)}} \quad b=1,\ldots,k_L.
\end{align*} 
\item We have
\begin{align*}
\hat D_a^{(R)}E_{00}^{(0,k_R)}=E_{a0}^{(0,k_R)},\quad a=1,\ldots,k_R,\\
E_{00}^{(0,k_L)}\hat D_b^{(L)}=E_{0b}^{(0,k_L)},\quad b=1,\ldots,k_L.
\end{align*}
\item The set of matrices $\{\hat D^{(\sigma)}_a\}_{a=1}^{k_\sigma}\cup\{\unit_{k_\sigma+1} \}$
is linearly independent for each $\sigma=L,R$.
\item For any $a,a'=1,\ldots,k_\sigma$, $\sigma=L,R$, 
$\hat D^{(\sigma)}_a\hat D^{(\sigma)}_{a'}$ belongs to the linear span of $\{\hat D^{(\sigma)}_b\mid \hat \lambda^{(\sigma)}_a\hat \lambda^{(\sigma)}_{a'}=\hat \lambda^{(\sigma)}_b\}$.
\item
Set $\tilde \Lambda_\sigma:=\Lambda_{\hat \lal^{(\sigma)}}\lmk \unit+\hat Y_\sigma\rmk$, $\sigma=L,R$.
Then we have
\begin{align}
&\tilde \Lambda_R^l \hat D^{(R)}_a
=\sum_{a'=1}^{k_R}\braket{f_{a'}^{(0,k_R)}}{\tilde\Lambda_R^{l}f_a^{(0,k_R)}}\hat D^{(R)}_{a'}\tilde\Lambda_R^l, \quad\hat D^{(R)}_a\tilde \Lambda_R^l 
=\sum_{a'=1}^{k_R}\braket{f_{a'}^{(0,k_R)}}{\tilde\Lambda_R^{-l}f_a^{(0,k_R)}}\tilde \Lambda_R^l \hat D^{(R)}_{a'},\notag\\
&\hat D^{(L)}_b\tilde \Lambda_L^l 
=\sum_{b'=1}^{k_L}\braket{f_{b}^{(0,k_L)}}{\tilde\Lambda_L^lf_{b'}^{(0,k_L)}}\tilde\Lambda_L^l\hat D^{(L)}_{b'} ,\quad
\tilde \Lambda_L^l \hat D^{(L)}_b
=\sum_{b'=1}^{k_L}\braket{f_{b}^{(0,k_L)}}{\tilde\Lambda_L^{-l}f_{b'}^{(0,k_L)}}\hat D^{(L)}_{b'}\tilde \Lambda_L^l ,
\end{align}
for all $l\in\nan$ and  $a=1,\ldots,k_R$, $b=1,\ldots, k_L$.
\item
For any $l\ge l_0$,
\[
\left \{\zeij{\alpha\beta}\otimes \tilde \Lambda_R^l\mid \alpha,\beta=1,\ldots,n_0\right\}
\cup \left \{\zeij{\alpha\beta}\otimes \hat D^{(R)}_a\tilde \Lambda_R^l\mid \alpha,\beta=1,\ldots,n_0,a=1,\ldots,k_R\right\}
\]
is a basis of $\caK_{l}\lmk \hat \vv^{(R)}\rmk$, and 
\[
\left \{\zeij{\alpha\beta}\otimes \tilde \Lambda_L^l\mid \alpha,\beta=1,\ldots,n_0\right\}
\cup \left \{\zeij{\alpha\beta}\otimes\tilde \Lambda_L^l \hat D^{(L)}_b\mid \alpha,\beta=1,\ldots,n_0,b=1,\ldots,k_L\right\}
\]
is a basis of $\caK_{l}\lmk \hat \vv^{(L)}\rmk$.
\item For  the state $\omega_\sigma$ in  [A4] and $ l\in\nan$,
 $\tau_{y_\sigma}\lmk s(\omega_{\sigma}\vert_{\caA_{\sigma,l}})\rmk$ is equal to
the orthogonal projection onto $\Gamma_{l,\hat \vv^{(\sigma)}}^{(R)}\lmk \mnz\otimes \Mat_{k_\sigma+1}\rmk$,
where $y_R=0$ and $y_L=l$.
\item Let $\rho_0$ be the $T_{\oo}$-invariant state on $\mnz$. Then 
$(\mnz,\oo,\rho_0)$
 right-generates $\omega_\infty$.
\end{enumerate}
\end{lem}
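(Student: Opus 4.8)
The plan is to glue together three results already established. Lemma~\ref{lem:ohy} produces, for each edge $\sigma=R,L$, a septuplet $(n_0^{(\sigma)},k_\sigma,\oo^{(\sigma)},\vv^{(\sigma)},\lal^{(\sigma)},l_0^{(\sigma)},\{y^{(l,\sigma)}_{a,\alpha,\beta}\})$ satisfying \emph{Condition 6-0}, the $h_{a\sigma}$-normalization of its fourth assertion, and the fact that $(\Mat_{n_0^{(\sigma)}},\oo^{(\sigma)},\rho_\sigma)$ $\sigma$-generates $\omega_\infty$. Lemma~\ref{lem:dyl} deforms such a septuplet into the normal form with a diagonal $\Lambda_{\lal}(\unit+Y)$ and auxiliary matrices $D_a\in\DT_{0,k_\sigma+1}$. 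Lemma~\ref{lem:lrc} forces the two bulk data to coincide. So first I would apply Lemma~\ref{lem:ohy} to obtain the two septuplets; since the reduced density matrices of $\omega_\infty$ on $\caA_{[0,N-1]}$ have uniformly bounded rank by [A1] and Lemma~\ref{lem:vr}, and the two bulk triples $\sigma$-generate $\omega_\infty$ from the respective sides, Lemma~\ref{lem:lrc} applies: it yields $n_0^{(R)}=n_0^{(L)}=:n_0$, an antiunitary $J=J_0V:\cc^{n_0}\to\cc^{n_0}$ (with $J_0$ the complex conjugation in the eigenbasis $\{\xi_i\}$ of the density matrix $\tilde\rho$ of $\rho_L$) and $c\in\bbT$ relating $\oo^{(R)}$ to $\oo^{(L)}$. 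I then set $\oo:=\oo^{(R)}$ and $\rho_0:=\rho_R|_{\Mat_{n_0}}$, so that assertion 11 is exactly the third assertion of Lemma~\ref{lem:ohy} for $\sigma=R$.

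For the right edge I would apply Lemma~\ref{lem:dyl} directly to the $R$-septuplet, obtaining $R_R$, $Y_R,D^{(R)}_a\in\DT_{0,k_R+1}$, $x^{(R)}_{\mu a}\in\Mat_{n_0}$, and put $\hat\vv^{(R)}:=R_R\vv^{(R)}R_R^{-1}$, $\hat\lal^{(R)}:=\lal^{(R)}$, $\hat Y_R:=Y_R$, $\hat D^{(R)}_a:=D^{(R)}_a$, $\hat x^{(R)}_{\mu a}:=x^{(R)}_{\mu a}$, and $l_0:=\max\{l_0^{(R)},l_0^{(L)}\}$. Then assertions 1, 4, 5, 6, 7, 8, 9 for $\sigma=R$ are verbatim i.--viii. of Lemma~\ref{lem:dyl}; assertion 2 for $\sigma=R$ is (i) of \emph{Condition 5}; assertion 3 for $\sigma=R$ is viii. of Lemma~\ref{lem:dyl} after noting $D^{(R)}_a\Lambda_{\lal^{(R)}}(\unit+Y_R)=\hat D^{(R)}_a\Lambda_{\hat\lal^{(R)}}(\unit+\hat Y_R)$; and assertion 10 for $\sigma=R$ follows from the second assertion of Lemma~\ref{lem:ohy}, because conjugation by the invertible $R_R$ does not change the range $\Gamma^{(R)}_{l,\cdot}(\Mat_{n_0}\otimes\Mat_{k_R+1})$ (the same remark already used in the proof of Lemma~\ref{lem:ohy}).

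For the left edge I would apply Lemma~\ref{lem:dyl} to the $L$-septuplet, getting $R_L$, $Y^{(0)}_L,D^{(L,0)}_b\in\DT_{0,k_L+1}$ and $x^{(L,0)}_{\mu b}$ with $R_Lv^{(L)}_\mu R_L^{-1}=\omega^{(L)}_\mu\otimes\Lambda_{\lal^{(L)}}(\unit+Y^{(0)}_L)+\sum_b x^{(L,0)}_{\mu b}\otimes D^{(L,0)}_b\Lambda_{\lal^{(L)}}(\unit+Y^{(0)}_L)$, and then pass to the left-edge convention by the transpose anti-automorphism of $\Mat_{n_0}\otimes\Mat_{k_L+1}$ taken in the orthonormal basis $\{\xi_i\}\otimes\{f^{(0,k_L)}_a\}$. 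This transpose carries $\DT_{0,k_L+1}$ to $\UT_{0,k_L+1}$ and $E^{(0,k_L)}_{a0}$ to $E^{(0,k_L)}_{0a}$, turns $D\Lambda_{\lal^{(L)}}(\unit+Y^{(0)}_L)$ into $\Lambda_{\lal^{(L)}}(\unit+(Y^{(0)}_L)^T)D^T$ (using that $\Lambda_{\lal^{(L)}}$ is diagonal and $[\Lambda_{\lal^{(L)}},Y^{(0)}_L]=0$), and by Lemma~\ref{lem:lrc} turns $\omega^{(L)}_\mu$ into a unimodular multiple of a $(\tilde\rho^{1/2}V)$-conjugate of $\omega^{(R)}_\mu$. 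Conjugating the $\Mat_{n_0}$-factor by $(\tilde\rho^{1/2}V)^{-1}$ and rescaling by the unimodular constant (which changes only the $\hat x^{(L)}_{\mu b}$, leaving $\tilde\Lambda_L:=\Lambda_{\hat\lal^{(L)}}(\unit+\hat Y_L)$ with $\hat\lal^{(L)}:=\lal^{(L)}$, $\hat Y_L:=(Y^{(0)}_L)^T\in\UT_{0,k_L+1}$, $\hat D^{(L)}_b:=(D^{(L,0)}_b)^T\in\UT_{0,k_L+1}$, and the ranges $\caK_l$ and $\Gamma^{(R)}_{l,\cdot}(\cdot)$ untouched) yields $\hat\vv^{(L)}$ of the form required in assertion 3. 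Then assertions 1, 4, 5, 6, 7, 8 for $\sigma=L$ are the formal transposes of i.--vi. of Lemma~\ref{lem:dyl} (for instance $\Lambda_{\lal^{(L)}}D^{(L,0)}_b=\lambda^{(L)}_b D^{(L,0)}_b\Lambda_{\lal^{(L)}}$ transposes into $\Lambda_{\hat\lal^{(L)}}\hat D^{(L)}_b=(\hat\lambda^{(L)}_b)^{-1}\hat D^{(L)}_b\Lambda_{\hat\lal^{(L)}}$, and the first line of vi. transposes into the $\hat D^{(L)}_b\tilde\Lambda_L^l$-relation of assertion 8), assertions 2 and 9 for $\sigma=L$ are obtained as on the right, and assertion 10 for $\sigma=L$ follows from the second assertion of Lemma~\ref{lem:ohy} for $\sigma=L$ once the transpose is matched against the site-reversal built into $\Gamma^{(L)}$ versus $\Gamma^{(R)}$ and into $\tau_{y_L}\lmk s(\omega_L|_{\caA_{L,l}})\rmk$.

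The step I expect to be the main obstacle is precisely this left-edge bookkeeping: organizing the transpose-plus-antiunitary identification so that simultaneously (a) the triangularity comes out as $\UT$ and the auxiliary matrix $\hat D^{(L)}_b$ ends up on the correct side of $\Lambda_{\hat\lal^{(L)}}(\unit+\hat Y_L)$, as demanded by the statement and by \emph{Condition 5}; (b) the bulk tuple appearing in $\hat\vv^{(L)}$ is exactly $\oo=\oo^{(R)}$, the unimodular constant of Lemma~\ref{lem:lrc} having been absorbed without disturbing $\hat\lambda^{(L)}_0=1$; and (c) the range identity of assertion 10, phrased through $\Gamma^{(R)}$ for both edges, is reconciled with the $\Gamma^{(L)}$-statement produced by Lemma~\ref{lem:ohy}. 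Once these identifications are nailed down, the verification of all eleven items is a routine transcription of Lemma~\ref{lem:dyl} and of its transpose, together with the generation statements of Lemma~\ref{lem:ohy} and Lemma~\ref{lem:lrc}.
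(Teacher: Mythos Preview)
Your proposal is correct and follows essentially the same route as the paper's proof: apply Lemma~\ref{lem:ohy} for both edges, feed each septuplet into Lemma~\ref{lem:dyl}, take the right-edge output verbatim, and for the left edge pass through the transpose in the basis $\{\xi_i\}\otimes\{f_a^{(0,k_L)}\}$ followed by conjugation by $(\tilde\rho^{1/2}V)^{-1}$ and rescaling by $c$, using Lemma~\ref{lem:lrc} to align the bulk tuple with $\oo=\oo^{(R)}$. The paper encodes your transpose-plus-conjugation step compactly as $\hat v_\mu^{(L)}:=c\,(J^*\otimes J_L^*)(\tilde\rho^{-1/2}\otimes\unit)(R_Lv_\mu^{(L)}R_L^{-1})^*(\tilde\rho^{1/2}\otimes\unit)(J\otimes J_L)$ with $J=J_0V$ and $J_L$ the standard complex conjugation on $\cc^{k_L+1}$, and then declares the verification of items 1--11 ``straightforward'', so your more explicit accounting of the left-edge bookkeeping (triangularity flip, side of $\hat D_b^{(L)}$, absorption of the unimodular constant, and the $\Gamma^{(L)}$ versus $\Gamma^{(R)}$ match) is exactly the content hidden behind that word.
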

\begin{proof}
Recall the Notation \ref{nota:vv} and Notation \ref{nota:oa}.
Let $\sigma=L,R$.
We apply Lemma \ref{lem:dyl} to   
$\oo^{(\sigma)}\in \Primz_u(n,n_0^{(\sigma)})$,
$\vv^{(\sigma)}\in \lmk \Mat_{{n_{0}^{(\sigma)}}}\otimes\Mat_{k_\sigma+1}\rmk^{\times n}$,
 $\lal^{(\sigma)}=(\lambda^{(\sigma)}_a)_{a=0,\ldots,k_\sigma}\in \cc^{k_\sigma+1}$,
${l_{0}^{(\sigma)}}\in\nan$,
and 
 $\{y_{a,\alpha,\beta}^{(l,\sigma)}\}_{a=0,\ldots,k_\sigma,
\alpha,\beta=1,\ldots,{n_{0}^{(\sigma)}}}$
given by Lemma \ref{lem:ohy}.
We then obtain $R_\sigma\in\Mat_{n_0^{(\sigma)}}\otimes\Mat_{k_\sigma+1}$, $Y_\sigma\in \DT_{0,k_\sigma+1}$, and
$D_a^{(\sigma)}\in\DT_{0,k_\sigma+1}$,  $a=1,\ldots,k_\sigma$,
 satisfying the properties i-viii in Lemma  \ref{lem:dyl}.
Let $\rho_\sigma$ be the $T_{\oo^{(\sigma)}}$-invariant state for $\sigma=L,R$.

We set $\oo:=\oo^{(R)}$, $n_0:=n_0^{(R)}$
$\hat v_\mu^{(R)}:=R_R v_{\mu}^{(R)}R_R^{-1}$, $\mu=1,\ldots, n$,
$\hat\lal^{(R)}:=\lal^{(R)}$, $\hat Y_R:=Y_R$,
 $\hat D_a^{(R)}:= D_a^{(R)}$, and $l_0:=\max\{l_0^{(L)}, l_0^{(R)}\}$.

In order to define the left parts, we use Lemma \ref{lem:lrc}.
As $\omega_\infty$ is in $\caS_{\bbZ}\lmk H\rmk$, it satisfies the condition
in Lemma \ref{lem:lrc} for $\varphi$ because of [A1] and Lemma \ref{lem:vr}.
Recall  that $(\mnz,\oo,\rho_R\vert_{\mnz})$ right-generates $\omega_\infty$ and
that $(\Mat_{n_0^{(L)}},\oo^{(L)},\rho_L\vert_{\Mat_{n_0}^{(L)})}$ left-generates $\omega_\infty$.
Applying Lemma \ref{lem:lrc}, we obtain an antiunitary 
$J:\cc^{n_0}\to\cc^{n_0^{(L)}}$ and $c\in\bbT$ such that
\[
\omega_{\mu}
=cJ^*\tilde\rho^{-\frac 12}\lmk \omega_\mu^{(L)}\rmk^*\tilde\rho^{\frac 12}J
\quad
\mu=1,\ldots,n,
\]
where $\tilde \rho $ is a strictly positive element in $\Mat_{n_0^{(L)}}$.

Let $J_L:\bbC^{k_L+1}\to \bbC^{k_L+1}$ be the complex conjugation with respect to
the standard basis $\{f_{i}^{(0,k_L)}\}_{i=0,\ldots,k_L}$.
We set 
\[
\hat v_\mu^{(L)}
:=c
\lmk
J^*\otimes J_L^*
\rmk\lmk \tilde \rho^{-\frac 12}\otimes \unit \rmk
\lmk R_Lv_\mu^{(L)}R_L^{-1}
\rmk^*\lmk \tilde \rho^{\frac 12}\otimes \unit \rmk
\lmk
J\otimes J_L
\rmk
,\quad \mu=1,\ldots,k_L.
\]
Furthermore, we set 
$\hat\lal^{(L)}:=\lal^{(L)}$,  $\hat D_a^{(L)}:= \lmk D_a^{(L)}\rmk^t$, 
$\hat Y_L:=Y_L^t$.
It is straightforward to check that all the conditions in the Lemma are satisfied.
\end{proof}
We make a further simplification.
\begin{lem}
In Lemma \ref{lem:lrf}, we may assume $\hat \lal^{(\sigma)}$ to satisfy
\begin{align}\label{eq:ordl}
1=\lv\hat \lambda_{0}^{(L)}\rv>
\lv\hat \lambda_{1}^{(L)}\rv\ge \lv\hat \lambda_{2}^{(L)}\rv\cdots
\ge \lv\hat \lambda_{k_L}^{(L)}\rv,\notag\\
 1=\lv\hat \lambda_{0}^{(R)}\rv>\lv\hat \lambda_{1}^{(R)}\rv\ge \lv\hat \lambda_{2}^{(R)}\rv\cdots
\ge \lv\hat \lambda_{k_R}^{(R)}\rv
\end{align}
\end{lem}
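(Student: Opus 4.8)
The plan is to show that the ordering~(\ref{eq:ordl}) can be achieved by a permutation of the basis vectors $\{f_a^{(0,k_\sigma)}\}$ that respects the lower/upper triangular structure, and to verify that this permutation preserves all the properties i--xi of Lemma~\ref{lem:lrf}. First I would treat $\sigma = R$ (the case $\sigma=L$ being the transpose). Fix a permutation $\pi$ of $\{0,1,\ldots,k_R\}$ with $\pi(0)=0$ such that $\lv\hat\lambda^{(R)}_{\pi(0)}\rv \ge \lv\hat\lambda^{(R)}_{\pi(1)}\rv \ge \cdots \ge \lv\hat\lambda^{(R)}_{\pi(k_R)}\rv$; this is possible since $\lv\hat\lambda_0^{(R)}\rv=1$ is strictly the largest by property~2. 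Let $P_\pi \in \Mat_{k_R+1}$ be the permutation matrix sending $f_a^{(0,k_R)}$ to $f_{\pi^{-1}(a)}^{(0,k_R)}$, and consider the conjugation $X \mapsto (\unit_{n_0}\otimes P_\pi)\, X\, (\unit_{n_0}\otimes P_\pi)^*$ applied to $\hat v^{(R)}_\mu$, $\hat D^{(R)}_a$, $\Lambda_{\hat\lal^{(R)}}$, $\hat Y_R$, and all the $\caK_l$ bases. Simultaneously relabel the index $a$ of $\hat D^{(R)}_a$, $\hat\lambda^{(R)}_a$, $\hat x^{(R)}_{\mu a}$ according to $\pi$. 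The issue is of course that a permutation conjugation does not in general preserve $\DT_{k_R+1}$ — but the crucial point is that the conjugated $\Lambda_{\hat\lal^{(R)}}$ is still diagonal (it is $\Lambda$ with entries reordered), so the block-triangularity relations in properties 4,5,7,8 get reinterpreted correctly with respect to the new diagonal.

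The main technical care is in re-establishing the strictly-lower-triangular normalizations, properties~5 and the form in property~3. After conjugating by $P_\pi$, the new $\hat D^{(R)}_a$ need not be in $\DT_{0,k_R+1}$ and need not satisfy $\hat D^{(R)}_a E_{00}^{(0,k_R)} = E_{a0}^{(0,k_R)}$. To fix this, I would observe that property~4 (in its permuted form $\Lambda\hat D_a = \hat\lambda_a \hat D_a\Lambda$ with $\Lambda$ now the reordered diagonal and $\lv\hat\lambda_a\rv < 1$ for $a\ge 1$) still forces $\hat D_a$ to have nonzero entries only in positions $(p,q)$ with $\hat\lambda_p = \hat\lambda_a\hat\lambda_q$, and in particular $\hat D_a E_{00} = \hat D_a f_0 f_0^* $ lives in the $f_0$-eigenspace scaled by $\hat\lambda_a$, i.e.\ in $\mathrm{span}\{f_p : \hat\lambda_p = \hat\lambda_a\}$. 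One then re-selects, for each $a\ge 1$, a single basis vector $f_{a}$ within the eigenvalue-$\hat\lambda_a$ block and rescales/recombines (using the linear independence in property~6 and the algebra relation in property~7) so that $\hat D_a f_0 = f_a$ exactly; this is the same bookkeeping already performed in the proof of Lemma~\ref{lem:dyl}, now applied once more after the permutation. Alternatively, and more cleanly, I expect the paper simply notes that the construction of Lemma~\ref{lem:lrf} went through Lemma~\ref{lem:hie}'s hierarchical decomposition $\caK_{0\sigma}\subsetneq\cdots\subsetneq\caK_{k_\sigma,\sigma}$, which carries no intrinsic ordering of the ``layers'' $r_{a\sigma}$; choosing the CONS $\{f_a^{(0,k_\sigma)}\}$ and the enumeration $a=0,\ldots,k_\sigma$ so that $\lv\lambda_a\rv$ is nonincreasing is simply a different admissible choice at that stage, and all subsequent lemmas apply verbatim.

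Concretely, the order of steps I would carry out is: (1) note that $\lambda_0^{(\sigma)}=1$ is the unique modulus-one eigenvalue by property~2, so layer $a=0$ stays first; (2) among the remaining layers, pick the enumeration making $\lv\lambda_a^{(\sigma)}\rv$ nonincreasing, redoing the choice of $c_{a\sigma}$, $t_{a\sigma}$, $V_{a\sigma}$ in Notation~\ref{nota:oa} and of the CONS $g_\alpha^{(a)}$ with the new labels; (3) observe that every statement in Lemmas~\ref{lem:yaab}, \ref{lem:ohy}, \ref{lem:c6in}, \ref{lem:hy}, \ref{lem:dyl}, \ref{lem:lrc}, \ref{lem:lrf} is invariant under relabeling of the index $a$, since none of them used the ordering of $\{\lambda_a\}$; (4) conclude that the output septuplets and the matrices $\hat\vv^{(\sigma)}, \hat D^{(\sigma)}_a, \hat Y_\sigma, \hat\lal^{(\sigma)}$ produced by that run of the argument satisfy properties~1--11 of Lemma~\ref{lem:lrf} together with the ordering~(\ref{eq:ordl}). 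The only genuine obstacle is verifying invariance under relabeling for the inductive construction in Lemma~\ref{lem:c6in}, since there ``Condition 6-$i$'' refers to the grading by $E_{aa}^{(0,k)}\Mat_{k+1}E_{a'a'}^{(0,k)}$ with $a-a'\ge i+1$, which depends on the numerical values of the indices rather than just the eigenvalues; however, since we only permute within the already-fixed enumeration and never mix the $a=0$ layer, and since the induction terminates at $i=k_\sigma$ regardless of labeling, this grading issue is cosmetic — one can equally run the whole induction in the new enumeration from the start. I would present this as a short paragraph invoking ``the same argument, with the layers enumerated so that $\lv\lambda_a\rv$ is nonincreasing.''
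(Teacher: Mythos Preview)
Your permutation-conjugation idea is what the paper does, but you have misidentified the obstruction, and your preferred ``cleaner'' second approach does not work. The latter fails because the sequence $\caK_{0\sigma}\subsetneq\caK_{1\sigma}\subsetneq\cdots\subsetneq\caK_{k_\sigma,\sigma}$ of Lemma~\ref{lem:hie} is a \emph{nested} chain of $v_{\mu\sigma}^*$-invariant subspaces, not a direct-sum decomposition: the index $a$ is the depth in this chain and the $\lambda_a^{(\sigma)}$ are determined by it, so ``choosing a different enumeration'' would require exhibiting a different invariant chain with the desired eigenvalue order, which you have not done.

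In the first approach you worry about the wrong things. The normalization $\hat D_a E_{00}=E_{a0}$ survives automatically once the label $a$ is permuted along with the basis (and $0$ is fixed), and triangularity of $\hat D_a$ is also automatic: from $\Lambda\hat D_a=\hat\lambda_a\hat D_a\Lambda$, any nonzero $(p,q)$-entry forces $|\hat\lambda_p|=|\hat\lambda_a|\,|\hat\lambda_q|<|\hat\lambda_q|$, so a sort by decreasing modulus already places $p$ strictly below $q$. The genuine obstruction is $\hat Y_\sigma$, which you never mention: since $[\Lambda_{\hat\lal^{(\sigma)}},\hat Y_\sigma]=0$, its nonzero entries connect indices with \emph{equal} eigenvalue, and an arbitrary modulus-sort can reverse their relative order and destroy the strict triangularity required in Lemma~\ref{lem:lrf}. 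The paper handles this by constructing the permutation as an explicit product of adjacent transpositions---a stable sort for a total order $\preceq$ on $\bbC$ refining modulus by argument---so that indices sharing an eigenvalue retain their original relative order; triangularity of the conjugated $\tilde Y_\sigma$ is then immediate, and that of the conjugated $\tilde D_b$ is verified by an induction over the transpositions using the intertwining relation of property~{\it 4}.
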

\begin{proof}
We prove for $\sigma=L$. The proof for $\sigma=R$ is the same.
As there is nothing to prove if $k_L=0$, we may assume $k_L\in\nan$.
We define an order $\preceq$ on $\cc$, by
$\lambda\preceq \zeta$ if and only if
$|\lambda|<|\zeta|$ or $|\lambda|=|\zeta|$ and 
$\lambda=|\lambda|e^{i\theta}$, $\zeta=|\zeta|e^{i\varphi}$, with
$0\le \theta\le\varphi<2\pi$.
We write $\lambda\prec \zeta$ if $\lambda\preceq \zeta$ and $\lambda\neq \zeta$.
For $\hat\lal^{(L)}$ given in  Lemma \ref{lem:lrf}, we denote the disjoint elements of $\{\hat\lambda_{i}^{(L)}\}_{i=0}^{k_L}$ by 
$s_t$, $t=0,\ldots m$, $m\in\nan$, which is ordered as
$s_m\prec\cdots\prec s_2\prec s_1\prec s_0=1$.
For each $t=0,\ldots,m$, we define a set ${\mathfrak S}_t$ by
${\mathfrak S}_t:=\left \{
i=0,\ldots,k_L\mid \hat \lambda_{i}^{(L)}=s_t
\right \}$, and set $n_t:=|{\mathfrak S}_t|$, the number of elements in ${\mathfrak S}_t$.
Furthermore, we label the elements in ${\mathfrak S}_t$
as ${i_1^{(t)}}, {i_2^{(t)}}, \ldots,{i_{n_t}^{(t)}}$,
with labels ordered so that
${i_1^{(t)}}<{i_2^{(t)}}<\cdots <{i_{n_t}^{(t)}}$.
For each $i=0,\ldots,k_L$, there exists a unique pair, $(t,j)$, $t=0,\ldots,m$ and $j=1,\ldots,n_t$
such that $i=i_{j}^{(t)}$. We use this label to explain the rearrangement of 
$0,\ldots,k_L$. 
We permute $0,\ldots,k_L$ as follows.
We know that $0$ corresponds to $(0,1)$ with respect to the label, and from the beginning,
it is located at the first position.
First, we move $(1,1)$ to left one by one up to when 
it get next to $(0,1)$.
Second, we move $(1,2)$ to left one by one up to when 
it gets next to $(1,1)$.
We continue this up to when 
the first $n_1+1$ elements of the sequence become
$(0,0), (1,1), (1,2), \ldots (1, n_1)$.
When this is completed we move $(2,1)$, to left one by one up to when 
it gets next to $(1,n_1)$.
Repeating the same procedure for $(2,j)$, $j=1,\ldots,n_2$, 
the sequence is rearranged so that 
the first $n_1+n_2+1$ elements of the sequence become
$(0,0), (1,1), (1,2), \ldots (1, n_1),(2,1),\ldots,(2,n_2)$.
We repeat this procedure up to $t=m$ and obtain
the rearranged sequence 
$(0,0), (1,1), (1,2), \ldots (1, n_1),(2,1),\ldots,(2,n_2),\ldots, (m,1), \ldots,(m,n_m)$.
We denote by $S_{k_L+1}$ the permutation group of $\{0,\ldots,k_L\}$ .
Note that the above procedure consists of
the sequence of transpositions $\sigma_k$, $k=1,\ldots,N$ of the form
$\sigma_k=(i_k,i_k+1)\in S_{k_L+1}$ with $i_k\in\{1,\ldots,k_L-1\}$.
Furthermore these transpositions satisfy $\sigma_k(0)=0$, 
\begin{align}\label{eq:ls}
\hat\lambda^{(L)}_{\lmk \sigma_{k-1}\circ\cdots \circ \sigma_1\rmk^{-1}\lmk i_k\rmk}
\prec \hat\lambda^{(L)}_{\lmk \sigma_{k-1}\circ\cdots \circ \sigma_1\rmk^{-1}\lmk i_k+1\rmk},
\end{align}
and 
\begin{align}\label{eq:tij}
\lmk \sigma_k\circ\cdots\circ\sigma_1\rmk\lmk i_j^{(t)}\rmk
<\lmk \sigma_k\circ\cdots\circ\sigma_1\rmk\lmk i_{j'}^{(t)}\rmk,\quad
t=0,\ldots,m,\quad 1\le j<j'\le n_t,\quad k=1,\ldots,N .
\end{align}
Define $\sigma:=\sigma_N\circ\cdots\circ\sigma_1$.
Then we have
\[
\tilde\lal^{(L)}:=\lmk\tilde \lambda_{i}^{(L)}\rmk_{i=0,\ldots,k_L}:=
\lmk\hat \lambda^{(L)}_{\sigma^{-1}(0)}, \hat \lambda^{(L)}_{\sigma^{-1}(1)},\ldots,
\hat \lambda^{(L)}_{\sigma^{-1}(k_L)}\rmk
=\lmk
\hat\lambda^{(L)}_{0}, \hat\lambda^{(L)}_{i_1^{(1)}},\ldots,\hat\lambda^{(L)}_{i_{n_1}^{(1)}},
\hat\lambda^{(L)}_{i_1^{(2)}},\ldots,\hat\lambda^{(L)}_{i_{n_2}^{(2)}},\ldots,
\hat\lambda^{(L)}_{i_1^{(m)}},\ldots,\hat\lambda^{(L)}_{i_{n_m}^{(m)}}
\rmk.
\]
Note that
\[
1=\lv\tilde \lambda_{0}^{(L)}\rv>
\lv\tilde \lambda_{1}^{(L)}\rv\ge \lv\tilde \lambda_{2}^{(L)}\rv\cdots
\ge \lv\tilde \lambda_{k_L}^{(L)}\rv.
\]

For each $\sigma'\in S_{k_L+1}$, define $U_{\sigma'}\in \Mat_{k_L+1}$
such that $\braket{f_{i}^{(0,k_L)}}{U_{\sigma'}f_{j}^{(0,k_L)}}:=\delta_{\sigma'(i),j}$,
$i,j=0,\ldots,k_L$.
It is easy to check
\begin{align}
 \braket{f_{i}^{(0,k_L)}}{U_{\sigma'}^* XU_{\sigma'} f_{j}^{(0,k_L)}}
=\braket
{f_{\lmk \sigma'\rmk^{-1}(i)}^{(0,k_L)}}
{Xf_{\lmk \sigma'\rmk^{-1}(j)}^{(0,k_L)}},\quad i,j=0,\ldots,k_L,\quad X\in \Mat_{k+1}.
\end{align}
In particular, $U_{\sigma'}$ is unitary, and
$U_\sigma^*\Lambda_{\hat \lal^{(L)}}U_\sigma=\Lambda_{\tilde\lal^{(L)}}$.
Furthermore, we have $U_{\sigma'_2}U_{\sigma'_1}=U_{\sigma_1'\circ\sigma_2'}$,
for any $\sigma_1',\sigma_2'\in S_{k_L+1}$.

We set
\[
\tilde Y_L:=U_\sigma^* \hat Y_L U_\sigma,\quad
 \tilde D_b^{(L)}:=U_\sigma^* \hat D_{\sigma^{-1}\lmk b\rmk}^{(L)}U_\sigma,\quad
\tilde v_{\mu}^{(L)}:=\lmk\unit\otimes U_\sigma^*\rmk \hat  v_{\mu}^{(L)}\lmk\unit\otimes U_\sigma\rmk.
\]
It is straightforward to check that $n_0\in\nan$,
$\oo\in \Primz_u(n,n_0)$, $k_L\in\nan\cup\{0\}$ of Lemma \ref{lem:lrf},
$\tilde \vv^{(L)}\in \lmk \Mat_{{n_{0}}}\otimes\Mat_{k_L+1}\rmk^{\times n}$,
 $\tilde \lal^{(L)}=(\tilde  \lambda^{(L)}_b)_{b=0,\ldots,k_L}\in \cc^{k_L+1}$,
$\tilde Y_L\in\Mat_{k_L+1} $,
$\tilde D_b^{(L)}\in\Mat_{ k_L+1} $,
$b=1,\ldots,k_L$, and 
$l_{0}\in\nan$ (given in Lemma \ref{lem:lrf}),
satisfy the "left  part" of  {\it 1-11} of Lemma \ref{lem:lrf},
(replacing $\hat{} $ by $\tilde{}$.)

We have to prove that $\tilde Y_L, \tilde D_b\in\UT_{0,k_L+1}$.
To prove $\tilde Y_L\in\UT_{0,k_L+1}$, note that
\[
\braket
{f_{i}^{(0,k_L)}}
{\tilde Y_Lf_{j}^{(0,k_L)}}
=
\braket
{f_{\lmk \sigma\rmk^{-1}(i)}^{(0,k_L)}}
{\hat Y_Lf_{\lmk \sigma\rmk^{-1}(j)}^{(0,k_L)}}
,\quad i,j=0,\ldots,k_L.
\]
The right hand side is zero if $\hat\lambda^{(L)}_{\lmk \sigma\rmk^{-1}(i)}\neq \hat\lambda^{(L)}_{\lmk \sigma\rmk^{-1}(j)}$.
If $\hat\lambda^{(L)}_{\lmk \sigma\rmk^{-1}(i)}= \hat\lambda^{(L)}_{\lmk \sigma\rmk^{-1}(j)}$ and
$i\ge j$, then by (\ref{eq:tij}), we have
$\lmk \sigma\rmk^{-1}(i)\ge \lmk \sigma\rmk^{-1}(j)$.
As $\hat Y_L\in \UT_{0,k_L+1}$, in this case, the right hand side of the above equation is zero.
Therefore, we obtain $\tilde Y_L\in \UT_{0,k_L+1}$.

Next, to prove $\tilde D_b\in \UT_{0,k_L+1}$, note that
\begin{align*}
 \braket{f_{i}^{(0,k_L)}}{U_{\sigma}^* \hat D_b^{(L)}U_{\sigma} f_{j}^{(0,k_L)}}
=\braket
{f_{\lmk \sigma\rmk^{-1}(i)}^{(0,k_L)}}
{\hat D_b^{(L)}f_{\lmk \sigma\rmk^{-1}(j)}^{(0,k_L)}},\quad i,j=0,\ldots,k_L.
\end{align*}
We consider the following proposition for $k=0,\ldots,N$:
\begin{align*}
(P_k)  : \text{ If } 0\le j\le i\le k_L , \text{ then } 
\braket
{f_{\lmk \sigma_k\circ\cdots\circ\sigma_1\rmk^{-1}(i)}^{(0,k_L)}}
{\hat D_b^{(L)}f_{\lmk\sigma_k\circ\cdots\circ\sigma_1\rmk^{-1}(j)}^{(0,k_L)}}
 =0.
\end{align*}
(Regard $\sigma_k\circ\cdots\circ\sigma_1$ to be identity for $k=0$.)
It suffices to prove $(P_N)$.
$(P_0)$ is true, for $\hat D_b^{(L)}\in\UT_{0,k_L+1}$.

Assume that $(P_{k-1})$ holds for some $1\le k\le N$.
We claim that $(P_{k})$ holds.
Consider $ 0\le j\le i\le k_L$. If $i=j$, then
$\braket
{f_{\lmk \sigma_k\circ\cdots\circ\sigma_1\rmk^{-1}(i)}^{(0,k_L)}}
{\hat D_b^{(L)}f_{\lmk\sigma_k\circ\cdots\circ\sigma_1\rmk^{-1}(i)}^{(0,k_L)}}
$
is zero because the diagonal elements of $\hat D_b^{(L)}$ are zero.
Suppose that $j<i$.
Note that
\begin{align}\label{eq:skk}
\braket
{f_{\lmk \sigma_k\circ\cdots\circ\sigma_1\rmk^{-1}(i)}^{(0,k_L)}}
{\hat D_b^{(L)}f_{\lmk\sigma_k\circ\cdots\circ\sigma_1\rmk^{-1}(j)}^{(0,k_L)}}
=\braket
{f_{\lmk \sigma_{k-1}\circ\cdots\circ\sigma_1\rmk^{-1}\circ\sigma_k^{-1}(i)}^{(0,k_L)}}
{\hat D_b^{(L)}f_{\lmk\sigma_{k-1}\circ\cdots\circ\sigma_1\rmk^{-1}\circ\sigma_k^{-1}(j)}^{(0,k_L)}}.
\end{align}
Recall that $\sigma_k=(i_k,i_k+1)$.
If $i,j\notin \{i_k,i_k+1\}$, then
$\sigma_k^{-1}(i)=i>j=\sigma_{k}^{-1}(j)$.
If $i\in \{i_k,i_k+1\}$ and  $j\notin \{i_k,i_k+1\}$
(resp. $j\in \{i_k,i_k+1\}$ and  $i\notin \{i_k,i_k+1\}$), then 
\[
\sigma_k^{-1}(i)=i_k \text{ or  } i_k+1>j=\sigma_{k}^{-1}(j),\quad
(\text{ resp. }\sigma_k^{-1}(j)=i_k \text{ or  } i_k+1<i=\sigma_{k}^{-1}(i)).
\]
Therefore, if $i\notin \{i_k,i_k+1\}$ or $j\notin \{i_k,i_k+1\}$,
we have $\sigma_k^{-1}(i)>\sigma_{k}^{-1}(j)$, and 
the assumption $(P_{k-1})$ implies the right hand side of (\ref{eq:skk}) to be zero.

Let us consider the case $i,j\in \{i_k,i_k+1\}$, i.e., $i=i_k+1$ and $j=i_k$.
By (\ref{eq:ls})
and the definition of $\prec$ combined with the fact that $\lv\hat\lambda_{b}^{(L)}\rv<1$,  we have
\[
\hat\lambda^{(L)}_{\lmk \sigma_{k-1}\circ\cdots \circ \sigma_1\rmk^{-1}\lmk i_k+1\rmk}
\neq \hat\lambda^{(L)}_{\lmk \sigma_{k-1}\circ\cdots \circ \sigma_1\rmk^{-1}\lmk i_k\rmk}\hat\lambda_{b}^{(L)}.
\]
Therefore, in this case, the right hand side of  (\ref{eq:skk}) is
\begin{align*}
&\braket
{f_{\lmk \sigma_{k-1}\circ\cdots\circ\sigma_1\rmk^{-1}\circ\sigma_k^{-1}(i)}^{(0,k_L)}}
{\hat D_b^{(L)}f_{\lmk\sigma_{k-1}\circ\cdots\circ\sigma_1\rmk^{-1}\circ\sigma_k^{-1}(j)}^{(0,k_L)}}
=\braket
{f_{\lmk \sigma_{k-1}\circ\cdots\circ\sigma_1\rmk^{-1}\circ\sigma_k^{-1}(i_{k}+1)}^{(0,k_L)}}
{\hat D_b^{(L)}f_{\lmk\sigma_{k-1}\circ\cdots\circ\sigma_1\rmk^{-1}\circ\sigma_k^{-1}(i_k)}^{(0,k_L)}}\\
&=\braket
{f_{\lmk \sigma_{k-1}\circ\cdots\circ\sigma_1\rmk^{-1}\lmk i_{k}\rmk}^{(0,k_L)}}
{\hat D_b^{(L)}f_{\lmk\sigma_{k-1}\circ\cdots\circ\sigma_1\rmk^{-1}\lmk i_k+1\rmk}^{(0,k_L)}}=0,
\end{align*}
by property {\it 4} of Lemma \ref{lem:lrf}.
This completes the proof of the induction $(P_{k})$.
\end{proof}
\section{Derivation of $\bb\in\ClassA$}\label{sec:nana}
In this section, we prove the following Lemma.
\begin{lem}\label{lem:dercla}
Assume [A1],[A3],[A4], and [A5]. 
Then there exists $\bb\in\Class A$ with respect to a septuplet $(n_0,k_R,k_L,\lal, \bbD,\bbG,Y)$, 
satisfying the following conditions.
\begin{enumerate}
\item For $m'\ge m_\bb$, we have $\caS_{[0,\infty)}(H_{\Phi_{m',\bb}})=\caS_{[0,\infty)}(H)$, and 
$\caS_{(-\infty,-1]}(H_{\Phi_{m',\bb}})=\caS_{(-\infty,-1]}(H)$.
\item For  the state $\omega_L$ in  [A4] ,
 $\tau_{l}\lmk s(\omega_{L}\vert_{\caA_{L,l}})\rmk$ is equal to the orthogonal
projection onto $\Gamma_{l,\bb}^{(R)}\lmk \mnz\otimes \pd \Mat_{k_L+k_R+1}\pd \rmk$, for all $l\in\nan$.
\item   For  the state $\omega_R$ in  [A4] .
 $s(\omega_{R}\vert_{\caA_{R,l}})$ is equal to
the orthogonal projection onto \\
$\Gamma_{l,\bb}^{(R)}\lmk \mnz\otimes \pu \Mat_{k_L+k_R+1}\pu\rmk$, for all $l\in\nan$.
\item We have $\omega_\infty=\omega_{\bb,\infty}$, where $\omega_{\bb,\infty}$ is given in Lemma 3.16 of Part I.
\end{enumerate}
\end{lem}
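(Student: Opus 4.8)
The plan is to glue the left- and right-edge data $\hat\vv^{(L)}$, $\hat\vv^{(R)}$ produced in Lemma~\ref{lem:lrf} (in its normalized form, with $\hat\lal^{(\sigma)}$ ordered by non-increasing modulus) into a single $n$-tuple of matrices over $\Mat_{n_0}\otimes\Mat_{k_R+k_L+1}$, and to recognize the result as an element $\bb\in\ClassA$ with respect to a combined septuplet $(n_0,k_R,k_L,\lal,\bbD,\bbG,Y)$. The key device is the block decomposition of $\Mat_{k_R+k_L+1}$: the indices $0,\dots,k_R$ carry the right structure, the indices $0,k_R+1,\dots,k_R+k_L$ carry the left structure, the two blocks overlapping only in the distinguished index $0$; writing $\pu,\pd$ for the projections onto these two index-sets, we have $\Mat_{k_R+1}\cong\pu\Mat_{k_R+k_L+1}\pu$ and $\Mat_{k_L+1}\cong\pd\Mat_{k_R+k_L+1}\pd$.

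First I would assemble the data. Put $\lambda_a:=\hat\lambda_a^{(R)}$ for $0\le a\le k_R$ and $\lambda_{k_R+b}:=\hat\lambda_b^{(L)}$ for $1\le b\le k_L$ (consistent because $\hat\lambda_0^{(R)}=\hat\lambda_0^{(L)}=1$); let $Y$ be $\hat Y_R$ on the right block and $\hat Y_L$ on the left block (consistent on index $0$ since both are strictly triangular, hence vanish at index $0$ appropriately); and let $\bbD,\bbG$ be the families $\{\hat D_a^{(R)}\}_a$, $\{\hat D_b^{(L)}\}_b$ embedded via $\pu,\pd$, with the assignment of which family is $\bbD$ and which is $\bbG$ dictated by the conventions of Part~I. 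Then define $\bb=(B_\mu)_{\mu=1}^n$ by requiring $(\unit\otimes\pu)B_\mu(\unit\otimes\pu)=\hat v_\mu^{(R)}$ and $(\unit\otimes\pd)B_\mu(\unit\otimes\pd)=\hat v_\mu^{(L)}$ and $B_\mu$ supported on the union of the two blocks; this is well defined because the two compressions agree on the common index, where both equal $\omega_\mu\otimes E_{00}$.

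Next I would verify $\bb\in\ClassA$ by translating items~{\it 1--9} of Lemma~\ref{lem:lrf} into the clauses defining $\ClassA$ in Part~I: {\it 1} gives $[\Lambda_\lal,Y]=0$; {\it 2} together with the ordering normalization gives the spectral requirements on $\lal$; {\it 3} is precisely the prescribed form of $B_\mu$ in terms of $\omega_\mu$, the $\hat x$'s, $\Lambda_\lal(\unit+Y)$ and $\bbD,\bbG$; {\it 4,5,7,8} are the intertwining, triangularity and GHU/GHD multiplicativity relations on the two blocks; {\it 6} is the linear independence required of the septuplet data; and {\it 9} identifies $\caK_l(\bb)$, block by block, with the span prescribed in the definition. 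The crucial consistency point — that the index-$0$ diagonal block of $\bb$ equals $\oo$ from either side — is forced by Lemma~\ref{lem:lrf} and ultimately by Lemma~\ref{lem:lrc}, which supplied the antiunitary identifying $\oo^{(L)}$ with $\oo^{(R)}=\oo$; together with item~{\it 12} of Lemma~\ref{lem:lrf} this also records that $(\Mat_{n_0},\oo,\rho_0)$ right-generates $\omega_\infty$.

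Finally I would read off conclusions {\it 1--4}. By Lemma~\ref{lem:fv}, $\vv_R$ and $\vv_L$ of Notation~\ref{nota:vv} represent $\caS_{[0,\infty)}(H)$ and $\caS_{(-\infty,-1]}(H)$; the deformations of Section~\ref{sec:roku} carry $\vv_\sigma$ to $\hat\vv^{(\sigma)}$ by similarity, which does not change the represented set of states, and by the corresponding description in Part~I the $\pu$- (resp.\ $\pd$-) compression of $\bb$ — which is $\hat\vv^{(R)}$ (resp.\ $\hat\vv^{(L)}$) — represents $\caS_{[0,\infty)}(H_{\Phi_{m',\bb}})$ (resp.\ $\caS_{(-\infty,-1]}(H_{\Phi_{m',\bb}})$) for $m'\ge m_\bb$; this gives {\it 1}. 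Conclusions {\it 2,3} follow from item~{\it 10} of Lemma~\ref{lem:lrf} once one checks $\Gamma_{l,\bb}^{(R)}(\Mat_{n_0}\otimes\pu\Mat_{k_R+k_L+1}\pu)=\Gamma_{l,\hat\vv^{(R)}}^{(R)}(\Mat_{n_0}\otimes\Mat_{k_R+1})$ and similarly for $\pd$, which is immediate from the block structure of $\bb$. Conclusion {\it 4} is item~{\it 12} of Lemma~\ref{lem:lrf} combined with the fact that $\omega_{\bb,\infty}$ is by definition generated by the index-$0$ diagonal block of $\bb$, namely $\oo$ with invariant state $\rho_0$. I expect the main obstacle to be the bookkeeping in the verification that $\bb$ meets every clause of $\ClassA$ — especially the GHU/GHD conditions and whatever nondegeneracy/$l_\bb$-type requirements enter — and the genuine compatibility of the two halves on the shared index-$0$ block; that compatibility is the one substantive point, and it is exactly what the common choice of $\oo$ and $\rho_0$ in Lemma~\ref{lem:lrf} was arranged to provide.
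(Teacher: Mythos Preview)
Your gluing construction of $\bb$ is essentially the right one (modulo index conventions: the paper puts the right block at indices $-k_R,\ldots,0$ via a flip and the left block at $0,\ldots,k_L$, but this is cosmetic). The substantive gap is in your verification that $\bb\in\ClassA$. You claim that item~{\it 9} of Lemma~\ref{lem:lrf} ``identifies $\caK_l(\bb)$, block by block, with the span prescribed in the definition,'' but that item only controls the one-sided compressions $\caK_l(\bb)\hpu$ and $\hpd\caK_l(\bb)$. The $\ClassA$ condition requires the full equality $\caK_l(\bb)=\mnz\otimes\caD(k_R,k_L,\bbD,\bbG)\check\Lambda^l$ for large $l$, and the space on the right contains, in addition to the two diagonal blocks, the corner piece $\cn=\overline{\hpd}\bigl(\mnz\otimes\Mat_{k_R+k_L+1}\bigr)\overline{\hpu}$. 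Nothing in Lemma~\ref{lem:lrf} tells you that these corner matrix units $\zeij{\alpha\beta}\otimes E_{-a,b}^{(k_R,k_L)}$ arise as products $\widehat{B_{\mu^{(l)}}}$.

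This is not bookkeeping: the paper devotes Lemmas~\ref{lem:kv}--\ref{lem:cnkb} to it. One first checks the easy inclusion $\caK_l(\bb)\subset\caV_l$; then, by an induction on $a$ (Lemmas~\ref{lem:indaa}--\ref{lem:kc}) that uses the elements $A^{(R,l)}_{\alpha\beta,a+1}$ from Lemma~\ref{lem:kv} together with a Vandermonde-type argument (Lemma~C.7 of Part~I) to peel off the $\lambda_{-(a+1)}^j$ coefficients, one shows that each $\zeij{\alpha\beta}\otimes\ir{D_a}\check\Lambda^l$ and $\zeij{\alpha\beta}\otimes\check\Lambda^l\il{G_b}$ lies in $\caK_l(\bb)+\cn$; finally a second double induction (Lemma~\ref{lem:cnkb}) over $(a,b)$ multiplies these together to produce the corner elements and shows $\cn\subset\caK_l(\bb)$. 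Your proposal contains no mechanism for generating the corner, and without it $\bb\in\ClassA$ is unproven. (Minor point: there is no item~{\it 12} in Lemma~\ref{lem:lrf}; the generation statement you want is item~{\it 11}, and your argument for conclusion~{\it 1} via similarity of the block compressions is reasonable, though the paper instead routes it through the domination $\psi_\sigma\le d_1\omega_\sigma$ from Lemma~\ref{lem:a1a4} together with properties~{\it 2},~{\it 3}.)
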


The $n$-tuple $\bb$ in the above Lemma is defined as follows.
\begin{lem}\label{lem:form}
Assume [A1],[A3],[A4], and [A5]. 
Then there exist $n_0\in\nan$, $k_L, k_R\in\nan\cup \{0\}$, 
$\oo\in \Primz_u(n,n_0)$, 
$(\lal,\bbD,\bbG,Y)\in \caT(k_R,k_L)$, $l_0\in\nan$,  and $\{x_{\mu,b}^{(L)}\}_{\mu=1,\ldots,n, b=1,\ldots, k_L},\{x_{\mu,a}^{(R)}\}_{\mu=1,\ldots,n, a=1,\ldots, k_R}
\subset\mnz$ satisfying the followings.
\begin{enumerate}
\item Define $\bb\in\lmk \mnz\otimes\Mat_{k_L+k_R+1}\rmk^{\times n}$ by
\begin{align}\label{eq:bdef}
&B_{\mu}=
\omega_\mu\otimes \Lambda_{\lal}\lmk \unit+Y\rmk
+\sum_{b=1}^{k_L} x_{\mu b}^{(L)}\otimes \Lambda_{\lal}\lmk \unit+Y\rmk I_L^{(k_R,k_L)}\lmk G_b\rmk
+\sum_{a=1}^{k_R} x_{\mu a}^{(R)}\otimes I_R^{(k_R,k_L)} \lmk D_a \rmk\Lambda_{\lal}\lmk \unit+Y\rmk,
\end{align}
for $\mu=1,\ldots,n$.
Then 
\begin{align*}
&\caK_l\lmk\bb\rmk \hpu
=\mnz\otimes \spa\left\{ I_R^{(k_R,k_L)} \lmk \unit \rmk, I_R^{(k_R,k_L)} \lmk D_a \rmk, a=1,\ldots,k_R
\right\}\Lambda_{\lal}^l\lmk \unit+Y\rmk^l\\
&\hpd \caK_l\lmk\bb\rmk
=\mnz\otimes \Lambda_{\lal}^l\lmk \unit+Y\rmk^l\spa\left\{ I_L^{(k_R,k_L)} \lmk \unit \rmk, I_L^{(k_R,k_L)} \lmk G_b\rmk, b=1,\ldots,k_L
\right\},
\end{align*}
for all $l\ge l_0$.
\item For  the state $\omega_L$ in  [A4] ,
 $\tau_{l}\lmk s(\omega_{L}\vert_{\caA_{L,l}})\rmk$ is equal to
the orthogonal projection onto \\
$\Gamma_{l,\bb}^{(R)}\lmk \mnz\otimes \pd \Mat_{k_L+k_R+1}\pd \rmk$, for all $l\in\nan$.
\item For  the state $\omega_R$ in  [A4] ,
 $s(\omega_{R}\vert_{\caA_{R,l}})$ is equal to
the orthogonal projection onto \\
$\Gamma_{l,\bb}^{(R)}\lmk \mnz\otimes \pu \Mat_{k_L+k_R+1}\pu\rmk$, for all $l\in\nan$.
\item Let $\rho_0$ be the $T_{\oo}$-invariant state on $\mnz$. Then 
$(\mnz,\oo,\rho_0)$
 right-generates $\omega_\infty$.
\end{enumerate}
\end{lem}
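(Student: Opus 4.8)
\textbf{Proof strategy for Lemma \ref{lem:form}.}
The plan is to assemble $\bb$ from the objects produced in Lemma \ref{lem:lrf} by embedding the separate left- and right-edge structures, living in $\Mat_{n_0}\otimes\Mat_{k_L+1}$ and $\Mat_{n_0}\otimes\Mat_{k_R+1}$, into the common space $\Mat_{n_0}\otimes\Mat_{k_L+k_R+1}$. Concretely, I would take $n_0$, $\oo\in\Primz_u(n,n_0)$, $k_L$, $k_R$, $\hat\lal^{(L)}$, $\hat\lal^{(R)}$, $\hat Y_L$, $\hat Y_R$, $\hat D_b^{(L)}$, $\hat D_a^{(R)}$, $\hat x_{\mu,b}^{(L)}$, $\hat x_{\mu,a}^{(R)}$, $l_0$ from Lemma \ref{lem:lrf} (after the rearrangement lemma, so that the $\lv\hat\lambda\rv$ are monotone), and set $\lal$ to be the tuple obtained by gluing $\hat\lal^{(R)}$ and $\hat\lal^{(L)}$ along their common $0$-th entry $1$, with $Y$, $\bbD=(D_a)$, $\bbG=(G_b)$ the images of $\hat Y_R\oplus\hat Y_L$, $\hat D_a^{(R)}$, $\hat D_b^{(L)}$ under the embeddings $I_R^{(k_R,k_L)}$, $I_L^{(k_R,k_L)}$ of Appendix A of Part I. The first task is to verify $(\lal,\bbD,\bbG,Y)\in\caT(k_R,k_L)$: this is a direct check that properties 1, 2, 4--8 of Lemma \ref{lem:lrf} translate into the defining relations of $\caT(k_R,k_L)$ — the commutation $[\Lambda_\lal,Y]=0$, the eigenvalue relations $\Lambda_\lal D_a=\lambda_a D_a\Lambda_\lal$ and $\Lambda_\lal G_b=\lambda_b^{-1}G_b\Lambda_\lal$, the normalizations $D_aE_{00}=E_{a0}$ and $E_{00}G_b=E_{0b}$, linear independence, the multiplicative closure of the $D$'s and $G$'s, and the intertwining identities \eqref{eq:ly}-type relations. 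Then one checks $\bb$ as defined by \eqref{eq:bdef} actually lies in $\ClassA$ with respect to $(n_0,k_R,k_L,\lal,\bbD,\bbG,Y)$ by matching \eqref{eq:bdef} against the definition of $\ClassA$ in Part I; the coefficients $x_{\mu,b}^{(L)}$, $x_{\mu,a}^{(R)}$ are literally the $\hat x$'s of Lemma \ref{lem:lrf}.

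Next I would establish item 1, the description of $\caK_l(\bb)\hpu$ and $\hpd\caK_l(\bb)$. The key point is that under the embeddings, $\hpu$ picks out exactly the $\Mat_{k_R+1}$-block and $\hpd$ the $\Mat_{k_L+1}$-block, so that $\caK_l(\bb)\hpu$ and $\hpd\caK_l(\bb)$ are computed from $B_\mu\hpu$ and $\hpd B_\mu$ respectively. From \eqref{eq:bdef}, multiplying by $\hpu$ on the right kills the $I_L(G_b)$ terms (since $G_b\in\UT_{0,k_L+1}$ lives strictly in the left block and $\Lambda_\lal(\unit+Y)$ preserves blocks up to the triangular structure), leaving a combination of $\omega_\mu\otimes I_R(\unit)$ and $x_{\mu a}^{(R)}\otimes I_R(D_a)$ times $\Lambda_\lal^l(\unit+Y)^l$; so $B_{\mu^{(l)}}\hpu$ reproduces the span in item 1. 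The content here is precisely Lemma \ref{lem:lrf}(9): the described set is a basis of $\caK_l(\hat\vv^{(R)})$, and under the embedding $\hat\vv^{(R)}\mapsto\bb\hpu$, that basis maps onto a spanning set of $\caK_l(\bb)\hpu$. One must be slightly careful that no cancellation or collapse occurs between the left-block terms and right-block terms when they are glued — this is where the monotone ordering of $\lv\lambda\rv$ and the triangularity ($D_a\in\DT$, $G_b\in\UT$) guarantee the two families of generators do not interfere, and where I expect the bookkeeping to be most delicate. The analogous left-sided computation with $\hpd$ on the left gives the second formula.

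For items 2 and 3 I would transport Lemma \ref{lem:lrf}(10). There, $\tau_{y_\sigma}(s(\omega_\sigma\vert_{\caA_{\sigma,l}}))$ equals the projection onto $\Gamma_{l,\hat\vv^{(\sigma)}}^{(R)}(\Mat_{n_0}\otimes\Mat_{k_\sigma+1})$. Since $\bb\hpu$ (resp. $\hpd\bb$) is, up to the embedding unitary, unitarily/similarly equivalent to $\hat\vv^{(R)}$ (resp. to the transpose-twisted $\hat\vv^{(L)}$), the map $\Gamma_{l,\bb}^{(R)}$ restricted to the appropriate corner $\Mat_{n_0}\otimes\pu\Mat_{k_R+k_L+1}\pu$ (resp. $\pd\cdots\pd$) has the same range as $\Gamma_{l,\hat\vv^{(\sigma)}}^{(R)}$; combined with item 1 this identifies the ranges and hence the projections. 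A small extra point: for the left edge one uses that $\Gamma_{l,\bb}^{(R)}$ applied to the $\pd$-corner, read backwards, computes $\omega_L$ via the left-generation, matching the $\tau_l$-shift; this is exactly how $y_L=l$ entered in Lemma \ref{lem:Xxp2} and Lemma \ref{lem:bijecs}. Finally item 4, $(\Mat_{n_0},\oo,\rho_0)$ right-generates $\omega_\infty$, is immediate: it is already Lemma \ref{lem:lrf}(11), since we took $\oo=\oo^{(R)}$ and $\rho_0=\rho_R\vert_{\Mat_{n_0}}$ unchanged. The main obstacle, as flagged, is the careful verification that the two one-sided structures, each perfectly good on its own block, remain consistent (no dimension collapse, correct triangular interaction, correct eigenvalue matching at the shared $0$-index) once embedded into the single algebra $\Mat_{n_0}\otimes\Mat_{k_L+k_R+1}$ — this is a finite but intricate linear-algebra check, organized by the ordering and triangularity conditions secured in the previous section.
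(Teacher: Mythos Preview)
Your proposal is correct and follows essentially the same approach as the paper: assemble the objects from Lemma \ref{lem:lrf} (with the rearrangement), embed via $I_R^{(k_R,k_L)}$ and $I_L^{(k_R,k_L)}$, and then items 1--3 follow from (the embedded form of) properties 9 and 10 of Lemma \ref{lem:lrf} via the key identity $B_{\mu^{(l)}}\hpu=(\mathrm{id}\otimes I_R^{(k_R,k_L)})(\hat v_{\mu^{(l)}}^{(R)})$ (up to reindexing) and its left analogue, item 4 being property 11 verbatim. The one detail you gloss over is that $\hat D_a^{(R)}\in\DT_{0,k_R+1}$ must first be index-reversed to land in $\UT_{0,k_R+1}$ (and likewise $\hat Y_R$) before the embedding $I_R^{(k_R,k_L)}$ applies, so that $\bbD\in\caC^R(k_R)$; this is precisely the bookkeeping you anticipate, and note that membership of $\bb$ in $\ClassA$ is \emph{not} part of this lemma but is established afterwards.
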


\begin{proof}Let $n_0\in\nan$,
$\oo\in \Primz_u(n,n_0)$, $k_L,k_R\in\nan\cup\{0\}$,
$\hat \vv^{(L)}\in \lmk \Mat_{{n_{0}}}\otimes\Mat_{k_L+1}\rmk^{\times n}$,
$\hat \vv^{(R)}\in \lmk \Mat_{{n_{0}}}\otimes\Mat_{k_R+1}\rmk^{\times n}$,
 $\hat \lal^{(L)}=(\hat  \lambda^{(L)}_b)_{b=0,\ldots,k_L}\in \cc^{k_L+1}$,
$\hat \lal^{(R)}=(\hat \lambda^{(R)}_a)_{a=0,\ldots,k_R}\in \cc^{k_R+1}$,
$\hat Y_L\in\UT_{0, k_L+1} $, $\hat Y_R\in\DT_{0, k_R+1}$,
$\hat D_b^{(L)}\in\UT_{0, k_L+1} $, $\hat D_a^{(R)}\in\DT_{0, k_R+1}$,
$b=1,\ldots,k_L$, $a=1,\ldots, k_R$, and 
$l_{0}\in\nan$,
satisfying the properties {\it 1-10} of Lemma \ref{lem:lrf} and (\ref{eq:ordl}).
We also use $\{\hat x_{\mu,b}^{(L)}\}_{\mu=1,\ldots,n, b=1,\ldots, k_L}\subset\mnz$
and $\{\hat x_{\mu,a}^{(R)}\}_{\mu=1,\ldots,n, a=1,\ldots, k_R}\subset\mnz$ from {\it 3} of Lemma \ref{lem:lrf},
and set $x_{\mu,b}^{(L)}=\hat x_{\mu,b}^{(L)}$ and $x_{\mu,a}^{(R)}=\hat x_{\mu,a}^{(R)}$.

If $k_R\in\nan$, we define 
\[
D_a:=\sum_{i,j=-k_R}^{0} \lmk\hat D_{a}^{(R)}\rmk_{-i,-j}\eijr{i,j},\quad a=1,\ldots, k_R.
\]
As $\hat D_a^{(R)}\in\DT_{0, k_R+1}$, we have $D_a\in \UT_{0, k_R+1}$.
We have 
\[
D_a\eijr{00}=\sum_{i=-k_R}^{0} \lmk\hat D_{a}^{(R)}\rmk_{-i,0}\eijr{i,0}=\sum_{i=-k_R}^{0} \lmk\hat D_{a}^{(R)}E_{00}^{(0,k_R)}\rmk_{-i,0}\eijr{i,0}
=\sum_{i=-k_R}^{0} \lmk E_{a0}^{(0,k_R)}\rmk_{-i,0}\eijr{i,0}=\eijr{-a,0}.
\]
By {\it 7} of Lemma \ref{lem:lrf}, The linear span of 
$\{D_a\}_{a=1}^{k_R}$ is a subalgebra of $\UT_{0,k_R+1}$.
Therefore, $\bbD=(D_1,\ldots,D_{k_R})$ belongs to $\caC^R(k_R)$.
If $k_L\in\nan$, we set $G_b:=\hat D_b^{(L)}$, $b=1,\ldots,k_L$.
That $\bbG\in \caC^L(k_L)$ follows from the corresponding properties of 
$\hat\bbD^{(L)}=(\hat D_1^{(L)},\ldots,\hat D_{k_L}^{(L)})$.

We define $\lal=(\lambda_{-k_R},\ldots, \lambda_{k_L})\in \bbC^{k_L+k_R+1}$ by
\begin{align*}
\lambda_i
:=
\begin{cases}
\hat \lambda_{-i}^{(R)},&i=-k_R,\ldots,0\\
\hat \lambda_{i}^{(L)},&i=1,\ldots,k_L
\end{cases}
.
\end{align*}
Then, by (\ref{eq:ordl}), we have $\lal\in  \Wo(k_R,k_L)$.

Let 
\[
\tilde Y_R:=\sum_{i,j=-k_R}^{0} \lmk \hat Y_R\rmk_{-i,-j}\eijr{i,j}.
\]
Then  we define 
\[
Y:=I_R^{(k_R,k_L)}\lmk \tilde Y_R\rmk+I_L^{(k_R,k_L)}\lmk \hat Y_L\rmk\in\UT_{0,k_R+k_L+1} .
\]

The properties {\it 1,4,8} of Lemma \ref{lem:lrf} guarantees that $(\lal,\bbD,\bbG,Y)\in \caT(k_R,k_L)$.

Now we prove that our $n_0$, $k_L, k_R$, 
$\oo$, 
$(\lal,\bbD,\bbG,Y)$, $l_0$,  and $\{x_{\mu,b}^{(L)}\},\{x_{\mu,a}^{(R)}\}$ satisfy conditions
{\it 1-4} of the Lemma.
 First, note that 
\begin{align}\label{eq:bc}
&B_{\mu^{(l)}}\hpu=\lmk id\otimes I_R^{(k_R,k_L)}\rmk\lmk \sum_{i,j=-k_R}^{0} \lmk\hat v_{\mu^{(l)}}^{(R)}\rmk_{-i,-j}\eijr{i,j}\rmk,\notag\\
&\hpd B_{\mu^{(l)}}=\lmk id\otimes I_L^{(k_R,k_L)}\rmk\lmk \hat v_{\mu^{(l)}}^{(L)}\rmk,
\end{align}
for all $l\in\nan$, and $\mu^{(l)}\in\{1,\ldots,n\}^{\times l}$.
This and {\it 9} of Lemma \ref{lem:lrf} implies {\it 1}.
Furthermore, (\ref{eq:bc}) and {\it 10} of Lemma \ref{lem:lrf} implies {\it 2,3}.
{\it 4} is {\it 11} of Lemma \ref{lem:lrf}.
\end{proof}
Next we would like to show that for this $\bbB$, $\caK_l(\bb)$ coincides with
$
\mnz\otimes\lmk
 \caD(k_R,k_L,\bbD,\bbG)\lmk \Lambda_{\lal}\lmk 1+Y\rmk\rmk^{l}
\rmk
$ for $l$ large enough.
Note that
we have
\begin{align}\label{eq:dd}
&\caV_l:=\mnz\otimes \caD(k_R,k_L,\bbD,\bbG)\lmk \Lambda_{\lal}\lmk 1+Y\rmk\rmk^{l}\notag\\
&=\mnz\otimes
\spa\lmk \lmk \Lambda_{\lal}\lmk 1+Y\rmk\rmk^{l}, \{I_R^{(k_R,k_L)}(D_a)\lmk \Lambda_{\lal}\lmk 1+Y\rmk\rmk^{l}\}_{a=1}^{k_R} \cup\{\lmk \Lambda_{\lal}\lmk 1+Y\rmk\rmk^{l}I_L^{(k_R,k_L)}(G_b)\}_{b=1}^{k_L}\rmk+\cn.
\end{align}
First we show the following inclusion.
\begin{lem}\label{lem:kv}
Assume [A1],[A3],[A4], and [A5]. 
Recall $n_0$, $k_L, k_R$, 
$\oo$, 
$(\lal,\bbD,\bbG,Y)$, $l_0$,  and $\{x_{\mu,b}^{(L)}\},\{x_{\mu,a}^{(R)}\}$
given in Lemma \ref{lem:form}, and $\bb$ defined by (\ref{eq:bdef}).
We use the notation $\check\Lambda:= \Lambda_{\lal}\lmk 1+Y\rmk$.
Then we have
\begin{align}
\caK_l(\bb)\subset 
\mnz\otimes \caD(k_R,k_L,\bbD,\bbG)\cl{l}, \quad l\in\nan.
\end{align}
In particular, for any $l_0\le l$, $\alpha,\beta=1,\ldots,n_0$, $a=0,\ldots,k_R$, $b=0,\ldots,k_L$,
there exist $A^{(R,l)}_{\alpha\beta a}, A^{(L,l)}_{\alpha\beta b}\in \caK_l(\bb)$
of the form
\begin{align}\label{eq:eoa}
&A^{(R,l)}_{\alpha\beta 0}=\zeij{\alpha\beta}\otimes \cl{l}
+\sum_{b=1}^{k_L} z_{\alpha\beta 0 b}^{(R,l)}\otimes \cl{l}\il{G_b}
+O_{\alpha,\beta,0}^{(R,l)}
\notag\\
 &A^{(R,l)}_{\alpha\beta a}=\zeij{\alpha\beta}\otimes \ir{D_{a}}\cl{l}
+\sum_{b=1}^{k_L} z_{\alpha\beta a b}^{(R,l)}\otimes \cl{l}\il{G_b}
+O_{\alpha,\beta,a}^{(R,l)}\notag \\
&A^{(L,l)}_{\alpha\beta 0}=\zeij{\alpha\beta}\otimes \cl{l}
+\sum_{a=1}^{k_R} z_{\alpha\beta a 0}^{(L,l)}\otimes\ir{D_a} \cl{l}
+O_{\alpha,\beta,0}^{(L,l)}
\notag\\
 &A^{(L,l)}_{\alpha\beta b}=\zeij{\alpha\beta}\otimes \cl{l}\il{G_{b}}
+\sum_{a=1}^{k_R} z_{\alpha\beta a b}^{(L,l)}\otimes \ir{D_a}\cl{l}+O_{\alpha,\beta,b}^{(L,l)}.
\end{align}
Here, $z_{\alpha\beta a b}^{(\sigma,l)}\in\mnz$ and $O_{\alpha,\beta,k}^{(\sigma,l)}\in\cn$.
\end{lem}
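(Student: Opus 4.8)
\textbf{Proof proposal for Lemma \ref{lem:kv}.}
The plan is to prove the inclusion $\caK_l(\bb)\subset \mnz\otimes\caD(k_R,k_L,\bbD,\bbG)\cl{l}$ by induction on $l$, using the explicit form (\ref{eq:bdef}) of the $B_\mu$'s and the algebraic closure properties of $\bbD=(D_a)$ and $\bbG=(G_b)$ recorded in Lemma \ref{lem:form}. First I would observe that $\caD(k_R,k_L,\bbD,\bbG)$, by its definition in Part I, is spanned by $\unit$, the $\ir{D_a}$, the $\il{G_b}$, and $\cn$; and that each generator $B_\mu$ already lies in $\mnz\otimes\caD(k_R,k_L,\bbD,\bbG)\check\Lambda$, since the first term is $\omega_\mu\otimes\check\Lambda$, the second is $x_{\mu b}^{(L)}\otimes\check\Lambda\il{G_b}$, and the third is $x_{\mu a}^{(R)}\otimes\ir{D_a}\check\Lambda$. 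Thus the base case $l=1$ is immediate. For the inductive step, given a word $B_{\mu^{(l-1)}}\in\mnz\otimes\caD(k_R,k_L,\bbD,\bbG)\cl{l-1}$, I would multiply by one more $B_\mu$ and check that the product stays in $\mnz\otimes\caD(k_R,k_L,\bbD,\bbG)\cl{l}$. The only thing to verify is that $\caD(k_R,k_L,\bbD,\bbG)\cl{l-1}\cdot\caD(k_R,k_L,\bbD,\bbG)\check\Lambda$ collapses back into $\caD(k_R,k_L,\bbD,\bbG)\cl{l}$; this follows from the commutation relations in property {\it 4} and the intertwining relations (the analogue of {\it 8} in Lemma \ref{lem:lrf}) that move $\cl{l-1}$ past $\ir{D_a}$ and $\il{G_b}$ at the cost of staying inside the spans of the $\ir{D_{a'}}$ resp.\ $\il{G_{b'}}$, together with the fact that the span of the $\ir{D_a}$ (resp.\ $\il{G_b}$) is a subalgebra and that products $\ir{D_a}\cdot\il{G_b}$ and the cross terms land in $\cn$.

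Once the inclusion $\caK_l(\bb)\subset \mnz\otimes\caD(k_R,k_L,\bbD,\bbG)\cl{l}$ is established, the ``in particular'' clause is essentially a bookkeeping consequence of property {\it 1} of Lemma \ref{lem:form}, which already says exactly what $\caK_l(\bb)\hpu$ and $\hpd\caK_l(\bb)$ are for $l\ge l_0$: namely $\mnz\otimes\spa\{\ir{\unit},\ir{D_a}\}\cl{l}$ and $\mnz\otimes\cl{l}\spa\{\il{\unit},\il{G_b}\}$ respectively. So for each $\alpha,\beta,a$ there is an element of $\caK_l(\bb)$ whose ``$\hpu$-component'' is $\zeij{\alpha\beta}\otimes\ir{D_a}\cl{l}$ (with $D_0:=\unit$), and similarly on the left; I would take $A^{(R,l)}_{\alpha\beta a}$ to be such an element. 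Then decompose it according to the block structure $\unit = \hpu + \hpd + (\text{middle part in }\cn)$: the $\hpu$-part is the prescribed leading term $\zeij{\alpha\beta}\otimes\ir{D_a}\cl{l}$ (or $\zeij{\alpha\beta}\otimes\cl{l}$ when $a=0$), the $\hpd$-part, by the description of $\hpd\caK_l(\bb)$, is a sum $\sum_b z^{(R,l)}_{\alpha\beta a b}\otimes\cl{l}\il{G_b}$ with coefficients $z^{(R,l)}_{\alpha\beta a b}\in\mnz$, and the remaining middle part $O^{(R,l)}_{\alpha,\beta,a}$ lies in $\cn$ by the first inclusion (since every element of $\caD(k_R,k_L,\bbD,\bbG)$ decomposes into an upper-right piece, a lower-left piece, and a $\cn$ piece). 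The left-hand elements $A^{(L,l)}_{\alpha\beta b}$ are produced by the symmetric argument with the roles of $\hpu$ and $\hpd$ swapped.

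The main obstacle will be the inductive step: carefully tracking, using the intertwining identities for $\check\Lambda$ with the $\ir{D_a}$ and $\il{G_b}$ (these are the embedded versions of Lemma \ref{lem:lrf} {\it 8}), that multiplying two elements of $\mnz\otimes\caD(k_R,k_L,\bbD,\bbG)\check\Lambda^{\bullet}$ does not generate any new ``type'' of term outside $\caD(k_R,k_L,\bbD,\bbG)$ — in particular that all products of a right-generator $\ir{D_a}$ with a left-generator $\il{G_b}$, in either order, fall into $\cn$, and that $\cn$ is an ideal-like piece absorbed on both sides. This is where the precise definition of $\caD(k_R,k_L,\bbD,\bbG)$ and the projections $\hpu,\hpd,\cn$ from Part I has to be invoked in full; everything else is routine block-matrix manipulation and relabelling of the coefficient matrices $z^{(\sigma,l)}_{\alpha\beta a b}$.
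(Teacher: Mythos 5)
Your proposal is correct and follows essentially the same route as the paper: the first inclusion is proved by induction on word length via the multiplicativity $\caV_{l_1}\caV_{l_2}\subset\caV_{l_1+l_2}$ of the spaces $\caV_l=\mnz\otimes\caD(k_R,k_L,\bbD,\bbG)\cl{l}$ (which the paper delegates to identity (16) of Part I, exactly the block-algebra checks you describe), with the base case $\caK_1(\bb)\subset\caV_1$ read off from (\ref{eq:bdef}). The ``in particular'' clause is then, as in the paper, a bookkeeping consequence of the first inclusion together with item \emph{1} of Lemma \ref{lem:form}, carried out by the block decomposition you indicate.
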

\begin{proof}
From
(16) of Part I,  it is easy to check $\caV_{l_1}\caV_{l_2}\subset \caV_{l_1+l_2}$.
As $\caK_1(\bb)\subset \caV_1$, this proves the first claim of the Lemma, inductively.
The second claim can be checked from the first one and {\it 1} of Lemma \ref{lem:form}.
\end{proof}

Next we prove the opposite inclusion for large $l$, inductively.
The following Lemma is used for the induction.
\begin{lem}\label{lem:indaa}
Assume [A1],[A3],[A4], and [A5]. We use the notations in  Lemma \ref{lem:form} and $\check\Lambda:= \Lambda_{\lal}\lmk 1+Y\rmk$.
For each $a=0,\ldots,k_R$ and $l\in\nan$, set
\[
W_{a}^{(l)}:=\mnz\otimes\spa\left\{I_R^{(k_R,k_L)} \lmk D_{a'} \rmk, a'>a
\right\}\cl{l}+\cn.
\]
We consider the following two propositions, for $a=0,\ldots, k_R$.
\begin{quote}
$({\mathbb P}_{a})$  : There exists an $l_{a,R}\in\nan$ satisfying the following:  for any $l_{a,R}\le l\in\nan$, there exist
$X_{\alpha\beta a'}^{(l)}\in \caK_l(\bb)$, $\alpha,\beta=1,\ldots,n_0$, $a'=0,\ldots,a$
such that
\begin{align*}
&X_{\alpha\beta 0}^{(l)}-\zeij{\alpha\beta}\otimes \cl{l}\in W_a^{(l)},\\
&X_{\alpha\beta a'}^{(l)}-\zeij{\alpha\beta}\otimes \ir{D_{a'}}\cl{l}\in W_{a}^{(l)}, \quad 1\le a'\le a.
\end{align*}
\end{quote}
\begin{quote}
$(\tilde {\mathbb P}_{a})$  : There exist $\tilde l_{a,R}\in\nan$ and $Z\in  \caK_{\tilde l_{a,R}}(\bb)$ with the form 
\begin{align*}
Z=\sum_{a'\ge a+1} z_{a'}\otimes \ir{D_{a'}}\cl{\tilde l_{a,R}}+\text{ an element in } \cn,
\end{align*}
with
\[
z_{a+1}\neq 0.
\]
\end{quote}
If $({\mathbb P}_{a})$ and $(\tilde {\mathbb P}_{a})$ hold for some $a<k_R$, then $({\mathbb P}_{a+1})$ holds.
\end{lem}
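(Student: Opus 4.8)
The plan is to establish $({\mathbb P}_{a+1})$ by a sandwich argument. The element $Z$ supplied by $(\tilde{\mathbb P}_a)$ has a nonzero ``$D_{a+1}$-coefficient'' $z_{a+1}\in\mnz$; conjugating $Z$ by the $\bb$-words produced by $({\mathbb P}_a)$ will let us replace this fixed matrix $z_{a+1}$ by an arbitrary $\zeij{\alpha\beta}$, producing, for every $\alpha,\beta$ and every sufficiently large $l$, an element of $\caK_l(\bb)$ equal to $\zeij{\alpha\beta}\otimes\ir{D_{a+1}}\cl{l}$ modulo $W_{a+1}^{(l)}$. Using these new elements we then clean the $\ir{D_{a+1}}$-part out of the old $({\mathbb P}_a)$-elements, which upgrades their defining congruence from modulo $W_a^{(l)}$ to modulo $W_{a+1}^{(l)}$.

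Concretely, first pick $\gamma,\delta\in\{1,\dots,n_0\}$ with $c:=\braket{\cnz\gamma}{z_{a+1}\cnz\delta}\neq 0$. Given $l_L,l_R\ge l_{a,R}$, put $X_L:=X^{(l_L)}_{\alpha\gamma 0}$ and $X_R:=X^{(l_R)}_{\delta\beta 0}$ from $({\mathbb P}_a)$, so that $X_L-\zeij{\alpha\gamma}\otimes\cl{l_L}\in W_a^{(l_L)}$ and $X_R-\zeij{\delta\beta}\otimes\cl{l_R}\in W_a^{(l_R)}$. By definition of $\caK_\bullet(\bb)$ one has $X_LZX_R\in\caK_l(\bb)$ with $l=l_L+\tilde l_{a,R}+l_R$. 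Multiplying out the leading terms, using $\zeij{\alpha\gamma}z_{a+1}\zeij{\delta\beta}=c\,\zeij{\alpha\beta}$ and using the intertwining relations of Lemma \ref{lem:lrf}(8) (transported through $I_R^{(k_R,k_L)}$) to move the $\check\Lambda$-factors to the right --- which turns $\cl{l_L}\ir{D_{a+1}}$ into $(\hat\lambda^{(R)}_{a+1})^{l_L}\ir{D_{a+1}}\cl{l_L}$ plus a combination of $\ir{D_{a'}}\cl{l_L}$ with $a'>a+1$ --- one obtains
\[
X_LZX_R\ \equiv\ c\,(\hat\lambda^{(R)}_{a+1})^{l_L}\,\zeij{\alpha\beta}\otimes\ir{D_{a+1}}\cl{l}\pmod{W_{a+1}^{(l)}}.
\]
The point requiring care is that every remaining contribution lies in $W_{a+1}^{(l)}$: the $\cn$-parts stay in $\cn$, and every surviving product $\ir{D_{a'}}(\cdots)\ir{D_{a''}}$ with $a',a''\ge a+1$ reduces, by the multiplication table for the $\hat D^{(R)}$'s (Lemma \ref{lem:lrf}(7)) together with the ordering $1=|\lambda_0|>|\lambda_1|\ge\cdots\ge|\lambda_{k_R}|$, to $\ir{D_b}$'s with $|\lambda_b|=|\lambda_{a'}\lambda_{a''}|<|\lambda_{a+1}|$, hence $b>a+1$. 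Since $c\,(\hat\lambda^{(R)}_{a+1})^{l_L}\neq 0$ (Lemma \ref{lem:lrf}(2)), dividing by this scalar yields $X^{(l)}_{\alpha\beta,a+1}:=\bigl(c\,(\hat\lambda^{(R)}_{a+1})^{l_L}\bigr)^{-1}X_LZX_R$; taking $l_L=l_{a,R}$ and $l_R=l-\tilde l_{a,R}-l_{a,R}$ makes this valid for every $l\ge 2l_{a,R}+\tilde l_{a,R}=:l_{a+1,R}$.

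Finally, for $a'\in\{0,\dots,a\}$ and $l\ge l_{a+1,R}$, note that $W_a^{(l)}=\bigl(\mnz\otimes\ir{D_{a+1}}\cl{l}\bigr)\oplus W_{a+1}^{(l)}$ (a direct sum, by the linear independence of $\{\ir{\unit},\ir{D_1},\dots,\ir{D_{k_R}}\}$ from Lemma \ref{lem:lrf}(6) and disjointness of the $R$-block from $\cn$). Writing the $({\mathbb P}_a)$-remainder as $X^{(l)}_{\alpha\beta a'}-\zeij{\alpha\beta}\otimes(\cdots)\cl{l}=w\otimes\ir{D_{a+1}}\cl{l}+(\text{an element of }W_{a+1}^{(l)})$ with $w\in\mnz$, replace $X^{(l)}_{\alpha\beta a'}$ by $X^{(l)}_{\alpha\beta a'}-\sum_{\gamma,\delta}\braket{\cnz\gamma}{w\cnz\delta}\,X^{(l)}_{\gamma\delta,a+1}$, which then lies in $\zeij{\alpha\beta}\otimes(\cdots)\cl{l}+W_{a+1}^{(l)}$. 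Together with the $a'=a+1$ elements built above, this gives $({\mathbb P}_{a+1})$ with threshold $l_{a+1,R}$. The main obstacle is precisely the bookkeeping in the displayed identity: verifying, via Lemma \ref{lem:lrf}(7)--(8) and the $|\lambda_a|$-ordering, that no error term escapes $W_{a+1}^{(l)}$ and that no $I_L^{(k_R,k_L)}(G_b)$-terms are created (they are not, since neither $Z$ nor the $({\mathbb P}_a)$-elements carry such terms).
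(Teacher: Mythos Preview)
Your argument is correct and follows essentially the same sandwich strategy as the paper: form $X^{(l_L)}_{\alpha\gamma 0}\,Z\,X^{(l_R)}_{\delta\beta 0}\in\caK_l(\bb)$, show it equals a nonzero scalar times $\zeij{\alpha\beta}\otimes\ir{D_{a+1}}\cl{l}$ modulo $W_{a+1}^{(l)}$, deduce $W_a^{(l)}\subset\caK_l(\bb)+W_{a+1}^{(l)}$ for $l\ge 2l_{a,R}+\tilde l_{a,R}$, and upgrade the $(\bbP_a)$-elements accordingly; the threshold $l_{a+1,R}=2l_{a,R}+\tilde l_{a,R}$ matches the paper's.

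The only organizational difference is that the paper first records a short list of closure properties of the spaces $W_a^{(l)}$ (namely $(\mnz\otimes\cl{l_2})W_a^{(l_1)},\,W_a^{(l_1)}(\mnz\otimes\cl{l_2})\subset W_a^{(l_1+l_2)}$; $W_a^{(l_1)}\cdot W_{a'}^{(l_2)}\subset W_{\max\{a,a'\}+1}^{(l_1+l_2)}$; $\cn W_a^{(l)}=0$, $W_a^{(l)}\cn\subset\cn$, $\cn\cdot\cn=0$), deduced from the Part~I definitions, and then the cross-term bookkeeping is a one-liner. You instead verify the same facts in place by appealing to Lemma~\ref{lem:lrf}(7)--(8) and the ordering of $|\lambda_a|$; this is equivalent but slightly more work, and your sentence ``no $\il{G_b}$-terms are created'' is really a consequence of those closure rules rather than an independent check. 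Packaging the multiplication rules once, as the paper does, makes the induction step cleaner and also makes Lemma~\ref{lem:cnkb} (which reuses the same rules) immediate.
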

\begin{proof}
First we note the following properties which can be checked from Definition 1.7, 1.8, Remark 1.9 of Part I.
\begin{align}\label{eq:ww}
&W_a^{(l_1)}\lmk\mnz\otimes  \cl{l_2}\rmk,
\lmk\mnz\otimes  \cl{l_2}\rmk W_a^{(l_1)}\subset W_a^{(l_1+l_2)}\notag\\
&W_a^{(l_1)}\lmk\mnz\otimes  \ir{D_b}\cl{l_2}\rmk,
\lmk\mnz\otimes  \ir{D_b}\cl{l_2}\rmk W_a^{(l_1)},
\subset W_{a+1}^{(l_1+l_2)}\notag\\
&W_a^{(l_1)}\cn\subset\cn,\quad \cn W_a^{(l_1)}=0,\quad\cn\cdot\cn=0,\notag\\
&W_a^{(l_1)}\cdot W_{a'}^{(l_2)}\subset W_{\max\{a,a'\}+1}^{(l_1+l_2)},\quad
\hpd W_b^{(l)}=0,
\end{align}
for all $l_1,l_2\in\nan$, $a, a'=0,\ldots,k_R$, and $b=1,\ldots,k_R$.
Now, assume that $({\mathbb P}_{a})$ and $(\tilde {\mathbb P}_{a})$ hold for some $a<k_R$.
Then using (\ref{eq:ww}), we get
\begin{align*}
&\zeij{\alpha_1\beta_1}z_{a+1}\zeij{\alpha_2\beta_2}\otimes\cl{l_1}\ir{D_{a+1}}\cl{\tilde l_{a,R}}\cl{l_2}
=X_{\alpha_1\beta_1 0}^{(l_1)}ZX_{\alpha_2\beta_2 0}^{(l_2)}+
\text{ an element in } W_{a+1}^{(l_1+l_2+\tilde l_{a,R})}\\
&\in \caK_{l_1+l_2+\tilde l_{a,R}}\lmk \bb\rmk+ W_{a+1}^{(l_1+l_2+\tilde l_{a,R})},
\end{align*}
for any $l_1,l_2\ge l_{a,R}$, $\alpha_1,\beta_1,\alpha_2,\beta_2=1,\ldots,n_0$.
Further calculation using (9)-(13) of Part I shows that
\begin{align*}
&\zeij{\alpha_1\beta_1}z_{a+1}\zeij{\alpha_2\beta_2}\otimes\ir{D_{a+1}}\cl{l_1+l_2+\tilde l_{a,R}}\\
&=\lambda_{-a-1}^{-l_1}\zeij{\alpha_1\beta_1}z_{a+1}\zeij{\alpha_2\beta_2}\otimes\cl{l_1}\ir{D_{a+1}}\cl{\tilde l_{a,R}}\cl{l_2}+\text{ an element in } W_{a+1}^{(l_1+l_2+\tilde l_{a,R})}\\
&\in 
\caK_{l_1+l_2+\tilde l_{a,R}}\lmk \bb\rmk+ W_{a+1}^{(l_1+l_2+\tilde l_{a,R})}.
\end{align*}
As $z_{a+1}\neq 0$, there exists $\alpha_2,\beta_1$ such that 
$\braket{\cnz{\beta_1}}{z_{a+1}\cnz{\alpha_2}}
\neq 0$.
Hence we obtain 
\begin{align}\label{eq:aok}
\zeij{\alpha_1\beta_2}\otimes\ir{D_{a+1}}\cl{l_1+l_2+\tilde l_{a,R}}
\in \caK_{l_1+l_2+\tilde l_{a,R}}\lmk \bb\rmk+ W_{a+1}^{(l_1+l_2+\tilde l_{a,R})},\quad l_1,l_2\ge l_{a,R}, \quad \alpha_1,\beta_2=1,\ldots,n_0.
\end{align}
From this, we get
\begin{align*}
 W_{a}^{(l)}
\subset \caK_{l}\lmk \bb\rmk+ W_{a+1}^{(l)},\quad
 2l_{a,R}+\tilde l_{a,R}\le l.
\end{align*} 
Therefore, from
$({\mathbb P}_a)$, 
we obtain
\begin{align*}
\zeij{\alpha\beta}\otimes\cl{l},
\zeij{\alpha\beta}\otimes\ir{D_{a'}}\cl{l}\in \caK_{l}\lmk \bb\rmk+W_{a}^{(l)}
\subset \caK_{l}\lmk \bb\rmk+ W_{a+1}^{(l)}
\end{align*}
for any $1\le a'\le a$, $ 2l_{a,R}+\tilde l_{a,R}\le l$, $\alpha,\beta=1,\ldots,n_0$.
This and (\ref{eq:aok}) implies $(\bbP_{a+1})$ with $l_{a+1,R}=2l_{a,R}+\tilde l_{a,R}$.
\end{proof}
In order to apply Lemma \ref{lem:indaa}, we have to find $Z$ as in $(\tilde \bbP_a)$.
In the proof of the following Lemma, we use Lemma C.7 of Part I to find such $Z$.
\begin{lem}\label{lem:ind}
Assume [A1],[A3],[A4], and [A5]. We use the notations in  Lemma \ref{lem:form}.
Assume that $(\bbP_a)$ holds for some $a<k_R$. Then $(\bbP_{a+1})$ holds.
\end{lem}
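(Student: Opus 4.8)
The plan is to deduce $(\bbP_{a+1})$ from $(\bbP_a)$ by invoking Lemma~\ref{lem:indaa}: it suffices to produce an element $Z$ as in $(\tilde\bbP_a)$, i.e.\ some $\tilde l_{a,R}\in\nan$ and $Z\in\caK_{\tilde l_{a,R}}(\bb)$ with
\[
Z=\sum_{a'\ge a+1} z_{a'}\otimes \ir{D_{a'}}\cl{\tilde l_{a,R}}+\text{(element of }\cn)
\]
and $z_{a+1}\neq 0$. To find this $Z$, I would start from the representation of $\caK_l(\bb)\hpu$ in {\it 1} of Lemma~\ref{lem:form}, together with the structural elements $A^{(R,l)}_{\alpha\beta a'}$ in (\ref{eq:eoa}) from Lemma~\ref{lem:kv}. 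By $(\bbP_a)$ we already have, for $l\ge l_{a,R}$, elements $X^{(l)}_{\alpha\beta a'}\in\caK_l(\bb)$ equal to $\zeij{\alpha\beta}\otimes\ir{D_{a'}}\cl{l}$ (or $\zeij{\alpha\beta}\otimes\cl{l}$ for $a'=0$) modulo $W_a^{(l)}$; so the obstruction to $(\bbP_{a+1})$ lies entirely in the $D_{a+1}$-component. The idea is that if $\ir{D_{a+1}}\cl{l}$ never appeared in any element of $\caK_l(\bb)$ modulo $W_{a+1}^{(l)}$, then $D_{a+1}$ would be redundant for the representation of $\omega_R$, contradicting one of the non-degeneracy properties carried over from Lemma~\ref{lem:lrf} (the linear independence in {\it 6} and the basis statement {\it 9}, transported to $\bb$ via (\ref{eq:bc})).

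More concretely, the heart of the argument is to take a generating word $B_{\mu^{(l)}}\hpu$, expand it using (\ref{eq:bdef})/(\ref{eq:bc}) and the relations (9)--(13) of Part~I moving all $\cl{\cdot}$ factors past the $\ir{D_{a'}}$'s, and isolate the coefficient of $\ir{D_{a+1}}\cl{l}$. This coefficient is a sum over ways of picking exactly one factor $x^{(R)}_{\mu a+1}$ (producing $D_{a+1}$) and all other factors from the diagonal part $\omega_{\mu}$, plus contributions from products of lower $D_{a'}$'s whose product contains $D_{a+1}$; the latter are governed by {\it 7} of Lemma~\ref{lem:lrf}. The point is that the ``single $x^{(R)}_{\mu a+1}$'' term is nonzero for suitable $\mu^{(l)}$ precisely because $\hat v^{(R)}_\mu$, hence $\hat\vv^{(R)}$, genuinely involves $D_{a+1}$ — if it did not, the basis statement {\it 9} of Lemma~\ref{lem:lrf} for $\caK_l(\hat\vv^{(R)})$ would fail. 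To extract a single $l$-independent witness $Z$ from the family of such elements (whose coefficients depend on $l$ through powers $\lambda_{-a'}^{l}$ and binomial factors ${}_lC_j$ coming from $Y$), I would apply Lemma~C.7 of Part~I — exactly as done in the proof of Lemma~\ref{lem:hy} — with the distinct eigenvalue data $\{\lambda_i\}$, to separate the $D_{a+1}$-eigencomponent and obtain, for a fixed $\tilde l_{a,R}$ large enough, an element $Z\in\caK_{\tilde l_{a,R}}(\bb)$ of the required form with $z_{a+1}\neq 0$.

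Having produced $Z$, the conclusion is immediate: $(\bbP_a)$ and $(\tilde\bbP_a)$ both hold, so Lemma~\ref{lem:indaa} gives $(\bbP_{a+1})$ with $l_{a+1,R}=2l_{a,R}+\tilde l_{a,R}$. The main obstacle I anticipate is the bookkeeping in the second paragraph: carefully expanding $B_{\mu^{(l)}}\hpu$ in the basis of $\mnz\otimes\caD(k_R,k_L,\bbD,\bbG)\cl{l}$, tracking which products of $D$'s land in $W_{a+1}^{(l)}$ versus which genuinely contribute a $D_{a+1}$-term, and checking that the $D_{a+1}$-coefficient is not identically zero as $\mu^{(l)}$ ranges over all words. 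This last non-vanishing is the real content, and it must be traced back through (\ref{eq:bc}) to the faithfulness/basis properties of $\hat\vv^{(R)}$ established in Lemma~\ref{lem:lrf}; the $Y$-dependent polynomial-in-$l$ factors are then handled routinely by Lemma~C.7 of Part~I.
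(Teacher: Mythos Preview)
Your overall strategy --- produce a witness $Z$ for $(\tilde\bbP_a)$ and then invoke Lemma~\ref{lem:indaa} --- is correct, and Lemma~C.7 of Part~I is indeed the separation tool. But you misidentify the obstacle and your use of Lemma~C.7 does not work as written.

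First, the non-vanishing of $z_{a+1}$ is \emph{not} the issue. The element $A^{(R,l_0)}_{\alpha\beta,a+1}\in\caK_{l_0}(\bb)$ from (\ref{eq:eoa}) already has $\zeij{\alpha\beta}\otimes\ir{D_{a+1}}\cl{l_0}$ as its leading term; no appeal to the basis statement {\it 9} of Lemma~\ref{lem:lrf} is needed. The real obstruction is that $A^{(R,l_0)}_{\alpha\beta,a+1}$ also carries the $\sum_b z^{(R,l_0)}_{\alpha\beta,a+1,b}\otimes\cl{l_0}\il{G_b}$ terms, and these lie neither in $W_{a+1}^{(l_0)}$ nor in $\cn$. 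They must be killed while staying inside $\caK_l(\bb)$, and your proposal never names this as the task.

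Second, your plan to vary $l$ and take a Lemma~C.7 linear combination cannot yield an element of any fixed $\caK_{\tilde l}(\bb)$: combining elements of $\caK_l(\bb)$ across different $l$ does not land in a single $\caK_{\tilde l}(\bb)$. (Working with $B_{\mu^{(l)}}\hpu$ is no help either, since multiplying by $\hpu$ leaves $\caK_l(\bb)$.) The paper's missing ingredient is a sandwiching trick with a sliding position parameter. Set $\tilde X_\unit^{(m)}:=\sum_\alpha X^{(m)}_{\alpha\alpha 0}\in\caK_m(\bb)$, which equals $\unit\otimes\cl{m}$ modulo $W_a^{(m)}$ by $(\bbP_a)$. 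For fixed total length $l$ and variable $j$, the product $\tilde X_\unit^{(j+l_0')}\,A^{(R,l_0)}_{\alpha\beta,a+1}\,\tilde X_\unit^{(l-j-l_0'-l_0)}$ lies in $\caK_l(\bb)$ for every $j$. Commuting via (9)--(13) of Part~I, the $D_{a+1}$-component picks up a factor $\lambda_{-a-1}^{\,j+l_0'}$ (mod $W_{a+1}^{(l)}$), while each $G_b$-component becomes a sum of terms with $j$-dependence $\lambda_{b'}^{-j}$ times polynomials in $j$. These are distinct exponential-polynomial families (note $|\lambda_{-a-1}|<1<|\lambda_{b'}|^{-1}$), so Lemma~C.7 produces a vector $\xi$ with $Z:=\sum_j\xi(j)\,\tilde X_\unit^{(j+l_0')}A^{(R,l_0)}_{\alpha\beta,a+1}\tilde X_\unit^{(l-j-l_0'-l_0)}\in\caK_l(\bb)$ having all $G_b$-terms gone and the $D_{a+1}$-term intact. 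That is precisely $(\tilde\bbP_a)$.
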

\begin{proof}
For $X_{\alpha\beta a'}^{(l)}$ given in $(\bbP_a)$, we define
$\tilde X_{\unit}^{(l)}:=\sum_{\alpha=1}^{n_0} X_{\alpha\alpha 0}^{(l)}\in\caK_l\lmk \bb\rmk$, $l_{a,R}\le l$.
We have 
\[\tilde X_{\unit}^{(l)}
-\unit \otimes \cl{l}\in W_a^{(l)}.
\]
Set $l_0':=l_{aR}+k_L+k_R+1$.
For $A^{(R,l_0)}_{\alpha\beta a+1}$ in Lemma \ref{lem:kv}, using (\ref{eq:ww}), we obtain
\begin{align*}
&\tilde X_{\unit}^{(j+l'_{0})}A^{(R,l_0)}_{\alpha\beta a+1}\tilde X_{\unit}^{(l-j-l'_0-l_0)}\\
&=\zeij{\alpha\beta}\otimes\cl{j+l'_{0}}
\ir{D_{a+1}}\cl{l-j-l'_0}+\sum_{b=1}^{k_L} z_{\alpha,\beta, a+1, b}^{(R,l_0)}\otimes \cl{j+l'_{0}+l_0}\il{G_b}\cl{l-j-l'_0-l_0}\\
&+\text{ an element in } W_{a+1}^{l}\\
&=\lambda_{-a-1}^{j+l'_{0}}\cdot \zeij{\alpha\beta}\otimes
\ir{D_{a+1}}\cl{l}+
\sum_{b=1}^{k_L} \sum_{b'=1}^{k_L}\braket{f_{b}^{(k_R,k_L)}}{\cl{l-j-l'_0-l_0}f_{b'}^{(k_R,k_L)}}
z_{\alpha,\beta, a+1, b}^{(R,l_0)}\otimes \cl{l}\il{G_{b'}}
,\\
&+\text{ an element in } W_{a+1}^{l},
\end{align*}
for all $l\ge 2(k_L+k_R+1)^2+l_0+2l'_{0}$, and $0\le j\le (k_L+k_R+1)^2$.
Here, for the second equality, we used (12), (13) and Remark 1.9 of Part I.
Note that we have $\braket{f_{b}^{(k_R,k_L)}}{\cl{l-j-l'_0-l_0}f_{b'}^{(k_R,k_L)}}
=\lambda_b^{l-l_0'-l_0-j}\sum_{\gamma=0}^{k_L+k_R} {}_{l-l_0'-l_0-j}C_{\gamma}
\braket{f_{b}^{(k_R,k_L)}}{Y^\gamma f_{b'}^{(k_R,k_L)}}
$.
Hence, from Lemma C.7 of Part I, there exists $\xi\in \cc^{(k_L+k_R+1)^2}$ such that
\begin{align*}
\caK_{\tilde l_{a,R}}(\bb)\ni \sum_{j=0}^{(k_L+k_R+1)^2-1}\xi(j)\tilde X_{\unit}^{(j+l'_{0})}A^{(R,l_0)}_{\alpha\beta a+1}\tilde X_{\unit}^{(\tilde l_{a,R}-j-l'_0-l_0)}
=\zeij{\alpha\beta}\otimes
\ir{D_{a+1}}\cl{\tilde l_{a,R}}+\text{ an element in } W_{a+1}^{\tilde l_{a,R}}.
\end{align*}
 where $\tilde l_{a,R}=2(k_L+k_R+1)^2+l_0+2l'_{0}$. 
In other words, $(\tilde \bbP_{a})$ holds.
Applying  Lemma \ref{lem:indaa},  $(\bbP_{a+1})$ holds.
\end{proof}
With this induction step, we obtain the following Lemma.
\begin{lem}\label{lem:kc}
Assume [A1],[A3],[A4], and [A5]. We use the notations in  Lemma \ref{lem:form} and $\check\Lambda:= \Lambda_{\lal}\lmk 1+Y\rmk$.
Then there exists an $l_0'\in\nan$  such that 
\begin{align}
\mnz\otimes \spa\lmk \cl{l}, \{I_R^{(k_R,k_L)}(D_a)\cl{l}\}_{a=1}^{k_R} \cup\{\cl{l}I_L^{(k_R,k_L)}(G_b)\}_{b=1}^{k_L}\rmk
\subset \caK_l(\bb)+\cn,
\end{align}
for any $l_0'\le l\in\nan$.
\end{lem}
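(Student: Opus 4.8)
\textbf{Proof proposal for Lemma~\ref{lem:kc}.} The plan is to run the ``right'' induction of Lemma~\ref{lem:ind} down to $a=k_R$, then run the symmetric ``left'' induction, and finally combine the two to reach all of the displayed spanning set. First I would observe that $(\bbP_0)$ holds: Lemma~\ref{lem:kv} applied with $a=0$ (taking $A^{(R,l)}_{\alpha\beta 0}$ from \eqref{eq:eoa}) already produces, for each $l\ge l_0$, an element $X_{\alpha\beta 0}^{(l)}:=A^{(R,l)}_{\alpha\beta 0}\in\caK_l(\bb)$ with $X_{\alpha\beta 0}^{(l)}-\zeij{\alpha\beta}\otimes\cl{l}\in W_0^{(l)}$, since the terms $\sum_b z^{(R,l)}_{\alpha\beta 0b}\otimes\cl{l}\il{G_b}$ lie in $\mnz\otimes\spa\{\il{G_b}\}\cl{l}\subset W_0^{(l)}$ (note $\il{G_b}=\hpd(\cdots)\hpd$ type terms are absorbed, and $\cn\subset W_0^{(l)}$). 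Then Lemma~\ref{lem:ind} gives the implication $(\bbP_a)\Rightarrow(\bbP_{a+1})$ for every $a<k_R$, so by finite induction $(\bbP_{k_R})$ holds; let $l_{k_R,R}$ be the resulting threshold. Unwinding $W_{k_R}^{(l)}=\mnz\otimes\spa\{\ir{D_{a'}}:a'>k_R\}\cl{l}+\cn=\cn$, this says: for $l\ge l_{k_R,R}$ there exist $X^{(l)}_{\alpha\beta a'}\in\caK_l(\bb)$, $a'=0,\dots,k_R$, with
\[
X^{(l)}_{\alpha\beta 0}-\zeij{\alpha\beta}\otimes\cl{l}\in\cn,\qquad
X^{(l)}_{\alpha\beta a'}-\zeij{\alpha\beta}\otimes\ir{D_{a'}}\cl{l}\in\cn.
\]
Since $\cn=\overline{\hpd}(\mnzk)\overline{\hpu}\subset\caK_l(\bb)$ is absorbed once we also have the ``$\cn$ part'' available inside $\caK_l(\bb)$ (which it is, by Lemma~\ref{lem:kv} the elements $O^{(R,l)}_{\alpha\beta a}$ and more basically since $\caK_l(\bb)\supset\cn$ for $l$ large — this needs a line checking $\CN\subset\caK_l(\bb)$, using the off--block structure of $\bb$), we conclude $\mnz\otimes\spa\{\cl{l},\ \ir{D_a}\cl{l}:a=1,\dots,k_R\}\subset\caK_l(\bb)+\cn$.

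Next I would carry out the mirror-image argument on the left. There is an obvious ``left'' analogue of Lemma~\ref{lem:indaa} and Lemma~\ref{lem:ind}, obtained by transposing the roles of $\hpu$ and $\hpd$, of $I_R^{(k_R,k_L)}(D_a)$ and $I_L^{(k_R,k_L)}(G_b)$, and of the exponents $\lambda_{-a}$ and $\lambda_b$; the relevant algebraic identities \eqref{eq:ww} all have their left counterparts coming from Definition~1.7, 1.8, Remark~1.9 of Part~I, and the non-degeneracy input ``$z_{b+1}\neq0$'' is supplied exactly as in Lemma~\ref{lem:ind} using the $A^{(L,l_0)}_{\alpha\beta b+1}$ of \eqref{eq:eoa} together with Lemma~C.7 of Part~I. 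Running this induction from $(\bbP^{L}_0)$ (again seeded by Lemma~\ref{lem:kv}, now using $A^{(L,l)}_{\alpha\beta 0}$) up to $(\bbP^{L}_{k_L})$ produces a threshold $l_{k_L,L}$ so that for $l\ge l_{k_L,L}$,
\[
\mnz\otimes\spa\bigl\{\cl{l},\ \cl{l}\il{G_b}:b=1,\dots,k_L\bigr\}\subset\caK_l(\bb)+\cn.
\]

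Finally I would set $l_0':=\max\{l_0,\,l_{k_R,R},\,l_{k_L,L}\}$ and combine. The only terms of the target span not yet covered are $\zeij{\alpha\beta}\otimes\cl{l}\il{G_b}$, which are already in the left conclusion, and $\zeij{\alpha\beta}\otimes\ir{D_a}\cl{l}$, in the right one; their $\mnz$-span is the union of the two, so the target span is contained in $\caK_l(\bb)+\cn$ for all $l\ge l_0'$. The genuinely load-bearing steps are (i) checking the base cases $(\bbP_0)$ and $(\bbP^L_0)$ really follow from \eqref{eq:eoa}, i.e.\ that the ``error'' terms $\sum_b z^{(R,l)}_{\alpha\beta 0b}\otimes\cl{l}\il{G_b}+O^{(R,l)}_{\alpha,\beta,0}$ lie in $W_0^{(l)}$ — this is true essentially by definition of $W_0^{(l)}$ and the fact that $\cl{l}\il{G_b}$ differs from $\il{G_b}\cl{l}$ only by a $\cn$-term (Remark~1.9 of Part~I), and (ii) making sure $\cn\subset\caK_l(\bb)$ for $l\ge l_0'$ so that ``$+\cn$'' on the right-hand side is harmless and, conversely, so that the transported errors can be discarded; this last point follows from \eqref{eq:eoa} since $O^{(\sigma,l)}_{\alpha,\beta,k}\in\cn\cap\caK_l(\bb)$ and by multiplying such elements to fill out $\CN$. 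I expect (i)--(ii)— the bookkeeping of which block-upper-triangular remainders land in which $W_a^{(l)}$— to be the main, though routine, obstacle; the inductive engine itself is already supplied by Lemmas~\ref{lem:indaa} and \ref{lem:ind}.
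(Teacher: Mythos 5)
There is a genuine gap in your base case: you have the seeding of the induction backwards. By definition, $W_0^{(l)}=\mnz\otimes\spa\{\ir{D_{a'}}:a'\ge 1\}\cl{l}+\cn$ contains only $\ir{D_{a'}}\cl{l}$-type terms and $\cn$; it does \emph{not} contain $\mnz\otimes\spa\{\il{G_b}\}\cl{l}$ or $\mnz\otimes\cl{l}\il{G_b}$ (these are not in $\cn$ either --- e.g.\ $\cl{l}\il{G_b}$ has a nonzero $(0,b)$ entry, while $\cn$ is spanned by $\eij{-a',b'}$ with $a'\ge 1$; indeed, if they were in $\cn$ the lemma would not need to list them separately). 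So your claim that $A^{(R,l)}_{\alpha\beta 0}-\zeij{\alpha\beta}\otimes\cl{l}=\sum_b z^{(R,l)}_{\alpha\beta 0b}\otimes\cl{l}\il{G_b}+O^{(R,l)}_{\alpha,\beta,0}$ lies in $W_0^{(l)}$ is false, and $(\bbP_0)$ does not follow from $A^{(R,l)}_{\alpha\beta 0}$. The correct seed is $A^{(L,l)}_{\alpha\beta 0}$ from (\ref{eq:eoa}), whose error terms $\sum_a z^{(L,l)}_{\alpha\beta a0}\otimes\ir{D_a}\cl{l}+O^{(L,l)}_{\alpha,\beta,0}$ do lie in $W_0^{(l)}$ by definition; this is exactly what the paper does. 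Your proposed left induction has the mirror-image defect: its base case would have to be seeded with $A^{(R,l)}_{\alpha\beta 0}$ (whose errors are $\cl{l}\il{G_b}$-terms), not with $A^{(L,l)}_{\alpha\beta 0}$ as you state.

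Beyond that, the entire left induction is unnecessary. Once $(\bbP_{k_R})$ is established, $W_{k_R}^{(l)}=\cn$, so $\zeij{\alpha\beta}\otimes\cl{l}$ and $\zeij{\alpha\beta}\otimes\ir{D_a}\cl{l}$ all lie in $\caK_l(\bb)+\cn$; the remaining generators are then read off directly from (\ref{eq:eoa}), since
$\zeij{\alpha\beta}\otimes\cl{l}\il{G_b}=A^{(L,l)}_{\alpha\beta b}-\sum_a z^{(L,l)}_{\alpha\beta ab}\otimes\ir{D_a}\cl{l}-O^{(L,l)}_{\alpha,\beta,b}\in\caK_l(\bb)+\cn$.
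This one-line step is the paper's argument and spares you from formulating and verifying left analogues of $W_a^{(l)}$, of the identities (\ref{eq:ww}) (in particular of the annihilation property $\hpd W_b^{(l)}=0$, whose left counterpart you would need to check), and of Lemmas \ref{lem:indaa}--\ref{lem:ind}. Finally, your concern about whether $\cn\subset\caK_l(\bb)$ is beside the point for this lemma: the conclusion is only a containment in $\caK_l(\bb)+\cn$, so every $\cn$-remainder can simply be absorbed into the right-hand side; the inclusion $\cn\subset\caK_l(\bb)$ is the content of the separate Lemma \ref{lem:cnkb} and is not needed here.
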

\begin{proof}
$(\bbP_{k_R})$ combined with Lemma \ref{lem:kv} (\ref{eq:eoa}) corresponds to the claim. From Lemma \ref{lem:ind},
it suffices to check $(\bbP_0)$. However, this is clear from the existence of $A^{(L,l)}_{\alpha\beta 0}$
in Lemma \ref{lem:kv}.
\end{proof}

\begin{lem}\label{lem:cnkb}
Assume [A1],[A3],[A4], and [A5]. We use the notations in  Lemma \ref{lem:form}.
Then there exists an $\tilde l_0\ge l_0'$ (with $l_0'$ given in Lemma \ref{lem:kc}) such that
\begin{align}
\cn
\subset \caK_l(\bb),
\end{align}
for any $\tilde l_0\le l\in\nan$.
\end{lem}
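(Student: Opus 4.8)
The plan is to show that $\cn$ sits inside $\caK_l(\bb)$ for $l$ large by producing enough elements of $\caK_l(\bb)$ whose ``$\cn$-parts'' span all of $\cn$. Recall from Lemma \ref{lem:kc} that for $l\ge l_0'$ we already have
$\mnz\otimes\spa\lmk \cl{l},\{\ir{D_a}\cl{l}\},\{\cl{l}\il{G_b}\}\rmk\subset \caK_l(\bb)+\cn$,
and from Lemma \ref{lem:kv} that $\caK_l(\bb)\subset\mnz\otimes\caD(k_R,k_L,\bbD,\bbG)\cl{l}$; combining these, $\caK_l(\bb)$ differs from $\mnz\otimes\caD(k_R,k_L,\bbD,\bbG)\cl{l}$ only by contributions in $\cn=\CN$, i.e.\ the ``corner'' $\overline{\hpd}\lmk\mnzk\rmk\overline{\hpu}$.

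First I would fix the concrete representatives $A^{(R,l)}_{\alpha\beta a}$ and $A^{(L,l)}_{\alpha\beta b}$ of Lemma \ref{lem:kv}, which lie in $\caK_l(\bb)$ and have the explicit forms in (\ref{eq:eoa}). The idea is to multiply an $A^{(R,l_1)}_{\alpha\beta a}$ (which carries a $\ir{D_a}$ on the left, hence a nonzero $\hpu$-corner) by an $A^{(L,l_2)}_{\gamma\delta b}$ (which carries a $\cl{}\il{G_b}$, hence a nonzero $\hpd$-corner): the product lies in $\caK_{l_1+l_2}(\bb)$, and using the multiplication rules (\ref{eq:ww}) together with (9)--(13) and Definition~1.7, 1.8 of Part~I, the non-corner terms collapse and what survives in the $\cn$-part is, up to invertible scalar factors $\lambda_{-a}^{\pm l_1}\lambda_b^{\pm l_2}$ and the powers of $Y$, a product $\heij{\alpha\delta}\otimes \pd\ir{D_a}\check\Lambda^{l_1+l_2}\il{G_b}\pu$ or similar. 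Since $\{D_a\}\cup\{\unit\}$ generates the $\pu$-part and $\{G_b\}\cup\{\unit\}$ the $\pd$-part of $\mnzk$, and since $\zeij{\alpha\beta}$ range over all of $\mnz$, these products, as $a,b,\alpha,\delta,l_1,l_2$ vary, should span $\mnz\otimes\pd\Mat_{k_L+k_R+1}\pu=\cn$. The point that the leftover elementary matrices $\eij{ij}$ in the corner really are all reached is where one uses $D_a\eijr{00}=\eijr{-a,0}$ and $E_{00}^{(0,k_L)}G_b=E_{0b}^{(0,k_L)}$ (properties {\it 5} of Lemma \ref{lem:lrf}) to pin down the row/column indices, plus a Vandermonde/Lemma~C.7-of-Part~I argument over the scaling parameters $l_1,l_2$ to separate the various $\lambda$-weights and $Y$-powers and extract each basis element of $\cn$ individually.

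Concretely, the steps in order: (1) record that $\caK_l(\bb)+\cn$ equals $\mnz\otimes\caD(k_R,k_L,\bbD,\bbG)\cl{l}$ for $l\ge l_0'$, so it suffices to show the $\cn$ summand is already inside $\caK_l(\bb)$; (2) form the products $A^{(R,l_1)}_{\alpha\beta a}A^{(L,l_2)}_{\gamma\delta b}\in\caK_{l_1+l_2}(\bb)$ and, using (\ref{eq:ww}) and Part~I (9)--(13), compute their images under $\hpd(\cdot)\hpu$, discarding everything killed by the corner projections; (3) observe that these corner parts are $\mnz$-linear combinations of $\heij{\alpha\delta}\otimes\pd\,\ir{D_a}\check\Lambda^{L}\il{G_b}\,\pu$ with $L=l_1+l_2$, times nonzero scalars; (4) vary $a\in\{0,\dots,k_R\}$, $b\in\{0,\dots,k_L\}$, $\alpha,\delta\in\{1,\dots,n_0\}$ and invoke that $\{\ir{D_a}\}_a$, $\{\il{G_b}\}_b$ span $\pu\Mat_{k_L+k_R+1}$, $\pd\Mat_{k_L+k_R+1}$ respectively, to get that the span of these corner parts is all of $\cn$; (5) use Lemma~C.7 of Part~I in the variables $l_1,l_2$ (as in the proof of Lemma \ref{lem:ind}) to isolate, for each fixed $L$ large enough, the individual pieces with prescribed $\lambda$-weight and $Y$-power, concluding $\cn\subset\caK_L(\bb)$ for all $L\ge\tilde l_0$, where $\tilde l_0$ is determined by $l_0'$ plus the bookkeeping bound from the Vandermonde argument (something like $\tilde l_0=2l_0'+2(k_L+k_R+1)^2$).

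I expect the main obstacle to be step (4)--(5): carefully checking that the various scalar coefficients that appear (products of powers of the eigenvalues $\lambda_{-a},\lambda_b$ and binomial coefficients ${}_LC_\gamma$ times matrix elements of $Y^\gamma$) are nondegenerate enough that a single Lemma~C.7 application really does recover every elementary matrix $\heij{\alpha\delta}\otimes\eij{ij}$ in the corner, rather than only some subspace of $\cn$. In particular one must make sure no element $\eij{ij}$ with $i$ a ``right'' index and $j$ a ``left'' index is missed; this is exactly why both families $A^{(R,\cdot)}$ and $A^{(L,\cdot)}$ are needed, and why the identities $D_a\eijr{00}=\eijr{-a,0}$, $E_{00}^{(0,k_L)}G_b=E_{0b}^{(0,k_L)}$ are the crucial input. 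Everything else is routine manipulation with the Part~I formulas.
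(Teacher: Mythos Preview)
Your starting idea --- multiply a ``$D_a$-type'' element by a ``$G_b$-type'' element to produce the matrix unit $\eij{-a,b}$ in the corner --- is exactly right, and the identity $\ir{D_a}\cl{L}\il{G_b}=\eij{-a,b}$ is indeed the engine of the argument. But the mechanism you propose to clean up the error terms does not work.

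The problem is in your step~(5). When you form $A^{(R,l_1)}_{\alpha\beta a}\,A^{(L,l_2)}_{\gamma\delta b}$ (or, more efficiently, take the elements $\zeij{\alpha\beta}\otimes\ir{D_a}\cl{l_1}+Z_1^{(l_1)}$ and $\zeij{\gamma\delta}\otimes\cl{l_2}\il{G_b}+Z_2^{(l_2)}$ from Lemma~\ref{lem:kc} with $Z_i^{(l)}\in\cn$), the residual $\cn$-contributions are
\[
Z_1^{(l_1)}\cdot\bigl(\zeij{\gamma\delta}\otimes\cl{l_2}\il{G_b}\bigr)
\quad\text{and}\quad
\bigl(\zeij{\alpha\beta}\otimes\ir{D_a}\cl{l_1}\bigr)\cdot Z_2^{(l_2)}.
\]
A Lemma~C.7/Vandermonde argument separates terms by their explicit polynomial--exponential dependence on $l_1,l_2$. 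Here the dependence of $Z_1^{(l_1)}$ on $l_1$ and of $Z_2^{(l_2)}$ on $l_2$ is \emph{completely unknown}: these elements are produced only existentially by Lemma~\ref{lem:kc} (or by the $O^{(\sigma,l)}$ in Lemma~\ref{lem:kv}). So you cannot isolate the $l$-independent piece $\eij{-a,b}$ by varying $l_1,l_2$.

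The paper's fix is structural, not analytic. One first proves the shifting properties
\[
\ir{D_a}\opd=\qu{-(a+1)}\ir{D_a}\opd,\qquad \opu\il{G_b}=\opu\il{G_b}\qd{b+1},
\]
which say that multiplying an element of $\cn$ on the left by $\ir{D_a}\cl{l_1}$ pushes its row support to indices $\le -(a{+}1)$, and multiplying on the right by $\cl{l_2}\il{G_b}$ pushes its column support to indices $\ge b{+}1$. Hence the two residual terms above land in
\[
\caP_{a,b}:=\mnz\otimes\qu{-a-1}\Mat_{k_R+k_L+1}\qd{b}+\mnz\otimes\opd\Mat_{k_R+k_L+1}\qd{b+1},
\]
which does \emph{not} contain $\zeij{\alpha\delta}\otimes\eij{-a,b}$. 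This gives $\mnz\otimes\eij{-a,b}\subset\caK_l(\bb)+\caP_{a,b}$, and a double induction over $(a,b)$ (first increasing $a$ up to $k_R$, then increasing $b$) shrinks $\caP_{a,b}$ to $\{0\}$, yielding $\cn\subset\caK_l(\bb)$. No Vandermonde is needed in this lemma; that device was already spent in Lemma~\ref{lem:ind}.

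Two smaller points: in your step~(3) you project with $\hpd(\cdot)\hpu$, but the corner is $\overline{\hpd}(\cdot)\overline{\hpu}$; and in step~(4), $\{\ir{D_a}\}$ does not span $\pu\Mat_{k_R+k_L+1}$ --- what is true (and sufficient) is the pointwise identity $\ir{D_a}\cl{L}\il{G_b}=\eij{-a,b}$ for $1\le a\le k_R$, $1\le b\le k_L$, which already hits every basis element of $\cn$.
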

\begin{proof}
First we claim 
\begin{align}\label{eq:fcl}
&\ir{D_a}\opd=\qu{-(a+1)}\ir{D_a}\opd,\quad a=1,\ldots,k_R,\notag\\
&\opu \il{G_b}=\opu \il{G_b}\qd{b+1},\quad b=1,\ldots,k_L.
\end{align}
To see this, let $-k_R\le i\le 0$ be a number satisfying
$\eij{ii}\ir{D_a}\opd\neq 0$ for $a=1,\ldots,k_R$.
As we have
\[
\ir{\eijr{ii}D_a\sum_{j=-k_R}^{-1}\eijr{jj}}
=\eij{ii}\ir{D_a}\opd\neq 0,
\]
it means there is $j\in\{-k_R,\ldots,-1\}$ such that $\eijr{ii}D_a\eijr{jj}\neq 0$.
By Definition 1.8 (9) in Part I,  this implies
 $\lambda_{i}=\lambda_{-a}\lambda_j$.
In particular, we have $|\lambda_i|=|\lambda_{-a}\lambda_j|<|\lambda_{-a}|$, because $|\lambda_{j}|<1$ for $j\in\{-k_R,\ldots,-1\}$.
As $\lal\in\Wo(k_R,k_L)$, this implies $i\le -a-1$.
This proves the first line of the claim. The second one can be proven similarly.

Let $\check\Lambda:= \Lambda_{\lal}\lmk 1+Y\rmk$ as before.
Note that
\begin{align}\label{eq:dg}
&\ir{D_a}\cl{l_1+l_2}\il{G_b}
=\ir{D_a}\pu\pd\cl{l_1+l_2}\il{G_b}\notag\\
&=\ir{D_a}\eij{00}\cl{l_1+l_2}\il{G_b}
=\ir{D_a}\eij{00}\il{G_b}\notag\\
&= \ir{D_a\eijr{00}}\il{\eijl{00}G_b}
=\ir{\eijr{-a0}}\il{\eijl{0b}}=\eij{-ab},
\end{align}
for any $l_1,l_2\in \nan$ and $a=1,\ldots,k_R$, $b=1,\ldots,k_L$.
By the claim (\ref{eq:fcl}), we have
\begin{align}\label{eq:zg}
&Z\cl{l_2}\il{G_b}
=Z\cl{l_2}\opu\il{G_b}
=Z \cl{l_2}\opu\il{G_b}\qd{b+1}\in 
\opd\Mat_{k_L+k_R+1}\qd{b+1},\notag\\
&Z\in \opd\Mat_{k_L+k_R+1}\opu,\quad b=1,\ldots,k_L,\quad l_2\in\nan,
\end{align}
and
\begin{align}\label{eq:dz}
&\ir{D_a}\cl{l_1} Z
=\ir{D_a}\opd\cl{l_1} Z
= \qu{-a-1}\ir{D_a}\opd\cl{l_1} Z
\in\qu{-a-1}\Mat_{k_L+k_R+1}\qd{b},\notag\\
&Z\in \opd\Mat_{k_L+k_R+1}\qd{b},\quad a=1,\ldots,k_R,b=1,\ldots,k_L,\quad l_1\in\nan.
\end{align}

For $a=1,\ldots,k_R$ and $b=1,\ldots,k_L$, 
we define the subset $\caM_{a,b}$ of $\{1,\ldots,k_R\}\times \{1,\ldots,k_L\}$ by
\begin{align*}
\caM_{a,b}:=
\left\{
(a',b')\mid a'=1,\ldots,k_R,\quad 1\le b'<b
\right\} \cup
\left\{
(a',b')\mid 1\le a'\le a,\quad b'=b
\right\}.
\end{align*}
We also set
\[
\caP_{a,b}:=\mnz\otimes \qu{-a-1}\Mat_{k_L+k_R+1}\qd{b}+\mnz\otimes \opd\Mat_{k_L+k_R+1}\qd{b+1}.
\]
Note that
\[
\mnz\otimes\eij{-a',b'}\subset \caP_{a,b},\quad (a',b')\in 
\lmk \{1,\ldots,k_R\}\times \{1,\ldots,k_L\}\rmk\setminus \caM_{a,b}.
\]

For $(a,b)\in\{1,\ldots,k_R\}\times \{1,\ldots,k_L\}$,
we consider the following proposition:\\\\
$({\mathfrak P}_{a,b})$ : There exists an $l_{a,b}\in\nan$ with $l_0'\le l_{a,b}$ such that
\begin{align}
\mnz\otimes\eij{-a',b'}\subset \caK_l(\bb)
+\caP_{a,b},
\end{align}
for all $(a',b')\in\caM_{a,b}$ and $l\ge l_{a,b}$.

First we show ${\mathfrak P}_{11}$. 
By Lemma \ref{lem:kc}, for any $l_0'\le l_1,l_2\in\nan$ and $\alpha,\beta=1,\ldots,n_0$, 
there exist $Z_{1,\alpha,\beta}^{(l_1)}, Z_{2,\beta,\beta}^{(l_2)}\in \cn$ such that
\begin{align*}
\zeij{\alpha\beta}\otimes \ir{D_1}\cl{l_1}+Z_{1,\alpha,\beta}^{(l_1)}\in \caK_{l_1}(\bb),\quad
\zeij{\beta\beta}\otimes \cl{l_2}\il{G_1}+Z_{2,\beta,\beta}^{(l_2)}\in \caK_{l_2}(\bb).
\end{align*}
By (\ref{eq:ww}), we have
\begin{align}\label{eq:ii}
&\caK_{l_1+l_2}(\bb)\ni\lmk \zeij{\alpha\beta}\otimes \ir{D_1}\cl{l_1}+Z_{1,\alpha,\beta}^{(l_1)}\rmk
\lmk \zeij{\beta\beta}\otimes \cl{l_2}\il{G_1}+Z_{2,\beta,\beta}^{(l_2)}\rmk\notag\\
&=\zeij{\alpha\beta}\otimes \ir{D_1}\cl{l_1+l_2}\il{G_1}
+Z_{1,\alpha,\beta}^{(l_1)}\lmk \zeij{\beta\beta}\otimes \cl{l_2}\il{G_1}\rmk
+\lmk \zeij{\alpha\beta}\otimes \ir{D_1}\cl{l_1}\rmk Z_{2,\beta,\beta}^{(l_2)}.
\end{align}
From (\ref{eq:dg}), (\ref{eq:zg}), and (\ref{eq:dz}), this implies
$
\zeij{\alpha\beta}\otimes\eij{-11}\in \caP_{11}
+\caK_l\lmk\bb\rmk,
$
for any $\alpha,\beta=1,\ldots,n_0$ and $l\ge 2l_0'$.
This proves (${\mathfrak P}_{1,1}$).

Assume that (${\mathfrak P}_{a,b}$) holds for some $a<k_R$ and $b=1,\ldots,k_L$.
We would like to show that  (${\mathfrak P}_{a+1,b}$) holds. 
By (${\mathfrak P}_{a,b}$), we have
\[
\cn\subset \caK_l(\bb)+\caP_{a,b},\quad l\ge l_{a,b}.
\]
This and Lemma \ref{lem:kc} implies
\begin{align*}
\mnz\otimes  \ir{D_{a+1}}\cl{l}, \mnz\otimes \cl{l}\il{G_b}\subset \caK_l(\bb)+\cn
\subset  \caK_l(\bb)+\caP_{a,b},\quad l\ge l_{a,b}.
\end{align*}
Therefore,
 for any $l_{a,b}\le l_1,l_2\in\nan$ and $\alpha,\beta=1,\ldots,n_0$, 
there exist $Z_{1,\alpha,\beta}^{(l_1)}, Z_{2,\beta,\beta}^{(l_2)}\in \caP_{a,b}$ such that
\begin{align*}
\zeij{\alpha\beta}\otimes \ir{D_{a+1}}\cl{l_1}+Z_{1,\alpha,\beta}^{(l_1)}\in \caK_{l_1}(\bb),\quad
\zeij{\beta\beta}\otimes \cl{l_2}\il{G_b}+Z_{2,\beta,\beta}^{(l_2)}\in \caK_{l_2}(\bb).
\end{align*}
From (\ref{eq:dg}), (\ref{eq:zg}), and (\ref{eq:dz}), this implies
$
\zeij{\alpha\beta}\otimes\eij{-(a+1),b}\in \caP_{a+1,b}
+\caK_l\lmk\bb\rmk,
$for any $\alpha,\beta=1,\ldots,n_0$ and $l\ge 2l_{a,b}$.
Hence we have
\[
\mnz\otimes\eij{-a',b'}\subset \caK_l(\bb)
+\caP_{a,b}\subset  \caK_l(\bb)
+\caP_{a+1,b}
\]
for all $(a',b')\in\caM_{a+1,b}$ and $l\ge2 l_{a,b}$.
This proves (${\mathfrak P}_{a+1,b}$).

Assume that (${\mathfrak P}_{k_R,b}$) holds for some $b<k_L$.
We would like to show that  (${\mathfrak P}_{1,b+1}$) holds.
By (${\mathfrak P}_{k_R,b}$), we have
\[
\cn\subset \caK_l(\bb)+\caP_{k_R,b},\quad l\ge l_{k_R,b}.
\]
This and Lemma \ref{lem:kc} implies
\begin{align*}
\mnz\otimes  \ir{D_{1}}\cl{l}, \mnz\otimes \cl{l}\il{G_{b+1}}\subset \caK_l(\bb)+\cn
\subset  \caK_l(\bb)+\caP_{k_R,b},\quad l\ge l_{k_R,b}.
\end{align*}
Therefore,
 for any $l_{k_R,b}\le l_1,l_2\in\nan$ and $\alpha,\beta=1,\ldots,n_0$, 
there exist $Z_{1,\alpha,\beta}^{(l_1)}, Z_{2,\beta,\beta}^{(l_2)}\in \caP_{k_R,b}$ such that
\begin{align*}
\zeij{\alpha\beta}\otimes \ir{D_{1}}\cl{l_1}+Z_{1,\alpha,\beta}^{(l_1)}\in \caK_{l_1}(\bb),\quad
\zeij{\beta\beta}\otimes \cl{l_2}\il{G_{b+1}}+Z_{2,\beta,\beta}^{(l_2)}\in \caK_{l_2}(\bb).
\end{align*}
From (\ref{eq:dg}), (\ref{eq:zg}), and (\ref{eq:dz}), this implies
$
\zeij{\alpha\beta}\otimes\eij{-1,b+1}\in \caP_{1,b+1}
+\caK_l\lmk\bb\rmk$,
for any $\alpha,\beta=1,\ldots,n_0$ and $l\ge 2l_{k_R,b}$.
Hence we have
\[
\mnz\otimes\eij{-a',b'}\subset \caK_l(\bb)
+\caP_{k_R,b}\subset  \caK_l(\bb)
+\caP_{1,b+1}
\]
for all $(a',b')\in\caM_{1,b+1}$ and $l\ge2 l_{k_R,b}$.
This proves (${\mathfrak P}_{1,b+1}$).

Hence we have proven (${\mathfrak P}_{k_R,k_L}$), inductively.
As $\caP_{k_R,k_L}=\{0\}$ and $\caM_{k_R,k_L}=\{1,\ldots,k_R\}\times \{1,\ldots,k_L\}$, we have
\[
\mnz\otimes\eij{-a',b'}\subset \caK_l(\bb),
\]
for all $(a',b')\in\{1,\ldots,k_R\}\times \{1,\ldots,k_L\}$ and $l\ge l_{k_R,k_L}$.
This proves the Lemma.
\end{proof}
\begin{proofof}[Lemma \ref{lem:dercla}]
Let us consider the a septuplet $(n_0,k_R,k_L,\lal, \bbD,\bbG,Y)$ 
and $\oo$ given in Lemma \ref{lem:form}.
We define $\bb$ by {\it 1} of Lemma \ref{lem:form}.
By Lemma \ref {lem:kv}, Lemma \ref{lem:kc}, and Lemma \ref{lem:cnkb}, our $\bb$ belongs to $\ClassA$.
The properties {\it 2,3} of Lemma \ref{lem:dercla} corresponds to {\it 2,3} of Lemma \ref{lem:form}.
By these properties, we have 
\[
\omega_L\circ \tau_x\lmk h_{m',\bb}\rmk=0,\quad x\le -m',\quad
\omega_R\circ \tau_x\lmk h_{m',\bb}\rmk=0,\quad 0\le x,
\]
for $m'\ge m_\bb$,
because $\Ran\Gamma_{l,\bb}^{(R)}$
is a subspace of $\ker \tau^{(R)}_x\lmk h_{m',\bb}\rmk$ for $0\le x\le l-m'$
and $l\ge m'$.
Recall that  if $\psi_\sigma\in\caS_{\sigma}(H)$ for $\sigma=L,R$, 
by Lemma \ref{lem:a1a4}, 
we have $\psi_\sigma\le d_1\cdot\omega_{\sigma}$.
Therefore, we have 
\[
\psi_L\circ \tau_x\lmk h_{m',\bb}\rmk=0,\quad x\le -m',\quad
\psi_R\circ \tau_x\lmk h_{m',\bb}\rmk=0,\quad 0\le x.
\]
This means $\psi_L\in\caS_{(-\infty,-1]}(H_{\Phi_{m',\bb}})$ and  $\psi_R\in\caS_{[0,\infty)}(H_{\Phi_{m',\bb}})$ for $m'\ge m_\bb$.
(Recall Lemma \ref{lem:vr} and the fact that $h_{m',\bb}$ satisfies  [A1]-[A5].)
This proves $\caS_{L}(H)\subset \caS_{(-\infty,-1]}(H_{\Phi_{m',\bb}})$ and 
$\caS_{R}(H)\subset\caS_{[0,\infty)}(H_{\Phi_{m',\bb}})$ for $m'\ge m_\bb$.
Conversely, let  $\psi_L\in\caS_{(-\infty,-1]}(H_{\Phi_{m',\bb}})$ for $m'\ge m_\bb$.
Then, from Lemma 3.15 of Part I, there exists $\sigma_L\in {\mathfrak E}_{n_0(k_L+1)}$ such that
$ \psi_L(A)=\Xi_L(\sigma_L)(A)
:=\sigma_{L}(y_{\bb}^{\frac 12} \bbL_\bb(A)y_{\bb}^{\frac 12})$.
By the definition of $\bbL_\bb$, 
 it is easy to see that
 $\tau_{l}\lmk s\lmk \psi_L\vert_{\caA_{[-l,-1]}}\rmk\rmk$ 
is under the orthogonal projection onto $\Gamma_{l,\bb}^{(R)}\lmk \mnz\otimes \pd \Mat_{k_L+k_R+1}\pd \rmk$
for any $l\in \nan$.
By {\it 2} of  Lemma \ref{lem:form}, this means that the support of $\psi_L\vert_{\caA_{[-l,-1]}}$ is under
$ s(\omega_{L}\vert_{\caA_{[-l,-1]}})$. As $ s(\omega_{L}\vert_{\caA_{[-l,-1]}})$ is under the projection
onto the kernel of $\tau_x(h)$, for $-l\le x\le -m$ by Lemma \ref{lem:vr}, we obtain
$\psi_L\circ\tau_x(h)=0$. Hence we have $\psi_L\in \caS_L(H)$.
Hence we get $\caS_{(-\infty,-1]}(H_{\Phi_{m',\bb}})\subset \caS_L(H) $. The proof for
$\caS_{[0,\infty)}(H_{\Phi_{m',\bb}})\subset \caS_R(H)$ is the same.

To prove {\it 4}, note that $\omega_\infty$ is translation invariant because of
the uniqueness of $\caS_{\bbZ}(H)$.
Note that $\omega_\infty\vert_{\caA_{[0,\infty)}}\in \caS_R(H)=\caS_{[0,\infty)}(H_{\Phi_{m',\bb}})$ because of
{\it 1} and Lemma \ref{lem:vr}. By the translation invariance of $\omega_\infty$ and  Lemma \ref{lem:vr},
this means $\omega_\infty\in\caS_{\bbZ}(H_{\Phi_{m',\bb}})=\{\omega_{\bb,\infty}\}$, $m'\ge m_\bb$.
Hence we have $\omega_\infty=\omega_{\bb,\infty}$.
\end{proofof}
\section{Proof of the main Theorem}\label{sec:hachi}
In this section we complete the proof of Theorem \ref{thm:main} and Corollary \ref{cor:cl}.
\begin{proofof}[ Theorem \ref{thm:main} ]
Let us consider the $\bb\in\ClassA$ with respect to
the septuplet $(n_0,k_R,k_L,\lal, \bbD,\bbG,Y)$ given in Lemma \ref{lem:dercla}.
Recall the definition of
$l_{\bb}(n,n_0,k_R,k_L,\lal,\bbD,\bbG,Y)$ from Part I Definition 1.13.
We denote it by $l_\bb$ throughout the proof.

We fix an arbitrary  $m_1\ge \max\{2l_{\bb}(n,n_0,k_R,k_L,\lal,\bbD,\bbG,Y),\frac{\log \lmk n_0^2(k_L+1)(k_R+1)+1\rmk}{\log n}\}$.
First we claim that
 there exist a constant $\tilde C>0$ and a natural number 
$\tilde N_0\in\nan$ such that
\begin{align}
&\lv
{\Tr_{[0,N-1]}\lmk \lmk 1- G_{N,\bb} \rmk G_N \rmk }
\rv\le \tilde C {s}_1^{\frac N2},\quad  \tilde N_0\le N.
\end{align}
(Here, $s_1$ is given in [A4].)
Recall Theorem 1.18 of Part I.
By (ii) of the latter theorem, there exist $\gamma_{m_1,\bb}>0$ and $\tilde N_{m_1,\bb}\in\nan$
such that
\begin{align}\label{eq:gapb}
\gamma_{m_1,\bb}\lmk 1-G_{N,\bb}\rmk
\le \lmk H_{\Phi_{m_1,\bb}}\rmk_{[0,N-1]},\text{ for all } N\ge \tilde N_{m_1,\bb}.
\end{align}
On the other hand, 
note that $\omega_R\lmk\tau_x\lmk h_{m_1,\bb}\rmk\rmk=0$, $0\le x$
and $\omega_L\lmk\tau_x\lmk h_{m_1,\bb}\rmk\rmk=0$, $x\le -m_1$.
This is because of  Lemma \ref{lem:dercla}, {\it 1} .
Therefore, combining [A1] and [A4] with this, we obtain
\begin{align}\label{eq:epb}
&\lv
{\Tr_{[0,N-1]}\lmk G_{N}\lmk\tau_x\lmk h_{m_1,\bb}\rmk \rmk\rmk}
\rv\le d_1C_1 s_1^{\max\{N-(m_1+x),x\}}
\end{align}
for all $0\le x\le N-m_1$, and $N\ge N_3$.
Set $\tilde N_0:=\max\{\tilde N_{m_1,\bb}, N_3\}$.
By (\ref{eq:gapb}) and (\ref{eq:epb}), we obtain
\begin{align*}
&\gamma_{m_1,\bb}\Tr_{[0,N-1]}\lmk \lmk 1-G_{N,\bb}\rmk G_N\rmk
\le \Tr_{[0,N-1]}\lmk \lmk H_{\Phi_{m_1,\bb}}\rmk_{[0,N-1]}G_N\rmk\\
&=\sum_{0\le x\le N-m_1 }\Tr_{[0,N-1]}\lmk \tau_x\lmk h_{m_1,\bb}\rmk G_N\rmk
\le \sum_{0\le x\le N-m_1 }d_1C_1 s_1^{\max\{N-(m_1+x),x\}}
\end{align*}
for all $N\ge \tilde N_0$.
Therefore, there exists $\tilde C>0$ such that
\begin{align*}
\Tr_{[0,N-1]}\lmk \lmk 1-G_{N,\bb}\rmk G_N\rmk
\le \tilde C
 s_1^{\frac N2},\quad N\ge \tilde N_0.
\end{align*}
This proves the claim.

The same kind of inequality with $G_N$ and $G_{N,\bb}$ interchanged holds, i.e.,
there exist a constant $C'>0$ and a natural number 
$N_0'\in\nan$ such that
\begin{align}
&\lv
{\Tr_{[0,N-1]}\lmk \lmk 1- G_{N} \rmk G_{N,\bb} \rmk }
\rv\le C' {s}_\bb^{\frac N2},\quad  N_{0}'\le N.
\end{align}
Here, $s_\bb$ is given in (iv) of Theorem 1.18 Part I.

Define $0<s<1$, $C>0$, and $N_0\in \nan$ by $s:=\max\{s_1^{\frac 14}, s_\bb^{\frac 14}\}$, 
$C:=\tilde C^{\frac 12}+{C'}^{\frac 12}$, and $N_0:=\tilde N_0+N_0'$.
Then we have
\begin{align*}
&\lV G_N-G_{N,\bb}\rV\le 
\lV G_N\lmk \unit -G_{N,\bb}\rmk\rV+\lV\lmk \unit- G_N\rmk G_{N,\bb}\rV\\
&\le\lmk \Tr_{[0,N-1]}\lmk G_N\lmk \unit -G_{N,\bb}\rmk\rmk\rmk^{\frac 12}
+\lmk \Tr_{[0,N-1]}\lmk G_{N,\bb}\lmk \unit -G_{N}\rmk\rmk\rmk^{\frac 12}
\le
\tilde C^{\frac 12} s_1^{\frac N 4}+{C'}^{\frac 12} s_\bb^{\frac N 4}\le C s^N,
\end{align*}
for all $N\ge N_0$.
\end{proofof}
\begin{proofof}[Corollary \ref{cor:cl}]
Let $\bb\in\ClassA$ and $m_1\in\nan$ given in Theorem \ref{thm:main}.
From [A2], Theorem 1.18 (ii) of Part I, and Theorem \ref{thm:main}, there exists
$\hat \gamma>0$, $\check N_0\in\nan$, $C>0$, and $0<s<1$ such that
\begin{align}\label{eq:cgap}
\hat \gamma\lmk 1-G_N\rmk
\le \lmk H\rmk_{[0,N-1]},\quad
\hat \gamma\lmk 1-G_{N,\bb}\rmk
\le \lmk H_{\Phi_{m_1,\bb}}\rmk_{[0,N-1]},
\quad \lV G_N-G_{N,\bb}\rV\le C s^N,
\end{align}
for all $\check N_0\le N\in\nan$.

We claim the following:
Let  $\check N_0\le N\in\nan$ and $t\in[0,1]$.
Assume that  $\lambda\in (0,\hat\gamma)$ is an eigenvalue of $(1-t)H_{[0,N-1]}+t\lmk H_{\Phi_{m_1,\bb}}\rmk_{[0,N-1]}$ with a
unit eigenvector $\xi\in\bigotimes_{i=0}^{N-1}$.
Then we have
\begin{align}\label{eq:gngnb}
\lV G_N\xi\rV\le \frac{C}{\lambda}\cdot Ns^N,\quad
\lV G_{N,\bb}\xi\rV\le \frac{C}{\lambda}\lV h\rV\cdot Ns^N.
\end{align}
Multiplying
\begin{align}\label{eq:ee}
\lmk(1-t)H_{[0,N-1]}+t\lmk H_{\Phi_{m_1,\bb}}\rmk_{[0,N-1]}\rmk\xi
=\lambda\xi,
\end{align}
by $G_N$,
we obtain
\[
tG_N\lmk H_{\Phi_{m_1,\bb}}\rmk_{[0,N-1]}\xi
=\lambda G_N\xi.
\]
From this, we obtain
\begin{align*}
&\lV G_N\xi\rV=\frac{t}{\lambda}\lV G_N\lmk H_{\Phi_{m_1,\bb}}\rmk_{[0,N-1]}\xi\rV
\le \frac{t}{\lambda}\lV\lmk G_N-G_{N,\bb}\rmk\lmk H_{\Phi_{m_1,\bb}}\rmk_{[0,N-1]}\xi\rV
+\frac{t}{\lambda}\lV G_{N,\bb}\lmk H_{\Phi_{m_1,\bb}}\rmk_{[0,N-1]}\xi\rV\\
&=\frac{t}{\lambda}\lV\lmk G_N-G_{N,\bb}\rmk \lmk H_{\Phi_{m_1,\bb}}\rmk_{[0,N-1]}\xi\rV
\le\frac{t}{\lambda}\sum_{x:0\le x\le N-m_1}\lV G_N-G_{N,\bb}\rV
\le \frac{tC}{\lambda}\cdot Ns^N\le \frac{C}{\lambda}\cdot Ns^N.
\end{align*}
This proves the first inequality of the claim. The proof of the second one is the same.

Let  $\check N_0\le N\in\nan$ and $t\in[0,1]$.
Assume that  $\lambda\in (0,\hat\gamma)$ is an eigenvalue of $(1-t)H_{[0,N-1]}+t\lmk H_{\Phi_{m_1,\bb}}\rmk_{[0,N-1]}$ with a
unit eigenvector $\xi\in\bigotimes_{i=0}^{N-1}$.
We have the following estimation on $\lambda$.
\begin{align}\label{eq:lb}
\hat\gamma\lmk 1-\frac{C}{\lambda}\lmk 1+\lV h\rV\rmk Ns^N\rmk\le
\lambda.
\end{align}
To see this, we use the bound (\ref{eq:cgap}) (\ref{eq:gngnb}) and obtain
\begin{align*}
&\lambda=\braket{\xi}{\lmk(1-t)H_{[0,N-1]}+t\lmk H_{\Phi_{m_1,\bb}}\rmk_{[0,N-1]}\rmk\xi}
\ge (1-t)\hat\gamma\braket{\xi}{(1-G_N)\xi}+t\hat\gamma\braket{\xi}{\lmk 1-G_{N,\bb}\rmk \xi}\\
&= \hat\gamma\lmk 1-(1-t)\braket{\xi}{G_N\xi}-t\braket{\xi}{G_{N,\bb}\xi}\rmk
\ge 
\hat\gamma\lmk
 1-(1-t)\frac{C}{\lambda}\cdot Ns^N-t\frac{C}{\lambda}\lV h\rV\cdot Ns^N
\rmk
\ge
\hat\gamma\lmk
1-\frac{C}{\lambda}\lmk 1+\lV h\rV\rmk \cdot Ns^N
\rmk.
\end{align*}

Set $c_1:=\frac{2C}{\hat\gamma}\lmk 1+\lV h\rV\rmk \lmk \sup_M Ms^{\frac M 4}\rmk$.
It is a positive constant.
We also set $s_1:=s^{\frac 12}$ and $s_2:=s^{\frac 14}$. They satisfy $0<s_1,s_2<1$.
We claim 
\[
\sigma\lmk
(1-t)H_{[0,N-1]}+t\lmk H_{\Phi_{m_1,\bb}}\rmk_{[0,N-1]}
\rmk\cap[\hat \gamma c_1 s_1^{N},\hat \gamma -\hat\gamma s_2^{N}]=\emptyset,\quad
t\in [0,1],\quad N\ge \hat N_0.
\]
We prove this by contradiction. Assume this is not true.
Then there exist $t\in [0,1]$, $N\ge \hat N_0$, and $\lambda\in [\hat \gamma c_1 s_1^{N},\hat \gamma -\hat\gamma s_2^{N}]$
such that $\lambda$ is an eigenvalue of $(1-t)H_{[0,N-1]}+t\lmk H_{\Phi_{m_1,\bb}}\rmk_{[0,N-1]}$ with a unit 
eigenvector $\xi$.
Then, by (\ref{eq:lb}), we have
\begin{align*}
\hat\gamma\lmk 1-\frac{C}{\lambda}\lmk 1+\lV h\rV\rmk Ns^N\rmk
\le \lambda\le \hat \gamma (1-s_2^{N})=
\hat \gamma(1-s^{\frac N4}).
\end{align*}
Compairing the left and the right hand side of this inequality, we obtain
\begin{align*}
\hat \gamma c_1\le \lambda s^{-\frac N2}\le
CNs^{\frac N4} \lmk 1+\lV h\rV\rmk \le \frac{\hat\gamma }{2} c_1,
\end{align*} 
which is a contradiction.
\end{proofof}
\noindent
{\bf Acknowledgment.}\\
{
This work was supported by JSPS KAKENHI Grant Number 25800057 and 16K05171. 
}
\appendix
\section{Notations}\label{sec:nota}
 In addition to the notations given in Subsection 1.1, 1.2, 1.3, and Appendix A of Part I, we use the following notations.
For $\sigma=L,R$,
\begin{align*}
&\tau^{(\sigma)}_x:=\left\{
\begin{gathered}
\tau_{x},\quad\text{if } \sigma=R\\
\tau_{-x},\quad\text{if } \sigma=L
\end{gathered}
\right.,\quad x\in\nan\cup\{0\},
&&\caS_{\sigma}(H)
:=\left\{\begin{gathered}
\caS_{[0,\infty)}(H),\quad\text{if } \sigma=R\\
\caS_{(-\infty,-1]}(H),\quad\text{if } \sigma=L
\end{gathered}
\right.,\\
&\caA_{\sigma}:=
\left\{
\begin{gathered}
\caA_{[0,\infty)},\quad\text{if } \sigma=R\\
\caA_{(-\infty,-1]},\quad\text{if } \sigma=L
\end{gathered}
\right.,
&&\caA_{\sigma,l}:=
\left\{
\begin{gathered}
\caA_{[0,l-1]},\quad\text{if } \sigma=R\\
\caA_{[-l,-1]},\quad\text{if } \sigma=L
\end{gathered}
\right.,\quad l\in\nan,\\
&\bbZ^{(\sigma)}:=
\left\{
\begin{gathered}
\{x\in\bbZ\mid 0\le x\},\quad\text{if } \sigma=R\\
\{x\in\bbZ \mid x\le -m\},\quad\text{if } \sigma=L
\end{gathered}
\right.,
&&\Gamma_\sigma:=
\left\{
\begin{gathered}
{[0,\infty)},\quad\text{if } \sigma=R\\
{(-\infty,-1]},\quad\text{if } \sigma=L
\end{gathered}
\right..
\end{align*}
Here $m\in\nan$ in the definition of $\bbZ^{(\sigma)}$
is the interaction length of our $h$.
Furthermore, for $\mu^{(l)}=(\mu_1,\ldots,\mu_l)\in\{1,\dots,n\}^{\times l}$, $l\in\nan$, we set
\begin{align*}
\mu^{(l,\sigma)}:=\left\{
\begin{gathered}
\mu^{(l)},\quad\text{if } \sigma=R\\
(\mu_l,\mu_{l-1},\ldots,\mu_1),\quad\text{if } \sigma=L
\end{gathered}
\right..
\end{align*}
For $n$-tuple of $d\times d$-matrices $\vv=(v_1,\ldots,v_n)\in\Mat_d^{\times n}$ and $l\in\nan$, we define
$\Gamma_{l,\vv}^{(\sigma)}:\Mat_d\to\bigotimes_{i=0}^{l-1}\cc^n$ by
\[
\Gamma_{l,\vv}^{(\sigma)}\lmk
X
\rmk:=
\sum_{\mu^{(l)}}\lmk
\Tr X\lmk \widehat{v_{\mu^{(l,\sigma)}}}\rmk^*
\rmk
\ws{l},\quad\quad
X\in\Mat_{d}.
\]

\section{Endomorphisms of $B(\caH)$}
Endomorphisms of $B(\caH)$ can be represented by a representation of Cuntz algebra.
\begin{lem}[\cite{arv}]\label{lem:arv}
Let $\caH$ be a separable infinite dimensional Hilbert space, and $n\in \nan$.
Let $\Phi : B(\caH)\to B(\caH)$ be a
unital endomorphism of $B(\caH)$
such that$
\lmk \Phi\lmk B(\caH)\rmk\rmk'
$
is isomorphic to $\Mat_n$.
Then there exist $S_i\in B(\caH)$, $i=1,\ldots,n$ such that
\begin{align}\label{eq:cuntz}
S_i^*S_j=\delta_{ij},\quad
\sum_{j=1}^n S_j x S_j^*=\Phi(x),\quad x\in B(\caH).
\end{align}
\end{lem}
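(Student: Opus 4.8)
The plan is to run the classical structure argument for a unital \emph{normal} endomorphism of the type $\mathrm{I}_\infty$ factor $B(\caH)$ (normality is used below and holds in the situations where the lemma is applied; see the last paragraph). Set $M:=\lmk\Phi\lmk B(\caH)\rmk\rmk'$. By hypothesis $M$ is $*$-isomorphic to $\Mat_n$, hence a type $\mathrm{I}_n$ factor, so I may fix a system of matrix units $\{e_{ij}\}_{i,j=1}^n\subset M$, i.e. $e_{ij}\str=e_{ji}$, $e_{ij}e_{kl}=\delta_{jk}e_{il}$, $\sum_{i=1}^n e_{ii}=\unit$. The diagonal projections $e_{ii}$ are $n$ mutually equivalent projections of $B(\caH)$ summing to $\unit$, so each has infinite rank; in particular $e_{11}\caH$ is an infinite dimensional subspace.

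The core step is to analyse the corner $B(e_{11}\caH)$. Since $e_{11}\in M$ commutes with $\Phi\lmk B(\caH)\rmk$, the map $\rho\colon B(\caH)\to B(e_{11}\caH)$ defined by $\rho(x):=e_{11}\Phi(x)e_{11}=\Phi(x)e_{11}$ is a unital $*$-homomorphism. It is injective: its kernel is a proper norm-closed two-sided ideal of $B(\caH)$, hence $\{0\}$ or $K(\caH)$, and the latter is impossible because the Calkin algebra has no faithful representation on a separable Hilbert space (it carries an uncountable orthogonal family of nonzero projections, coming from an almost disjoint family of infinite subsets of $\nan$). It is also surjective: the standard reduction formula for commutants gives $\lmk\Phi\lmk B(\caH)\rmk e_{11}\rmk'\cap B(e_{11}\caH)=e_{11}Me_{11}=\bbC e_{11}$, using minimality of $e_{11}$ in $M\cong\Mat_n$; thus the von Neumann algebra $\rho\lmk B(\caH)\rmk=\Phi\lmk B(\caH)\rmk e_{11}$ (weakly closed because $\Phi$ is normal) has trivial relative commutant in $B(e_{11}\caH)$ and therefore equals $B(e_{11}\caH)$. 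Hence $\rho$ is a $*$-isomorphism between $B(\caH)$ and $B(e_{11}\caH)$, and every such isomorphism is spatial: there is a unitary $u\colon\caH\to e_{11}\caH$ with $\rho(x)=uxu\str$. Viewing $u$ as a partial isometry $S_1\in B(\caH)$ with initial projection $\unit$ and final projection $e_{11}$, this says $S_1\str S_1=\unit$, $S_1S_1\str=e_{11}$, and $S_1xS_1\str=\Phi(x)e_{11}$ for all $x\in B(\caH)$.

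Finally I would put $S_i:=e_{i1}S_1$ for $i=1,\dots,n$ and verify \eqref{eq:cuntz} by direct bookkeeping with the matrix units, using at each turn that $e_{ij}\in M$ commutes with $\Phi(x)$: one computes $S_i\str S_j=S_1\str e_{1i}e_{j1}S_1=\delta_{ij}S_1\str e_{11}S_1=\delta_{ij}\unit$; then $\sum_{j=1}^n S_jS_j\str=\sum_{j=1}^n e_{j1}e_{11}e_{1j}=\sum_{j=1}^n e_{jj}=\unit$; and $\sum_{j=1}^n S_jxS_j\str=\sum_{j=1}^n e_{j1}\lmk\Phi(x)e_{11}\rmk e_{1j}=\Phi(x)\sum_{j=1}^n e_{j1}e_{1j}=\Phi(x)$. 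This produces the desired $S_1,\dots,S_n$.

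The only genuinely delicate point, and the one I would be most careful about, is the normality (ultraweak continuity) of $\Phi$: it is exactly what makes $\Phi\lmk B(\caH)\rmk$, and hence its corner $\Phi\lmk B(\caH)\rmk e_{11}$, a von Neumann algebra, which is needed both for the commutant computation and for deducing surjectivity of $\rho$. In the setting where this lemma is used — endomorphisms arising from the shift on the quasi-local algebra, cf. \cite{BJ} — $\Phi$ is normal, so I would either carry this as part of the hypothesis or derive it in that concrete situation. The remaining ingredients (choice of matrix units in a finite type I factor, spatiality of $*$-isomorphisms between algebras $B(\cdot)$, and the closing matrix-unit computation) are entirely routine.
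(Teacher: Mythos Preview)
The paper does not prove this lemma; it is simply quoted from Arveson \cite{arv} without argument. Your proof is the standard structure argument for normal unital endomorphisms of $B(\caH)$ and is correct: pick matrix units $\{e_{ij}\}$ in the type $\mathrm{I}_n$ factor $M=(\Phi(B(\caH)))'$, show the corner map $x\mapsto\Phi(x)e_{11}$ is a $*$-isomorphism onto $B(e_{11}\caH)$ (injectivity via the ideal structure of $B(\caH)$, surjectivity via the reduction formula for commutants and triviality of $e_{11}Me_{11}$), realise it spatially by $S_1$, and set $S_i=e_{i1}S_1$. The verification of \eqref{eq:cuntz} is routine and you carry it out correctly.

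Your caveat about normality is exactly right and worth keeping explicit. The lemma as stated in the paper omits it, but normality is what guarantees that $\Phi(B(\caH))e_{11}$ is a von Neumann algebra, which is needed for the double-commutant step yielding surjectivity of the corner map. In the paper's actual application (Lemma~\ref{lem:srep}), $\Phi$ is constructed as the extension to $B(\caH)$ of a quasi-equivalence, hence is $\sigma$-weakly continuous by construction, so the hypothesis is satisfied there. One small simplification available once you assume normality: injectivity of $\rho$ follows immediately since the kernel is then a weakly closed two-sided ideal of $B(\caH)$, hence $\{0\}$ or $B(\caH)$, and the latter is ruled out by unitality---so the Calkin-algebra argument, while correct, is not needed.
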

From this Lemma, we obtain the following.
\begin{lem}\label{lem:srep}
Let $\mathfrak A$, $\mathfrak B$ be separable infinite dimensional simple unital $C^*$-algebras, and $n\in\nan$,
such that ${\mathfrak A}=\Mat_n\otimes {\mathfrak B}$.
Let $\omega$ be a pure state on $\mathfrak A$ with GNS triple
$(\caH,\pi,\Omega)$.
Let $\gamma$ be a unital endomorphism of $\mathfrak A$ such that $\gamma({\mathfrak A})=\unit\otimes {\mathfrak B}$.
Assume that $\omega$ and $\omega\circ\gamma$ are quasi-equivalent.
Then there exist $S_i\in B(\caH)$, $i=1,\ldots,n$ such that
\begin{align*}
S_i^*S_j=\delta_{ij},\quad S_iS_j^*=\pi\lmk e_{ij}^{(n)}\otimes\unit \rmk ,\quad
\sum_{j=1}^n S_j \pi\lmk A\rmk S_j^*=\pi\circ\gamma\lmk A\rmk,\quad A\in {\mathfrak A}.
\end{align*}
\end{lem}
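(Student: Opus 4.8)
The plan is to reduce Lemma~\ref{lem:srep} to Arveson's Lemma~\ref{lem:arv}. First I would unpack the hypotheses: since $\mathfrak A = \Mat_n\otimes\mathfrak B$ and $\omega$ is a pure state on the simple unital $C^*$-algebra $\mathfrak A$, the GNS representation $\pi$ is irreducible, so $\pi(\mathfrak A)'' = B(\caH)$ and $\caH$ is separable infinite dimensional because $\mathfrak B$ (hence $\mathfrak A$) is. Consider the composed representation $\pi\circ\gamma$. Because $\gamma$ is a unital endomorphism of $\mathfrak A$ with $\gamma(\mathfrak A) = \unit\otimes\mathfrak B$, the representation $\pi\circ\gamma$ factors through $\mathfrak B$ (via the isomorphism $\mathfrak A\to\unit\otimes\mathfrak B\subset\mathfrak A$), and the quasi-equivalence of $\omega$ with $\omega\circ\gamma$ means $\pi$ and $\pi\circ\gamma$ are quasi-equivalent; since $\pi$ is irreducible and $\caH$ is the same space, $\pi\circ\gamma$ is (unitarily equivalent to an ampliation-free, i.e.\ genuinely) a representation whose weak closure is all of $B(\caH)$ as well — I would argue that quasi-equivalence of an irreducible representation with $\pi\circ\gamma$ forces $\pi\circ\gamma$ itself to be irreducible, or more precisely that the normal extension $\bar\pi$ of $\pi$ to $\pi(\mathfrak A)''=B(\caH)$ restricted along $\gamma$ gives a unital endomorphism $\Phi := \bar\pi\circ\gamma\circ\bar\pi^{-1}$ of $B(\caH)$.

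Next I would identify the commutant of $\Phi(B(\caH))$. We have $\Phi(B(\caH)) = \overline{\pi(\gamma(\mathfrak A))}^{\,w} = \overline{\pi(\unit\otimes\mathfrak B)}^{\,w}$. Since $\pi$ is irreducible on $\mathfrak A=\Mat_n\otimes\mathfrak B$, the von Neumann algebra generated by $\pi(\Mat_n\otimes\unit)$ is a type $I_n$ factor, and $\pi(\Mat_n\otimes\unit)$ commutes with $\pi(\unit\otimes\mathfrak B)$; a standard argument (irreducibility of $\pi$ plus the tensor splitting) shows $\overline{\pi(\unit\otimes\mathfrak B)}^{\,w}$ and $\overline{\pi(\Mat_n\otimes\unit)}^{\,w}$ are each other's commutants in $B(\caH)$, so $(\Phi(B(\caH)))' = \overline{\pi(\Mat_n\otimes\unit)}^{\,w}\cong\Mat_n$. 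This is exactly the hypothesis of Lemma~\ref{lem:arv}. Applying it yields isometries $S_i\in B(\caH)$, $i=1,\dots,n$, with $S_i^*S_j=\delta_{ij}$ and $\sum_j S_j x S_j^* = \Phi(x)$ for all $x\in B(\caH)$; in particular $\sum_j S_j\pi(A)S_j^* = \pi\circ\gamma(A)$ for $A\in\mathfrak A$.

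It remains to arrange $S_iS_j^* = \pi(e_{ij}^{(n)}\otimes\unit)$. The $S_iS_j^*$ are matrix units generating a type $I_n$ factor $\caM$ inside $B(\caH)$ whose relative commutant contains $\Phi(B(\caH))$; since $\Phi(B(\caH))' \cong \Mat_n$ and $\caM\subset\Phi(B(\caH))'$ with $\caM\cong\Mat_n$, we get $\caM = \Phi(B(\caH))' = \overline{\pi(\Mat_n\otimes\unit)}^{\,w} = \pi(\Mat_n\otimes\unit)$ (the last equality because $\pi$ is a $*$-homomorphism from a finite-dimensional algebra, hence has norm-closed, weakly closed image). So $\{S_iS_j^*\}$ and $\{\pi(e_{ij}^{(n)}\otimes\unit)\}$ are two systems of matrix units for the same $\Mat_n$; I would then use the standard fact that any two systems of $n\times n$ matrix units in a $*$-algebra are conjugate by a unitary in that algebra — pick the unitary $U = \sum_i \pi(e_{i1}^{(n)}\otimes\unit)\, S_1 S_i^* \in \caM$ and replace $S_i$ by $S_i' := U S_i$; one checks $S_i'^*S_j' = \delta_{ij}$ still holds, $\sum_j S_j' x S_j'^* = \Phi(x)$ is preserved because conjugating all $S_j$ by the same unitary does not change $\sum_j S_j x S_j^*$ when $U$ commutes with $\Phi(B(\caH))$ (which it does, $U\in\caM=\Phi(B(\caH))'$), and $S_i'S_j'^* = \pi(e_{ij}^{(n)}\otimes\unit)$ by construction. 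Transporting back along $\bar\pi$, the $S_i'$ are the desired operators. The main obstacle I anticipate is the bookkeeping in the last step: verifying carefully that the corrective unitary $U$ lies in $\Phi(B(\caH))'$ so that the intertwining relation $\sum_j S_j'\pi(A)S_j'^* = \pi\circ\gamma(A)$ survives the adjustment, and that $S_i'^*S_j'=\delta_{ij}$ is retained; everything else is a routine assembly of Arveson's theorem with the irreducibility/commutant computation.
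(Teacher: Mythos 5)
Your route is essentially the paper's: extend $\pi\circ\gamma$ to a normal unital endomorphism $\Phi$ of $B(\caH)=\pi({\mathfrak A})''$ using purity of $\omega$, the quasi-equivalence of $\omega$ and $\omega\circ\gamma$, and factoriality of $\lmk\pi\circ\gamma({\mathfrak A})\rmk''$; identify $\lmk\Phi(B(\caH))\rmk'=\pi(\unit\otimes{\mathfrak B})'=\pi(\Mat_n\otimes\unit)\cong\Mat_n$ (the paper quotes Lemma 2.6.8 of \cite{BR1} for the commutation theorem you call the ``tensor splitting''); apply Lemma \ref{lem:arv}; and then conjugate the $S_i$ by a unitary in $\Phi(B(\caH))'$ to match the matrix units (the paper delegates this last step to Lemma 3.5 of \cite{Matsui1}, whereas you carry it out by hand). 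The one concrete slip is your formula $U=\sum_i\pi(e_{i1}^{(n)}\otimes\unit)\,S_1S_i^*$: writing $f_{ij}=S_iS_j^*$ and $g_{ij}=\pi(e_{ij}^{(n)}\otimes\unit)$, one computes $U^*U=\sum_i f_{i1}g_{11}f_{1i}=c\cdot\unit$, where $f_{11}g_{11}f_{11}=c\,f_{11}$, and $c$ can vanish (for instance if the minimal projections $f_{11}$ and $g_{11}$ of $\caM\cong\Mat_n$ are orthogonal), so this $U$ need not be unitary. The repair is the standard one implicit in the fact you invoke: choose a partial isometry $v\in\caM$ with $v^*v=f_{11}$ and $vv^*=g_{11}$ (equivalent minimal projections in the full matrix algebra $\caM$) and set $U=\sum_i g_{i1}\,v\,f_{1i}$; the rest of your verification --- $U\in\caM=\Phi(B(\caH))'$, so $S_i':=US_i$ preserves $S_i'^*S_j'=\delta_{ij}$ and $\sum_j S_j'\pi(A)S_j'^*=\pi\circ\gamma(A)$ while achieving $S_i'S_j'^*=\pi(e_{ij}^{(n)}\otimes\unit)$ --- then goes through unchanged.
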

\begin{proof}
As  $\mathfrak A$ is a separable infinite dimensional simple unital $C^*$-algebra,
$\caH$ is a separable infinite dimensional Hilbert space.
Note that $E:\pi\lmk {\mathfrak A}\rmk^{''}\to \pi\lmk\Mat_n\otimes \unit\rmk'\cap \pi\lmk {\mathfrak A}\rmk^{''}$,
\[
E(x):=\sum_{i=1}^n \pi(e_{i1}^{(n)}\otimes \unit )x \pi(e_{1i}^{(n)}\otimes \unit),
\]
defines a $\sigma w$-continuous projection.
By Lemma 2.6.8 of \cite{BR1}, we have $\pi\lmk\Mat_n\otimes \unit\rmk'=\pi\lmk\unit\otimes {\mathfrak B}\rmk^{''}$. This and the condition $\gamma({\mathfrak A})=\unit\otimes {\mathfrak B}$ implies that
$\lmk \pi\circ\gamma\lmk {\mathfrak A}\rmk\rmk^{''}$ is a factor.
As $\omega$ and $\omega\circ\gamma$ are quasi-equivalent, $\omega$ is pure,
and $\lmk \pi\circ\gamma\lmk {\mathfrak A}\rmk\rmk^{''}$ is a factor, there exists a $\sigma w$-continuous unital endomorphism
$\Phi:B(\caH)\to B(\caH)$
such that $\Phi\lmk\pi(A)\rmk=\pi\circ\gamma(A)$, for $A\in{\mathfrak A}$,
and 
$\Phi\lmk B(\caH)\rmk=\lmk\pi\circ\gamma({\mathfrak A})\rmk^{''}$.
We have $\lmk \Phi\lmk B(\caH)\rmk\rmk'=\lmk\pi\circ\gamma({\mathfrak A})\rmk^{'}=\pi(\unit\otimes {\mathfrak B})'=\pi\lmk \Mat_n\otimes \unit\rmk$.
Applying Lemma \ref{lem:arv} we obtain $S_i$ satisfying (\ref{eq:cuntz}).
As in the argument of Lemma 3.5 in \cite{Matsui1}, we can deform $S_i$s so that they satisfy
$S_iS_j^*=\pi\lmk e_{ij}^{(n)}\otimes\unit \rmk$.
\end{proof}

\section{Finitely correlated states}\label{app:c}
First we recall the definitions introduced in \cite{fnwpure}.
\begin{defn}
Let $n\in\nan$. 
The triple $({\mathfrak B},\bbE,\rho)$ given by
a finite dimensional $C^*$-algebra $\mathfrak B$, a unital CP map $\bbE:\Mat_n\otimes {\mathfrak B}\to {\mathfrak B}$, and
a faithful state $\rho$ on $\mathfrak B$  such that
$\rho\circ\bbE(\unit\otimes X) =\rho(X)$, $X\in{\mathfrak B}$ is called a standard triple.
For each $A\in\Mat_n$, we define a map $\bbE_A:{\mathfrak B}\to {\mathfrak B}$
by $\bbE_A(X)=\bbE\lmk A\otimes X\rmk$, $X\in{\mathfrak B}$.
A standard triple $({\mathfrak B},\bbE,\rho)$ is minimal if
$\mathfrak B$ has no proper sub $C^*$-algebra, which contains $\unit$ and is
$\bbE_A$-invariant for any $A\in\Mat_n$.
\end{defn}
\begin{defn}
Let $({\mathfrak B},\bbE,\rho)$ be a standard triple, and $\omega$ a state on $\caA_{\bbZ}$.
We say the standard triple $({\mathfrak B},\bbE,\rho)$ right (resp. left) generates $\omega$ if
\[
\omega\lmk\bigotimes_{i=a}^{a+l-1}A_i\rmk
=\rho\circ\bbE_{A_a}\circ\bbE_{A_{a+1}}\circ\cdots\circ \bbE_{A_{a+l-1}}\lmk\unit\rmk,
\]
(resp.
\[
\omega\lmk\bigotimes_{i=a}^{a+l-1}A_i\rmk
=\rho\circ\bbE_{A_{a+l-1}}\circ\bbE_{A_{a+l-2}}\circ\cdots\circ \bbE_{A_{a}}\lmk\unit\rmk,)
\]
for any $a\in\bbZ$, $l\in\nan$, $A_i\in\Mat_n$.
If  $\mathfrak B$ is a $*$-subalgebra of $\Mat_k$ containing unit $\unit$
and 
$\bbE$ is given by an $n$-tuple of matrices $\vv=(v_\mu)_{\mu=1}^n\subset\mk^{\times n}$ as
\[
\bbE \lmk e_{\mu\nu}^{(n)}\otimes X\rmk:=
\lmk v_{\mu}\rmk X\lmk  v_{\nu}\rmk^*,\quad X\in {\mathfrak B},
\] 
we also say that $({\mathfrak B}, \vv,\rho)$ right (resp. left) -generates $\omega$. 
In this case, with a bit of abuse of notation, we say $({\mathfrak B}, \vv,\rho)$ is minimal
if the corresponding
$({\mathfrak B},\bbE,\rho)$ is minimal.
\end{defn}
The formalism  introduced in \cite{Fannes:1992vq}\cite{fnwpure} is the right version, but left version can be defined analogously.

For the class of states we consider, the minimal standard triple is unique up to isomorphism.
\begin{thm}\label{thm:iso}\cite{fnwpure}
Let $n\in\nan$ and $\omega$ be a state on $\caA_{\bbZ}$.
For each $N\in \nan$, let $D_N$ be the density matrix of $\omega\vert_{\caA_{[0,N-1]}}$.
Assume that $\sup_N\rank D_N<\infty$.
Let $({\mathfrak B}_i,\bbE^{(i)},\rho_i)$ $i=1,2$ be standard minimal triples
right (resp. left) generating $\omega$.
Assume that the eigenspace of $1$ for $\bbE^{(i)}_\unit$ is $\cc\unit_{\mathfrak B_i}$ for each $i=1,2$.
Then there exists a $C^*$-isomorphism $\Theta:{\mathfrak B}_1\to{\mathfrak B}_2$ such that
$\bbE^{(2)}\circ\lmk id_{\Mat_n}\otimes\Theta\rmk=\Theta\circ\bbE^{(1)}$.
\end{thm}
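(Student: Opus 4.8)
The plan is to recover the triple $({\mathfrak B}_i,\bbE^{(i)},\rho_i)$ intrinsically from $\omega$ and then match the two. First I would put each standard triple into matrix form by a Stinespring dilation: choose a faithful representation ${\mathfrak B}_i\subset\Mat_{k_i}$ and an $n$-tuple $\vv^{(i)}=(v_\mu^{(i)})_{\mu=1}^n$ with $\bbE^{(i)}\lmk e_{\mu\nu}^{(n)}\otimes X\rmk=v_\mu^{(i)}X\lmk v_\nu^{(i)}\rmk^*$, so that the right-generating property becomes the matrix product formula $\omega\lmk\bigotimes_{j=0}^{l-1}e_{\mu_j\nu_j}^{(n)}\rmk=\rho_i\lmk v_{\mu_0}^{(i)}\cdots v_{\mu_{l-1}}^{(i)}\lmk v_{\nu_{l-1}}^{(i)}\rmk^*\cdots\lmk v_{\nu_0}^{(i)}\rmk^*\rmk$, and minimality says that $\spa\{v_{\mu_0}^{(i)}\cdots v_{\mu_{l-1}}^{(i)}:l\in\nan,\ \mu^{(l)}\}={\mathfrak B}_i$ for $l$ large. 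The hypothesis that $1$ is a simple eigenvalue of $\bbE^{(i)}_{\unit}$, together with unitality and faithfulness of $\rho_i$, gives $\sigma\lmk\bbE^{(i)}_{\unit}\rmk\cap\bbT=\{1\}$, $P_{\{1\}}^{\bbE^{(i)}_{\unit}}(\cdot)=\rho_i(\cdot)\unit$, and $\lV\lmk\bbE^{(i)}_{\unit}\rmk^l\lmk\unit-P_{\{1\}}^{\bbE^{(i)}_{\unit}}\rmk\rV\to 0$ exponentially.

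Second I would introduce, for each $l$, the map $\Gamma_l^{(i)}\colon{\mathfrak B}_i\to\bigotimes_{j=0}^{l-1}\cc^n$ as in Appendix \ref{sec:nota}, and record two facts: (a) the image of $\Gamma_l^{(i)}$ is exactly the support of the reduced density matrix $D_l$ of $\omega\vert_{\caA_{[0,l-1]}}$, which depends only on $\omega$; (b) the routine estimate $\lv\lV\Gamma_l^{(i)}(X)\rV^2-\rho_i(X^*X)\rv\le\text{const}\cdot\lV\lmk\bbE^{(i)}_{\unit}\rmk^l\lmk\unit-P_{\{1\}}^{\bbE^{(i)}_{\unit}}\rmk\rV\cdot\rho_i(X^*X)$, combined with faithfulness of $\rho_i$, shows $\Gamma_l^{(i)}$ is injective for $l$ large. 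Hence $\dim{\mathfrak B}_1=\rank D_l=\dim{\mathfrak B}_2$ for $l$ large, and for such $l$ the composition $\Theta_l:=\lmk\Gamma_l^{(2)}\rmk^{-1}\circ\Gamma_l^{(1)}\colon{\mathfrak B}_1\to{\mathfrak B}_2$ is a well-defined linear bijection. The consistency input I need here is that $\Gamma_{l+m}^{(i)}$ factors through $\Gamma_l^{(i)}$ and $\Gamma_m^{(i)}$ in a way governed only by $\bbE^{(i)}$ (the same kind of multiplicative identity that underlies the basis relations in the bulk of this paper); this makes $\Theta_l$ independent of $l$, and I denote the common map by $\Theta$.

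Third, I would upgrade $\Theta$ from a linear bijection to a $*$-isomorphism intertwining the $\bbE$'s. The relation $\bbE^{(2)}\circ\lmk id_{\Mat_n}\otimes\Theta\rmk=\Theta\circ\bbE^{(1)}$ should fall out of the same factorization: applying $\bbE^{(i)}_A$ and then generating $l$ further sites produces the generation of $l+1$ sites, and matching this through $\Gamma_l^{(2)}$ and $\Gamma_{l+1}^{(2)}$ forces the intertwining. Once intertwining holds, multiplicativity is automatic, since ${\mathfrak B}_i$ is generated as an algebra by the words $\bbE^{(i)}_{A_1}\circ\cdots\circ\bbE^{(i)}_{A_l}(\unit)$ and $\Theta$ carries each such word to the corresponding one for $\bbE^{(2)}$; moreover $\Theta(\unit)=\unit$ because $\unit$ spans the fixed space of $\bbE^{(i)}_{\unit}$ on both sides, and $\rho_2\circ\Theta=\rho_1$ by generation. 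A unital bijective algebra homomorphism between finite dimensional $C^*$-algebras that preserves a faithful state is a $*$-isomorphism, giving the conclusion; the left-generated version is the mirror image.

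The main obstacle I expect is the consistency statement in the second step: proving that the two chains $\{\Gamma_l^{(1)}\}$ and $\{\Gamma_l^{(2)}\}$ ``see'' ${\mathfrak B}_1$ and ${\mathfrak B}_2$ compatibly, so that $\Theta_l$ is independent of $l$ and genuinely algebraic rather than merely a linear isomorphism. This is exactly where both minimality and the simplicity of the eigenvalue $1$ are indispensable: without simplicity the $\bbE_{\unit}$-fixed space is larger, $\rho_i$ is not uniquely pinned down, and $\omega$ is a nontrivial mixture of ergodic finitely correlated states, so no such $\Theta$ can exist. The remaining bookkeeping with the non-self-adjoint matrices $v_\mu^{(i)}$ and the reduction of general standard triples to matrix form is routine.
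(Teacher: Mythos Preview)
The paper does not prove this theorem: it is stated in Appendix~\ref{app:c} with the citation \cite{fnwpure} and used as a black box (notably in Remark~\ref{rem:iso}). There is therefore no proof in the paper to compare your proposal against.

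As for your sketch itself, the architecture is reasonable and close in spirit to the original Fannes--Nachtergaele--Werner argument, but the third step has a genuine gap. The intertwining relation $\Theta\circ\bbE^{(1)}_A=\bbE^{(2)}_A\circ\Theta$ does \emph{not} by itself force $\Theta$ to be multiplicative: the words $\bbE^{(i)}_{A_1}\circ\cdots\circ\bbE^{(i)}_{A_l}(\unit)$ span ${\mathfrak B}_i$ linearly, but the product of two such words is not again expressible through $\bbE^{(i)}$, so knowing $\Theta$ respects the $\bbE$-action says nothing about $\Theta(XY)$ versus $\Theta(X)\Theta(Y)$. Likewise, the claim that a unital bijective algebra homomorphism preserving a faithful state is automatically a $*$-isomorphism is not a standard fact and would itself need justification. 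The FNW route instead uses that $\rho_i\bigl(\bbE^{(i)}_{A_1}\circ\cdots\circ\bbE^{(i)}_{A_l}(\unit)\cdot\bbE^{(i)}_{B_1}\circ\cdots\circ\bbE^{(i)}_{B_m}(\unit)\bigr)$ can itself be computed from $\omega$ (via the limit $\lim_{k\to\infty}\omega(A_1\otimes\cdots\otimes A_l\otimes\unit^{\otimes k}\otimes B_1\otimes\cdots\otimes B_m)$, exploiting the ergodicity hypothesis), so the full GNS inner product on ${\mathfrak B}_i$ is determined by $\omega$; faithfulness of $\rho_i$ then makes the linear map a $*$-isomorphism. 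Your $\Gamma_l$-matching gives the same linear identification, but you still need this extra ``asymptotic factorization'' ingredient to recover the algebra structure.
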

From this, we can prove the following.
\begin{lem}\label{lem:ouo}
Let $n, k_1,k_2\in\nan$ and $\oo^{(i)}\in\Primz_u(n,k_i)$.
Let $\rho_i$ be the faithful $T_{\oo^{(i)}}$-invariant state. (See Lemma C.5 of Part I.)
Let $\varphi$ be a state on $\caA_{\bbZ}$.
For each $N\in \nan$, let $D_N$ be the density matrix of $\varphi\vert_{\caA_{[0,N-1]}}$, and assume $\sup_N\rank D_N<\infty$.
Suppose that both of  $(\Mat_{k_1},\oo^{(1)},\rho_1)$ and $(\Mat_{k_2}, \oo^{(2)},\rho_2)$ right-generate $\varphi$.
Then there exist a unitary $U:\cc^{k_1}\to\cc^{k_2}$ and $c\in\bbT$
such that
\[
U\omega_{\mu}^{(1)}=c\omega_{\mu}^{(2)}U,\quad
\mu=1,\ldots,n.
\]
\end{lem}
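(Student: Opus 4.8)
The plan is to reduce the statement to the uniqueness theorem for finitely correlated states, Theorem \ref{thm:iso}, and then feed the resulting intertwiner into Lemma \ref{lem:irpri}. First I would set, for $i=1,2$, $\bbE^{(i)}(e_{\mu\nu}^{(n)}\otimes X):=\omega_\mu^{(i)}X(\omega_\nu^{(i)})^*$ and check that $(\Mat_{k_i},\bbE^{(i)},\rho_i)$ is a minimal standard triple with $\bbE^{(i)}_\unit=T_{\oo^{(i)}}$ whose eigenspace for the eigenvalue $1$ is $\cc\unit$. Indeed $\bbE^{(i)}$ is completely positive by construction, it is unital because $\oo^{(i)}\in\Primz_u(n,k_i)$ forces $T_{\oo^{(i)}}$ to be unital, and $\rho_i$ is by definition the faithful $T_{\oo^{(i)}}$-invariant state (Lemma C.5 of Part I), so $\rho_i\circ\bbE^{(i)}(\unit\otimes X)=\rho_i(X)$. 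Primitivity of $T_{\oo^{(i)}}$ yields $\caK_l(\oo^{(i)})=\Mat_{k_i}$ for all large $l$, which forces minimality (there is no proper unital $\bbE^{(i)}_A$-invariant $*$-subalgebra) and, by Lemma \ref{lem:irr} applied with $b=1$, also forces the peripheral eigenspace of $T_{\oo^{(i)}}$ to be $\cc\unit$.

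Next, since both triples right-generate $\varphi$ and $\sup_N\rank D_N<\infty$ by hypothesis, Theorem \ref{thm:iso} applies and produces a $C^*$-isomorphism $\Theta:\Mat_{k_1}\to\Mat_{k_2}$ with $\bbE^{(2)}\circ\lmk\idd_{\Mat_n}\otimes\Theta\rmk=\Theta\circ\bbE^{(1)}$; unwinding the definitions of $\bbE^{(i)}$ this reads
\begin{align*}
\Theta\lmk\omega_\mu^{(1)}X\lmk\omega_\nu^{(1)}\rmk^*\rmk=\omega_\mu^{(2)}\Theta(X)\lmk\omega_\nu^{(2)}\rmk^*,\quad\mu,\nu=1,\ldots,n,\quad X\in\Mat_{k_1}.
\end{align*}
Being a $*$-isomorphism of full matrix algebras, $\Theta$ is in particular an injective unital $*$-homomorphism. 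I then invoke Lemma \ref{lem:irpri} with $\vv^{(1)}=\oo^{(1)}$ (so $T_{\vv^{(1)}}$ is primitive) and $\vv^{(2)}=\oo^{(2)}$ (so $T_{\vv^{(2)}}$ is irreducible, being primitive): it gives $k_1=k_2$, a unitary $W:\cc^{k_2}\to\cc^{k_1}$ and $c\in\bbT$ with $W\omega_\mu^{(2)}W^*=c\,\omega_\mu^{(1)}$. Setting $U:=W^*:\cc^{k_1}\to\cc^{k_2}$ and rewriting $W\omega_\mu^{(2)}=c\,\omega_\mu^{(1)}W$ as $\omega_\mu^{(2)}U=c\,U\omega_\mu^{(1)}$ yields $U\omega_\mu^{(1)}=\bar c\,\omega_\mu^{(2)}U$, which is the assertion with phase $\bar c$.

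Since the two named results do essentially all of the work, there is no serious obstacle here; the only points requiring care are (a) correctly extracting minimality and triviality of the peripheral eigenspace of $T_{\oo^{(i)}}$ from membership in $\Primz_u(n,k_i)$, both of which are already recorded consequences of primitivity, and (b) keeping track of the direction of the intertwiner when passing from Lemma \ref{lem:irpri}'s $W:\cc^{k_2}\to\cc^{k_1}$ to the required $U:\cc^{k_1}\to\cc^{k_2}$.
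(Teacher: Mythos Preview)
Your proposal is correct and follows essentially the same approach as the paper: verify that primitivity of $\oo^{(i)}$ makes $(\Mat_{k_i},\bbE^{(i)},\rho_i)$ a minimal standard triple with one-dimensional eigenspace at $1$, invoke Theorem~\ref{thm:iso} to obtain the intertwining $*$-isomorphism $\Theta$, and then feed this into Lemma~\ref{lem:irpri}. The paper's proof is in fact terser than yours, writing only ``The rest of the proof is an easy case of Lemma~\ref{lem:irpri}'' after obtaining $\Theta$; your explicit bookkeeping of the direction of the unitary and the resulting phase $\bar c$ is a welcome addition.
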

\begin{proof}
As $\oo^{(i)}\in\Primz_u(n,k_i)$,
$(\Mat_{k_i},\oo^{(i)},\rho_i)$ is a minimal standard triple, for each $i=1,2$.
Furthermore, the eigenspace of $1$ for $T_{\oo^{(i)}}$ is $\bbC\unit_{\Mat_{k_i}}$,
because of the primitivity of $\oo^{(i)}$.
Therefore, we may apply Theorem \ref{thm:iso}.
By Theorem \ref{thm:iso}, there is a $*$-isomorphism $\Theta:\Mat_{k_1}\to \Mat_{k_2}$
such that 
\begin{align}\label{eq:itt}
\omega_{\mu}^{(2)}\Theta\lmk X\rmk\lmk \omega_{\nu}^{(2)}\rmk^*
=\Theta \lmk \omega_{\mu}^{(1)} X\lmk \omega_{\nu}^{(1)}\rmk^*\rmk,\quad
\mu,\nu=1,\ldots,n,\quad X\in\Mat_{k_1}.
\end{align}
The rest of the proof is an easy case of Lemma \ref{lem:irpri}.
\end{proof}
\section{CP maps}
\begin{lem}\label{lem:irr}
Let $n\in\nan$ and $T:\Mat_n\to\Mat_n$ be 
the irreducible unital CP map.
Then the followings hold.
\begin{enumerate}
\item There exists $b\in\nan$ such that
$\sigma(T)\cap\bbT=\left\{ \exp\lmk \frac{2\pi i}{b}k\rmk\mid
k=0,\ldots,b-1\right\}$.
\item
For any $\lambda\in \sigma(T)\cap\bbT$, $\lambda$ is a nondegenerate
eigenvalue of $T$.
\item
There exists a unitary matrix  $U\in\Mat_n$ such that 
\[
T\lmk U^k\rmk=\exp\lmk \frac{2\pi i}{b}k\rmk U^k,\quad k=0,\ldots,b-1.
\]
\item The unitary matrix $U$ in {\it 3} has a spectral decomposition
\[
U=\sum_{k=0}^{b-1}\exp\lmk \frac{2\pi i}{b}k\rmk P_k,
\]
with spectral projections satisfying
\[
T\lmk P_k\rmk=P_{k-1},\quad\mod b.
\]
\item
The restriction $T^b\vert_{P_k \Mat_n P_k}$ of $T^b$ on 
$P_k \Mat_n P_k$ defines a primitive unital CP map on
$P_k \Mat_n P_k$.
\item
There exists a faithful $T$-invariant state $\varphi$.
\end{enumerate}
\end{lem}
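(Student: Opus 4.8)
The plan is to prove this as a specialisation of the classical Perron--Frobenius structure theory for irreducible positive maps on a matrix algebra, carried out essentially self-contained, using only elementary finite-dimensional Perron--Frobenius facts together with Lemma~C.4 of \cite{bo}. First I would record the routine preliminaries: $T$ is a positive, $*$-preserving, unital contraction; every power $T^k$ is again a unital CP map, so $\lV T^k\rV=1$ for all $k$; hence the spectral radius of $T$ equals $1$, $\unit$ is a fixed point, and the peripheral eigenvalues of $T$ carry no nontrivial Jordan blocks (power boundedness). The core lemma I would establish is: \emph{if $Y\ge0$ and $T(Y)\ge Y$, then $Y\in\bbC_{\ge0}\unit$.} For this, $T^k(Y)$ is monotone increasing (apply the positive map $T^k$ to $T(Y)\ge Y$) and norm-bounded by $\lV Y\rV$, hence converges to a self-adjoint fixed point $Z\ge Y$; any self-adjoint fixed point $X$ of $T$ is scalar, because with $c$ the largest eigenvalue of $X$ the positive fixed point $c\unit-X$ must, by irreducibility in the standard form that $(\idd+T)^{n-1}$ sends nonzero positive elements to strictly positive ones (equivalent to the projection condition defining irreducibility, cf.\ Lemma~C.2 of \cite{Ogata}), be strictly positive if nonzero, contradicting that $c$ is an eigenvalue unless $c\unit-X=0$; and then $W:=T(Y)-Y\ge0$ satisfies $\sum_{k=0}^m T^k(W)=T^{m+1}(Y)-Y\to Z-Y$, so the series converges, $T^k(W)\to0$, and if $W\ne0$ then $(\idd+T)^{n-1}(W)\ge\varepsilon\unit$ for some $\varepsilon>0$ forces $\sum_{j=0}^{n-1}\binom{n-1}{j}T^{k+j}(W)\ge\varepsilon\unit$, whose left side tends to $0$ --- a contradiction, so $T(Y)=Y$ and hence $Y$ is a nonnegative scalar multiple of $\unit$. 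In particular the fixed-point space is $\bbC\unit$, which is the nondegeneracy of the eigenvalue $1$; and for item~6 I would pass to the Hilbert--Schmidt adjoint $T^\dagger$ (CP, trace-preserving, spectral radius $1$, again irreducible by a direct check on rank-one positives): finite-dimensional Perron--Frobenius on the proper cone of positive semidefinite matrices yields $\tilde\rho\ge0$, $\tilde\rho\ne0$, with $T^\dagger(\tilde\rho)=\tilde\rho$, and irreducibility forces $\tilde\rho>0$, so $\varphi:=\Tr(\tilde\rho\,\cdot\,)/\Tr(\tilde\rho)$ is the required faithful $T$-invariant state.

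Next I would analyse the peripheral eigenvectors. If $T(X)=\lambda X$ with $\lambda\in\bbT$, $X\ne0$, then by the Kadison--Schwarz inequality $T(X^*X)\ge T(X)^*T(X)=X^*X$ and $T(XX^*)\ge XX^*$, so by the core lemma $X^*X$ and $XX^*$ are positive scalars, equal by trace; rescaling, $X=U$ is unitary, $T(U^*)=\bar\lambda U^*$, and moreover $T(X^*X)=T(X)^*T(X)$ (and the $XX^*$-version), i.e.\ every peripheral eigenvector lies in the multiplicative domain of $T$. Hence the linear span $\caB$ of all peripheral eigenvectors is a $*$-subalgebra of $\Mat_n$ on which $T$ restricts to an injective $*$-homomorphism, that is, a $*$-automorphism $\gamma$ acting as $\lambda\,\idd$ on each eigenspace $E_\lambda$. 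Consequently the set $\Sigma$ of peripheral eigenvalues is closed under products ($E_\lambda E_\mu$ contains a product of unitaries, hence a nonzero element of $E_{\lambda\mu}$) and inverses ($U^*\in E_{\bar\lambda}$), so it is a finite subgroup of $\bbT$, giving $\Sigma=\{\exp(2\pi ik/b):k=0,\dots,b-1\}$ for some $b\in\nan$ (item~1). For item~2, given $\lambda\in\Sigma$ pick a unitary $U\in E_\lambda$ and an arbitrary $X\in E_\lambda$; then $U^*X\in\caB$ and $\gamma(U^*X)=\bar\lambda\lambda\,U^*X=U^*X$, so $U^*X\in\bbC\unit$ and $X\in\bbC U$, i.e.\ $\dim E_\lambda=1$. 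For item~3, take $U$ a unitary eigenvector for the generator $\zeta:=\exp(2\pi i/b)$; then $U^k\in\caB$ and $T(U^k)=\gamma(U)^k=(\zeta U)^k=\zeta^kU^k$.

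Items~4 and~5 then follow from the block decomposition. Let $U=\sum_{k=0}^{b-1}\zeta^kP_k$ be the spectral decomposition of the unitary $U$ of item~3; the $P_k$ are mutually orthogonal projections summing to $\unit$ and, being polynomials in $U,U^*$, lie in the multiplicative domain, so $T(P_k)=\gamma(P_k)$, and $\gamma(U)=\zeta U$ forces $\gamma(P_k)=P_{k-1}$ (indices mod $b$), which is item~4. Using that $P_k$ is in the multiplicative domain, $T(P_kAP_k)=P_{k-1}T(A)P_{k-1}$ for all $A\in\Mat_n$, hence $T^b(P_k\Mat_nP_k)\subseteq P_k\Mat_nP_k$ and $T^b(P_k)=P_k$, so $T^b$ compresses to a unital CP map on $P_k\Mat_nP_k$; the normalised restriction of the faithful state $\varphi$ above is a faithful invariant state for it, its fixed-point space is $\bbC P_k$ (by the same block computation, since $\caB\cap P_k\Mat_nP_k=\bbC P_k$), and any peripheral eigenvalue of this compression is a peripheral eigenvalue of $T^b$ on $\Mat_n$, hence lies in $\Sigma^b=\{1\}$; so by Lemma~C.4 of \cite{bo} the compression is primitive (item~5).

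The substantive content, and the main obstacle, lies in the first two steps: the Perron--Frobenius core lemma, and above all the passage from a peripheral eigenvector to a unitary lying in the multiplicative domain of $T$, which is what turns $\caB$ into a $*$-algebra and $T|_{\caB}$ into an automorphism. This is the one place where complete positivity --- through the Kadison--Schwarz inequality and its equality case --- is genuinely used, and it is the heart of the classical peripheral-spectrum analysis; everything after it is bookkeeping with the automorphism $\gamma$ and the projections $P_k$.
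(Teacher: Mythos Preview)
The paper does not give a proof of this lemma at all; it simply cites Wolf's lecture notes \cite{Wolf:2012aa}. Your argument is a correct, essentially self-contained derivation of the Perron--Frobenius structure theory for irreducible unital CP maps, and it is precisely the kind of proof one finds in that reference: the core lemma forcing positive super-fixed points to be scalar, the Kadison--Schwarz step turning peripheral eigenvectors into unitaries in the multiplicative domain, the resulting $*$-automorphism on their span giving the cyclic group structure, and the block decomposition via the spectral projections $P_k$. The invocation of Lemma~C.4 of \cite{bo} for primitivity in item~5 is also consistent with how the paper uses that lemma elsewhere (e.g.\ in the proof of Lemma~\ref{lem:srs}). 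One small remark: your claim that $(\idd+T)^{n-1}$ already sends nonzero positives to strictly positive elements is in fact correct in the quantum setting too --- the support projections of $(\idd+T)^k(W)$ strictly increase in rank until they hit $\unit$, by the same reducing-projection argument you allude to --- so no adjustment is needed there.
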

\begin{proof}
See \cite{Wolf:2012aa}, for example.
\end{proof}

\section{A useful fact on ${}_lC_{k}$.}

\begin{lem}\label{lem:ccac}
Let $l\in\nan$, $m_1,m_2\in \nan\cup\{0\}$
such that $m_1+m_2\le l$.
Then we have
\[
{}_lC_{m_1}\cdot{}_lC_{m_2}=\sum_{k=0}^{m_1+m_2}
\alpha_{(m_1,m_2)}^{(k)}\cdot {}_lC_k,
\]
where
\begin{align*}
\alpha_{(m_1,m_2)}^{(k)}
=
\begin{cases}
{}_kC_{m_2}\cdot\sum_{j=0}^{m_2}\delta_{m_1,k-j}\cdot {}_{m_2}C_{j},&\text{ if } k\ge m_2\\
0 ,&\text{ if } k< m_2
\end{cases}.
\end{align*}
\end{lem}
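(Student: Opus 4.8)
The plan is to reduce the stated identity to two elementary facts about binomial coefficients — the Vandermonde convolution and the ``subset of a subset'' identity ${}_nC_k\cdot{}_{n-k}C_r={}_nC_{k+r}\cdot{}_{k+r}C_k$ — together with some bookkeeping of index ranges. Throughout one adopts the convention ${}_nC_r=0$ whenever $r<0$ or $r>n$, so that all the identities below hold without case distinctions. The hypothesis $m_1+m_2\le l$ enters only to guarantee that ${}_lC_k\neq 0$ for the relevant values $k\le m_1+m_2$, so that nothing degenerate happens.

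\textbf{Main steps.} First I would apply the Vandermonde convolution, splitting $l=m_2+(l-m_2)$ (legitimate since $m_2\le l$), to write
\[
{}_lC_{m_1}=\sum_{j\ge 0}{}_{m_2}C_j\cdot{}_{l-m_2}C_{m_1-j},
\]
the sum effectively running over $0\le j\le m_2$ since ${}_{m_2}C_j=0$ otherwise. Multiplying by ${}_lC_{m_2}$ and then applying the subset-of-a-subset identity in the form ${}_lC_{m_2}\cdot{}_{l-m_2}C_{m_1-j}={}_lC_{m_1+m_2-j}\cdot{}_{m_1+m_2-j}C_{m_2}$ (checked directly by expanding factorials, both sides equalling $l!/(m_2!\,(m_1-j)!\,(l-m_1+j-m_2)!)$), one gets
\[
{}_lC_{m_1}\cdot{}_lC_{m_2}=\sum_{j\ge 0}{}_{m_2}C_j\cdot{}_{m_1+m_2-j}C_{m_2}\cdot{}_lC_{m_1+m_2-j}.
\]
Next I would reindex by $k:=m_1+m_2-j$; as $j$ ranges over $0,\dots,m_2$ the index $k$ ranges over $m_1,\dots,m_1+m_2$, and using ${}_{m_2}C_j={}_{m_2}C_{m_2-j}={}_{m_2}C_{k-m_1}$ this becomes $\sum_{k}{}_kC_{m_2}\cdot{}_{m_2}C_{k-m_1}\cdot{}_lC_k$. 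Finally I would identify the coefficient with $\alpha_{(m_1,m_2)}^{(k)}$: for each $k$ the sum $\sum_{j=0}^{m_2}\delta_{m_1,k-j}\cdot{}_{m_2}C_j$ selects the single term $j=k-m_1$ and equals ${}_{m_2}C_{k-m_1}$, so ${}_kC_{m_2}\cdot\sum_{j=0}^{m_2}\delta_{m_1,k-j}\cdot{}_{m_2}C_j={}_kC_{m_2}\cdot{}_{m_2}C_{k-m_1}$ whenever $k\ge m_2$.

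\textbf{The point requiring care.} The definition of $\alpha_{(m_1,m_2)}^{(k)}$ forces the value $0$ for $k<m_2$ rather than giving the closed form, and the stated sum runs over $k=0,\dots,m_1+m_2$; so I must check that rewriting the summation range does not lose or add contributions. For $k<m_2$ the term ${}_kC_{m_2}\cdot{}_{m_2}C_{k-m_1}\cdot{}_lC_k$ vanishes because ${}_kC_{m_2}=0$, matching the piecewise definition; for $k<m_1$ one has ${}_{m_2}C_{k-m_1}=0$, and for $k>m_1+m_2$ again ${}_{m_2}C_{k-m_1}=0$. Hence $\sum_k{}_kC_{m_2}\cdot{}_{m_2}C_{k-m_1}\cdot{}_lC_k=\sum_{k=0}^{m_1+m_2}\alpha_{(m_1,m_2)}^{(k)}\cdot{}_lC_k$, which is the claim. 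This is routine, and I expect no genuine obstacle; the only alternative worth noting is a double-counting proof (both sides count pairs of subsets $A,B$ of an $l$-element set with $|A|=m_1$, $|B|=m_2$, classified on the right by $k=|A\cup B|$, whence $|A\cap B|=m_1+m_2-k$), but the algebraic route above is the cleanest to write down.
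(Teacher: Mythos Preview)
Your proof is correct. The paper gives no details beyond the single line ``This can be checked inductively,'' so in fact your argument is considerably more explicit than what appears there.

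The approaches differ genuinely: the paper invokes induction (presumably on $m_2$ or on $m_1+m_2$), whereas you give a direct, non-inductive derivation by combining the Vandermonde convolution with the subset-of-a-subset identity ${}_lC_{m_2}\cdot{}_{l-m_2}C_{r}={}_lC_{m_2+r}\cdot{}_{m_2+r}C_{m_2}$ and then reindexing. Your route has the advantage of being self-contained and of making the combinatorial meaning transparent (indeed, as you note, it is essentially the algebraic transcription of the double-counting argument by $k=|A\cup B|$). The inductive route would presumably be more mechanical but would require setting up and verifying a recursion for the $\alpha_{(m_1,m_2)}^{(k)}$, which is less illuminating. Either way the lemma is elementary; your handling of the index ranges and the piecewise case $k<m_2$ is careful and correct.
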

\begin{proof}
This can be checked inductively.
\end{proof}

\end{document}